\pgfplotsset{compat=1.5}
\newtheorem{theorem}{Theorem}[section]
\newtheorem{corollary}[theorem]{Corollary}
\newtheorem{lemma}[theorem]{Lemma}
\newtheorem{definition}[theorem]{Definition}
\newtheorem{fact}[theorem]{Fact}
\newenvironment{proofof}[1]{\begin{trivlist} \item {\bf Proof
#1:~~}}
  {\qed\end{trivlist}}
\newcommand{\namedref}[2]{\hyperref[#2]{#1~\ref*{#2}}}
\newcommand{\thmlab}[1]{\label{thm:#1}}
\newcommand{\thmref}[1]{\namedref{Theorem}{thm:#1}}
\newcommand{\lemlab}[1]{\label{lem:#1}}
\newcommand{\lemref}[1]{\namedref{Lemma}{lem:#1}}
\newcommand{\corlab}[1]{\label{cor:#1}}
\newcommand{\corref}[1]{\namedref{Corollary}{cor:#1}}
\newcommand{\seclab}[1]{\label{sec:#1}}
\newcommand{\secref}[1]{\namedref{Section}{sec:#1}}
\newcommand{\applab}[1]{\label{app:#1}}
\newcommand{\appref}[1]{\namedref{Appendix}{app:#1}}
\newcommand{\factlab}[1]{\label{fact:#1}}
\newcommand{\factref}[1]{\namedref{Fact}{fact:#1}}
\newcommand{\figlab}[1]{\label{fig:#1}}
\newcommand{\figref}[1]{\namedref{Figure}{fig:#1}}
\newcommand{\alglab}[1]{\label{alg:#1}}
\renewcommand{\algref}[1]{\namedref{Algorithm}{alg:#1}}
\newcommand{\tablelab}[1]{\label{tab:#1}}
\newcommand{\tableref}[1]{\namedref{Table}{tab:#1}}
\newcommand{\deflab}[1]{\label{def:#1}}
\newcommand{\defref}[1]{\namedref{Definition}{def:#1}}
\def \CIC    {\mdef{\mathsf{CIC}}}
\def \embed    {\mdef{\mathsf{Embed}}}
\def \GapAnd    {\mdef{\mathsf{GapAnd}}}
\def \GapSet    {\mdef{\mathsf{GapSet}}}
\def \SetDisj    {\mdef{\mathsf{SetDisj}}}
\def \SetWt    {\mdef{\mathsf{SetWt}}}
\newcommand{\PPr}[1]{\ensuremath{\mathbf{Pr}\left[#1\right]}}
\newcommand{\PPPr}[2]{\ensuremath{\underset{#1}{\mathbf{Pr}}\left[#2\right]}}
\newcommand{\Var}[1]{\ensuremath{\mathbb{V}\left[#1\right]}}
\newcommand{\Ex}[1]{\ensuremath{\mathbb{E}\left[#1\right]}}
\renewcommand{\O}[1]{\ensuremath{\mathcal{O}\left(#1\right)}}
\newcommand{\tO}[1]{\ensuremath{\tilde{\mathcal{O}}\left(#1\right)}}
\newcommand{\eps}{\varepsilon}
\newcommand{\SUMDISJ}{\mathsf{SUM}-\mathsf{DISJ}}
\newcommand{\RobustMeanEst}{\textsc{RobustMeanEst}}
\newcommand{\CountSketch}{\textsc{CountSketch}}
\def \DISJ    {\mdef{\mathsf{DISJ}}}
\def \calD    {\mdef{\mathcal{D}}}
\def \calE    {\mdef{\mathcal{E}}}
\def \calG    {\mdef{\mathcal{G}}}
\def \calR    {\mdef{\mathcal{R}}}
\def \bD    {\mdef{\mathbf{D}}}
\def \bV    {\mdef{\mathbf{V}}}
\def \bu    {\mdef{\mathbf{u}}}
\def \bw    {\mdef{\mathbf{w}}}
\def \bv    {\mdef{\mathbf{v}}}
\def \bx    {\mdef{\mathbf{x}}}
\def \by    {\mdef{\mathbf{y}}}
\def \bz    {\mdef{\mathbf{z}}}
\newcommand{\mdef}[1]{{\ensuremath{#1}}\xspace}  % Math Def which can also be used in normal text.
\DeclareMathOperator*{\polylog}{polylog}
\DeclareMathOperator*{\poly}{poly}
\newcommand{\flr}[1]{\mdef{\left\lfloor#1\right\rfloor}}              % Absolute value
\newcommand{\ceil}[1]{\mdef{\left\lceil#1\right\rceil}}               % Absolute value
\newcommand{\ignore}[1]{}
\newif\ifnotes\notestrue %set this to true if notes are visible and to false (next line) if they should be hidden
\newcommand{\samson}[1]{\textcolor{purple}{{\bf (Samson:} {#1}{\bf ) }} \marginpar{\tiny\bf
             \begin{minipage}[t]{0.5in}
               \raggedright S:
            \end{minipage}}}            							
\newcommand{\samson}[1]{}
\renewcommand*{\@fnsymbol}[1]{\textcolor{mahogany}{\ensuremath{\ifcase#1\or *\or \dagger\or \ddagger\or
 \mathsection\or \triangledown\or \mathparagraph\or \|\or **\or \dagger\dagger
   \or \ddagger\ddagger \else\@ctrerr\fi}}}
\providecommand{\email}[1]{\href{mailto:#1}{\nolinkurl{#1}\xspace}}
\definecolor{mahogany}{rgb}{0.75, 0.25, 0.0}
\definecolor{darkblue}{rgb}{0.0, 0.0, 0.55}
\definecolor{darkpastelgreen}{rgb}{0.01, 0.75, 0.24}
\definecolor{darkgreen}{rgb}{0.0, 0.2, 0.13}
\definecolor{darkgoldenrod}{rgb}{0.72, 0.53, 0.04}
\definecolor{darkred}{rgb}{0.55, 0.0, 0.0}
  \DeclareFontShape{T1}{lmr}{m}{scit}{<->ssub*lmr/m/scsl}{}%
\begin{document}

\title{On Fine-Grained Distinct Element Estimation}
%\date{}
\author{Ilias Diakonikolas\thanks{University of Wisconsin-Madison. E-mail:\email{ilias@cs.wisc.edu}.}
\and
Daniel M. Kane\thanks{University of California, San Diego. E-mail: \email{dakane@cs.ucsd.edu}.}
\and
Jasper C.H. Lee\thanks{University of California, Davis. E-mail: \email{jasperlee@ucdavis.edu}.}
\and
Thanasis Pittas\thanks{University of Wisconsin-Madison. E-mail: \email{pittas@wisc.edu}.}
\and
David P. Woodruff\thanks{Carnegie Mellon University. E-mail: \email{dwoodruf@andrew.cmu.edu}.}
\and
Samson Zhou\thanks{Texas A\&M University. E-mail: \email{samsonzhou@gmail.com}.}
}

\begin{titlepage}
\maketitle

\begin{abstract}
We study the problem of distributed distinct element estimation, where $\alpha$ servers each receive a subset of a universe $[n]$ and aim to compute a $(1+\varepsilon)$-approximation to the number of distinct elements using minimal communication. While prior work establishes a worst-case bound of $\Theta\left(\alpha\log n+\frac{\alpha}{\varepsilon^2}\right)$ bits, these results rely on assumptions that may not hold in practice. We introduce a new parameterization based on the number $C = \frac{\beta}{\varepsilon^2}$ of pairwise collisions, i.e., instances where the same element appears on multiple servers, and design a protocol that uses only $\mathcal{O}\left(\alpha\log n+\frac{\sqrt{\beta}}{\varepsilon^2} \log n\right)$ bits, breaking previous lower bounds when $C$ is small. We further improve our algorithm under assumptions on the number of distinct elements or collisions and provide matching lower bounds in all regimes, establishing $C$ as a tight complexity measure for the problem. Finally, we consider streaming algorithms for distinct element estimation parameterized by the number of items with frequency larger than $1$. Overall, our results offer insight into why statistical problems with known hardness results can be efficiently solved in practice. 
\end{abstract}
\end{titlepage}

\section{Introduction}
Estimating the number of distinct elements in a large dataset is a fundamental question that was first introduced by Flajolet and Martin~\cite{FlajoletM85} and has subsequently received significant attention, e.g.,~\cite{Cohen97,AlonMS99,Bar-YossefJKST02,DurandF03,RaskhodnikovaRSS09,KaneNW10,CormodeMY11,WoodruffZ12,WoodruffZ14,BravermanGLWZ18,Blasiok20,WoodruffZ21,AjtaiBJSSWZ22,BlockiGMZ23,JainKRSS23,GribelyukLWYZ24} due to both the simplicity of the question as well as its wide range of applications. 
We study the problem of distinct element estimation in a distributed setting, so that there are $\alpha$ servers that each receive a subset of the universe $[n]:=\{1,\ldots,n\}$. 
The goal is for the servers to execute a protocol that can approximate the total number of distinct elements, which is the number of coordinates $j\in[n]$ that appears in at least some server. 
The protocol should use as small of an amount of total communication as possible, where the total communication is the sum of the sizes of all messages exchanged in the protocol in the worst-case. 
To capture approximation, for a prescribed accuracy parameter $\eps\ge 0$, the goal is to output a $(1+\eps)$-approximation to the number of distinct elements. 
The problem of distinct element estimation across a distributed dataset has a large number of applications, including database design~\cite{FinkelsteinST88}, data warehousing~\cite{AcharyaGPR99,Gibbons01}, network traffic monitoring~\cite{akella2003detecting,estan2003bitmap,LiuZRBR20}, internet mapping~\cite{palmer2001connectivity}, and online analytic processing (OLAP)~\cite{ShuklaDNR96,PadmanabhanBMCH03}. 

In the context of machine learning, distributed distinct element estimation plays a crucial role in many applications where data is distributed across multiple nodes or servers. 
For instance, in collaborative filtering~\cite{ResnickISBR94}, such as recommendation systems~\cite{Aggarwal16,KorenBV09}, estimating the distinct preferences or behaviors of users across various platforms requires efficient distributed algorithms. 
Similarly, in anomaly detection~\cite{ChandolaBK09}, identifying rare or novel events across different data sources—such as network traffic or sensor data—requires tracking unique occurrences without centralized data aggregation. 
Distributed distinct element estimation is also relevant in federated learning~\cite{McMahanMRHA17}, where machine learning models are trained across decentralized devices while keeping data local. 
Estimating the number of distinct features or labels across distributed devices is essential for improving training efficiency. 
In large-scale graph analysis~\cite{MalewiczABDHLC10,GonzalezXDCFS14}, where nodes or edges are distributed across servers, this problem helps in tasks like counting distinct subgraphs or community structures. 
Additionally, in streaming data applications~\cite{MankuM02}, such as real-time monitoring or natural language processing, estimating the diversity of items in large data streams is essential for efficient data summarization and decision-making. 

\cite{KaneNW10,Blasiok20} gave a one-pass streaming algorithm for achieving a $(1+\eps)$-approximation to the number of distinct elements on a dataset from a universe of size $[n]$, using $\O{\frac{1}{\eps^2}+\log n}$ bits of space. 
This can be transformed into a distributed protocol across $\alpha$ servers that uses $\O{\frac{\alpha}{\eps^2}+\alpha\log n}$ bits of communication, since each server can locally simulate the streaming algorithm on their dataset and then pass the state of the algorithm to the next server. 
On the lower bound side, \cite{CormodeMY11} showed that distributed distinct element estimation requires $\Omega(\alpha)$ communication, while \cite{ArackaparambilBC09,ChakrabartiR12} showed a lower bound of $\Omega\left(\frac{1}{\eps^2}\right)$. 
These lower bounds were then subsequently strengthened by \cite{WoodruffZ12} and finally \cite{WoodruffZ14} for all parameter regimes to $\Omega\left(\frac{\alpha}{\eps^2}+\alpha\log n\right)$, seemingly resolving the problem by showing that the protocol of \cite{KaneNW10,Blasiok20} is optimal. 

However, the lower bound instance of \cite{WoodruffZ14} requires a constant fraction of coordinates to appear across a constant fraction of servers, which may be unrealistic in many applications. 
For example, in traffic network monitoring, suppose each server oversees a flow of communication, corresponding to messages from individuals, so that the coordinates of the universe would correspond to IP addresses of the senders of the messages. 
Then the lower bound instance of \cite{WoodruffZ14} would require that a constant fraction of IP addresses send messages to a constant fraction of the servers, i.e., it requires a constant fraction of all senders to be high volume. 
In reality, previous studies have shown that internet traffic patterns~\cite{adamic2002zipf} often follow a Zipfian distribution, i.e., a polynomial decay law, c.f., \defref{def:zipf}. 

More generally, it has long been observed that many large datasets across other domains follow a Zipfian distribution. 
For example, the distribution of words in a natural language~\cite{zipf2013psycho}, e.g., user passwords~\cite{WangW16,WangCWHJ17,BlockiHZ18,HouW23}, the distribution of degrees in the internet graph~\cite{KleinbergKRRT99}, and the distribution of population sizes~\cite{gabaix1999zipf,rhodes23talk} have all been commonly observed to follow a Zipfian distribution. 
Indeed, \cite{Mitzenmacher03} claims that ``power law distributions are now pervasive in computer science''.
Thus it seems natural to ask
\begin{quote}
\emph{Does the distributed distinct element estimation problem still require $\Omega\left(\frac{\alpha}{\eps^2}+\alpha\log n\right)$ communication across more ``realistic'' distributions?}
\end{quote}

\subsection{Our Contributions}
In this paper, we give a resounding negative answer to the above question, translating to positive algorithmic results that break previous impossibility barriers. 
We introduce a novel parameterization of the distributed distinct element estimation problem, showing that although previous upper and lower bounds show optimality for the worst-case input, these hardness of approximation results do not necessarily apply across various regimes of our parameterization. 
Namely, we show that the complexity of the problem can be characterized by the number of \emph{pairwise collisions} in the dataset. 
Formally, for vectors $v^{(1)},\ldots,v^{(\alpha)}\in\{0,1\}^n$, we define the number of pairwise collisions to be the number of ordered triplets $(a,b,i)$ such that $1\le a<b\le\alpha$, $i\in[n]$, and $v^{(a)}_i=v^{(b)}_i=1$. 
We remark that the assumption that the vectors $v^{(i)}$ are binary is without loss of generality, as it turns out the resulting protocols and reductions will behave the same regardless of whether a server has a single instance or multiple instances of a coordinate. 
Nevertheless for the sake of completeness, for vectors $v^{(1)},\ldots,v^{(\alpha)}\in\{0,1,\ldots,m\}^n$, we define the number of pairwise collisions to be the number of ordered triplets $(a,b,i)$ such that $1\le a<b\le\alpha$, $i\in[n]$, and $v^{(a)}_i\ge 1$ and $v^{(b)}_i\ge 1$. 

We first show a general protocol for the distributed distinct element estimation problem across general ranges of $F_0(S)$, the number of distinct elements in the dataset $S$ that is the union of all items given to all servers. 
\begin{restatable}{theorem}{thmhighcollsub}
\thmlab{thm:high:colls:ub}
Given a dataset $S$ on a universe of size $n$ with $C=\beta\cdot\O{\min\left(F_0(S), \frac{1}{\eps^2}\right)}$ pairwise collisions for a parameter $\beta\ge 1$, distributed across $\alpha$ players, there exists a protocol that computes a $(1+\eps)$-approximation to $F_0(S)$ with probability at least $\frac{2}{3}$ that uses 
\[\O{\alpha\log n}+\O{\min\left(F_0(S), \frac{1}{\eps^2}\right)}\cdot\sqrt{\beta}\log n\]
bits of communication. 
\end{restatable}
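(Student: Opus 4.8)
The plan is to reduce the problem to \emph{exact} distinct-element counting on a randomly subsampled sub-universe, after first establishing a structural fact: a small collision count forces the total number of element occurrences to be small. Write $d_j$ for the number of players holding coordinate $j$, set $N:=\sum_i|S_i|=\sum_j d_j$ (the total number of occurrences) and $D:=N-F_0(S)=\sum_j(d_j-1)_+$ (the ``duplication''). Each $j$ with $d_j\ge2$ contributes at least $1$ to $C=\sum_j\binom{d_j}{2}$ and obeys $(d_j-1)^2\le d_j(d_j-1)=2\binom{d_j}{2}$, so Cauchy--Schwarz over the colliding coordinates gives
\[
D\;\le\;\sqrt{\bigl|\{j:d_j\ge2\}\bigr|}\cdot\sqrt{\textstyle\sum_j(d_j-1)_+^2}\;\le\;\sqrt{F_0(S)}\cdot\sqrt{2C}\,.
\]
Plugging in $C=\beta\cdot\O{\min(F_0(S),1/\eps^2)}\le\O{\beta\,F_0(S)}$ yields $D=\O{\sqrt\beta\,F_0(S)}$, hence $N=\O{\sqrt\beta\,F_0(S)}$. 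Getting $\sqrt\beta$ here instead of the trivial $D\le C=\O{\beta\,F_0(S)}$ is the crux.

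Assume $\eps$ is below a small absolute constant (otherwise run the protocol with that constant in place of $\eps$: this only sharpens the approximation and changes communication by a constant factor). The protocol: \textbf{(1)} compute, with $\O{\alpha\log n}$ bits, a constant-factor estimate $\widehat F$ of $F_0(S)$ --- each player sends a constant number of its smallest hash values (or Flajolet--Martin ranks) under shared randomness and the coordinator inverts the appropriate order statistic, using the median of $\O{1}$ independent copies; \textbf{(2)} the coordinator broadcasts a subsampling rate $p:=\min\{1,\Theta(1/(\eps^2\widehat F))\}$ (a power of two), chosen so that, once $\widehat F$ is accurate, $p\,F_0(S)=\Theta(\min(F_0(S),1/\eps^2))$ with a large implied constant; \textbf{(3)} using fresh shared pairwise-independent indicators $g(1),\dots,g(n)$, each $1$ with probability $p$, every player $i$ sends the set $\{j\in S_i:g(j)=1\}$, costing $\O{\log n}$ bits per element; \textbf{(4)} the coordinator computes the exact count $F_0^g$ of distinct received coordinates and outputs $F_0^g/p$. (If $p=1$ this is just ``send everything'' and the answer is exact.)

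For correctness, condition on $\widehat F$ being within a constant factor of $F_0(S)$. Then $\mu:=p\,F_0(S)=\Theta(1/\eps^2)$ is large, and $F_0^g=\sum_{j\in S}g(j)$ is a sum of pairwise-independent indicators of mean $\mu$ and variance $\le\mu$, so Chebyshev gives $F_0^g=(1\pm\eps)\mu$, hence $F_0^g/p=(1\pm\eps)F_0(S)$, except with arbitrarily small constant probability (tuning the constant in $p$). For communication, step~(3) sends $\O{\log n}\cdot N^g$ bits where $N^g=\sum_j d_j\,g(j)$ has mean $p\,N=\O{\sqrt\beta\cdot\min(F_0(S),1/\eps^2)}$ (by the structural fact) and variance $\le p\sum_j d_j^2=p(N+2C)=\O{\beta\cdot\min(F_0(S),1/\eps^2)}$; a second Chebyshev bound therefore gives $N^g=\O{\sqrt\beta\cdot\min(F_0(S),1/\eps^2)}$ except with small constant probability, and together with the $\O{\alpha\log n}$ of steps~(1)--(2) this is the claimed bound. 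A union bound over the $\O{1}$ bad events keeps the overall success probability above $\frac23$.

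The step I expect to need the most care is the calibration in (1)--(2): the rate $p$ must be small enough that $N^g=\O{\sqrt\beta\cdot\min(F_0(S),1/\eps^2)}$ yet large enough that $F_0^g$ concentrates to relative error $\eps$, and this window is nonempty precisely because of the structural bound $N=\O{\sqrt\beta\,F_0(S)}$ rather than $\O{\beta\,F_0(S)}$. The remaining ingredients --- implementing the coarse estimator and the subsampled count with limited independence, keeping $g$ independent of the level choice, fixing the constants in $p$, and charging empty players to the $\O{\alpha\log n}$ term --- are routine.
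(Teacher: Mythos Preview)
Your proof is correct and follows essentially the same approach as the paper: obtain a constant-factor estimate of $F_0$ using $\O{\alpha\log n}$ bits, subsample the universe at rate $p\approx\min(1,\frac{1}{\eps^2 F_0})$, have every player send all surviving items, return the rescaled distinct count, and argue correctness via Chebyshev. The key structural step---bounding the total number of occurrences by $\O{\sqrt\beta\cdot F_0(S)}$ via Cauchy--Schwarz on the $d_j$'s (equivalently, the QM--AM inequality the paper invokes)---is the same in both proofs; the only cosmetic difference is that you apply it to the full dataset and then take expectations under sampling, whereas the paper applies it directly to the (at most $\min(F_0,1/\eps^2)$) surviving coordinates and uses Markov for the communication bound.
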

\thmref{thm:high:colls:ub} shows that the $\Omega\left(\frac{\alpha}{\eps^2}\right)$ lower bound of \cite{WoodruffZ14} need not apply when the number of pairwise collisions is in the range of $o(\alpha^2\cdot F_0(S))$. 
That is, the lower bound of \cite{WoodruffZ14} only applies when there is a constant fraction of coordinates that appear across a constant fraction of servers.

In the case where the number of pairwise collisions is less than the number of distinct elements, e.g., $C<F_0(S)$, we can further improve the guarantees of our protocol as follows:
\begin{restatable}{theorem}{thmlowcollsub}
\thmlab{thm:low:colls:ub}
Given a dataset $S$ on a universe of size $n$ with the promise that there are at most $C\le F_0(S)$ pairwise collisions, distributed across $\alpha$ players, there exists a protocol that uses total communication 
\[\tO{\alpha\log n+\max\left(\frac{1}{F_0(S)},\eps^2\right)\cdot\frac{C}{\eps^2}\log n}\]
bits, and with probability at least $\frac{2}{3}$, outputs a $(1+\eps)$-approximation to $F_0(S)$. 
\end{restatable}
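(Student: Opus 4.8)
The plan is to split $F_0(S)$ into a ``bulk'' term that is essentially free to compute and a small ``correction'' term that the promise $C\le F_0(S)$ makes cheap to estimate by subsampling. For $i\in[n]$ let $d_i$ be the number of players whose support contains $i$, so that $F_0(S)=\abs{\{i:d_i\ge 1\}}$ and $C=\sum_i\binom{d_i}{2}$, and set $N:=\sum_{a=1}^{\alpha}\abs{\mathrm{supp}(v^{(a)})}=\sum_i d_i$ and $X:=\sum_i(d_i-1)=N-F_0(S)$. Since $d_i-1\le\binom{d_i}{2}$ whenever $d_i\ge 1$, we get $X\le\sum_i\binom{d_i}{2}=C$, so the promise gives $X\le C\le F_0(S)$ and hence $F_0(S)\le N\le 2F_0(S)$; the same estimate gives $\sum_i(d_i-1)^2\le 2\sum_i\binom{d_i}{2}=2C$. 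Thus $N$ is a $2$-approximation of $F_0(S)$ (so for $\eps\ge 1$ we are already done), and the first step is to have each player send $\abs{\mathrm{supp}(v^{(a)})}$, which costs $\O{\alpha\log n}$ bits and lets the coordinator compute $N$ exactly. The whole difficulty is thereby reduced to estimating the correction $X\le C$ to additive error $O(\eps\cdot F_0(S))$.

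To estimate $X$ I would subsample the universe at rate $q:=\min\!\bigl(1,\,cC/(\eps^2N^2)\bigr)$ for a large absolute constant $c$: using public randomness, draw a pairwise independent $h\colon[n]\to\{0,1\}$ with $\PPr{h(i)=1}=q$, broadcast $q$ (another $\O{\alpha\log n}$ bits), and have each player report the identities of its surviving support elements $\{i:i\in\mathrm{supp}(v^{(a)}),\,h(i)=1\}$. From these reports the coordinator recovers $d_i$ for every surviving $i$, forms $\hat X:=\tfrac1q\sum_{i:h(i)=1}(d_i-1)$, and outputs $\hat F_0:=N-\hat X$; when $q=1$ it recovers the entire union and outputs $F_0(S)$ exactly. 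The estimator is unbiased, and when $q<1$, pairwise independence gives $\Var{\hat X}=\tfrac{1-q}{q}\sum_i(d_i-1)^2\le\tfrac{2C}{q}=\tfrac{2\eps^2N^2}{c}\le\tfrac{8\eps^2F_0(S)^2}{c}$, so Chebyshev's inequality yields $\abs{\hat F_0-F_0(S)}=\abs{\hat X-X}\le\eps F_0(S)$ with probability at least $\tfrac9{10}$ once $c$ is large enough. The point of working with $N-\hat X$ rather than a direct $F_0$-estimator is that a subsampled count of distinct elements at rate $q$ has variance $\Theta(F_0(S)/q)$ and would need $q=\Omega\bigl(1/(\eps^2F_0(S))\bigr)$, whereas the difference estimator only ``sees'' coordinates with $d_i\ge 2$, so its variance is governed by $C$ rather than by $F_0(S)$, and a much smaller rate suffices.

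For the communication bound I would handle the three cases in the definition of $q$ separately. If $q=1$ (i.e.\ $\eps^2N^2\le cC$) the protocol sends everything and computes $F_0(S)$ exactly in $\O{N\log n}$ bits, and $\eps^2F_0(S)^2\le\eps^2N^2\le cC$ forces $F_0(S)\le cC/(\eps^2F_0(S))$, so $N=\O{C/(\eps^2F_0(S))}$ and this is within the claimed bound. If $q<1/N$ then $cC<\eps^2N\le 2\eps^2F_0(S)$, so (for $c\ge 2$ and $\eps<1$) $X\le C<\eps F_0(S)$ and the coordinator simply skips sampling and outputs $N$, at cost $\O{\alpha\log n}$. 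In the remaining case $q\in[1/N,1)$ the sampling round is done in two stages: each player first sends the \emph{number} of its surviving elements (another $\O{\alpha\log n}$ bits), from which the coordinator computes $T:=\sum_i h(i)d_i$; if $T>\tau:=10qN$ the protocol aborts (a failure event of probability at most $\tfrac1{10}$ by Markov, since $\Ex{T}=qN$), and otherwise the players send the $T\le\tau$ surviving identities, costing $\O{\tau\log n}$ bits, after which the coordinator has $d_i$ for every surviving $i$. Since $\tau=10qN=10cC/(\eps^2N)\le 10cC/(\eps^2F_0(S))$, the total communication across all cases is $\O{\alpha\log n+\tfrac{C}{\eps^2F_0(S)}\log n}$, and since $\tfrac1{\eps^2F_0(S)}\le\max\!\bigl(\tfrac1{F_0(S)},\eps^2\bigr)\cdot\tfrac1{\eps^2}$ this lies within $\tO{\alpha\log n+\max\!\bigl(\tfrac1{F_0(S)},\eps^2\bigr)\tfrac{C}{\eps^2}\log n}$. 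The Chebyshev and abort failure events together leave success probability at least $\tfrac45\ge\tfrac23$.

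The part I expect to require the most care is the interaction between the sampling rate $q$ and the \emph{a priori unknown} value of $F_0(S)$: one cannot choose $q$ without some handle on $F_0(S)$, and it is precisely the cheaply computable surrogate $N$, together with the sandwich $F_0(S)\le N\le 2F_0(S)$ coming from $C\le F_0(S)$, that supplies a good enough estimate of $F_0(S)$ to set $q$ without paying an extra logarithmic factor for ``guessing.'' The only other technical point is converting the expected size of the survivor set into a worst-case communication bound, which the two-stage ``send the count, abort if it is too large, then send the identities'' scheme handles with one application of Markov's inequality.
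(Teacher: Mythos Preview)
Your proof is correct and follows the same core idea as the paper: write $F_0(S)=F_1(S)-X$ where $X=\sum_i\max(d_i-1,0)$ is the excess mass, and estimate $X$ by subsampling the universe at a rate calibrated to $C$ and an estimate of $F_0(S)$. The paper's \algref{alg:lzero:collisions} does exactly this, with the variance and Chebyshev analysis matching yours.

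Where you differ is in how you obtain the constant-factor handle on $F_0(S)$ needed to set the sampling rate. The paper invokes a separate black-box protocol (\lemref{lem:lzero:cons}) to get a $4$-approximation $X$, and also performs an initial subsampling pass to reduce $F_0$ to roughly $1/\eps^2$ before estimating excess mass. You instead exploit the promise $C\le F_0(S)$ directly: since $X\le C\le F_0(S)$, the quantity $N=F_1(S)$ already satisfies $F_0(S)\le N\le 2F_0(S)$, so $N$ itself is the needed approximation and no auxiliary protocol or preliminary subsampling is required. This is a genuine simplification---it removes one subroutine and one pass of subsampling---and your case analysis on $q$ (including the $q<1/N$ shortcut where $N$ alone is already a $(1+\eps)$-approximation) is cleaner than the paper's treatment. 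Your two-stage ``count then abort if too large'' trick to convert expected to worst-case communication is also handled more explicitly than in the paper, which appeals to Markov only in passing. Both arguments arrive at the bound $\O{\alpha\log n + \tfrac{C}{\eps^2 F_0(S)}\log n}$ (which, as you note, is in fact slightly tighter than the $\max$ form in the theorem statement).
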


While ascertaining the number of pairwise collisions itself may be a difficult challenge and possibly lead to a chicken-and-egg problem, we remark that computing a loose upper bound $C$ on the number of collisions can be performed much easier, particularly given distributional or other a priori side information about the number of collisions. 
For example, additional knowledge about the number of collisions can be collected using previous datasets from a similar source, in a similar vein to the auxiliary input that is often utilized by learning-augmented algorithms~\cite{Mitzenmacher18,Balcan20,MitzenmacherV20}. 
We also remark that the number of pairwise collisions is an important statistic in other problems, such as uniformity testing~\cite{DiakonikolasKS18,FischerMO18,AcharyaCT19,MeirMO19} and closeness testing~\cite{DiakonikolasGPP19}. 

We remark that for the case when the number of servers for each item follows a Zipfian distribution across the $\alpha$ servers, then the total number of pairwise collisions is $C=\O{\alpha}\cdot F_0(S)$, provided that the Zipfian exponent is a constant larger than $1$. 
On the other hand, the number of distinct elements can be substantially larger than $\frac{1}{\eps^2}$, where $\eps$ is the desired accuracy for the output estimate for the number of distinct elements. 
Our results indicate that in this regime, only $\tO{\alpha\log n}$ bits of communication suffice, which bypasses the known $\Omega\left(\frac{\alpha}{\eps^2} +\alpha\log n\right)$ lower bounds. 
In particular, if the number of distinct elements is $\O{n}$ and $\eps$ is around $\frac{1}{\sqrt{n}}$, then the lower bounds indicate $\Omega(n)$ communication is necessary, which is substantially worse than our protocol that achieves $\tO{\alpha\log n}$ communication. 

We complement \thmref{thm:high:colls:ub} and \thmref{thm:low:colls:ub} with a pair of lower bounds matching in $\beta$ and $\frac{1}{\eps}$:

\begin{restatable}{theorem}{thmlbmanycolls}
\thmlab{thm:lb:many:colls}
Let $\beta\in[1,\alpha^2]$. 
Given a dataset $S$ with $C=\Omega(\beta\cdot F_0(S))$ pairwise collisions, distributed across $\alpha$ players, any protocol that computes a $(1+\eps)$-approximation to $F_0(S)$ with probability at least $\frac{2}{3}$ uses $\sqrt{\beta}\cdot\Omega\left(\min\left(F_0(S),\frac{1}{\eps^2}\right)\right)$ communication.
\end{restatable}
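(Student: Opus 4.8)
The plan is to take the known worst-case hard instance for distributed distinct element estimation --- the one underlying the $\Omega(\alpha/\eps^2+\alpha\log n)$ bound of \cite{WoodruffZ14}, in which a constant fraction of the coordinates appear on a constant fraction of the servers --- but to run it on only $p:=\max\{2,\lfloor\sqrt\beta\rfloor\}$ of the servers, handing empty inputs to the remaining $\alpha-p$. Since $\beta\in[1,\alpha^2]$ we have $p\in[2,\alpha]$, leaving room for the padding. The point is that on an instance over $p$ servers in which a constant fraction of the $F_0(S)$ coordinates sit on $\Theta(p)$ servers, the number of pairwise collisions is $\Theta(p^2\cdot F_0(S))$, so with $p\asymp\sqrt\beta$ it is $\Theta(\beta\cdot F_0(S))$, matching the hypothesis; meanwhile the communication among the $p$ active servers will be $\Omega\!\left(p\cdot\min\!\left(F_0(S),1/\eps^2\right)\right)=\sqrt\beta\cdot\Omega\!\left(\min\!\left(F_0(S),1/\eps^2\right)\right)$.

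I would carry this out in two steps. First, fix the base instance to have $D:=\min\!\left(F_0(S),\lfloor 1/(9\eps^2)\rfloor\right)$ heavy blocks and base accuracy $\eps':=1/\sqrt D$, so that $\eps'\ge 3\eps$ and $1/(\eps')^2=D=\Theta\!\left(\min\!\left(F_0(S),1/\eps^2\right)\right)$; then any $(1+\eps)$-approximation of $F_0(S)$ is in particular a $(1+\eps')$-approximation, and invoking the $p$-server lower bound at accuracy $\eps'$ (equivalently, a reduction from a gap version of multiparty set disjointness on $p$ players) yields communication $\Omega(p/(\eps')^2)=\Omega(pD)$ for an instance with $F_0=\Theta(D)$ and $\Theta(p^2D)=\Theta(\beta D)$ collisions. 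Second, to reach arbitrary target values of $F_0(S)$ --- in particular $F_0(S)\gg 1/\eps^2$, where the theorem claims $\Omega(\sqrt\beta/\eps^2)$ --- replace each coordinate of the base instance by a bundle of $b:=\lceil F_0(S)/D\rceil$ coordinates sharing the same server-membership pattern. This scales both $F_0$ and the collision count by $b$ while preserving the hardness (now a $(1+\eps')$-gap, hence a $(1+\eps)$-gap, on $F_0=\Theta(bD)=\Theta(F_0(S))$), producing a dataset $S$ with the desired $F_0(S)$, with $C=\Theta(p^2bD)=\Theta(\beta\cdot F_0(S))$, and for which the base instance's communication lower bound $\Omega(\sqrt\beta\cdot D)=\sqrt\beta\cdot\Omega\!\left(\min\!\left(F_0(S),1/\eps^2\right)\right)$ still applies.

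It remains to push this through the empty-server padding. Adding servers $p+1,\dots,\alpha$ with empty inputs changes neither $F_0(S)$ nor the collision count. Given any protocol $\Pi$ solving the $\alpha$-server problem on such inputs, the $p$ real servers simulate $\Pi$ among themselves, with one of them additionally producing the messages of the empty servers (these are functions only of the public coins and the transcript so far), so the resulting $p$-server protocol has communication no more than that of $\Pi$ and the same success probability $\tfrac23$. Combining with the previous step, $\Pi$ must use $\sqrt\beta\cdot\Omega\!\left(\min\!\left(F_0(S),1/\eps^2\right)\right)$ bits.

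The step I expect to be the main obstacle is the first: certifying that the known worst-case hard instance can be set up so that it simultaneously (i) is hard at the (possibly coarse) accuracy $\eps'=\Theta(1/\sqrt D)$ in \emph{both} regimes of the $\min$ --- for $F_0(S)\ge 1/\eps^2$ this is \cite{WoodruffZ14} essentially unchanged, but for $F_0(S)<1/\eps^2$ one must verify hardness at accuracy $\Theta(1/\sqrt{F_0(S)})$ over a universe with only $\Theta(F_0(S))$ present coordinates --- and (ii) already exhibits the ``constant fraction of coordinates on a constant fraction of servers'' structure, which is exactly what pins the collision count to $\Theta(\beta\cdot F_0(S))$ rather than something larger. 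For the bundling step one additionally needs the base instance to distinguish two cardinalities differing \emph{additively} by $\Theta(\eps'D)$, so that multiplying by $b$ keeps a clean $(1+\eps')$-gap; the padding-and-simulation argument and the edge cases $\beta\approx1,\ \beta\approx\alpha^2$ are routine.
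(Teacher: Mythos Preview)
Your approach is essentially the same as the paper's: embed the \cite{WoodruffZ14} $\SUMDISJ$ hard instance on $p\approx\sqrt{\beta}$ of the $\alpha$ servers (leaving the rest empty), observe that in this instance a constant fraction of coordinates land on $\Theta(p)$ servers and hence the pairwise collision count is $\Theta(p^2\cdot F_0(S))=\Theta(\beta\cdot F_0(S))$, and read off the $\Omega(p\cdot\min(F_0(S),1/\eps^2))$ communication bound. The paper carries this out by directly re-parameterizing $\SUMDISJ$ (setting $t=\Theta(F_0(S)/p)$ and $tL=\Theta(F_0(S))$ in the small-$F_0$ regime), whereas you reach the same place via the accuracy change $\eps'=1/\sqrt{D}$ plus bundling; these are equivalent manipulations, and your bundling step to hit arbitrary target $F_0(S)$ is in fact not needed, since the theorem only asserts the existence of a hard instance with $C=\Omega(\beta\cdot F_0(S))$ rather than matching a prescribed $F_0(S)$.
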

\begin{restatable}{theorem}{thmlowcollslb}
\thmlab{thm:low:colls:lb}
Given a dataset $S$ with the promise that there are at most $C\in[\eps\cdot F_0(S),F_0(S)]$ pairwise collisions distributed across $\alpha$ players, any protocol that computes a $(1+\eps)$-approximation to $F_0(S)$ with probability at least $\frac{2}{3}$ uses $\Omega\left(\frac{C}{\eps^2\cdot F_0(S)}\right)$ communication.  
\end{restatable}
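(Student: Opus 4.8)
The plan is to reduce a two-party communication problem to distributed distinct-element estimation, activating only two of the $\alpha$ players (the other $\alpha-2$ hold the empty set, so any message-passing protocol restricts to a two-party protocol between the active players). Write $N:=F_0(S)$ and fix a target collision budget $C$ in the promised range; the interesting regime is $2\eps N\le C\le N$ with $C=\O{\eps^2 N^2}$ --- at the bottom of the range the claimed bound is $\Theta(1/\eps)$ and follows from the standard Gap-Hamming argument, while if $C$ is larger the claimed bound already exceeds the trivial $\O{N\log n}$ protocol in which one player sends its whole set. Set $w:=\lceil(N+C)/2\rceil$ and $g:=\eps N/4$, and reserve a block $U$ of $u=\Theta(w^2/C)$ fresh universe elements ($u\ge w$ since $C\le N$, and $u\le N/\eps$, so $U$ fits in $[n]$ when $n$ is polynomially large). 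Both active players hold a $w$-element subset of $U$, drawn from hard distributions under which $|A\cap B|=C$ in the \textsc{Yes} case and $|A\cap B|=C-2g$ in the \textsc{No} case, with the set sizes $w$ identical (hence public) in both. Then the dataset has exactly $|A\cap B|\le C$ pairwise collisions, and, up to rounding, $F_0=|A\cup B|=2w-|A\cap B|$ equals $N$ in the \textsc{Yes} case and $N+2g=N(1+\eps/2)$ in the \textsc{No} case. A sufficiently accurate $(1+\Theta(\eps))$-approximation of $F_0$ separates the two cases, so after rescaling $\eps$ by a constant, any $(1+\eps)$-approximation protocol succeeding with probability $\tfrac23$ solves this promise problem with probability $\tfrac23$; and $C\in[\eps F_0,F_0]$ in both cases up to constants.

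It remains to show the promise problem --- distinguish $|A\cap B|=C$ from $|A\cap B|=C-2g$, given $|A|=|B|=w$ and $|A\cap B|\le C$ --- requires $\Omega\left(\frac{Cw}{g^2}\right)$ bits, which gives $\Omega\left(\frac{C}{\eps^2 N}\right)=\Omega\left(\frac{C}{\eps^2 F_0(S)}\right)$ after substituting $w=\Theta(N)$, $g=\Theta(\eps N)$. I would realize this as a ``blocked'' Gap-Hamming problem: partition $U$ into $\Theta(C)$ groups of $\Theta(w/C)$ elements, let each player hold a $\Theta(w/C)$-subset in every group, and arrange the per-group intersection sizes so their total is $C$ versus $C-2g$. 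This is a direct sum of $\Theta(C)$ independent dense Gap-Hamming instances on $\Theta(w/C)$ elements; one then combines (i) the classical fact that a single dense Gap-Hamming instance on $m$ elements has information cost $\Omega(m)$ under a suitable per-coordinate-product hard distribution, with (ii) a direct-sum/direct-product argument showing that estimating the grand total to additive $\pm g$ --- an outer Gap-Hamming on $\Theta(C)$ super-coordinates with gap $g$ --- forces the transcript to carry this much information about $\Omega(C^2/g^2)$ of the groups, for a total of $\Omega\left(\frac{C^2}{g^2}\cdot\frac{w}{C}\right)=\Omega\left(\frac{Cw}{g^2}\right)$. This two-level composition is precisely what a $\DistGapAnd$-type problem and its conditional-information-complexity analysis are designed to capture.

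I expect step (ii) to be the main obstacle: the goal is the tight $\Omega\left(\frac{Cw}{g^2}\right)$, not the weaker $\Omega\left(\frac{C^2}{g^2}\right)$ one gets from the naive reduction that treats each group as a single super-coordinate and reduces from one Gap-Hamming instance on $\Theta(C)$ coordinates --- which throws away the fact that each super-coordinate conceals a $\Theta(w/C)$-bit sub-problem that the transcript must essentially solve. Running the information-complexity argument directly on the composed instance, so that each of the $\Omega(C^2/g^2)$ relevant groups contributes $\Omega(w/C)$ to the transcript's information cost (while the per-group correlation is calibrated so that the outer instance has the right gap and each inner instance stays hard), is the technical heart. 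Matching this with the straightforward subsampling protocol --- sample $\Theta(C^2/g^2)$ groups and solve each with $\Theta(w/C)$ bits --- confirms $C/(\eps^2 F_0)$ as the right complexity in this regime; the boundary cases ($C$ close to $\eps N$, or $C$ close to $N$ so that groups shrink to constant size) are handled by falling back to the standard Gap-Hamming / $F_0$ lower bound.
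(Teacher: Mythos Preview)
Your plan is essentially the paper's approach: a two-level composition of an outer Gap-Hamming/$\GapAnd$ problem with an inner disjointness problem, analyzed via conditional information complexity and ``informative indices''. The paper routes through an intermediate duplication-detection problem (\thmref{thm:main}) and its $\GapSet$ game, but the content is the same, and your restriction to two active players matches the paper's own ``suppose $\alpha=\O{1}$'' in the proof of \thmref{thm:low:colls:lb}.

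There is, however, a concrete gap in your step~(ii), precisely where you flag the ``main obstacle''. The information lower bound for $\GapAnd_t$ (\lemref{lem:inf:gap:and}) only bites when the gap is $\Theta(\sqrt{t})$: under any product distribution on the groups, both the YES and NO totals must occur with constant probability, which forces gap $=O(\sqrt{t})$ by concentration. In your construction you take $t=\Theta(C)$ independent groups with per-group intersection $\Theta(1)$ and gap $g=\Theta(\eps N)$; in the main regime $N\gg 1/\eps^2$ (with $C\le N$) this gives $g\gg\sqrt{C}$, so the event $|A\cap B|=C-2g$ has negligible mass and the claim of ``$\Omega(C^2/g^2)$ informative groups'' has no footing. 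The paper resolves this not by a sharper outer bound but by a padding trick you omit: it uses only $t=\Theta(C^2/g^2)$ \emph{independent} outer coordinates (so the gap becomes $\Theta(\sqrt{t})$ and \lemref{lem:inf:gap:and} applies), proves $\Omega(nt)$ for the composed game, and then takes $C/t$ identical copies of the whole instance to restore the collision count --- the copies are perfectly correlated and thus scale the intersection without adding information. Separately, your group arithmetic is inconsistent: $\Theta(C)$ groups of size $\Theta(w/C)$ sum to $\Theta(w)$, not $|U|=\Theta(w^2/C)$; you presumably want groups of size $\Theta(w^2/C^2)$ with each player holding a $\Theta(w/C)$-subset, so the per-group intersection is $\Theta(1)$.
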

We remark that \thmref{thm:lb:many:colls} follows immediately as a parameterization of a lower bound from~\cite{WoodruffZ14}, while \thmref{thm:low:colls:lb} is perhaps our most technically involved contribution. 
We recall that well-known results, e.g.,~\cite{CormodeMY11} additionally show that regardless of the number of pairwise collisions and regardless of $F_0(S)$, any protocol that estimates $F_0(S)$ to a constant factor requires $\Omega(\alpha)$ communication. 
Thus, \thmref{thm:lb:many:colls} and \thmref{thm:low:colls:lb} together imply the lower bound results in \tableref{table:summary}. 

Moreover, we remark that for the regime where $C<\eps\cdot F_0(S)$, then $F_1(S)=\sum_{i\in[\alpha]}\|v^{(i)}\|_1$ becomes an additive $\eps\cdot F_0(S)$ approximation to $F_0(S)$, and so the players can use $\O{\alpha\log n}$ bits of communication to deterministically compute a $(1+\eps)$-multiplicative approximation to $F_0(S)$. 

Our results can be viewed as a first step toward analyzing standard statistical problems with known lower bounds, e.g.,~\cite{WoodruffZ12,WoodruffZ14} through the lens of parameterized complexity. 
Thus our work makes important progress toward a better understanding of natural parameters that explain why these problems are not challenging in practice. 
We summarize our results for the distributed distinct elements estimation problem in \tableref{table:summary}. 

\begin{table*}[!htb]
\centering
\begin{tabular}{|c|c|c|}\hline
\multirow{2}{*}{} & \multicolumn{2}{|c|}{$C=\beta\cdot F_0(S)$, $\beta\ge1$} \\\cline{2-3}
& $F_0(S)<\frac{1}{\eps^2}$ & $F_0(S)\ge\frac{1}{\eps^2}$ \\\hline
\thmref{thm:high:colls:ub} & $\O{\alpha\log n+\sqrt{\beta}\cdot F_0(S)\cdot \log n}$ & $\O{\alpha\log n+\frac{\sqrt{\beta}}{\eps^2}\log n}$ \\\hline
\thmref{thm:lb:many:colls} & $\Omega(\alpha+\sqrt{\beta}\cdot F_0(S))$ & $\Omega\left(\alpha+\frac{\sqrt{\beta}}{\eps^2}\right)$ \\\hline
& \multicolumn{2}{|c|}{$C=\beta\cdot F_0(S)$, $\beta<1$, $C>\eps\cdot F_0(S)$}\\\cline{2-3}
& $F_0(S)<\frac{1}{\eps^2}$ & $F_0(S)\ge\frac{1}{\eps^2}$ \\\hline
\thmref{thm:low:colls:ub} & $\O{\alpha\log n+\frac{\beta}{\eps^2}\log n}$ & $\O{\alpha\log n+\beta\cdot F_0(S)\cdot\log n}$ \\\hline
\thmref{thm:low:colls:lb} & $\Omega\left(\alpha+\frac{\beta}{\eps^2}\right)$ & $\Omega(\alpha+\beta\cdot F_0(S))$ \\\hline
\end{tabular}
\caption{A summary of our results for the distributed distinct elements estimation problem on a universe of size $n$ across $\alpha$ servers, parameterized by the number $C$ of collisions across the $\alpha$ servers, and the accuracy parameter $\eps\in(0,1)$.}
\tablelab{table:summary}
\end{table*}

In proving \thmref{thm:low:colls:lb}, we first show the hardness of approximation for the closely related distributed duplication detection problem, in which the goal is for the $\alpha$ servers to approximate the total number of duplicates, where a duplicate is defined to be a coordinate $j\in[n]$ that appears on at least two distinct servers. 
For discussion on the applications of the distributed duplication detection problem, see \appref{sec:duplicates}. 
\begin{theorem}
\thmlab{thm:main}
Let $C$ be an input parameter for the number of duplicates and $\eps\in(0,1)$ be an accuracy parameter.  
Suppose there are $\alpha$ players, each receiving a set of at most $s$ items from a universe of size $N=\Omega(s)$. 
Then any protocol $\Pi$ that with probability at least $\frac{2}{3}$, identifies whether there are fewer than $(1-\eps)\cdot C$ duplicates or more than $(1+\eps)\cdot C$ duplicates requires $\Omega(\alpha s)$ communication for $C<\frac{4}{\eps^2}$ and $\Omega\left(\frac{\alpha s}{C\eps^2}\right)$ communication for $C\ge\frac{4}{\eps^2}$. 
\end{theorem}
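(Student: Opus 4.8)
I would prove \thmref{thm:main} as a distributional lower bound, exhibiting a hard input distribution that is a ``planted‑weight'' direct sum of two‑party set‑disjointness instances, and charging a correct protocol's communication as
\[
\big(\text{number of }\Disj\text{ instances the protocol must ``solve''}\big)\;\times\;\big(\text{communication to solve one }\Disj\text{ instance}\big).
\]
The dependence on $C$ and $\eps$ enters only through the first factor, via an anti‑concentration argument: distinguishing the two planted weights forces the protocol to pin down the number of ``active'' instances to within the multiplicative gap $\eps C$, and by a hypergeometric subsampling bound this cannot be done while learning the status of only a small fraction of the instances. The second factor is handled by the (known) linear information complexity of $\Disj$ together with a direct‑sum argument.

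\textbf{The hard instance.} Set $K := \max(\alpha,\, 2C)$ and $w := \Theta(\alpha s / K)$, and partition $[N]$ into $K$ disjoint blocks $B_1,\dots,B_K$ of size $w$ (so $N=\Theta(\alpha s)$, which is $\Omega(s)$ as required, assuming the nondegenerate regime $C=O(\alpha s)$). Pair up the $\alpha$ players in a balanced round‑robin so that each block $B_c$ is ``owned'' by two players and each player owns $\lceil 2K/\alpha\rceil$ blocks; in block $B_c$ the two owners receive sets $A_c, A'_c \subseteq B_c$ drawn from the standard hard distribution for $\Disj_w$ with a unique intersection, so that $|A_c \cap A'_c| \in \{0,1\}$ and equals $0$ under the ``collapsed'' marginal. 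Each player then holds $\lceil 2K/\alpha\rceil \cdot w = \O{s}$ items. Finally, plant intersections: choose a uniformly random set $T$ of blocks of size $w_{\mathrm{yes}} = (1+2\eps)C$ in the ``yes'' case and $w_{\mathrm{no}} = (1-2\eps)C$ in the ``no'' case, and force $|A_c \cap A'_c| = 1$ exactly for $c \in T$. Then the number of duplicated coordinates equals $|T| \in \{w_{\mathrm{no}}, w_{\mathrm{yes}}\}$, meeting the promise, and any protocol that solves the promise problem with probability $\ge 2/3$ in particular distinguishes $|T| = w_{\mathrm{no}}$ from $|T| = w_{\mathrm{yes}}$ with probability $\ge 2/3$.

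\textbf{The lower bound.} Two ingredients. (i) \emph{Anti‑concentration.} Suppose a protocol's transcript lets one ``resolve'' (determine the value $|A_c\cap A'_c|$ of) a set $R$ of $r$ blocks. Conditioned on the transcript, the number of active blocks inside $R$ is hypergeometrically distributed, with the two cases' means separated by $\Theta(\eps C \cdot r/K)$ and standard deviation $\Theta(\sqrt{rC/K})$ (when $K \gg C$; a similar computation applies for $K = \Theta(C)$). Distinguishing the two planted weights from the transcript thus forces $\eps C\cdot r/K \gtrsim \sqrt{rC/K}$, i.e. $r = \Omega\!\big(K/\max(1,\eps^2 C)\big)$. (ii) \emph{Direct sum.} The $K$ instances are independent, and resolving a $\Disj_w$ instance over its hard distribution has information complexity $\Omega(w)$ by the result of Bar‑Yossef et al.; a standard hybrid argument that embeds a single instance in a random block and fills the rest from the distribution shows that a protocol resolving $r$ of the $K$ instances has information cost $\Omega(r w)$. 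Combining, any correct protocol has communication
\[
\Omega\!\left(\frac{K w}{\max(1,\eps^2 C)}\right) \;=\; \Omega\!\left(\frac{\alpha s}{\max(1,\eps^2 C)}\right),
\]
which is $\Omega(\alpha s)$ when $C < 4/\eps^2$ and $\Omega(\alpha s/(C\eps^2))$ when $C \ge 4/\eps^2$, as claimed.

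\textbf{Main obstacle.} I expect the crux to be fusing (i) and (ii) rigorously — that is, proving a clean direct‑sum theorem for \emph{estimating the planted weight of a sum of $\Disj$ instances up to the multiplicative gap}, valid against arbitrary adaptive, multi‑round, public‑coin protocols over the product distribution. The delicate point is converting ``the protocol must resolve many instances'' (step (i)) into a genuine lower bound on information cost that distributes across the instances (step (ii)), without the protocol ``cheating'' by gleaning a little information about each of many instances; this is where a Gap‑Hamming‑type phenomenon must be isolated inside the information‑complexity machinery, presumably through the $\GapAnd$/$\DistGapAnd$‑style primitives. Secondary points are the hypergeometric computation in the boundary regime $K \asymp C$, the cases $C \lesssim \alpha$ versus $C \gtrsim \alpha$ (which set $K$ differently), and confirming that the planted‑weight distribution is genuinely hard to distinguish rather than merely having different expected counts.
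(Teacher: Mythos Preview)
Your high-level plan---compose a Gap-Hamming-type ``how many of the $K$ blocks are active'' outer problem with per-block $\Disj$ and charge (number of blocks that must be learned) $\times$ (cost per block)---is the same as the paper's. But two design choices and one technical step differ in ways that matter.

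\textbf{Construction.} The paper does \emph{not} plant a fixed number of intersections and reason hypergeometrically. Instead it uses $\GapAnd_t$: draw $\bx,\by\in\{0,1\}^t$ uniformly and independently, and in block $j$ place a collision iff $x_j=y_j=1$. The YES/NO promise is $\langle\bx,\by\rangle\gtrless t/4\pm\Theta(\sqrt t)$. The advantage is that the per-block indicators $D_j$ are independent, and one can invoke the known information lower bound $I(\Pi;D_1,\ldots,D_t)=\Omega(t)$ for $\GapAnd$ directly, rather than proving a bespoke hypergeometric anti-concentration lemma. The paper also does not fix a round-robin pairing; the inner problem is a genuine \emph{multiplayer} pairwise-disjointness (the special coordinate is given to $\ell$ random players in the first half and $k-\ell$ in the second), which keeps the marginals symmetric and plugs into a Hellinger/cut-and-paste argument \`a la Bar-Yossef et al.\ to get $\Omega(n)$ per block.

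\textbf{The fusing step.} The obstacle you name is exactly where your sketch breaks: your step (i) assumes the transcript \emph{resolves} some set $R$ of blocks and then runs a hypergeometric separation on $|T\cap R|$. But a protocol can glean $o(1)$ bits about each of many $D_j$'s without resolving any, and still estimate $\sum_j D_j$; there is no combinatorial ``resolved set'' to point to. The paper avoids this entirely by working in information cost. From $I(\Pi;D_1,\ldots,D_t)=\Omega(t)$ and the chain rule, there are $\Omega(t)$ \emph{informative indices} $j$ with $I(\Pi;D_j\mid D_{<j})=\Omega(1)$. For each such $j$ (and for a constant fraction of conditionings of $D_{<j}$), one embeds a single-block GUESS/$\Disj$ instance into block $j$, fills the remaining blocks from the product distribution using private randomness, and uses the $\Omega(1)$ advantage on $D_j$ to solve GUESS; the per-block lower bound then yields $\Omega(n)$ information at that index. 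Summing gives $\Omega(nt)$ for $\GapSet$, and the reduction to duplicate estimation sets $t=\Theta(C)$, $n=\Theta(\alpha s/C)$ when $C<4/\eps^2$, and $t=1/\eps^2$, $n=\Theta(\alpha s/C)$ plus a padding of $\Theta(C/t)$ copies when $C\ge 4/\eps^2$. This is precisely the ``$\GapAnd$-inside-the-information-complexity-machinery'' route you gestured at in your last paragraph; the point is that it replaces your discrete ``resolve $r$ blocks'' count by a per-index mutual-information charge, which is what makes the direct sum rigorous against protocols that spread information thinly.
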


We remark that for $\eps=0$, the lower bound of $\Omega(\alpha s)$ follows via a simple reduction from previous work on non-promise set disjointness in the coordinator model~\cite{BravermanEOPV13}. 
Thus, the main contribution in \thmref{thm:main} is to show that even the problem of approximating the number of duplicates requires a substantial amount of total communication. 
We also give a simple protocol that uses $\O{\frac{\alpha s\log\alpha}{C\eps^2}}$ bits of communication, showing that \thmref{thm:main} is near-optimal. 

Further, we remark that given \thmref{thm:low:colls:ub}, a natural question would be to ask whether the promise of the upper bound on $C$ must be known in advance in order to achieve improved communication bounds, perhaps through a preliminary subroutine to estimate $C$. 
However, \thmref{thm:main} shows that in general, one cannot estimate $C$ using ``small'' total communication when $C$ is small. 

Finally, we complement our theoretical results with a number of empirical evaluations in \secref{sec:exp}. 
We show that the standard CAIDA dataset, often used to analyze statistics on virtual traffic networks, is surprisingly skewed, allowing our algorithm to outperform the previous worst-case theoretical bounds by several orders of magnitude. 
While this may be an extreme case, it demonstrates that our algorithm can achieve significantly better performance in practice, aligning with our theoretical guarantees and serving as a proof-of-concept that illustrates the accuracy-vs-communication tradeoffs in real-world scenarios.

%\textbf{Paper organization.}
\paragraph{Paper organization.}
The remainder of this paper is structured as follows. 
In \secref{sec:ddee:lb}, we present a parameterized lower bounds for the distributed distinct element estimation problem, assuming the communication complexity of the so-called $\GapSet$ problem. 
We then show a corresponding upper bound in \secref{sec:ddee:ub}. 
We defer the proof of the $\GapSet$ problem to \appref{sec:new:game:lb} and \appref{sec:duplicates}. 
We provide our experimental results in \secref{sec:exp}. 
Finally, we show in \appref{app} that both distinct element estimation and norm estimation can similarly be parameterized in the streaming model. 

\subsection{Preliminaries}
\seclab{sec:prelims}
For a positive integer $n>0$, we use the notation $[n]$ to represent the set $\{1,2,\ldots,n\}$. 
We use $\polylog(n)$ to denote a fixed polynomial in $\log n$. 

For a vector $v\in\mathbb{R}^n$, we define $F_0(v)=|\{i\in[n]\,\mid\,v_i\neq 0\}|$ and $F_1(v)=|v_1|+\ldots+|v_n|$. 
For a random variable $X$, we use $\Ex{X}$ to denote its expectation and $\Var{X}$ to denote its variance. 

\begin{definition}[Zipfian distribution dataset]
\deflab{def:zipf}
We say a sequence $X=\{x_1,\ldots,x_n\}$ follows a Zipfian distribution with exponent $s$ if there exist parameters $C_1,C_2>0$ such that for any index $i$, we have $\frac{C_1}{i^s} \le x_i \le \frac{C_2}{i^s}$. 
\end{definition}

Recall the following definition of the squared Hellinger distance. 
\begin{definition}[Squared Hellinger distance]
For two distributions $P$ and $Q$ with probability density functions $f$ and $g$, respectively, defined on a space $X$, their squared Hellinger distance is defined by
\[h^2(P,Q)=\frac{1}{2}\int_X\left(\sqrt{f(x)}-\sqrt{g(x)}\right)^2\,dx\]
\end{definition}
It can be shown, c.f., \lemref{lem:inf:hellinger} in \secref{sec:new:game:lb}, that the squared Hellinger distance between a function on two random variables is a lower bound on informally the mutual information between one of the random variables and the corresponding value of the function on that random variable. 

\paragraph{Communication complexity.}
We now recall some preliminaries from communication and information complexity. 

\begin{definition}[Entropy, conditional entropy, mutual information]
Given a pair of random variables $X$ and $Y$ with joint distribution $p(x,y)$ and marginal distributions $p(x)$ and $p(y)$, the \emph{entropy} of $X$ is defined as $H(X):=-\sum_x p(x)\log p(x)$. 
The \emph{conditional entropy} is $H(X|Y):=-\sum_{x,y} p(x,y)\log\frac{p(y)}{p(x,y)}$. 
The \emph{mutual information} is 
\[I(X;Y):=H(X)-H(X|Y)=\sum_{x,y}p(x,y)\log\frac{p(x,y)}{p(x)p(y)}.\] 
\end{definition}

\begin{definition}[Information cost]
Let $\Pi$ be a randomized protocol that produces a (possibly random) transcript $\Pi(X_1,\ldots,X_T)$ on inputs $X_1,\ldots,X_T$ drawn from a distribution $\mu$. 
The \emph{information cost} of $\Pi$ with respect to $\mu$ is $I(P_1,\ldots,P_T;\Pi(P_1,\ldots,P_T))$.  
\end{definition}

\begin{fact}[Information cost to communication complexity]
\factlab{fact:ic:cc}
For any distribution $\mu$ and failure probability $\delta\in(0,1)$, the communication cost of any randomized protocol for $\mu$ on a problem $f$ that fails with probability $\delta$ is at least the information cost of $f$ under distribution $\mu$ and failure probability $\delta$. 
\end{fact}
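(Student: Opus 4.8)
The plan is to establish the pointwise inequality that, for \emph{every} randomized protocol $\Pi$, its information cost with respect to $\mu$ is at most its worst-case communication cost; the fact then follows immediately, since the information cost of the problem $f$ under $\mu$ and $\delta$ is by definition the infimum of these information costs over all protocols that solve $f$ with failure probability at most $\delta$.

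So fix such a protocol $\Pi$, write $X=(P_1,\ldots,P_T)\sim\mu$ for the joint input and $\Pi(X)$ for the resulting (random) transcript. First I would pass from mutual information to transcript entropy: because conditional entropy is nonnegative,
\[
  I\bigl(X;\Pi(X)\bigr) \;=\; H\bigl(\Pi(X)\bigr) - H\bigl(\Pi(X)\mid X\bigr) \;\le\; H\bigl(\Pi(X)\bigr).
\]
Next I would use the fact that the transcript of a communication protocol always lies in a fixed prefix-free set of binary strings, namely the set of leaves of the protocol tree (at each internal node the protocol specifies which player speaks next, so no transcript can be a proper prefix of another, regardless of the inputs or of the players' public and private coins). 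By the Kraft inequality --- equivalently, Shannon's source-coding lower bound --- any random variable supported on a prefix-free set of binary strings has entropy at most its expected length, so $H(\Pi(X)) \le \mathbb{E}_{\mu}[\,|\Pi(X)|\,]$, where $|\cdot|$ denotes length in bits. Finally, the expected transcript length is bounded by the maximum transcript length over all inputs and all coin tosses, which is exactly the communication cost of $\Pi$. Chaining the three bounds gives $I(X;\Pi(X)) \le \mathrm{CC}(\Pi)$, and taking the infimum over all $\delta$-error protocols for $f$ yields the claim.

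The only points requiring a little care are the prefix-freeness of the transcript set --- this is precisely where the standard ``whose turn it is to speak is determined by the history'' convention enters, and it is what licenses the entropy-versus-length step --- and the bookkeeping of randomness, which causes no difficulty because the leaf set of the protocol tree is the same for every fixing of the coins, so the entropy bound survives averaging over them. I therefore expect no genuine obstacle: this is a foundational inequality and the argument above is entirely routine.
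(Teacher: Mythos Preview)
Your argument is correct and is the standard proof of this foundational inequality: bound mutual information by transcript entropy, bound entropy by expected transcript length via prefix-freeness and Kraft, and bound expected length by worst-case length. The paper itself does not prove this statement --- it is recorded as a preliminary \emph{Fact} without proof --- so there is nothing further to compare.
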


\begin{definition}[Conditional information cost]
Let $\Pi$ be a protocol on $((\bx,\by),R)\sim\eta$ for $\bx\sim X$ and $\by\sim Y$, where $\eta$ is a mixture of product distributions on $X^n\times Y^n\times\calR$ and $R\sim\calR$ is a source of randomness. 
Then we define the conditional information cost of $\Pi$ with respect to $\eta$ by $I(\bx,\by;\Pi(\bx,\by)|R)$. 
\end{definition}

\begin{definition}[Conditional information complexity]
Given a failure probability $\delta\in(0,1)$ and a mixture $\eta$ of product distributions, we define the conditional information complexity of $f$ with respect to $\eta$ as the minimum conditional information cost of a protocol for $f$ with failure probability at most $\delta$, with respect to $\eta$, i.e.,
\[\CIC_{\eta,\delta}(f)=\min_{\Pi}I(\bx,\by;\Pi(\bx,\by)|R),\]
where the minimum is taken over all protocols $\Pi$ with failure probability at most $\delta$ on the distribution $\eta$. 
\end{definition}

\begin{lemma}[Proposition 4.6 in~\cite{Bar-YossefJKS04}]
\lemlab{lem:cic:ic}
Let $\mu$ be a distribution on $X^n\times Y^n\times$. 
If $\eta$ is a mixture of product distributions on $X^n\times Y^n\times\calR$ such that the marginal distribution on $X^n\times Y^n$ is $\mu$, then the information cost of a function $f$ with success probability $1-\delta$ on $\mu$ is at least $\CIC_{\eta,\delta}(f)$. 
\end{lemma}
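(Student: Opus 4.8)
The plan is to reduce the lemma to a per-protocol statement: I will show that for \emph{every} protocol $\Pi$ computing $f$ with failure probability at most $\delta$, its $\mu$-information cost $I(\bx,\by;\Pi(\bx,\by))$ is at least its $\eta$-conditional information cost $I(\bx,\by;\Pi(\bx,\by)\mid R)$, where $((\bx,\by),R)\sim\eta$. Granting this, the lemma is immediate: the $\mu$-information cost of $f$ is the infimum of the left-hand quantity over all such $\Pi$, the infimum of the right-hand quantity over the same family of protocols is exactly $\CIC_{\eta,\delta}(f)$, and an infimum of larger quantities dominates an infimum of smaller ones. (Combining with \factref{fact:ic:cc} then upgrades the conclusion to a communication lower bound.) Note that since the marginal of $\eta$ on $X^n\times Y^n$ is $\mu$, the transcript distribution is unchanged whether we draw $(\bx,\by)\sim\mu$ directly or as the marginal of a draw from $\eta$, so these two quantities are genuinely the $\mu$-information cost and the $\eta$-conditional information cost of the same protocol.

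Fix such a $\Pi$ and abbreviate $\bX=(\bx,\by)$ and $T=\Pi(\bx,\by)$. The one structural fact I will use is that $R\to\bX\to T$ is a Markov chain: the players act only on their inputs $\bX$ and on public/private coins drawn independently of $(\bX,R)$, and they never observe the mixing variable $R$, so conditioned on $\bX$ the transcript $T$ is independent of $R$, i.e., $I(R;T\mid\bX)=0$. Now apply the chain rule for mutual information to $I(\bX;T,R)$ in the two possible orders:
\[
I(\bX;T)+I(\bX;R\mid T)\;=\;I(\bX;T,R)\;=\;I(\bX;R)+I(\bX;T\mid R),
\]
which rearranges to $I(\bX;T\mid R)-I(\bX;T)=I(\bX;R\mid T)-I(\bX;R)$. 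Finally, expanding $I(R;\bX,T)$ in both orders and using $I(R;T\mid\bX)=0$ gives $I(R;\bX\mid T)=I(R;\bX)-I(R;T)\le I(R;\bX)$, so the right-hand side above is nonpositive and hence $I(\bX;T\mid R)\le I(\bX;T)$, which is exactly the per-protocol inequality we wanted. Taking infima over all $\Pi$ computing $f$ with failure probability at most $\delta$ completes the argument.

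The proof is short, so I expect the only real care needed to be bookkeeping around randomness: one must pin down a convention for whether the protocol's public coins are folded into the transcript or conditioned on in the information cost, and then verify that the mixing variable $R$ is independent of the protocol's coins, so that $R\to\bX\to T$ is honestly a Markov chain. It is worth emphasizing that the hypothesis that $\eta$ is a \emph{mixture} of product distributions is not used in this proof at all — only that its marginal on $X^n\times Y^n$ equals $\mu$; the product structure has bite only downstream, when $\CIC_{\eta,\delta}(f)$ is decomposed into per-coordinate contributions in a direct-sum step.
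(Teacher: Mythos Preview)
The paper does not actually prove this lemma: it is stated in the preliminaries as a cited result (Proposition~4.6 of \cite{Bar-YossefJKS04}) and then invoked later without further argument. So there is no ``paper's own proof'' to compare against; your proposal supplies a proof where the paper supplies none.

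That said, your argument is correct and is essentially the standard one from \cite{Bar-YossefJKS04}. The crux is exactly the Markov chain $R\to(\bx,\by)\to\Pi(\bx,\by)$: since the auxiliary mixing variable $R$ influences the transcript only through the inputs, conditioning on $R$ can only decrease the mutual information between inputs and transcript. Your chain-rule derivation of $I(\bX;T\mid R)\le I(\bX;T)$ is clean, and your closing remark is on point: the product structure of $\eta$ plays no role here and only matters in the subsequent direct-sum step (\thmref{thm:direct:sum}). The one caveat you flag about randomness conventions is worth keeping in mind but does not affect the argument, since in any of the standard conventions the protocol's coins are independent of $R$ and the Markov property holds.
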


\begin{fact}[Chain rule]
\label{fact:chain}
Given discrete random variables $X,Y,Z$, then
\[I(X,Y;Z) = I(X;Z) + I(X;Y|Z).\]
\end{fact}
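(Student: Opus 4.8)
The plan is to reduce the chain rule for mutual information to the chain rule for entropy, which is itself an immediate consequence of the definition of conditional entropy. First I would rewrite the left-hand side using the identity $I(U;V)=H(U)-H(U\mid V)$ with $U=(X,Y)$ and $V=Z$, obtaining $I(X,Y;Z)=H(X,Y)-H(X,Y\mid Z)$. The second step is to expand both joint entropies: grouping the sum $\sum p(x,y)\log\frac{1}{p(x,y)}$ according to the factorization $p(x,y)=p(x)\,p(y\mid x)$ yields $H(X,Y)=H(X)+H(Y\mid X)$, and the same manipulation performed pointwise in $z$ and then averaged over $Z$ yields $H(X,Y\mid Z)=H(X\mid Z)+H(Y\mid X,Z)$. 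Both facts use nothing beyond the definitions of entropy and conditional entropy already recalled in the excerpt.

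Third, I would substitute and regroup the four terms:
\[I(X,Y;Z)=\bigl(H(X)-H(X\mid Z)\bigr)+\bigl(H(Y\mid X)-H(Y\mid X,Z)\bigr)=I(X;Z)+I(Y;Z\mid X),\]
where the first bracket is $I(X;Z)$ by definition, and the second bracket is $I(Y;Z\mid X)$ since conditional mutual information is by definition the $X$-averaged difference of the corresponding conditional entropies. (I note in passing that the right-hand side of the displayed equation in the statement is most naturally read as $I(Y;Z\mid X)$; the form $I(X;Y\mid Z)$ is not in general equal to $I(X,Y;Z)-I(X;Z)$, as one sees by taking $Y=Z$ independent of $X$.)

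There is essentially no obstacle here — the only care needed is bookkeeping of which variable is conditioned on at each step. If a more self-contained derivation is preferred, one can instead start from the Kullback--Leibler form $I(U;V)=\sum p(u,v)\log\frac{p(u,v)}{p(u)p(v)}$ and split the integrand as $\log\frac{p(x,y,z)}{p(x,y)p(z)}=\log\frac{p(x,z)}{p(x)p(z)}+\log\frac{p(x,y,z)p(x)}{p(x,y)p(x,z)}$, recognizing the two summands as the integrands defining $I(X;Z)$ and $I(Y;Z\mid X)$ respectively; summing against $p(x,y,z)$ then gives the claim directly.
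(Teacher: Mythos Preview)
Your derivation is correct and is the standard proof of the chain rule for mutual information. The paper states this as a Fact without proof, so there is nothing to compare against; your entropy-based argument (or the equivalent KL-integrand splitting you sketch) is exactly what one would expect here.

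You are also right to flag the typo in the displayed identity: as written, the second term $I(X;Y\mid Z)$ should be $I(Y;Z\mid X)$. Your counterexample ($Y=Z$ independent of $X$) confirms this, and the way the paper later \emph{uses} the fact---decomposing $I(M;D_1,\ldots,D_t\mid R)$ as $\sum_j I(M;D_j\mid D_{<j},R)$ in the proof of \lemref{lem:inf:cost}---is precisely the iterated form of $I(X,Y;Z)=I(X;Z)+I(Y;Z\mid X)$, so the intended statement is the one you prove.
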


\begin{fact}[Maximum likelihood estimation principle]
\label{fact:err}
Let $X\in\mathcal{X}$ and $Y\in\mathcal{Y}$ be randomly selected from some underlying distribution $\mu$. 
Then there exists a deterministic function $g:Y\to X$ with error $\delta\le 1-\frac{1}{2^{H(X|Y)}}$.
\end{fact}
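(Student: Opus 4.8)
\begin{proofof}{\factref{fact:err} (proposal)}
The plan is to take $g$ to be the maximum-a-posteriori estimator and to analyze its success probability directly. Concretely, for each $y\in\calY$ define $g(y):=\argmax_{x\in\calX}\PPr{X=x\mid Y=y}$ (breaking ties arbitrarily), and write $p^*(y):=\max_{x\in\calX}\PPr{X=x\mid Y=y}$ for the corresponding posterior mass. Then the success probability of $g$ is
\[
\PPr{g(Y)=X}=\sum_{y\in\calY}\PPr{Y=y}\cdot p^*(y),
\]
so it suffices to show $\sum_{y}\PPr{Y=y}\,p^*(y)\ge 2^{-H(X|Y)}$, since then the error $\delta=1-\PPr{g(Y)=X}$ is at most $1-2^{-H(X|Y)}$.

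The first key step is a per-$y$ bound relating $p^*(y)$ to the conditional entropy $H(X\mid Y=y)$. For any fixed distribution on $\calX$ with maximum atom $p^*$, the Shannon entropy dominates the min-entropy, i.e., $H(X\mid Y=y)\ge -\log p^*(y)$; equivalently $p^*(y)\ge 2^{-H(X\mid Y=y)}$. (This is the only ``information-theoretic input'' needed, and it follows since $-\log p^*(y)=-\sum_x \PPr{X=x\mid Y=y}\log p^*(y)\le -\sum_x \PPr{X=x\mid Y=y}\log \PPr{X=x\mid Y=y}$ because $\PPr{X=x\mid Y=y}\le p^*(y)$ makes each term of the left sum no larger than the matching term of the right sum.)

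The second key step is to combine these bounds by convexity. Substituting the per-$y$ bound,
\[
\sum_{y}\PPr{Y=y}\,p^*(y)\;\ge\;\sum_{y}\PPr{Y=y}\,2^{-H(X\mid Y=y)},
\]
and since $t\mapsto 2^{-t}$ is convex, Jensen's inequality gives
\[
\sum_{y}\PPr{Y=y}\,2^{-H(X\mid Y=y)}\;\ge\;2^{-\sum_{y}\PPr{Y=y}\,H(X\mid Y=y)}\;=\;2^{-H(X|Y)},
\]
using the definition $H(X|Y)=\sum_y \PPr{Y=y}\,H(X\mid Y=y)$. Chaining the displays yields $\PPr{g(Y)=X}\ge 2^{-H(X|Y)}$, hence $\delta\le 1-2^{-H(X|Y)}$, as claimed.

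I do not expect a genuine obstacle here; the proof is a short two-inequality argument. The only place to be careful is the direction of the two estimates: the min-entropy/Shannon-entropy comparison must be applied conditioned on each value $y$ (not to the marginal of $X$), and Jensen must be used for the convex function $2^{-t}$ so the averaging goes the correct way. A minor point worth a sentence in the writeup is that $g$ as defined is a well-defined deterministic map even when the $\argmax$ is not unique, since any fixed tie-breaking rule preserves the value $p^*(y)$.
\end{proofof}
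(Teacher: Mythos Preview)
Your proof is correct. The paper states this result as a \emph{Fact} in the preliminaries and does not supply a proof, so there is nothing to compare against; your argument --- MAP estimator, the per-$y$ min-entropy versus Shannon-entropy bound $p^*(y)\ge 2^{-H(X\mid Y=y)}$, and Jensen for the convex map $t\mapsto 2^{-t}$ --- is the standard derivation and is written cleanly.
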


\begin{definition}
For a vector $\by\in X^n$, let $j\in[n]$ and $x\in X$. 
We define $\embed(\by,j,x)$ to be the $n$-dimensional vector $\by$ with its $j$-th coordinate replaced by $x$, i.e., for $\bz=\embed(\by,j,x)$, we have $z_i=y_i$ for $i\neq j$ and $z_j=x$. 
\end{definition}

\begin{definition}[Decomposable function]
Let $f:X^n\to\{0,1\}$ be a function. 
Then we say $f$ is $g$-decomposable with primitive $h$ if there exist functions $g:\{0,1\}^n\to\{0,1\}$ and $h:X\to\{0,1\}$ such that $f(\bx,\by)=g(h(x_1,y_1),\ldots,h(x_n,y_n))$. 
\end{definition}

\begin{definition}[Collapsing distribution]
Let $f:X^n\to\{0,1\}$ be $g$-decomposable with primitive $h$. 
We say that $(\bw,\bz)\in X^n$ is a collapsing input for $f$ if for every $j\in[n]$ and $x,y\in X$, we have $f(\embed(\bw,j,x),\embed(\bz,j,y))=h(x,y)$. 
We call a distribution $\mu$ on $X^n$ a collapsing distribution for $f$ if every $(\bw,\bz)$ in the support of $\mu$ is a collapsing input. 
\end{definition}

\begin{theorem}[Direct sum, Theorem 5.6 in~\cite{Bar-YossefJKS04}]
\thmlab{thm:direct:sum}
Let $f:X^n\to\{0,1\}$ be a decomposable function with primitive $h$ and let $\zeta$ be a mixture of product distributions on $X\times\calD$. 
Let $\eta=\zeta^n$ and $((\bx,\by),\bD)\sim\eta$. 
Then if the distribution of $(\bx,\by)$ is a collapsing distribution for $f$, we have $\CIC_{n,\delta}(f)\ge n\cdot\CIC_{\zeta,\delta}(h)$. 
\end{theorem}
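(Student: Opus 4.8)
The plan is to follow the information-complexity direct-sum paradigm of~\cite{Bar-YossefJKS04}: take an optimal protocol for $f$, observe that it secretly solves $n$ conditionally independent copies of the primitive $h$, and combine a superadditivity bound (coming from the product structure $\eta=\zeta^n$) with a correctness argument (coming from the collapsing property). Concretely, I would first fix a protocol $\Pi$ for $f$ with failure probability at most $\delta$ on $\eta$ that achieves $\CIC_{\eta,\delta}(f)=I\big(\bx,\by;\Pi(\bx,\by)\mid\bD\big)$, and write $\bD=(\bD_1,\ldots,\bD_n)$, so that conditioned on $\bD$ the pairs $(\bx_j,\by_j)$ are mutually independent with $(\bx_j,\by_j)\mid\bD_j\sim\zeta$. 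Then I would invoke the standard fact that conditional mutual information is superadditive over conditionally independent blocks: if $Z_1,\ldots,Z_n$ are independent given $W$, then $I(Z_1,\ldots,Z_n;M\mid W)\ge\sum_j I(Z_j;M\mid W)$, which follows from the chain rule (\factref{chain}) written as $I(Z_{\le n};M\mid W)=\sum_j I(Z_j;M\mid W,Z_{<j})$ together with the inequality $I(Z_j;M\mid W,Z_{<j})\ge I(Z_j;M\mid W)$, valid when $Z_j\perp Z_{<j}\mid W$. Applying this with $Z_j=(\bx_j,\by_j)$, $W=\bD$, and $M=\Pi(\bx,\by)$ gives
\[\CIC_{\eta,\delta}(f)\ \ge\ \sum_{j=1}^n I\big(\bx_j,\by_j;\Pi(\bx,\by)\mid\bD\big).\]

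Next, for each $j\in[n]$ I would build a protocol $\Pi_j$ for $h$ on a single input $(x,y)\sim\zeta$ with mixture variable $d\sim\calD$ as follows: using public randomness the two players draw $\bD_{j'}\sim\calD$ for every $j'\ne j$; then, because $\zeta$ conditioned on any value of $\calD$ is a product distribution, Alice privately samples $\bw_{j'}$ and Bob privately samples $\bz_{j'}$ from their respective marginals under $\bD_{j'}$, for $j'\ne j$. Setting $\bw'$ to be the $n$-vector with $\bw'_{j'}=\bw_{j'}$ for $j'\ne j$ and $\bw'_j=x$, and $\bz'$ defined analogously with $\bz'_j=y$, the players run $\Pi$ on $(\bw',\bz')$ and output its answer. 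By construction, the joint law of $\big((\bw',\bz'),(d,\bD_{-j})\big)$ is exactly $\eta$.

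To conclude, I would argue correctness and bound the cost. Since the off-$j$ coordinates $(\bw'_{-j},\bz'_{-j})$ are distributed exactly as the marginal on coordinates $\ne j$ of the collapsing distribution $\mu$ (the law of $(\bx,\by)$), every realization of them lies in the support of that marginal and hence extends to a point of $\mathrm{supp}(\mu)$, i.e.\ to a collapsing input $(\bar\bw,\bar\bz)$ with $\bar\bw_{-j}=\bw'_{-j}$, $\bar\bz_{-j}=\bz'_{-j}$. Because $\embed$ overwrites coordinate $j$, we get $\bw'=\embed(\bar\bw,j,x)$, $\bz'=\embed(\bar\bz,j,y)$, and the collapsing property then forces $f(\bw',\bz')=h(x,y)$; thus $\Pi_j$ errs only when $\Pi$ does and has failure probability at most $\delta$. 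Moreover the transcript of $\Pi_j$ is $\Pi(\bw',\bz')$, and conditioning additionally on the public coins $\bD_{-j}$ (which are independent of $(x,y)$) shows that the conditional information cost of $\Pi_j$ with respect to $\zeta$ is at most $I\big(\bx_j,\by_j;\Pi(\bx,\by)\mid\bD\big)$, so that this quantity is at least $\CIC_{\zeta,\delta}(h)$. Summing over $j$ and combining with the superadditivity bound yields $\CIC_{\eta,\delta}(f)\ge n\cdot\CIC_{\zeta,\delta}(h)$.

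The main obstacle is the information-cost bookkeeping in the last step: one must classify each auxiliary random variable of $\Pi_j$ as a public coin, a private coin, or part of the conditioning mixture, so that conditioning on $(d,\bD_{-j})$ isolates exactly the $j$-th term $I(\bx_j,\by_j;\Pi\mid\bD)$ of the superadditivity sum (using that $(\bx_j,\by_j)\perp\bD_{-j}\mid\bD_j$ so the extra conditioning neither helps nor hurts), and one must verify carefully that the embedded off-$j$ coordinates always constitute the restriction of a genuine collapsing input — which is precisely where both hypotheses, that $(\bx,\by)$ is a collapsing distribution and that $\eta$ factors as $\zeta^n$, are used.
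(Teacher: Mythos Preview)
The paper does not supply its own proof of this statement: it is quoted verbatim as Theorem~5.6 of~\cite{Bar-YossefJKS04} and used as a black box in \lemref{lem:icost:setdisj}. Your proposal faithfully reproduces the original Bar-Yossef--Jayram--Kumar--Sivakumar argument---superadditivity of conditional mutual information over the conditionally independent blocks of $\eta=\zeta^n$, followed by the coordinate-$j$ embedding protocol whose correctness is forced by the collapsing hypothesis---so there is nothing to contrast. One small remark: in your last paragraph the information-cost comparison is actually an equality, not merely an inequality, since once you condition on $(\bD_j,\bD_{-j})=\bD$ the embedded instance has exactly the law $\eta$ and the public coins $\bD_{-j}$ are independent of $(\bx_j,\by_j)$; and your support-extension worry is moot because $\mu$, being the $(\bx,\by)$-marginal of the product $\zeta^n$, has product support, so the off-$j$ coordinates you sample always complete to a collapsing input.
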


%Given a failure probability, $\delta\in(0,1)$, let $\Pi$ be a protocol for a decomposable function $f:X^n\to\{0,1\}$ with primitive $h$ and success probability at least $1-\delta$. 
%Let $\zeta$ be a mixture of product distributions on $X\times\calD$. 
%Let $\eta=\zeta^n$ and suppose $((\bx,\by),\bD)\sim\eta$. 
%If the distribution of $(\bx,\by)$ is a collapsing distribution for $f$, then for all $j\in[n]$, $I(x_j,y_j; \Pi(\bx,\by)|\calD)\ge CIC_{\zeta,\delta}(h)$. 

\subsection{Technical Overview}
In this section, we describe the intuition behind our algorithms and lower bounds for the distributed distinct elements estimation problem. 

\subsubsection{Protocols for Distributed Distinct Element Estimation}
We first describe our general protocol for the distributed distinct element estimation problem across general ranges of $F_0(S)$, the number of distinct elements in the dataset $S$ that is the union of all items given to all servers, i.e., \thmref{thm:high:colls:ub}.

\paragraph{Constant-factor approximation.}
As a standard subroutine, our algorithm first computes a constant factor approximation to the number of distinct elements. 
Recall that this is done by subsampling the universe $[n]$ at less and less aggressive rates. 
The $\alpha$ servers jointly set $S_0=[n]$ and for each $i\ge 1$, the servers use public randomness to jointly sample each element of $S_{i-1}$ into $S_i$ with probability $\frac{1}{2}$. 
For example, the expected number of elements in $S_1$ is $\frac{n}{2}$ and so forth. 
The servers initialize $i=\ceil{\log n}$ and send all of their local items that are contained within $S_i$ to a designated server, which is marked as a coordinator. 
They then send all of their items that are contained in $S_{i-1}$ and so forth, until the coordinator sees $\Theta(1)$ distinct elements across the entire set of items sent from all servers. 
Using a standard expectation and variance technique, it follows that rescaling the number of distinct elements seen by the coordinator by $\frac{1}{p}$, where $p$ is the sampling probability of the universe induced by $S_i$, is a constant-factor approximation to $F_0(S)$. 
The total communication used by this protocol is $\O{\alpha\log n}$ for the $\alpha$ parties to report the identities of $\O{1}$ items across the sets $S_i,S_{i-1},\ldots$ before the algorithm terminates, combined with an additional $\log\log n$ overhead to handle a na\"{i}ve union bound over at most $\O{\log n}$ possible such sets, i.e., requiring the coordinator to see $\Theta(\log\log n)$ distinct elements. 

\paragraph{$(1+\eps)$-approximation.}
We note that a similar approach can be used to achieve a protocol with $\O{\frac{\alpha}{\eps^2}\cdot\log n\log\log n}$ total communication. 
Specifically, instead of stopping at a level where the coordinator sees $\Theta(1)$ unique items, the servers can choose to abort at a later time, in particular when the protocol samples down to a level where the coordinator sees $\Theta\left(\frac{1}{\eps^2}\right)$ distinct elements. 
We show that the resulting estimator that rescales the number of distinct elements by the inverse of the sampling probability is an unbiased estimate to the number of distinct elements, and moreover that the variance is sufficiently small due to the number of samples.  
Hence by a standard Chebyshev argument, it follows that we can achieve a $(1+\eps)$-approximation to the number of distinct elements. 
We emphasize that both the constant-factor and $(1+\eps)$-approximation subsampling approach is standard among the distinct elements estimation literature, e.g.,~\cite{Bar-YossefJKST02,KaneNW10,WoodruffZ14,BravermanGLWZ18,Blasiok20}. 

However, the concern is that each of the $\alpha$ parties can send $\Omega\left(\frac{1}{\eps^2}\right)$ items, resulting in $\Omega\left(\frac{\alpha}{\eps^2}\right)$ items being sent across all parties.  
Indeed, in the hard instance of \cite{WoodruffZ14}, a constant fraction of items appear on a constant fraction of servers, so the protocol would actually use $\Omega\left(\frac{\alpha}{\eps^2}\right)$ communication. 

On the other hand, when the number of pairwise collisions is smaller than $\alpha^2\cdot F_0(S)$, then the number of items that are redundant across multiple servers must also be smaller. 
We show this intuition translates to improved bounds for the same algorithm. 
That is, we show that when the number of pairwise collisions is $\beta\cdot F_0(S)$ for some parameter $\beta\in[1,\alpha]$, then on average, each coordinate can appear across $\sqrt{\beta}$ servers. 
Thus, the above protocol would send the identities of $\O{\frac{\sqrt{\beta}}{\eps^2}}$ items, resulting in total communication $\O{\alpha\log n+\frac{\sqrt{\beta}}{\eps^2}\log n}$. 

Finally, we remark that when the total number of items is less than $\frac{1}{\eps^2}$, i.e., $F_0(S)<\frac{1}{\eps^2}$, then the same analysis suffices without any sampling at all. 
Moreover, since the algorithm will eventually terminate at a level where no sampling is performed if there are no previous levels with $\Theta\left(\frac{1}{\eps^2}\right)$ distinct elements given to the coordinator, then the same algorithm suffices for this case. 
That is, our algorithm can obliviously handle all regimes of $F_0(S)$, i.e., it does not need the promise of whether $F_0(S)\ge\frac{1}{\eps^2}$ or $F_0(S)<\frac{1}{\eps^2}$ as part of the input. 

\paragraph{Handling a smaller number of collisions.}
We now describe how the guarantees of \thmref{thm:high:colls:ub} can be further improved when the number of pairwise collisions is small. 
Note that $F_0(S)=F_1(S)-D$, where $D$ is the \emph{excess mass} across all servers, which we define the excess mass of a coordinate $j\in[n]$ in a vector $v\in\mathbb{R}^n$ to be $\max(0,v_j-1)$ and the excess mass of $v$ to be the sum of the excess masses across all of its coordinates. 
Note that $D$ is upper bounded by the number of pairwise collisions $C$. 
Thus as a simple example, if we were promised $C\le\eps F_0(S)$, then it would suffice for the $\alpha$ parties to compute $F_1(S)$, which can be done in $\O{\alpha\log n}$ bits of communication. 

More generally, for $C<\frac{1}{\eps^2}$, it is possible to efficiently estimate $D$ without needing to send all items. 
In particular, given the promise that there are at most $C$ pairwise collisions, we can estimate $D$ by sampling the universe at a rate $\frac{100C}{\eps^2 X^2}$, where $X$ is a constant-factor approximation to $F_0(S)$. 
Again by a standard expectation and variance argument, it follows that the excess mass observed by the coordinator across the items sent at this level by all players, scaled inversely by the sampling probability, is an additive $\eps\cdot F_0(S)$ approximation to $D$. 
Since $F_0(S)=F_1(S)-D$ and the players can compute $F_1(S)$ exactly, then this provides a $(1+\eps)$-approximation to $F_0(S)$, as desired. 
The expected number of items sent by all items is then $\O{\frac{C}{\eps^2\cdot F_0(S)}}$, which can then be translated into a concentration bound using Markov's inequality. 

\subsubsection{Lower Bounds for Pairwise Collisions}
We now describe our techniques for showing that the number of pairwise collisions is an inherent characteristic for the complexity of the distributed distinct elements estimation problem. 
We first describe our lower bound in \thmref{thm:lb:many:colls} for a large number of pairwise collisions. 
We then conclude with brief intuition for our lower bound for the distributed duplication detection problem in \thmref{thm:main}, which immediately gives our lower bound in \thmref{thm:low:colls:lb} for a small number of pairwise collisions. 

\paragraph{Large number of pairwise collisions.}
The starting point for our lower bound in \thmref{thm:lb:many:colls} is a problem called $\SUMDISJ$, introduced by \cite{WoodruffZ14} in the coordinator model. 
We note that the coordinator model of communication requires messages to go from a server to the coordinator or from the coordinator to a server. 
Up to small factors, this can model arbitrary point-to-point communication. 
Indeed, if server $i$ wishes to communicate to server $j$, then server $i$ can send its message to the coordinator and have the coordinator forward it to server $j$. 
This increases the communication by at most a multiplicative factor of $2$ and an additive $\log \alpha$ bits per message to indicate the identity of the recipient server.

In the $\SUMDISJ$ problem, there exist $\alpha$ players $P_1,\ldots,P_{\alpha}$ with inputs $X_1,\ldots,X_{\alpha}\in\{0,1\}^{tL}$ and a coordinator $C$ and $Y\in\{0,1\}^{tL}$. 
The vectors $X_1,\ldots,X_{\alpha},Y$ are grouped into $t$ blocks $X_i^{(j)}, Y^{(j)}$ for $i\in[\alpha]$ and $j\in[t]$, each with $L$ coordinates. 
The input to each block of $X_1$ and $Y$ are first generated as an instance of two-player set disjointness, so that there are $t$ blocks of set disjointness, each with universe size $L$. 
Recall that on a universe of size $L$, the two-player input of set disjointness is as follows. 
First, for each $i\in[L]$, $i$ is given to $X_1$ with probability $\frac{1}{4}$, otherwise $i$ is given to $Y$ with probability $\frac{1}{4}$, otherwise $i$ is not given to $X_1$ or $Y$ with probability $\frac{1}{2}$. 
Then for a special coordinate $c$ chosen uniformly at random from $[L]$, the allocations of $c$ are reset. 
Then with probability $\frac{1}{2}$, $c$ is given to both players and otherwise with probability $\frac{1}{2}$, $c$ is given to neither player. 
The inputs $X_2,\ldots,X_\alpha$ are then generated conditioned on the value of $Y$, so that each pair $(X_i,Y)$ forms an input to two-player set disjointness. 
We define $\DISJ(X_i^{(j)},Y^{(j)})=0$ if the instance is disjoint and $\DISJ(X_i^{(j)},Y^{(j)})=1$ otherwise. 
The output to $\SUMDISJ(X_1,\ldots,X_{\alpha},Y)$ is then $\sum_{i=1}^\alpha\sum_{j=1}^t\DISJ(X_i^{(j)},Y^{(j)})$. 
\cite{WoodruffZ14} show that approximating $\SUMDISJ(X_1,\ldots,X_{\alpha},Y)$ up to additive error $\O{\sqrt{\alpha t}}$ requires $\Omega(\alpha tL)$ communication. 

The reduction of $F_0$ approximation from $\SUMDISJ$ is then as follows. 
Given an instance $X_1,\ldots,X_{\alpha},Y$ of $\SUMDISJ$, the coordinator creates the indicator vector $Z$ corresponding to $[tL]\setminus Y$. 
It then follows that for $t=\O{\frac{1}{\eps^2\alpha}}$ and $tL=\Theta\left(\frac{1}{\eps^2}\right)$, a $(1+\eps)$-approximation to $F_0(X_1+\ldots+X_{\alpha}+Z)$ suffices for the coordinator to determine $\SUMDISJ(X_1,\ldots,X_{\alpha},Y)$ up to additive error $\O{\sqrt{\alpha t}}$, given $Y$. 

Moreover, we observe that due to the distribution of set disjointness where each coordinate is given to each player $X_i$ with probability at least $\frac{1}{4}$, then with constant probability, a constant fraction of the items are given to a constant fraction of the players. 
That is, with probability at least $0.99$, we have that $\Omega\left(\frac{1}{\eps^2}\right)$ coordinates in the frequency vector $X_1+\ldots+X_{\alpha}+Z$ have frequency $\Omega(\alpha)$. 

In turns out that for $\beta<\alpha$, we can embed the same problem across $\beta$ players. 
Similarly, for $F_0(s)=\O{\frac{1}{\eps^2}}$ with $F_0(s)=\Omega(\alpha)$, it follows that for $t=\O{\frac{F_0(s)}{\alpha}}$ and $tL=\Theta(F_0(s))$, a $(1+\eps)$-approximation to $F_0(X_1+\ldots+X_{\alpha}+Z)$ suffices for the coordinator to determine $\SUMDISJ(X_1,\ldots,X_{\alpha},Y)$ up to additive error $\O{\sqrt{\alpha t}}$, given $Y$. 
Putting these observations together, we obtain \thmref{thm:lb:many:colls}. 

\paragraph{Small number of pairwise collisions.}
We first observe that our hardness result for the distributed duplication detection problem implies that any protocol that identifies whether there are fewer than $(1-\eps)\cdot C$ duplicates or more than $(1+\eps)\cdot C$ duplicates requires  $\Omega\left(\frac{F_0(s)}{C\eps^2}\right)$ communication for $C\ge\frac{4}{\eps^2}$. 
Moreover, the hard instance of \thmref{thm:main} places the $C$ pairwise collisions across unique coordinates, so that $F_0(S)=F_1(S)-C$. 
The $\alpha$ players can use $\O{\log\frac{1}{\eps}}=\O{C}$ bits of communication to compute $F_1(S)$ exactly. 
Thus, we observe that a multiplicative $(1+\eps)$-approximation to $F_0(S)$ ultimately translates to a multiplicative $\left(1+\frac{\eps\cdot F_0(S)}{C}\right)$-approximation to $C$ in the hard instance of \thmref{thm:main}. 
We then reparameterize the hardness statement to show that a multiplicative $\left(1+\frac{\eps\cdot F_0(S)}{C}\right)$-approximation to $C$ approximation requires $\Omega\left(\frac{C}{\eps^2\cdot F_0(S)}\right)$ communication. 
It thus remains to show \thmref{thm:main}, which we now describe. 

\subsubsection{Distributed Duplication Detection}
Our starting point for the proof of \thmref{thm:main} is noting that for $C=1$, $\eps=0$, and $\alpha=2$, the problem becomes the decision problem of whether there exists at least a single coordinate that is duplicated or not across two sets. 
Thus, a natural candidate to consider is the set disjointness communication problem, e.g.,~\cite{ChakrabartiKS03,Bar-YossefJKS04}, where two players each have a subset of $[n]$ and their goal is to determine whether the intersection of their sets is empty or non-empty. 
See \figref{fig:set:disjoint} for an example of possible set disjointness inputs for each case. 
Set disjointness requires total communication $\Omega(n)$ when the sets of each player have size $\Omega(n)$, and by a simple padding argument on a smaller universe, i.e., adding dummy elements that are never included in the players' sets, it follows that $\Omega(s)$ communication is a lower bound when each set has size $\Omega(s)$. 

\paragraph{Handling general $\alpha$.}
We first generalize to $\alpha$ players, achieving a qualitatively similar statement to that obtained via a simple reduction from set disjointness in the coordinator model, which is a problem studied in \cite{BravermanEOPV13}. It turns out for the approximate version of the problem we will not be able to use \cite{BravermanEOPV13} because we will need multiple instances of a variant of promise set disjointness to argue about the number of duplicates created.

The usual notion of promise multiparty set disjointness is that there are $\alpha$ players who each have a subset of $[n]$ of size $s$ and their goal is to determine whether or not there exists a common element shared across all $\alpha$ sets. 
Furthermore, the generalization of existing lower bound techniques~\cite{ChakrabartiKS03,Bar-YossefJKS04} requires that if there is not a common element shared across all $\alpha$ sets, then the players have the promise that all of their subsets have empty pairwise intersections. 
That is, no element is even shared across two sets. 
Unfortunately for the purposes of the lower bound, there exists a simple protocol that only requires $\O{s\log n}$ bits of communication: 
two of the $\alpha$ players simply exchange their sets (and in fact there is a more efficient way to determine if their sets are disjoint~\cite{HastadW07}). 
If there is no intersection in their sets, then the intersection across all $\alpha$ sets is empty. 
Otherwise, if their intersection is non-empty, then by the promise of the multiparty set disjointness input, the intersection across all $\alpha$ sets must be non-empty. 
Thus, it would be impossible to achieve the desired $\Omega(\alpha s)$ lower bound using the usual notion of promise multiparty set disjointness. 

Observe that the simple protocol results from the promise that if there is not a common element shared across all $\alpha$ sets, then the players have the promise that all of their subsets have empty pairwise intersections. 
For the purposes of duplication detection, this requirement is not necessary. 
Instead, consider a variant of multiparty set disjointness where either all $\alpha$ sets are pairwise disjoint, or there exists a single pair of sets that have non-empty intersection. 
In fact, we shall require our variant to be inherently distributional, uniform on each half of the universe (but not uniform on all pairs of elements of the universe), for the purposes of ultimately composing with a distributional communication problem to have an embedding from two players to $\alpha$ players while still retaining a sufficiently high entropy. 

Then, intuitively, for each coordinate $j\in[n]$, the $\alpha$ parties must determine whether their input vector for $j$ is $0^\alpha$, the elementary vector $\mathbf{e}_i$ for some $i\in[\alpha]$, or the sum of two elementary vectors $\mathbf{e}_a+\mathbf{e}_b$ for some $1\le a<b<\alpha$. 
By comparison, the previous version of promise multiparty set disjointness simply required differentiating between $0^\alpha$, $\mathbf{e}_i$ for some $i\in[\alpha]$, or $1^\alpha$, which has a much larger gap, i.e., larger by a multiplicative factor of $\Omega(\alpha)$.  
Formally, this translates to a smaller gap by a multiplicative factor of $\Omega(\alpha)$ in the squared Hellinger distance of the protocol between the inputs of the YES and NO cases. 
We can then use similar arguments as in ~\cite{Bar-YossefJKS04} to relate the squared Hellinger distance to the information cost of a correct protocol. 
Our argument thus achieves tight bounds, avoiding the extraneous multiplicative factor of $\Omega(\alpha)$ overhead that would have resulted from using promise multiparty set disjointness. 

\paragraph{Handling general $C$ and $\eps$.}
Handling general $C$ and $\eps$ seems significantly more challenging. 
A standard approach for achieving lower bounds with a $\frac{1}{\eps^2}$ dependence is to utilize the Gap-Hamming communication problem~\cite{IndykW05}, in which two players Alice and Bob receive vectors $\bx,\by\in\{0,1\}^t$, respectively, for $t=\Theta\left(\frac{1}{\eps^2}\right)$. 
Their goal is to determine whether $\Delta(\bx,\by)\le\frac{t}{2}-\sqrt{t}$ or $\Delta(\bx,\by)\ge\frac{t}{2}+\sqrt{t}$, where $\Delta(\bx,\by)$ denotes the Hamming distance between $\bx$ and $\by$. 
It is known that the Gap-Hamming problem requires $\Omega\left(\frac{1}{\eps^2}\right)$ bits of communication~\cite{ChakrabartiR12}. 

Unfortunately, it is not known how to extend the Gap-Hamming problem or its variants~\cite{PaghSW14,BravermanGPW16} to the multiplayer communication setting. 
To circumvent this issue, previous works, e.g.,~\cite{WoodruffZ12}, that require Gap-Hamming-type lower bounds for multiplayer communication protocols used a composition of Gap-Hamming with a multiplayer communication problem. 
Although using this approach \cite{WoodruffZ12} achieves a hardness of approximation for the number of distinct elements, which is the total number of items minus the number of duplicates if there are no $k$-wise collisions for $k>2$, these results~\cite{WoodruffZ12,WoodruffZ14} do not translate to a hardness of approximation in our case. 
Nevertheless, we can use the composition approach -- in our case, the natural candidate is the multiplayer pairwise disjointness problem. 

To that end, we define the composition problem $\GapSet$ as follows. 
The ``outer'' problem will be the $\GapAnd$ variant of Gap-Hamming while the ``inner'' problem will be the multiplayer pairwise disjointness problem. 
We first generate two vectors $\bx,\by\in\{0,1\}^t$ for $t=\Theta\left(\frac{1}{\eps^2}\right)$. 
In the YES case, we have $\langle\bx,\by\rangle\ge\frac{t}{4}+c\cdot\sqrt{t}$ for some constant $c>0$. 
In the NO case, we have $\langle\bx,\by\rangle\ge\frac{t}{4}-c\cdot\sqrt{t}$. 
However, the vectors $\bx,\by$ will not be given to any of the players. 
Using $\bx,\by$, we instead generate vectors $\bu^{(1)},\ldots,\bu^{(\alpha)}\in\{0,1\}^{nt}$ to give to the $\alpha$ players. 
Each vector $\bu^{(i)}$ is partitioned into $t$ blocks of size $n$. 
For $j\in[n]$, the $j$-th block of all vectors $\bu^{(1)},\ldots,\bu^{(\alpha)}$ will encode a separate instance of multiplayer pairwise disjointness. 
If $x_j=y_j=1$, then there will be some ``special'' coordinate that is shared among two parties across the $j$-th block of the input vectors to the $\alpha$ players, i.e., $j$-th instance of multiplayer pairwise disjointness. 
Otherwise, if $x_j=0$ or $y_j=0$ (or both), then no coordinate is shared among multiple parties in the $j$-th instance of multiplayer pairwise disjointness. 
The goal of the $\alpha$ players is to distinguish whether (1) there exist at least $\frac{t}{4}+c\cdot\sqrt{t}$ blocks that have some pairwise intersection, or (2) there are at most $\frac{t}{4}+c\cdot\sqrt{t}$ blocks that have some pairwise intersection. 
%See \figref{fig:gap:set} for examples of possible inputs to $\GapSet$. 

Intuitively, the $\GapAnd$ lower bounds show that $\Omega(t)$ communication is needed to solve $\GapAnd$. 
However, to recover each coordinate of $\bx$ and $\by$, the players must essentially solve an instance of multiplayer pairwise disjointness, which requires $\Omega(n)$ communication, leading to a lower bound of $\Omega(nt)$ overall communication for the $\GapSet$ problem. 

The reduction to duplication detection is then relatively straightforward. 
For $C=\O{\frac{1}{\eps^2}}$, we choose $t$ to be $\Theta(C)$ and $n$ to be $\Omega\left(\frac{\alpha s}{C}\right)$, so that $\Omega(nt)=\Omega(\alpha s)$. 
For the case where $C=\Omega\left(\frac{1}{\eps^2}\right)$, we instead set $t=\frac{1}{\eps^2}$ and $n=\Omega\left(\frac{\alpha s}{C}\right)$, which gives the desired $\Omega(nt)=\Omega\left(\frac{\alpha s}{C\eps^2}\right)$ communication lower bound. 
However, the number of duplicates in this distribution is not correct. 
Thus, we copy the instance $\Theta\left(\frac{C}{t}\right)$ times to account for this. 
It then remains to prove the hardness of the $\GapSet$ problem. 

\paragraph{Hardness of $\GapSet$.}
A natural approach for showing that the information cost of multiple instances of a communication problem is the sum of the information costs of each instance of the communication problem, is the direct sum approach. 
The approach generally proceeds by embedding multiple independent instances of the inner problem into coordinates of an outer problem to show the hardness of the composition. 
For example, \cite{Bar-YossefJKS04} views set disjointness as the $n$-wise OR of the AND problem across $\alpha$ players and uses the direct sum framework to show that since each AND problem requires $\Omega(1)$ information, then set disjointness requires $\Omega(n)$ information. 
However, the direct sum approach is generally used for composition problems where the outer problem is sensitive to changes in the input -- in the example above, the OR problem has different values for $0^n$ and for any elementary vector $\mathbf{e}_j$ with $j\in[n]$. 
Our outer problem $\GapAnd$ does not necessarily satisfy this condition and so it does not seem that we can apply a direct sum argument. 

Instead, we take the reverse approach by starting with the outer problem in the composition and using it to solve the inner problem, which is an approach used in \cite{WoodruffZ12} but for a very different multiplayer communication problem. 

For each $j\in[t]$, we let $D_j=|x_j\cap y_j|$, so that the $\GapAnd$ problem is simply to determine whether $|D_j|\ge\frac{t}{4}+c\cdot\sqrt{t}$ or $|D_j|\le\frac{t}{4}-c\cdot\sqrt{t}$. 
It is known that any protocol $\Pi$ that solves $\GapAnd$ must reveal $\Omega(t)$ information about $D_1,\ldots,D_t$. 
In particular, this implies that there exist $\Omega(t)$ coordinates $j\in[t]$ such that conditioned on the previous values $D_1,\ldots,D_{j-1}$, the protocol reveals $\Omega(1)$ information about $D_j$; we call such a coordinate an \emph{informative} index. 
The goal is then to show that for any informative index, the protocol $\Pi$ must reveal $\Omega(n)$ information, due to the hardness of the multiplayer pairwise disjointness problem. 

Consider a hard-wiring of a specific instance $\bV$ of the multiplayer pairwise disjointness problem on an informative index $j\in[t]$. 
The $\alpha$ players can fix the other coordinates of the $\GapSet$ before the $j$-th block and then plant $\bV$ on the $j$-th block. 
Because the protocol reveals $\Omega(1)$ information about $D_j$, this translates to an $\Omega(1)$ additive advantage over random guessing by using the maximum likelihood estimator. 
Moreover, note that the conditioning of the special coordinate in the multiplayer pairwise disjointness can only change the conditional information cost by an additive logarithmic factor. 
Hence, we obtain a protocol $\Pi'$ that can be used to solve multiplayer pairwise disjointness, which requires $\Omega(n)$ information. 
Then summing over the $\Omega(t)$ informative indices, it follows that the information cost of $\Pi$ for $\GapSet$ is $\Omega(nt)$. 

\subsection{Extension to Streaming Algorithms}
Motivated by the connection between distributed algorithms and streaming algorithms, we also consider parameterized algorithms for the distinct elements problem in the streaming setting. 
In the streaming setting, a low memory algorithm is given a single pass or a small number of passes over a stream of items, and should output a $(1\pm \eps)$-approximation to the number $F_0$ of distinct items at the end of the stream with constant probability. We consider insertion-only streams. 

We choose to parameterize the complexity in terms of $C$, the number of coordinates $i \in [n]$ with frequency $f_i > 1$. In situations in which the data is Zipfian, it may be the case that $C$ is small compared to $F_0$, as only a few items may occur more than once. We note that without parameterizing by $C$, there is an $\Omega\left(\frac{1}{\eps^2} + \log n\right)$ bit lower bound for any constant number of passes~\cite{JayramW13}. 
We, however, are able to give a two-pass streaming algorithm using $\O{C + 1/\varepsilon}$ memory, up to logarithmic factors, and we also prove a matching $\Omega\left(C + \frac{1}{\eps}\right)$ lower bound for any constant number of passes. 
For one-pass algorithms we are able to achieve $\O{\frac{C}{\eps}}$ bits of space, up to logarithmic factors, also showing that one can bypass the $\Omega\left(\frac{1}{\eps^2}\right)$ lower bound for $C < 1/\varepsilon$. In fact, in all of our results, we can replace $C$ with the number of items with frequency strictly larger than $1$, times the minimum of $1$ and $\frac{1}{\eps^2 F_0}$, which is especially useful if $F_0 \gg\frac{1}{\eps^2}$. 

Our two-pass streaming algorithm is based on computing the $\ell_1$-norm of the underlying vector, which is just the stream length since we only consider insertions of items, and then subtracting off an estimate to the contribution of items with frequency strictly larger than $1$, which we call {\it outliers}. 
We then add back in an estimate to the total number of outliers. 
We  can assume $F_0 = \Theta\left(\frac{1}{\eps^2}\right)$ by subsampling the universe items at $\O{\log n}$ scales to reduce $F_0$ to value in $\Theta\left(\frac{1}{\eps^2}\right)$ and preserve its value up to $1\pm \varepsilon$ (for this discussion let us assume $F_0$ is at least $1/\varepsilon^2$ to begin with). 
We observe that for items that have frequency smaller than $\frac{1}{\eps C}$, their total contribution to the $\ell_1$-norm is $\frac{1}{\eps}$, so they can be ignored as $F_0 = \Theta\left(\frac{1}{\eps^2}\right)$. For items with frequency sufficiently large, we can identify them all using a CountSketch data structure \cite{charikar2002finding} with $\O{C+\frac{1}{\eps}}$ buckets, and thus this amount of memory up to logarithmic factors. 
Further, we can obtain an unbiased estimate to their sum with small enough variance by adding their individual CountSketch estimates. 
Finally, there are items with “intermediate frequencies” for which we cannot find them with CountSketch - for these we instead subsample the universe elements, making the surviving universe elements with intermediate frequencies ``heavier", since the total $F_0$ has gone down and thus the noise in each CountSketch bucket is smaller while the frequency of an item with intermediate frequency has remained the same (note that we subsample universe elements rather than stream items). 
We identify the surviving universe elements with intermediate frequency and scale back up by the inverse of the sampling probability to estimate their contribution to the $\ell_1$-norm. 

One issue is that for items with intermediate frequencies, we need to subsample at multiple geometric rates, and in order to avoid over-counting we need to learn their frequency counts exactly, which we accomplish with a second pass. However, if we were to instead increase the number of hash buckets of CountSketch from $\O{C+\frac{1}{\eps}}$ to $\O{\frac{C}{\eps}}$, we could find all heavy hitters in a single pass. 
Interestingly, for our single pass algorithm, we can also use methods for robust mean estimation \cite{PrasadBR19} applied to the values of our CountSketch buckets to estimate $F_0$. 
Indeed, we can view the few CountSketch buckets which contain an outlier as the corrupted samples in a robust mean estimation algorithm. 
This allows us to save a $\log n$ factor from the number of CountSketch tables. 
We give these results in \appref{app}.

\section{Distributed Distinct Element Estimation}
\seclab{sec:ddee}
In this section, we study the problem of $F_0$ approximation. 
In \secref{sec:ddee:lb}, we prove \thmref{thm:lb:many:colls}, showing that the communication complexity for the distributed distinct element estimation problem is a function of the number of pairwise collisions distributed across the players. 
In \secref{sec:ddee:ub}, we give an algorithm for the distributed distinct element estimation problem that uses total communication which is function of the number of pairwise collisions, i.e., the algorithm corresponding to \thmref{thm:low:colls:ub}. 

\paragraph{Distinct elements estimation.}
We now formally define the distributed distinct element estimation problem in the coordinator model of communication, which was introduced by~\cite{DolevF92}. 
There exist $\alpha$ servers with vectors $v^{(1)},\ldots,v^{(\alpha)}\in\{0,1\}^n$ on a universe of size $[n]$. 
The vectors define an underlying frequency vector $v=v^{(1)}+\ldots+v^{(\alpha)}$. 
We interchangeably refer to the servers as either players or parties, including a specific server that is designated as the coordinator for the protocol. 
Each of the servers have access to private sources of randomness. 
There is a private channel between every server and the coordinator, but there are no channels between the other players, so all communication must be performed through the coordinator. 
We assume without loss of generality that the protocol is sequential and round-based, i.e., in each round the coordinator speaks to some number of players and await their responses before initiating the next round. 
Therefore, the protocol must be self-delimiting so that all parties must know when each message has been completely sent. 

Given an accuracy parameter $\eps$, the goal is to perform a protocol $\Pi$ so that the coordinator outputs a $(1+\eps)$-approximation to $F_0(v)$ after the protocol has completed. 
The communication cost of the protocol $\Pi$ is the total number of bits sent by all parties in the worst-case output. 
Thus, we remark that up to constants, we obtain the same results for the message-passing model, where servers are allowed to communicate directly with each other. 

For both our algorithms and lower bounds, the assumption that each local vector is binary is without loss of generality, because the resulting protocols and reductions will behave the same regardless of whether a server has a single instance or multiple instances of a coordinate. 

\subsection{Lower Bounds for Distributed Distinct Element Estimation}
\seclab{sec:ddee:lb}
%\noindent\textbf{Lower bounds.}
To show \thmref{thm:lb:many:colls}, our starting point is the lower bound instance of \cite{WoodruffZ14}, which first defines a problem called $\SUMDISJ$, in which there are $\alpha$ players $P_1,\ldots,P_{\alpha}$ with inputs $X_1,\ldots,X_k\in\{0,1\}^{tL}$ and a coordinator $C$ and $Y\in\{0,1\}^{tL}$. 
The vectors $X_1,\ldots,X_{\alpha},Y$ are organized into $t$ blocks $X_i^{(j)}, Y^{(j)}$ for $i\in[\alpha]$ and $j\in[t]$, each with $L$ coordinates. 
The inputs to each block of $X_1$ and $Y$ are randomly generated instances of two-player set disjointness, i.e., there are $t$ instances of set disjointness, each with universe size $L$, generated as follows. 
For each $i\in[L]$, one of the following events occurs:
\begin{itemize}
\item 
With probability $\frac{1}{4}$, $i$ is given to $X_1$.
\item 
With probability $\frac{1}{4}$, $i$ is given to $Y$.
\item 
With probability $\frac{1}{2}$, $i$ is not given to either $X_1$ or $Y$. 
\end{itemize}
After this process is performed for each $i\in[L]$, a special coordinate $c\in[L]$ is then chosen uniformly at random and the allocations of $c$ are reset, so that initially, $c$ is not given to either $X_1$ or $Y$. 
Then, one of the following events occurs:
\begin{itemize}
\item
With probability $\frac{1}{2}$, $c$ is given to both players.
\item 
With probability $\frac{1}{2}$, $c$ is given to neither player.
\end{itemize} 
The inputs $X_2,\ldots,X_\alpha$ are then similarly generated, but conditioned on the value of $Y$, so that each pair $(X_i,Y)$ forms an input to two-player set disjointness. 
Let $\DISJ(X_i^{(j)},Y^{(j)})=0$ if the special coordinate $c$ is given to neither player, i.e., the instance is disjoint, and let $\DISJ(X_i^{(j)},Y^{(j)})=1$ otherwise. 
We then define $\SUMDISJ(X_1,\ldots,X_{\alpha},Y)$ to be $\sum_{i=1}^\alpha\sum_{j=1}^t\DISJ(X_i^{(j)},Y^{(j)})$. 

\cite{WoodruffZ14} proved that computing an additive $\O{\sqrt{\alpha t}}$ error to $\SUMDISJ(X_1,\ldots,X_{\alpha},Y)$ requires $\Omega(\alpha tL)$ communication. 
They then reduced the problem of $F_0$ approximation from $\SUMDISJ$ as follows. 

Given an instance $X_1,\ldots,X_{\alpha},Y$ of $\SUMDISJ$, the coordinator creates the indicator vector $Z$ corresponding to $[tL]\setminus Y$. 
Observe that for $t=\O{\frac{1}{\eps^2\alpha}}$ and $tL=\Theta\left(\frac{1}{\eps^2}\right)$, a $(1+\eps)$-approximation to $F_0(X_1+\ldots+X_{\alpha}+Z)$ suffices for the coordinator to compute an additive $\O{\sqrt{\alpha t}}$ error to $\SUMDISJ(X_1,\ldots,X_{\alpha},Y)$, given $Y$. 

Crucially, a constant fraction of the items are given to a constant fraction of the players, with constant probability due to the distribution of set disjointness,  where each coordinate is given to each player $X_i$ with probability at least $\frac{1}{4}$. 
Therefore, $\Omega\left(\frac{1}{\eps^2}\right)$ coordinates in the frequency vector $X_1+\ldots+X_{\alpha}+Z$ have frequency $\Omega(\alpha)$, with probability at least $0.99$. 
Thus we have the following:
\begin{lemma}
\cite{WoodruffZ14}
Given a dataset $S$ with $F_0(S)=\Omega\left(\frac{1}{\eps^2}\right)$ and $C=\Omega(\alpha^2\cdot F_0(s))$ pairwise collisions, distributed across $\alpha$ players, any protocol that computes a $(1+\eps)$-approximation to $F_0(S)$ with probability at least $\frac{2}{3}$ uses $\Omega\left(\frac{\alpha}{\eps^2}\right)$ communication.
%$C=\Omega\left(\frac{\alpha^2}{\eps^2}\right)$ pairwise collisions,  
\end{lemma}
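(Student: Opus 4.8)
The plan is to obtain this lemma as essentially a reparameterization of the $\SUMDISJ$-based lower bound of~\cite{WoodruffZ14}, with one additional observation bounding the number of pairwise collisions that the hard instance produces. Recall from~\cite{WoodruffZ14} that computing an additive $\O{\sqrt{\alpha t}}$ approximation to $\SUMDISJ(X_1,\ldots,X_\alpha,Y)$, where the instance consists of $t$ blocks each with universe size $L$, requires $\Omega(\alpha t L)$ communication in the coordinator model. I would instantiate the parameters as $tL=\Theta\!\left(\frac{1}{\eps^2}\right)$ and $t=\Theta\!\left(\frac{1}{\eps^2\alpha}\right)$, so that $L=\Theta(\alpha)$ and $\sqrt{\alpha t}=\Theta\!\left(\frac{1}{\eps}\right)$.

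First I would record the reduction from $\SUMDISJ$ to $F_0$ approximation. Given an instance $X_1,\ldots,X_\alpha,Y$, the coordinator forms the indicator vector $Z$ of $[tL]\setminus Y$ and treats it as the input of an extra (virtual) server; since the coordinator already holds $Y$, simulating this server is free, and the number of players only grows from $\alpha$ to $\alpha+1$. Set $v = X_1+\cdots+X_\alpha+Z$. The vector $Z$ already contributes a $1$ to every coordinate outside $Y$, and inside $Y$ the coordinates receiving a nonzero entry from some $X_a$ are (up to the special-coordinate structure of the disjointness instances) exactly those witnessing $\DISJ(X_a^{(j)},Y^{(j)})=1$; thus $F_0(v)$ equals the quantity $|[tL]\setminus Y|$, which the coordinator knows exactly from $Y$, plus a term that tracks $\SUMDISJ(X_1,\ldots,X_\alpha,Y)$ within the additive $\O{\sqrt{\alpha t}}$ slack already absorbed in~\cite{WoodruffZ14}'s reduction. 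Since $F_0(v)\le tL=\Theta\!\left(\frac{1}{\eps^2}\right)$, a $(1+\eps)$-approximation to $F_0(v)$ has additive error at most $\eps\cdot F_0(v)\le\eps\cdot tL=\O{\sqrt{\alpha t}}$, so after subtracting the known quantity $|[tL]\setminus Y|$ the coordinator recovers $\SUMDISJ$ to additive error $\O{\sqrt{\alpha t}}$.

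Second --- the new ingredient --- I would bound the number of pairwise collisions of $v$. In the $\SUMDISJ$ distribution, each non-special coordinate of a block is given to $X_1$ with probability $\tfrac14$ and, conditioned on $Y$, to each of $X_2,\ldots,X_\alpha$ with probability bounded below by a constant, and these membership events across the $\alpha$ players are (conditioned on $Y$) essentially independent. Hence each of the $\Omega(tL)=\Omega\!\left(\frac{1}{\eps^2}\right)$ coordinates lying outside $Y$ has frequency $\Omega(\alpha)$ in $v$ with constant probability, so a Chernoff bound over the $\alpha$ players, followed by a concentration bound over the coordinates, shows that with probability at least $0.99$ there are $\Omega\!\left(\frac{1}{\eps^2}\right)$ coordinates of frequency $\Omega(\alpha)$. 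A coordinate of frequency $f$ creates $\binom{f}{2}$ pairwise collisions, so on this event $v$ has $C=\Omega\!\left(\frac{\alpha^2}{\eps^2}\right)$ collisions; since also $F_0(v)=\Theta\!\left(\frac{1}{\eps^2}\right)$ and trivially $C\le\binom{\alpha}{2}F_0(v)$, we get $C=\Theta(\alpha^2\cdot F_0(v))$, matching the hypothesis of the lemma (and $F_0(v)=\Omega(1/\eps^2)$).

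Finally I would assemble the pieces. Conditioning the $\SUMDISJ$ distribution on the probability-$0.99$ event that the above collision count holds yields a distribution supported entirely on $F_0$ instances satisfying the lemma's promise. A $(1+\eps)$-approximation protocol for $F_0$ with success probability $\ge\frac23$ on this conditioned distribution, after a constant number of independent repetitions and a majority vote, succeeds with probability $>\frac23$ on the unconditioned distribution; via the reduction this solves $\SUMDISJ$ to additive error $\O{\sqrt{\alpha t}}$ with probability $>\frac23$, so by~\cite{WoodruffZ14} it uses $\Omega(\alpha t L)=\Omega\!\left(\frac{\alpha}{\eps^2}\right)$ communication, and hence so does the original $F_0$ protocol. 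The step I expect to demand the most care is the collision count: one must verify that conditioning on $Y$ and on the special-coordinate resets does not spoil the near-independence of the per-player membership events, so that ``$\Omega(1/\eps^2)$ coordinates of frequency $\Omega(\alpha)$'' holds with probability close to $1$ and not merely in expectation; everything else is a routine unpacking of~\cite{WoodruffZ14}.
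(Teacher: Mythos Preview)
Your proposal is correct and follows essentially the same approach as the paper. The paper's argument is the discussion immediately preceding the lemma: it recalls the $\SUMDISJ$ lower bound of~\cite{WoodruffZ14}, the reduction to $F_0$ via the complement vector $Z$, and then observes that since each coordinate is given to each player $X_i$ with constant probability, $\Omega(1/\eps^2)$ coordinates of the aggregate vector have frequency $\Omega(\alpha)$ with probability at least $0.99$ --- exactly your second step, though the paper states it more tersely and does not spell out the conditioning-and-amplification wrapper you include at the end.
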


In fact, for $\beta<\alpha$, we can embed the same problem across $\beta$ players to obtain the following:
\begin{corollary}
\corlab{cor:lb:large:uni}
Let $\beta\in[1,\alpha^2]$. 
Given a dataset $S$ with $F_0(S)=\Omega\left(\frac{1}{\eps^2}\right)$ and $C=\Omega(\beta\cdot F_0(s))$ pairwise collisions, distributed across $\alpha$ players, any protocol that computes a $(1+\eps)$-approximation to $F_0(S)$ with probability at least $\frac{2}{3}$ uses $\Omega\left(\frac{\sqrt{\beta}}{\eps^2}\right)$ communication.
\end{corollary}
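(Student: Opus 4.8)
The plan is to replay the reduction from $\SUMDISJ$ that underlies the lemma above (which is the case $\beta=\alpha^2$), but to embed the hard instance across only $\gamma := \lceil\sqrt{\beta}\rceil$ of the $\alpha$ servers instead of all of them. Since $\beta\le\alpha^2$ we have $\gamma\le\alpha$, so this is feasible. Concretely, I would instantiate $\SUMDISJ(X_1,\ldots,X_\gamma,Y)$ with $t=\Theta\!\left(\frac{1}{\eps^2\gamma}\right)$ blocks, each on a universe of size $L=\Theta(\gamma)$, so that $tL=\Theta\!\left(\frac{1}{\eps^2}\right)$. Give $X_i$ to server $i$ for $i\in[\gamma]$, designate one further server as the coordinator and give it the indicator vector $Z$ of $[tL]\setminus Y$, and give every remaining server the empty set. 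The underlying frequency vector is $v=X_1+\cdots+X_\gamma+Z$, and, exactly as in \cite{WoodruffZ14}, $F_0(v)=\Theta(tL)=\Theta\!\left(\frac{1}{\eps^2}\right)$, which in particular is $\Omega\!\left(\frac{1}{\eps^2}\right)$ as the statement requires.

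Next I would transfer the communication lower bound. Embedding across $\gamma$ players, the argument of \cite{WoodruffZ14} shows that approximating $\SUMDISJ(X_1,\ldots,X_\gamma,Y)$ to additive error $\O{\sqrt{\gamma t}}$ requires $\Omega(\gamma t L)$ communication. Because $tL=\Theta\!\left(\frac{1}{\eps^2}\right)$ and $\sqrt{\gamma t}=\Theta(1/\eps)$, a $(1+\eps)$-approximation to $F_0(v)$ has additive error $\O{\eps\cdot tL}=\O{\sqrt{\gamma t}}$ on the reconstructed value of $\SUMDISJ$, and the coordinator can carry out this reconstruction locally since it knows $Y$. Hence any protocol that $(1+\eps)$-approximates $F_0(v)$ with probability at least $\frac23$ uses $\Omega(\gamma t L)=\Omega\!\left(\frac{\gamma}{\eps^2}\right)=\Omega\!\left(\frac{\sqrt{\beta}}{\eps^2}\right)$ communication.

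It remains to verify that this instance carries $C=\Omega(\beta\cdot F_0(S))$ pairwise collisions. In the $\SUMDISJ$ distribution every coordinate of every block lands in each of $X_1,\ldots,X_\gamma$ with probability $\Omega(1)$, and, conditioned on a coordinate lying outside $Y$, these memberships are independent while $Z$ always contains that coordinate; so by a Chernoff bound, with probability at least $0.99$ a constant fraction of the $\Theta\!\left(\frac{1}{\eps^2}\right)$ coordinates of $v$ have frequency $\Omega(\gamma)$ (and at least $2$), each of which contributes $\binom{\Omega(\gamma)}{2}=\Omega(\gamma^2)$ ordered pairs of servers sharing it. This yields $C=\Omega\!\left(\frac{\gamma^2}{\eps^2}\right)=\Omega(\beta\cdot F_0(S))$, using $F_0(S)=\Theta\!\left(\frac{1}{\eps^2}\right)$. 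Conditioning on this $0.99$-probability event only lowers the protocol's success probability from at least $\frac23$ to more than $\frac35$, still a constant, so the lower bound applies to the subfamily of inputs with the prescribed number of collisions; the extra $\Omega(\alpha)$ term appearing in \thmref{thm:lb:many:colls} comes separately from the $\Omega(\alpha)$ bound of \cite{CormodeMY11}.

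The main obstacle is the parameter bookkeeping in the first two steps together with the conditioning in the third: one must choose $t$ and $L$ so that simultaneously $\gamma\le\alpha$, $F_0(v)=\Theta(1/\eps^2)$, the $\O{\sqrt{\gamma t}}$ additive tolerance for $\SUMDISJ$ matches the $\eps\cdot F_0(v)$ slack of a $(1+\eps)$-approximation, and the collision count lands at $\Theta(\beta\cdot F_0(S))$ rather than merely a one-sided bound — and then check that restricting to the high-probability collision event does not destroy the reduction's success guarantee. Each piece is routine given \cite{WoodruffZ14}, but they must be made to fit together.
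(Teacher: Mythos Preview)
Your proposal is correct and follows essentially the same approach as the paper, which simply states that one embeds the $\SUMDISJ$ instance across fewer players; you have supplied the details the paper omits, taking $\gamma=\lceil\sqrt{\beta}\rceil$ active servers so that the $\Omega(\gamma tL)=\Omega(\sqrt{\beta}/\eps^2)$ bound and the $\Omega(\gamma^2)\cdot F_0(S)=\Omega(\beta\cdot F_0(S))$ collision count both fall out as required. The parameter bookkeeping and the conditioning on the high-probability collision event are handled correctly.
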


Similarly, for $F_0(s)=\O{\frac{1}{\eps^2}}$ with $F_0(s)=\Omega(\alpha)$, it follows that for $t=\O{\frac{F_0(s)}{\alpha}}$ and $tL=\Theta(F_0(s))$, a $(1+\eps)$-approximation to $F_0(X_1+\ldots+X_{\alpha}+Z)$ suffices for the coordinator to determine $\SUMDISJ(X_1,\ldots,X_{\alpha},Y)$ up to additive error $\O{\sqrt{\alpha t}}$, given $Y$. 
Hence, we have: 
\begin{corollary}
\corlab{cor:lb:small:uni}
Let $\beta\in[1,\alpha^2]$. 
Given a dataset $S$ with $F_0(s)=\O{\frac{1}{\eps^2}}$ and $C=\Omega(\beta\cdot F_0(s))$ pairwise collisions, distributed across $\alpha$ players, any protocol that computes a $(1+\eps)$ to $F_0(S)$ with probability at least $\frac{2}{3}$ uses $\Omega\left(\sqrt{\beta}\cdot F_0(S)\right)$ communication.
\end{corollary}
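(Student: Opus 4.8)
The plan is to mirror the reduction behind \corref{cor:lb:large:uni}, but with $F_0(S)$ playing the role of $\frac{1}{\eps^2}$, and to embed the $\SUMDISJ$ instance across only $\sqrt{\beta}$ of the $\alpha$ players. This embedding is legitimate because $\beta\le\alpha^2$ forces $\sqrt{\beta}\le\alpha$; the remaining $\alpha-\sqrt{\beta}$ players are handed empty inputs and participate in nothing. I would invoke the \cite{WoodruffZ14} lower bound with ``$\alpha$'' set to $\sqrt{\beta}$: estimating $\SUMDISJ(X_1,\ldots,X_{\sqrt{\beta}},Y)$ to additive error $\O{\sqrt{\sqrt{\beta}\cdot t}}$ requires $\Omega(\sqrt{\beta}\cdot tL)$ communication. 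Choosing the block count $t=\Theta\!\left(F_0(S)/\sqrt{\beta}\right)$ and block length $L$ so that $tL=\Theta(F_0(S))$ --- the natural regime, requiring $F_0(S)=\Omega(\sqrt{\beta})$ just as the $\alpha$-player version requires $F_0(S)=\Omega(\alpha)$ --- the coordinator forms $Z$ as the indicator of $[tL]\setminus Y$ and, exactly as in \cite{WoodruffZ14}, recovers $\SUMDISJ$ from $Y$ together with $F_0(X_1+\cdots+X_{\sqrt{\beta}}+Z)$. A $(1+\eps)$-multiplicative estimate of this $F_0$, which is $\Theta(F_0(S))$, yields an additive $\O{\eps\cdot F_0(S)}$ estimate of $\SUMDISJ$.

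Next I would check the two quantitative constraints. For the $\SUMDISJ$ lower bound to apply, the admissible additive error $\O{\sqrt{\sqrt{\beta}\cdot t}}$ must dominate $\eps\cdot F_0(S)$; since $\sqrt{\beta}\cdot t=\Theta(F_0(S))$ here, this amounts to $\eps\cdot F_0(S)=\O{\sqrt{F_0(S)}}$, i.e.\ $F_0(S)=\O{1/\eps^2}$, which is precisely the hypothesis. For the collision count, I would use that under the set-disjointness distribution each universe coordinate is placed in each active player's set independently with probability at least $\frac14$; a Chernoff bound shows that for any fixed coordinate, with constant probability $\Omega(\sqrt{\beta})$ of the $\sqrt{\beta}$ active players receive it, and a Markov/counting argument then shows that with constant probability $\Omega(tL)=\Omega(F_0(S))$ coordinates of $X_1+\cdots+X_{\sqrt{\beta}}+Z$ have frequency $\Omega(\sqrt{\beta})$. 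A coordinate of frequency $f$ contributes $\binom{f}{2}$ pairwise collisions, so each such coordinate contributes $\Omega(\beta)$, and the instance has $C=\Omega(\beta\cdot F_0(S))$ collisions; since only $\sqrt{\beta}$ players are active we also get $C=\O{\beta\cdot F_0(S)}$, so $C=\Theta(\beta\cdot F_0(S))$. Putting this together, the $\Omega(\sqrt{\beta}\cdot tL)=\Omega(\sqrt{\beta}\cdot F_0(S))$ communication lower bound for $\SUMDISJ$ transfers to any protocol that $(1+\eps)$-approximates $F_0$, which is the claim.

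I expect the delicate step to be the collision-count verification: one needs both that a constant fraction of coordinates reach frequency $\Theta(\sqrt{\beta})$ (the lower bound on $C$, relying on the independence and concentration of per-coordinate frequencies across the $\sqrt{\beta}$ active players, together with a counting argument) and that no coordinate can exceed frequency $\sqrt{\beta}$ (the upper bound on $C$, automatic from the embedding). The only other place requiring care is the bookkeeping that ties $\eps\cdot F_0(S)$ to $\sqrt{\sqrt{\beta}\cdot t}$ --- which is exactly what the assumption $F_0(S)=\O{1/\eps^2}$ is there to guarantee --- and the mild side condition $F_0(S)=\Omega(\sqrt{\beta})$ needed for $t\ge 1$, mirroring the $F_0(S)=\Omega(\alpha)$ condition already implicit in the $\alpha$-player statement.
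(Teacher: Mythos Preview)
Your proposal is correct and follows essentially the same approach as the paper: embed the $\SUMDISJ$ instance across $\sqrt{\beta}\le\alpha$ of the players, set $t=\Theta(F_0(S)/\sqrt{\beta})$ and $tL=\Theta(F_0(S))$, and invoke the \cite{WoodruffZ14} lower bound. The paper's own argument is a single terse sentence (``Similarly, for $F_0(S)=\O{1/\eps^2}$ with $F_0(S)=\Omega(\alpha)$\ldots''), implicitly combining the $\alpha\to\sqrt{\beta}$ embedding already used for \corref{cor:lb:large:uni} with the reparameterization $tL=\Theta(F_0(S))$; you have spelled out both the error-matching check ($\eps\cdot F_0(S)\le\sqrt{F_0(S)}$ iff $F_0(S)=\O{1/\eps^2}$) and the collision-count verification more carefully than the paper does.
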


Putting together \corref{cor:lb:large:uni} and \corref{cor:lb:small:uni}, we have: 
\thmlbmanycolls*

We now give the proof of \thmref{thm:low:colls:lb}, assuming the correctness of \thmref{thm:main}, which we defer to \secref{sec:duplicates}. 
\thmlowcollslb*
\begin{proof}
Suppose $\alpha=\O{1}$. 
Note that a multiplicative $(1+\eps)$-approximation to $F_0(S)$ is an additive $\eps\cdot F_0(S)$ approximation to $F_0(S)$. 
Consider the hard instance of \thmref{thm:main} and recall that it places the $C$ pairwise collisions across unique coordinates, so that $F_0(S)=F_1(S)-C$.  
Thus an additive $\eps\cdot F_0(S)$ approximation to $F_0(S)$ is an additive $\eps\cdot F_0(S)$ approximation to $C$, which is also a multiplicative $\left(1+\frac{\eps\cdot F_0(S)}{C}\right)$-approximation to $C$. 
Observe that the $\alpha$ players can use $\O{\log\frac{1}{\eps}}=\O{C}$ bits of communication to compute $F_1(S)$ exactly. 
By \thmref{thm:main}, a multiplicative $\left(1+\frac{\eps\cdot F_0(S)}{C}\right)$-approximation to $C$ approximation requires $\Omega\left(\frac{C}{\eps^2\cdot F_0(S)}\right)$ communication. 
\end{proof}

\subsection{Upper Bounds for Distributed Distinct Element Estimation}
\seclab{sec:ddee:ub}
In this section, we present upper bounds for the distributed distinct element estimation problem. 
In particular, we describe our algorithm that guarantees \thmref{thm:low:colls:ub}, which shows that the communication complexity of the problem is paraemterized by the number of pairwise collisions. 
We first recall the following guarantees for a constant-factor approximation to $F_0(S)$. 
\begin{theorem}
\cite{KaneNW10,Blasiok20}
\lemlab{lem:lzero:cons}
There exists an algorithm that outputs a $4$-approximation to the number of distinct elements and uses $\O{\alpha\log n}$ bits of communication. 
\end{theorem}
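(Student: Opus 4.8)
The plan is to use the standard \emph{nested subsampling} construction, reducing the task to counting the support after the universe has been thinned to $\O{\log\log n}$ survivors. Using public randomness the players agree on a chain $[n]=S_0\supseteq S_1\supseteq\cdots\supseteq S_L$ with $L=\ceil{\log n}$, where each $x\in S_{i-1}$ is kept in $S_i$ independently with probability $\tfrac12$; thus $\PPr{x\in S_i}=2^{-i}=:p_i$, independently over $x$. Fix a threshold $T=\Theta(\log\log n)$. The coordinator sweeps the levels in decreasing order $i=L,L-1,\ldots,0$, and at level $i$ asks the $\alpha$ players, one at a time, to report the identities of their input items lying in $S_i$, maintaining the number $Y_i$ of \emph{distinct} such items seen so far and halting the whole sweep as soon as the running distinct count reaches $T$. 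Let $i^\star$ be the (largest) level at which this happens, or $i^\star=0$ if the count never reaches $T$ (in which case $Y_0=F_0(v)$ exactly). The output is $\widehat{F_0}:=Y_{i^\star}/p_{i^\star}=Y_{i^\star}\cdot 2^{i^\star}$.

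For correctness, write $D:=F_0(v)$. For each $x$ in the support of $v$ the survival indicator $\mathbf 1[x\in S_i]$ is a $\mathrm{Bernoulli}(p_i)$ variable, and these are independent over $x$, so $Y_i=\sum_x\mathbf 1[x\in S_i]$ is a sum of $D$ independent Bernoullis with mean $\mu_i=p_i D$, and Chernoff bounds apply. Assume $D\ge 8T$ (otherwise the sweep reaches level $0$ and reports $D$ exactly). Pick $j$ with $\mu_j\in[4T,8T)$. A Chernoff bound gives $Y_j\ge\mu_j/2\ge 2T$ with probability $1-2^{-\Omega(T)}$, so $i^\star\ge j$. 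In the other direction, for every level $i$ with $\mu_i<T/64$ a Chernoff bound gives $\PPr{Y_i\ge T}\le 2^{-\Omega(T)}$; since $T=\Theta(\log\log n)$ with a large enough constant, this is at most $1/\poly(\log n)$, so a union bound over all $\O{\log n}$ levels shows that with probability $\ge 1-\tfrac16$ no such level triggers the halt, forcing $i^\star$ into the $\O{1}$ levels with $\mu_{i^\star}=\Theta(T)$. Conditioning additionally on the (constantly many) Chernoff events $Y_{i^\star}\in[\mu_{i^\star}/2,\,2\mu_{i^\star}]$, which each hold with probability $1-2^{-\Omega(T)}$, we obtain $\widehat{F_0}=Y_{i^\star}/p_{i^\star}\in[D/2,\,2D]$, a $2$-approximation (hence in particular a $4$-approximation), with total failure probability at most $\tfrac13$.

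It remains to account for communication, which is $\O{\alpha\log n}$ bits as shown in \cite{KaneNW10,Blasiok20}. Each item identity costs $\ceil{\log n}$ bits; the coordinator ever needs only $\Theta(\log\log n)$ distinct survivors, namely at the terminating level $i^\star$, and by processing players sequentially and halting the instant $T$ distinct items have been accumulated, the transmitted item identities together with the $\O{\log n}$ rounds of lightweight coordination messages (announcing the current level, polling players, signalling termination) total $\O{\alpha\log n}$ bits.

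I expect the main obstacle to be the correctness argument for the \emph{data-dependent} stopping level $i^\star$: because $i^\star$ is itself a function of the randomness, one cannot simply invoke concentration at a single predetermined scale, and must instead control $Y_i$ simultaneously across \emph{all} $\Theta(\log n)$ subsampling levels. This is exactly what dictates $T=\Theta(\log\log n)$: Chernoff bounds push the per-level deviation probability down to $1/\poly(\log n)$ only once the surviving mean is $\Omega(\log\log n)$, which is what lets the union bound over all $\Theta(\log n)$ levels go through. A constant threshold would give a constant-factor estimate at one idealized level but would not survive this union bound.
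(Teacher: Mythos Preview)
Your proposal is essentially correct and mirrors the paper's own treatment: the paper does not prove this statement but merely cites \cite{KaneNW10,Blasiok20}, and the informal justification it gives in the technical overview (Section~1.3.1, ``Constant-factor approximation'') is exactly the nested-subsampling protocol you describe, including the $\Theta(\log\log n)$ threshold to survive the union bound over $\O{\log n}$ levels.

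One small remark on the communication accounting: your argument that only $\Theta(\log\log n)$ item identities are ever transmitted is not quite complete, because at levels $i>i^\star$ the protocol does not halt and each of the $\alpha$ players may still report items (possibly duplicates of one another). A careful count of $(\text{player},\text{item},\text{level})$ triples gives $\O{\alpha\log n\log\log n}$ rather than $\O{\alpha\log n}$, which is also what the paper's overview implicitly concedes when it mentions the ``additional $\log\log n$ overhead.'' The clean $\O{\alpha\log n}$ bound in the theorem statement comes most directly from the \emph{mergeable sketch} interpretation of \cite{KaneNW10,Blasiok20}: each player locally computes an $\O{\log n}$-bit constant-factor $F_0$ sketch and sends it to the coordinator, who merges them. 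You correctly defer to the references for the exact bound, so this does not affect the validity of your write-up, but it is worth being aware that the subsampling protocol as you describe it is not by itself what yields the sharpest constant.
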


Now, we prove \thmref{thm:high:colls:ub} through \algref{alg:lzero:eps}.

\begin{algorithm}[!htb]
\caption{$(1+\eps)$-approximation to $F_0$}
\alglab{alg:lzero:eps}
\begin{algorithmic}[1]
\Require{Items given to $\alpha$ players from a universe of size $[n]$, accuracy parameter $\eps\in(0,1)$}
\Ensure{$(1+\eps)$-approximation to the number of distinct items}
\State{Let $X$ be a $4$-approximation to $F_0$}
\Comment{\lemref{lem:lzero:cons}}
%\Comment{\algref{alg:lzero:cons}}
\State{Let $i_0$ be the largest integer such that $\frac{X}{2^i_0}>\frac{1000}{\eps^2}$}
\State{$i\gets\max(0,i_0)$}
\State{Let $T_i$ be a subset of $[n]$ where each item is subsampled with probability $\frac{1}{2^i}$}
\State{Each player sends their items in $T_i$}
\State{Let $Z$ be the number of unique sent items}
\State{\textbf{Return} $Z\cdot 2^i$}
\end{algorithmic}
\end{algorithm}

We now show the parameterized complexity of the distributed distinct elements estimation problem. 
\thmhighcollsub*
\begin{proof}
Consider \algref{alg:lzero:eps}. 
Let $F_0(S)$ be the number of distinct items across all players. 
Note that we have $\Ex{Z\cdot 2^i}=F_0(S)$ and $\Var{Z\cdot 2^i}=F_0(S)\cdot 2^i\le\frac{(\eps\cdot F_0(S))^2}{250}$. 
Hence conditioned on the correctness of $X$, by Chebyshev's inequality, we have that with probability at least $0.99$,
\[(1-\eps)\cdot F_0(S)\le Z\cdot 2^i\le(1+\eps)\cdot F_0(S).\]

It remains to show that the total communication used by the protocol is 
\[\O{\alpha\log n}+\O{\min\left(F_0(S), \frac{1}{\eps^2}\right)}\cdot\sqrt{\beta}\log n\]
bits. 
Conditioned on the correctness of $X$, we have by the definition of $i$ that $\Ex{Z}\le\frac{800}{\eps^2}$. 
Hence by Markov's inequality, we have that $Z\le\frac{10^6}{\eps^2}$ with probability at least $0.99$. 
Let $\calE$ be the event that $Z\le\min\left(\frac{10^6}{\eps^2}, F_0(S)\right)$. 
Let $N=\min\left(\frac{10^6}{\eps^2}, F_0(S)\right)$ and for $i\in[N]$, let $H_i$ be the number of players with item $i$, so that $0\le H_i\le\alpha$. 
Then conditioned on $\calE$, the number of pairwise collisions is $C=\binom{H_1}{2}+\ldots+\binom{H_N}{2}$. 
Note that $\binom{H}{2}\ge\frac{H^2}{4}-\frac{1}{4}$, so that $C\ge\frac{H_1^2+\ldots+H_N^2}{4}-N$. 
By the Root-Mean Square-Arithmetic Mean Inequality, we have that if there are $C=\beta\cdot\O{\min\left(F_0(S), \frac{1}{\eps^2}\right)}$ pairwise collisions, then 
%\begin{align*}
%H_1+\ldots+H_N&=\O{\sqrt{\beta}N}\\
%&=\O{\min\left(F_0(S), \frac{1}{\eps^2}\right)}\cdot\sqrt{\beta}.
%\end{align*}
\[H_1+\ldots+H_N=\O{\sqrt{\beta}N}=\O{\min\left(F_0(S), \frac{1}{\eps^2}\right)}\cdot\sqrt{\beta}.\]
Thus the $\alpha$ players have at most $\O{\min\left(F_0(S), \frac{1}{\eps^2}\right)}\cdot\sqrt{\beta}$ items in $S_i$, from a universe of size $[n]$, so the communication for sending these items is $\O{\alpha\log n}+\O{\min\left(F_0(S), \frac{1}{\eps^2}\right)}\cdot\sqrt{\beta}\log n$ bits.
Finally, recall from \lemref{lem:lzero:cons} that $\O{\alpha\log n}$ bits of communication suffices to compute a constant-factor approximation to $F_0(S)$. 
Thus, the total communication is at most $\O{\alpha\log n}+\O{\min\left(F_0(S), \frac{1}{\eps^2}\right)}\cdot\sqrt{\beta}\log n$ bits.
\end{proof}

\begin{algorithm}[!htb]
\caption{$(1+\eps)$-approximation to $F_0$, given an upper bound on the number of collisions}
\alglab{alg:lzero:collisions}
\begin{algorithmic}[1]
\Require{Items given to $\alpha$ players from a universe of size $[n]$, accuracy parameter $\eps\in(0,1)$, upper bound $C$ on the number of pair-wise collisions}
\Ensure{$(1+\eps)$-approximation to the number of distinct items}
\State{Let $X$ be a $4$-approximation to $F_0$}
\Comment{\lemref{lem:lzero:cons}}
%\Comment{\algref{alg:lzero:cons}}
\State{Let $i_0$ be the largest integer such that $\frac{X}{2^{i_0}}>\frac{1000}{\eps^2}$}
\State{$i\gets\min(0,i_0)$}
\State{Let $S_i$ be a subset of $[n]$ where each item is subsampled with probability $\frac{1}{2^i}$}
\State{Assume without loss of generality each player $i$ has a binary vector $v^{(i)}\in\{0,1\}^n$}
\State{Each player sends their total number of items in $S_i$}
\State{Let $Z$ be the sum of these numbers}
\State{$\eta\gets\frac{\eps}{10}$, $p\gets\min\left(1,\frac{100C}{\eta^2 X^2}\right)$}
\State{Let $T$ be a subset of $S_i$ where each item is subsampled with probability $p$}
\State{Each player sends their items in $T$}
\State{Let $W=\sum_{j\in T}\max(0,v_j-1)$, where $v=\sum_{i\in[\alpha]}v^{(i)}$ be the excess mass in $T$}
\State{\textbf{Return} $Z\cdot 2^i-W\cdot\frac{1}{p}$}
\end{algorithmic}
\end{algorithm}

The guarantees of \thmref{thm:low:colls:ub} then follow from \algref{alg:lzero:collisions}: 
\thmlowcollsub*
\begin{proof}
Consider \algref{alg:lzero:collisions}. 
Recall that with probability at least $0.99$, $(1-\O{\eps})F_0(S)\le F_0(S_i)\cdot 2^i\le(1+\O{\eps})F_0(S)$. 
Thus it suffices to achieve a $(1+\O{\eps})$ approximation to $F_0(S_i)$. 
For each $j\in[n]$, let $f_j$ be the number of times $j$ appears in $S_i$. 
Then we have 
\[F_0(S_i)=F_1(S_i)-\min(0,f_1-1)-\ldots-\min(0,f_j-1).\]
Let $t_j=\min(0,f_j-1)$ for all $j\in[n]$ be the excess mass of $j$, so that
\[F_0(S_i)=F_1(S_i)-(t_1+\ldots+t_n).\]
Let $\calE$ be the event that $X$ is a $4$-approximation to $F_0(S_i)$. 
Since $Z=F_1(S_i)$ in the context of \algref{alg:lzero:collisions} and $X$ is a $4$-approximation to $F_0(S_i)$ conditioned on $\calE$, then it suffices to achieve an additive $\eta\cdot X=\O{\eps}\cdot F_0(S_i)$ approximation to $(t_1+\ldots+t_n)$ for $\eta=\frac{\eps}{10}$.

Observe that the expected value of $W\cdot\frac{1}{p}$ satisfies
\[\Ex{W\cdot\frac{1}{p}}=\frac{1}{p}\cdot\sum_{j\in[n]}p\cdot t_j=t_1+\ldots+t_n.\]
Moreover, we can upper bound the variance
\[\Var{W\cdot\frac{1}{p}}\le\frac{1}{p^2}\cdot\sum_{j\in[n]}p\cdot(t_j)^2.\]
Since $p=\min\left(1,\frac{100C}{\eta^2 X^2}\right)$ and $(t_1^2+\ldots+t_n^2)\le C$, then 
\[\Var{W\cdot\frac{1}{p}}\le\frac{\eta^2 X^2}{100C}\left(t_1^2+\ldots+t_n^2\right)^2\le\frac{\eta^2 X^2}{100}.\]
Hence by Chebyshev's inequality, we have that with probability at least $0.99$, $W\cdot\frac{1}{p}$ provides an additive $\eta\cdot X$ error to $(t_1+\ldots+t_n)$, conditioned on $\calE$. 
By \lemref{lem:lzero:cons}, we have that $\PPr{\calE}\ge 0.99$. 
Thus by a union bound, with probability at least $0.98$, \algref{alg:lzero:collisions} outputs a $(1+\eps)$-approximation to $F_0$. 
Observe that conditioned on the event $\calE$, we have $X\le\O{\frac{1}{\eps^2}}$. 
Since the number of pairwise collisions is at most $C$, then $F_1(S_i)\le X+C$.
Let $Y$ denote the number of items from $T$ sent across all players. 
Then we have $\Ex{Y}\le p(X+C)$. 
We have $p=\min\left(1,\frac{100C}{\eta^2 X^2}\right)$ for $\eta=\frac{\eps}{10}$. 
Note that then for $F_0(S)=\Omega\left(\frac{1}{\eps^2}\right)$, we have 
%$\Ex{Y}\le\O{\frac{C}{\eps^2 X}+\frac{C^2}{\eps^2 X^2}}$. 
\[\Ex{Y}\le\O{\frac{C}{\eps^2 X}+\frac{C^2}{\eps^2 X^2}}.\]
Since $C\le F_0(S)=\O{X}$, then 
\[\Ex{Y}=\O{\frac{C}{\eps^2\cdot F_0(S)}}.\] 
Otherwise for $F_0(S)=\O{\frac{1}{\eps^2}}$, we have $\Ex{Y}=\O{C}$. 
The desired claim then follows from Markov's inequality. 
%Since $C\le F_0(S)=\O{X}$, then $\Ex{Y}=\O{\frac{C}{\eps^2 X}}$. 
%Finally, recall that we have $X\le\O{\frac{1}{\eps^2}}$. 
%Thus, the desired claim follows. 
\end{proof}

\section{Communication Game Lower Bound}
\seclab{sec:new:game:lb}
\applab{sec:new:game:lb}
In this section, we define and analyze the communication complexity of the $\GapSet$ problem. 
Recall that $\GapSet$ is the composition of the $\GapAnd$ problem on $t$ coordinates with the multiplayer pairwise disjointness problem on $n$ coordinates. 
The intuition is that $\GapSet$ requires $\Omega(nt)$ communication because the outer problem $\GapAnd$ requires $\Omega(t)$ communication and each inner problem of the multiplayer pairwise disjointness problem requires $\Omega(n)$ communication. 
The formal proof is significantly more involved. 
For starters, communication complexity is not additive but information costs are, so we require a number of preliminaries for information theory, which we recall in part in \secref{sec:prelims}. 

We first formally define the $\GapAnd$ problem, which serves as the ``outer'' problem in the composition problem $\GapSet$, as follows:
\begin{definition}[$\GapAnd$]
In the distributional $t$-coordinate GapAnd problem $\GapAnd_t$, Alice and Bob receive vectors $\bx,\by\in\{0,1\}^t$ generated uniformly at random. 
Let $c>0$ be a constant. 
Bob's goal is to determine whether the input falls into the cases:
\begin{itemize}
\item
In the YES case, for at least $\frac{t}{4}+\frac{1}{16}\cdot\sqrt{t}$ coordinates $j\in[t]$, we have $x_j=y_j=1$. 
\item
In the NO case, for at most $\frac{t}{4}-\frac{1}{16}\cdot\sqrt{t}$ coordinates $j\in[t]$, we have $x_j=y_j=1$.  
\item
Otherwise if neither the YES case nor the NO case occurs, then Bob's output may be arbitrary.  
\end{itemize}
\end{definition}
It is known that any protocol that obtains a constant advantage over random guessing reveals $\Omega(t)$ information about the input vectors. 
\begin{lemma}
\lemlab{lem:inf:gap:and}
\cite{ChakrabartiKW12,PaghSW14,BravermanGPW16}
There exists a sufficiently small constant $\delta>0$ for which any private randomness protocol $\Pi$ for $\GapAnd_t$ that succeeds with probability at least $\frac{1}{2}+\Omega(1)$ over inputs $\mathbf{x},\mathbf{y}$, the private randomness of $\Pi$ and public randomness $R$ satisfies
\[I(\Pi(\mathbf{x},\mathbf{y}); \mathbf{x},\mathbf{y}|R) = \Omega\left(t\right).\]
\end{lemma}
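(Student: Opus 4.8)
The plan is to derive this from the known information-complexity lower bound for the Gap-Hamming distance problem, which I will denote $\mathsf{GHD}_t$ (Alice and Bob hold uniform $\ba,\bb\in\{0,1\}^t$ and must decide whether $\Delta(\ba,\bb)\ge t/2+\sqrt t$ or $\Delta(\ba,\bb)\le t/2-\sqrt t$). First I would record the elementary observation that for uniform $\bx,\by$ the inner product $\langle\bx,\by\rangle$ is distributed as a $\mathrm{Binomial}(t,1/4)$, with mean $t/4$ and standard deviation $\Theta(\sqrt t)$; thus $\GapAnd_t$ is exactly the task of deciding on which side of the mean $\langle\bx,\by\rangle$ falls, up to an additive slack that is a constant fraction of its standard deviation. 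Using $\Delta(\bx,\by)=\|\bx\|_1+\|\by\|_1-2\langle\bx,\by\rangle$, and having Alice and Bob first exchange $\|\bx\|_1,\|\by\|_1$ (only $O(\log t)$ bits, hence $O(\log t)$ contribution to the information cost, negligible against the target $\Omega(t)$), a $\GapAnd_t$ protocol becomes a protocol deciding $\Delta(\bx,\by)$ against a threshold with an additive $\Theta(\sqrt t)$ gap, i.e.\ a $\mathsf{GHD}$-type instance on $t$ coordinates, after a cosmetic rescaling of $t$ by a constant to align the gap constants.

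Second, I would invoke the result of \cite{ChakrabartiKW12,PaghSW14,BravermanGPW16}: any protocol that, on the uniform input distribution, decides $\mathsf{GHD}_t$ with a constant advantage over random guessing has information cost $\Omega(t)$ with respect to the inputs and the public randomness. (Chakrabarti--Regev's communication bound of $\Omega(t)$ is not in itself enough here, since the composition argument in \secref{sec:new:game:lb} needs an information-cost statement, but the cited works upgrade it to information cost.) Composing this with the reduction above and absorbing the $O(\log t)$ overhead gives $I(\Pi(\bx,\by);\bx,\by\mid R)=\Omega(t)$ for any $\GapAnd_t$ protocol whose advantage exceeds some absolute constant, which we take to be the $\delta$ in the statement.

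The step I expect to require the most care is not the reduction per se but matching the \emph{input distributions}: $\GapAnd_t$ insists on uniform $\bx,\by$, whereas the cleanest reductions that pin the Gap-Hamming threshold to a fixed value $t/4$ tend to produce inputs conditioned to have Hamming weight close to $t/2$, and the fluctuation of $\|\bx\|_1$ around $t/2$ is itself $\Theta(\sqrt t)$, i.e.\ of the same order as the gap. I would handle this by working with the promise version of $\mathsf{GHD}$ on (near-)balanced inputs --- for which the cited lower bounds are also stated or follow routinely --- and checking that a uniform-input $\GapAnd_t$ protocol, conditioned on the balanced event, still has constant advantage and only $O(\log t)$-larger information cost. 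A fully self-contained argument would instead have to reproduce the ``smooth corruption''/Fourier-analytic core of \cite{ChakrabartiKW12,BravermanGPW16} (there is no useful direct sum over the $t$ coordinates, since the ``which side of the threshold'' question is not coordinate-sensitive); here I would take that core as a black box.
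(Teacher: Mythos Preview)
The paper does not prove this lemma at all: it is stated with citations to \cite{ChakrabartiKW12,PaghSW14,BravermanGPW16} and used as a black box, with the surrounding text simply saying ``It is known that any protocol that obtains a constant advantage over random guessing reveals $\Omega(t)$ information about the input vectors.'' So there is no ``paper's own proof'' to compare against; you are supplying strictly more than the paper does.

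Your reduction sketch from Gap-Hamming via $\Delta(\bx,\by)=\|\bx\|_1+\|\by\|_1-2\langle\bx,\by\rangle$ is the standard route and is sound in outline, and you have correctly isolated the one genuine subtlety: the threshold for $\langle\bx,\by\rangle$ induced by the Gap-Hamming promise on $\Delta$ floats with $\|\bx\|_1+\|\by\|_1$, whose fluctuation under the uniform distribution is on the same $\Theta(\sqrt t)$ scale as the gap, so exchanging the weights and invoking a fixed-threshold $\GapAnd_t$ oracle does not immediately close the loop. Your proposed fix---restrict to (near-)balanced inputs, for which the cited information-cost lower bounds also hold, and argue that conditioning costs only $O(\log t)$ information and preserves a constant advantage---is workable. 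It is worth noting that among the cited references, \cite{BravermanGPW16} in particular develops the information-complexity lower bound directly for the inner-product/AND-count formulation (not only for Hamming distance), so one can also quote that result without passing through the balanced-input conditioning at all; either route is acceptable here.
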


%We next formally define the $\SetDisj_{n,k}$ problem, which serves as the ``inner'' problem in the composition problem $\GapSet$, as follows:
%\begin{definition}[$\SetDisj_{n,k}$]
%In the $n$-coordinate $k$-wise collision problem $\SetDisj_{n,k}$ for $\alpha$ players, $\alpha$ players receive vectors $\bu^{(1)},\ldots,\bu^{(\alpha)}$ as follows. 
%We first generate a special index $Z\in[n]$ uniformly at random. 
%For each $j\neq Z$, with probability $\frac{1}{2}$, we set either $u^{(i)}_j=1$ for $i\in[\alpha]$ chosen uniformly at random. 
%Otherwise with probability $\frac{1}{2}$, we have $u^{(i)}_j=0$ for all $i\in[\alpha]$. 
%\begin{itemize}
%\item
%In the YES case, we have $u^{(i)}_Z=0$ for all $i\in[\alpha]$. 
%\item
%In the NO case, we have $u^{(i_1)}_Z=\ldots=u^{(i_k)}_Z=1$ for $k$ indices $1\le i_1<\ldots<i_k\le\alpha$, chosen uniformly at random. 
%Furthermore, we have $u^{(i)}_Z=0$ for all $i\notin\{i_1,\ldots,i_k\}$.  
%\end{itemize}
%\end{definition}

Recall that in the set disjointness communication problem, e.g.,~\cite{ChakrabartiKS03,Bar-YossefJKS04}, two players each have a subset of $[n]$ and their goal is to determine whether the intersection of their sets is empty or non-empty. 
We define a variant of set disjointness as our ``inner'' problem. 
See \figref{fig:set:disjoint} for an example of possible set disjointness inputs for each case. 

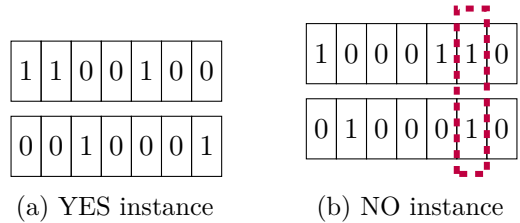
\begin{figure}[!htb]
\centering
\begin{subfigure}[b]{0.23\textwidth}
\centering
\begin{tikzpicture}[scale=1]
\draw (0,1) rectangle+(0.4,0.8);
\draw (0.4*1,1) rectangle+(0.4,0.8);
\draw (0.4*2,1) rectangle+(0.4,0.8);
\draw (0.4*3,1) rectangle+(0.4,0.8);
\draw (0.4*4,1) rectangle+(0.4,0.8);
\draw (0.4*5,1) rectangle+(0.4,0.8);
\draw (0.4*6,1) rectangle+(0.4,0.8);

\node at (0.2,1+0.4){1};
\node at (0.2+0.4*1,1+0.4){1};
\node at (0.2+0.4*2,1+0.4){0};
\node at (0.2+0.4*3,1+0.4){0};
\node at (0.2+0.4*4,1+0.4){1};
\node at (0.2+0.4*5,1+0.4){0};
\node at (0.2+0.4*6,1+0.4){0};

\draw (0,0) rectangle+(0.4,0.8);
\draw (0.4*1,0) rectangle+(0.4,0.8);
\draw (0.4*2,0) rectangle+(0.4,0.8);
\draw (0.4*3,0) rectangle+(0.4,0.8);
\draw (0.4*4,0) rectangle+(0.4,0.8);
\draw (0.4*5,0) rectangle+(0.4,0.8);
\draw (0.4*6,0) rectangle+(0.4,0.8);

\node at (0.2,0.4){0};
\node at (0.2+0.4*1,0.4){0};
\node at (0.2+0.4*2,0.4){1};
\node at (0.2+0.4*3,0.4){0};
\node at (0.2+0.4*4,0.4){0};
\node at (0.2+0.4*5,0.4){0};
\node at (0.2+0.4*6,0.4){1};
\end{tikzpicture}%\vspace{0.1in}
\caption{YES instance}
\figlab{fig:set:disjoint:yes}
\end{subfigure}
\begin{subfigure}[b]{0.23\textwidth}
\centering
\begin{tikzpicture}[scale=1]
\draw (0,1) rectangle+(0.4,0.8);
\draw (0.4*1,1) rectangle+(0.4,0.8);
\draw (0.4*2,1) rectangle+(0.4,0.8);
\draw (0.4*3,1) rectangle+(0.4,0.8);
\draw (0.4*4,1) rectangle+(0.4,0.8);
\draw (0.4*5,1) rectangle+(0.4,0.8);
\draw (0.4*6,1) rectangle+(0.4,0.8);

\node at (0.2,1+0.4){1};
\node at (0.2+0.4*1,1+0.4){0};
\node at (0.2+0.4*2,1+0.4){0};
\node at (0.2+0.4*3,1+0.4){0};
\node at (0.2+0.4*4,1+0.4){1};
\node at (0.2+0.4*5,1+0.4){1};
\node at (0.2+0.4*6,1+0.4){0};

\draw (0,0) rectangle+(0.4,0.8);
\draw (0.4*1,0) rectangle+(0.4,0.8);
\draw (0.4*2,0) rectangle+(0.4,0.8);
\draw (0.4*3,0) rectangle+(0.4,0.8);
\draw (0.4*4,0) rectangle+(0.4,0.8);
\draw (0.4*5,0) rectangle+(0.4,0.8);
\draw (0.4*6,0) rectangle+(0.4,0.8);

\node at (0.2,0.4){0};
\node at (0.2+0.4*1,0.4){1};
\node at (0.2+0.4*2,0.4){0};
\node at (0.2+0.4*3,0.4){0};
\node at (0.2+0.4*4,0.4){0};
\node at (0.2+0.4*5,0.4){1};
\node at (0.2+0.4*6,0.4){0};
\draw[dashed,line width=0.8mm,color=purple] (0.4*5,-0.2) rectangle+(0.4,2.2); 
\end{tikzpicture}
\caption{NO instance}
\figlab{fig:set:disjoint:no}
\end{subfigure}
\caption{Examples of YES and NO instances of the set disjointness problem for $\alpha=2$ players.}
\figlab{fig:set:disjoint}
\end{figure}

We show that any protocol for $\GapSet_{n,k}$ with constant $k$ requires $\Omega(n)$ communication. 
To that end, we first recall the following structural properties from \cite{Bar-YossefJKS04}:

\begin{lemma}[Lemma 6.2 in~\cite{Bar-YossefJKS04}]
\lemlab{lem:inf:hellinger}
Let $f(X)$ and $f(Y)$ be two random variables and let $Z$ be a random variable with uniform distribution in $\{X,Y\}$. 
If $Z$ is independent of both $f(X)$ and $f(Y)$, then $I(Z;f(Z))\ge h^2(f_X,f_Y)$, where $f_X$ denotes the distribution of $f$ on $X$. 
\end{lemma}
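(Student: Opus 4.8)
The plan is to reduce the lemma to the classical comparison between the Jensen--Shannon divergence and the squared Hellinger distance. Write $P := f_X$ for the law of $f(X)$ and $Q := f_Y$ for the law of $f(Y)$, and set $M := \tfrac12(P+Q)$. Since $Z$ is uniform on $\{X,Y\}$ and independent of the pair $(f(X),f(Y))$, the variable $f(Z)$ has marginal law $M$, while conditioned on $Z=X$ it has law $P$ and conditioned on $Z=Y$ it has law $Q$. The first step is then just to unfold the definition of mutual information:
\[ I(Z;f(Z)) \;=\; H(f(Z)) - H(f(Z)\mid Z) \;=\; \tfrac12\, D(P\,\|\,M) + \tfrac12\, D(Q\,\|\,M), \]
so that $I(Z;f(Z))$ is \emph{exactly} the Jensen--Shannon divergence $\mathrm{JS}(P,Q)$. (As is standard in communication complexity, all logarithms here are base $2$; with natural logarithms one obtains instead $I(Z;f(Z))\ge (\ln 2)\cdot h^2(f_X,f_Y)$, which is equally usable in the applications.)

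The second step is to prove the distribution-free inequality $\mathrm{JS}(P,Q)\ge h^2(P,Q)$. By linearity of the sum/integral over the underlying outcome space, it suffices to establish the pointwise inequality
\[ \tfrac12\, p\log_2\!\frac{2p}{p+q} \;+\; \tfrac12\, q\log_2\!\frac{2q}{p+q} \;\ge\; \tfrac12\bigl(\sqrt p - \sqrt q\bigr)^2 \]
for all reals $p,q\ge 0$ (with the conventions $0\log 0 = 0$, and the case $p=q=0$ trivial); summing this over all outcomes yields $\tfrac12 D(P\|M)+\tfrac12 D(Q\|M)\ge \tfrac12\int(\sqrt p-\sqrt q)^2 = h^2(P,Q)$. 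Both sides of the pointwise inequality are positively homogeneous of degree $1$ in $(p,q)$, so I would normalize $p+q=2$ and substitute $p=1+t,\ q=1-t$ with $t\in[0,1]$ (the inequality is invariant under $t\mapsto -t$). This reduces everything to the single-variable inequality $\tfrac12(1+t)\log_2(1+t)+\tfrac12(1-t)\log_2(1-t)\ \ge\ 1-\sqrt{1-t^2}$ on $[0,1]$.

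That last inequality is where the only genuine work lies, and I expect it to be the main (if modest) obstacle: it is tight at \emph{both} endpoints --- both sides equal $0$ at $t=0$ and equal $1$ at $t=1$ --- so no crude bound can close it. I would handle it by a short calculus argument: near $t=0$ the left side is $t^2/(2\ln 2)+O(t^4)$ while the right side is $t^2/2+O(t^4)$, and $1/\ln 2>1$; and a comparison of the two derivatives shows the difference of the sides has a single interior critical point on $(0,1)$, hence remains nonnegative between its two zeros. (Alternatively, one may simply invoke the known sharp comparison of Jensen--Shannon divergence with squared Hellinger distance due to Lin and Tops\o e.) Combining the three steps --- (i) the identity $I(Z;f(Z))=\mathrm{JS}(P,Q)$, (ii) the reduction to the pointwise inequality, and (iii) the one-variable estimate --- gives $I(Z;f(Z))=\mathrm{JS}(f_X,f_Y)\ge h^2(f_X,f_Y)$, which is the claim.
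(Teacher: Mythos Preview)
Your proof is correct and follows the standard route: identify $I(Z;f(Z))$ with the Jensen--Shannon divergence $\mathrm{JS}(P,Q)$ of the two output distributions, then invoke the pointwise inequality $\mathrm{JS}(P,Q)\ge h^2(P,Q)$, reduced by homogeneity to a one-variable estimate. The only place I would tighten the writing is the single-critical-point step: you should make explicit that since the difference vanishes at both endpoints and is strictly positive on a right-neighborhood of $0$ (by the Taylor comparison $1/\ln 2>1$), a single interior critical point forces that point to be a maximum, whence the difference is nonnegative on all of $[0,1]$. As stated, ``hence remains nonnegative between its two zeros'' is the right conclusion but the logical link is left implicit.

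As for comparison with the paper: the paper does not give its own proof of this lemma at all --- it simply quotes it as Lemma~6.2 of Bar-Yossef, Jayram, Kumar, and Sivakumar and uses it as a black box. Your argument is essentially the same proof that appears in that original reference (mutual information equals Jensen--Shannon, then the pointwise Hellinger comparison), so there is nothing to contrast; you have reconstructed the cited proof.
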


\begin{lemma}[Cut-and-Paste, e.g., Lemma 6.3 in~\cite{Bar-YossefJKS04}]
\lemlab{lem:hellnger:cut:paste}
Let $\Pi$ be a randomized protocol, $x,x'\in X$, and $y,y'\in Y$ for some domains $X,Y$. 
Then $h(\Pi_{xy},\Pi_{x'y'})=h(\Pi_{xy'},\Pi_{x'y})$.
\end{lemma}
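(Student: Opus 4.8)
The plan is to use the \emph{rectangle property} of communication protocols --- that for a fixed transcript, the probability that the protocol produces it factors into a part depending only on Alice's input and a part depending only on Bob's input --- and then to observe that the Bhattacharyya coefficient $\sum_\tau\sqrt{P(\tau)Q(\tau)}$, and hence the squared Hellinger distance, is manifestly invariant under interchanging the roles of $y$ and $y'$ once this factorization is plugged in.

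First I would establish the factorization: for any transcript $\tau$ there exist functions $\phi,\psi$ such that $\Pi_{xy}(\tau)=\phi(x,\tau)\,\psi(y,\tau)$ for all inputs $x\in X$ and $y\in Y$. Write $\tau=(m_1,\dots,m_r)$ and let $I_A$ (resp.\ $I_B$) be the set of steps at which Alice (resp.\ Bob) speaks. Since the protocol is self-delimiting (as assumed in \secref{sec:prelims}), the identity of the speaker at step $i$ and whether the protocol has already halted are determined by $m_1,\dots,m_{i-1}$ alone, so
\[
\Pi_{xy}(\tau)\;=\;\prod_{i\in I_A}\Pr[m_i\mid m_{<i},x]\;\cdot\;\prod_{i\in I_B}\Pr[m_i\mid m_{<i},y],
\]
where each conditional message probability is taken over the private coins of the relevant player; the first product is the desired $\phi(x,\tau)$ and the second is $\psi(y,\tau)$. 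If $\Pi$ also uses public randomness, one prepends its value to $\tau$; it then contributes an identical scalar $\Pr[R=r]$ to all four distributions $\Pi_{xy},\Pi_{x'y'},\Pi_{xy'},\Pi_{x'y}$ and may be absorbed into $\phi$, so the product structure is preserved.

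It then remains to recall that, with the normalization fixed in \secref{sec:prelims}, the squared Hellinger distance on the (countable) transcript space satisfies $h^2(P,Q)=1-\sum_\tau\sqrt{P(\tau)Q(\tau)}$, so the claim $h(\Pi_{xy},\Pi_{x'y'})=h(\Pi_{xy'},\Pi_{x'y})$ is equivalent to $\sum_\tau\sqrt{\Pi_{xy}(\tau)\,\Pi_{x'y'}(\tau)}=\sum_\tau\sqrt{\Pi_{xy'}(\tau)\,\Pi_{x'y}(\tau)}$. Substituting the factorization, every summand on either side equals $\sqrt{\phi(x,\tau)\,\phi(x',\tau)\,\psi(y,\tau)\,\psi(y',\tau)}$, so the two sums coincide term by term and the lemma follows. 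The argument is short, and the only real content --- hence the step to handle carefully --- is the factorization: one must verify that ``whose turn it is'' and ``has the protocol stopped'' depend only on the transcript so far (the protocol-tree/self-delimiting convention), and that public randomness is incorporated in a way that does not destroy the product form; the rest is a direct computation.
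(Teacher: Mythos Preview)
The paper does not prove this lemma; it is simply recalled from \cite{Bar-YossefJKS04} as a known structural property, so there is no ``paper's own proof'' to compare against. Your argument is correct and is precisely the standard proof: the rectangle/factorization property $\Pi_{xy}(\tau)=\phi(x,\tau)\psi(y,\tau)$ combined with the Bhattacharyya-coefficient form $h^2(P,Q)=1-\sum_\tau\sqrt{P(\tau)Q(\tau)}$ makes both sides equal $\sum_\tau\sqrt{\phi(x,\tau)\phi(x',\tau)\psi(y,\tau)\psi(y',\tau)}$ term by term.
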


\begin{lemma}[Pythagorean Lemma, e.g., Lemma 6.4 in~\cite{Bar-YossefJKS04}]
\lemlab{lem:hellinger:pyth}
Let $\Pi$ be a randomized protocol, $x,x'\in X$, and $y,y'\in Y$ for some domains $X,Y$. 
Then $h^2(\Pi_{xy},\Pi_{x'y})+h^2(\Pi_{xy'},\Pi_{x'y'})\le2h^2(\Pi_{xy},\Pi_{x'y'})$.
\end{lemma}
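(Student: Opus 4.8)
The plan is to reduce the inequality to the standard factorization (or ``rectangle'') property of transcript distributions and then perform a short pointwise computation. Concretely, I would first invoke the factorization lemma for private-coin randomized protocols (e.g.\ Lemma~6.6 in~\cite{Bar-YossefJKS04}): there exist nonnegative functions $a(\cdot,\cdot)$ and $b(\cdot,\cdot)$ such that, for every transcript $\tau$ and every pair of inputs $(x,y)\in X\times Y$,
\[\Pi_{xy}(\tau) = a(x,\tau)\cdot b(y,\tau).\]
This holds because, conditioned on any prefix of $\tau$, the next message is a function only of its sender's input together with that player's private coins, so telescoping the conditional probabilities over the rounds separates the dependence on $x$ from the dependence on $y$. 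If I am permitted to quote this lemma the remainder is routine; otherwise the bulk of the work is justifying the telescoping carefully (tracking the round structure and the private randomness), and I would regard that as the main obstacle.

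Granting the factorization, I would rewrite the squared Hellinger distance via the Bhattacharyya coefficient. Expanding the square in the definition from \secref{sec:prelims} gives $h^2(P,Q) = 1 - \mathrm{BC}(P,Q)$, where $\mathrm{BC}(P,Q)=\sum_\tau\sqrt{P(\tau)Q(\tau)}$. Hence the target inequality $h^2(\Pi_{xy},\Pi_{x'y}) + h^2(\Pi_{xy'},\Pi_{x'y'}) \le 2\,h^2(\Pi_{xy},\Pi_{x'y'})$ is equivalent to
\[\mathrm{BC}(\Pi_{xy},\Pi_{x'y}) + \mathrm{BC}(\Pi_{xy'},\Pi_{x'y'}) \ge 2\,\mathrm{BC}(\Pi_{xy},\Pi_{x'y'}).\]

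Finally, I would substitute the product form into each coefficient. Using $\Pi_{xy}(\tau) = a(x,\tau)b(y,\tau)$ one obtains $\mathrm{BC}(\Pi_{xy},\Pi_{x'y}) = \sum_\tau \sqrt{a(x,\tau)a(x',\tau)}\,b(y,\tau)$, likewise with $y'$ in place of $y$, and $\mathrm{BC}(\Pi_{xy},\Pi_{x'y'}) = \sum_\tau \sqrt{a(x,\tau)a(x',\tau)}\,\sqrt{b(y,\tau)b(y',\tau)}$. Subtracting,
\[\mathrm{BC}(\Pi_{xy},\Pi_{x'y}) + \mathrm{BC}(\Pi_{xy'},\Pi_{x'y'}) - 2\,\mathrm{BC}(\Pi_{xy},\Pi_{x'y'}) = \sum_\tau \sqrt{a(x,\tau)a(x',\tau)}\left(\sqrt{b(y,\tau)} - \sqrt{b(y',\tau)}\right)^2 \ge 0,\]
since every summand is a product of nonnegative quantities; this gives the claim. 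Note that the roles of $x$ and $y$ are symmetric here, so one could equally pull out the $b$-factor; the cut-and-paste identity \lemref{lem:hellnger:cut:paste} reflects the same underlying factorization but is not needed for this argument.
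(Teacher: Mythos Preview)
Your argument is correct and is essentially the standard proof from \cite{Bar-YossefJKS04}: factor the transcript distribution as $\Pi_{xy}(\tau)=a(x,\tau)b(y,\tau)$, rewrite squared Hellinger distance via the Bhattacharyya coefficient, and observe that the difference is a sum of terms of the form $\sqrt{a(x,\tau)a(x',\tau)}\big(\sqrt{b(y,\tau)}-\sqrt{b(y',\tau)}\big)^2\ge 0$. Note that the present paper does not give its own proof of this lemma---it is quoted directly as Lemma~6.4 of \cite{Bar-YossefJKS04}---so there is nothing further to compare; your write-up simply supplies the proof that the paper omits by citation.
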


\begin{lemma}[Lemma 6.5 in~\cite{Bar-YossefJKS04}]
\lemlab{lem:hellinger:proto:lb}
Let $\Pi$ be a randomized protocol with failure probability $\delta\in(0,1)$ for a function $f$. 
Then for any two input pairs $(x,y)$ and $(x',y')$ with $f(x,y)\neq f(x',y')$, we have $h^2(\Pi_{xy},\Pi_{x'y'})\ge1-2\sqrt{\delta}$.
\end{lemma}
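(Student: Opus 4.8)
The plan is to use correctness of $\Pi$ to show that the transcript distributions $\Pi_{xy}$ and $\Pi_{x'y'}$ are ``almost disjoint'' in a combinatorial sense, and then to convert almost-disjointness of two distributions into a lower bound on their squared Hellinger distance via Cauchy--Schwarz. First I would reduce to the case in which the output of $\Pi$ is a deterministic function of the transcript; this is standard (append any output coins to the transcript, or fix them adversarially) and does not change the distributions $\Pi_{xy}$. Since $f(x,y)\neq f(x',y')$, let $\calA$ be the set of transcripts $\tau$ on which $\Pi$ outputs the value $f(x,y)$. Because $\Pi$ errs with probability at most $\delta$ on every input, $\Pi_{xy}(\calA)\ge 1-\delta$, while on input $(x',y')$ the correct answer is the other value, so $\Pi_{x'y'}(\calA)\le\delta$; equivalently $\Pi_{xy}(\calA^c)\le\delta$ and $\Pi_{x'y'}(\calA^c)\le 1$.

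Next, recall the identity $h^2(\Pi_{xy},\Pi_{x'y'}) = 1 - \sum_\tau \sqrt{\Pi_{xy}(\tau)\,\Pi_{x'y'}(\tau)}$, so it suffices to bound the Bhattacharyya-type sum by $2\sqrt{\delta}$. Splitting the sum over $\tau\in\calA$ and $\tau\notin\calA$ and applying Cauchy--Schwarz to each piece gives
\[\sum_{\tau\in\calA}\sqrt{\Pi_{xy}(\tau)\Pi_{x'y'}(\tau)} \le \sqrt{\Pi_{xy}(\calA)\,\Pi_{x'y'}(\calA)} \le \sqrt{\delta},\qquad \sum_{\tau\notin\calA}\sqrt{\Pi_{xy}(\tau)\Pi_{x'y'}(\tau)} \le \sqrt{\Pi_{xy}(\calA^c)\,\Pi_{x'y'}(\calA^c)} \le \sqrt{\delta},\]
using $\Pi_{x'y'}(\calA)\le\delta$ for the first inequality and $\Pi_{xy}(\calA^c)\le\delta$ for the second (the other factors are bounded by $1$). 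Adding the two bounds yields $\sum_\tau\sqrt{\Pi_{xy}(\tau)\Pi_{x'y'}(\tau)}\le 2\sqrt{\delta}$, and hence $h^2(\Pi_{xy},\Pi_{x'y'})\ge 1-2\sqrt{\delta}$, as claimed.

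The only real subtlety, and the step I would be most careful about, is the reduction to deterministic output and, more generally, ensuring that $\Pi_{xy}$ is the right object: the argument needs correctness of $\Pi$ on a fixed input to translate into a statement about a single, input-\emph{independent} event $\calA$ in transcript space (so that the same $\calA$ controls both $\Pi_{xy}$ and $\Pi_{x'y'}$). Once that bookkeeping is set up as in~\cite{Bar-YossefJKS04}, the computation above is immediate. One could alternatively route through the total-variation bound $\|\Pi_{xy}-\Pi_{x'y'}\|_{TV}\ge 1-2\delta$ together with the inequality $\|P-Q\|_{TV}\le h(P,Q)\sqrt{2-h^2(P,Q)}$, which after rearranging also gives $h^2\ge 1-2\sqrt{\delta-\delta^2}\ge 1-2\sqrt\delta$, but the direct Cauchy--Schwarz computation is cleaner and self-contained.
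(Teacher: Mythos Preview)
Your proof is correct. Note, however, that the paper does not actually prove this lemma: it is quoted verbatim as Lemma~6.5 of~\cite{Bar-YossefJKS04} and used as a black box. Your argument---split the Bhattacharyya coefficient over the ``accept'' set $\calA$ and its complement, then apply Cauchy--Schwarz to each piece---is precisely the proof given in~\cite{Bar-YossefJKS04}, so there is nothing to compare.
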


\begin{lemma}
\lemlab{lem:hellinger:zero:one}
Let $\alpha\ge 2$ be an integer and let $k\in[\alpha]$. 
Let $\Pi$ be a randomized $\alpha$-party communication protocol with inputs from $\{0,1\}^\alpha$ for determining whether exactly $\ell$ coordinates are one and let $\pi:[\alpha]\to[\alpha]$ be any permutation. 
Then,
\[\sum_{i=1}^\alpha h^2(\Pi_{0^\alpha},\Pi_{\mathbf{e}_{i}})\ge\frac{1}{2k}\sum_{j=1}^{\flr{\alpha/k}} h^2(\Pi_{0^\alpha},\Pi_{I_j}),\]
where $I_j=\left[\frac{\alpha(j-1)}{2k}+1,\frac{\alpha j}{2k}\right]$. 
\end{lemma}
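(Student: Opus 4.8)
The plan is to reduce the inequality to a ``per-block'' bound of the form
\[h^2(\Pi_{0^\alpha},\Pi_{I_j})\ \le\ |I_j|\cdot\sum_{i\in I_j}h^2(\Pi_{0^\alpha},\Pi_{\mathbf{e}_i}),\]
and then sum over $j$. The intervals $I_1,\dots,I_{\lfloor\alpha/k\rfloor}$ are consecutive integer intervals of common length $|I_j|=\alpha/(2k)$; for the statement to be meaningful they are pairwise disjoint and contained in $[\alpha]$, which in particular forces $\alpha/(2k)\le 2k$. Hence each per-block bound is at most $2k\sum_{i\in I_j}h^2(\Pi_{0^\alpha},\Pi_{\mathbf{e}_i})$, and since every index $i\in[\alpha]$ lies in at most one $I_j$, summing over $j$ gives $\sum_{j}h^2(\Pi_{0^\alpha},\Pi_{I_j})\le 2k\sum_{i=1}^{\alpha}h^2(\Pi_{0^\alpha},\Pi_{\mathbf{e}_i})$, which rearranges to the claimed inequality. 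Observe that neither $\ell$ nor the permutation $\pi$ enters this argument: the inequality concerns only the transcript distributions $\Pi_v$ of the protocol, and replacing each $I_j$ by $\pi(I_j)$ would change nothing.

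For the per-block bound I would first establish a subadditivity property of the map $S\mapsto h(\Pi_{0^\alpha},\Pi_{S})$ (where $\Pi_S$ denotes the transcript distribution on the indicator vector of $S$) over disjoint unions: for disjoint $A,B\subseteq[\alpha]$,
\[h(\Pi_{0^\alpha},\Pi_{A\cup B})\ \le\ h(\Pi_{0^\alpha},\Pi_{A})+h(\Pi_{0^\alpha},\Pi_{B}).\]
To prove this, group the $\alpha$ players into two coalitions, one controlling the coordinates in $A$ and one controlling the coordinates in $B$, with all coordinates outside $A\cup B$ hard-wired to $0$, and view $\Pi$ as a two-party protocol between the coalitions. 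Applying the Cut-and-Paste lemma (\lemref{lem:hellnger:cut:paste}) with each coalition's two possible inputs being its all-zeros and all-ones assignments yields $h(\Pi_{0^\alpha},\Pi_{A\cup B})=h(\Pi_{A},\Pi_{B})$; the triangle inequality for Hellinger distance then gives $h(\Pi_{A},\Pi_{B})\le h(\Pi_{A},\Pi_{0^\alpha})+h(\Pi_{0^\alpha},\Pi_{B})$, which is exactly the displayed subadditivity. Iterating this by peeling off one coordinate of $I_j=\{i_1,\dots,i_m\}$ at a time gives $h(\Pi_{0^\alpha},\Pi_{I_j})\le\sum_{s=1}^{m}h(\Pi_{0^\alpha},\Pi_{\mathbf{e}_{i_s}})$, and squaring together with the Cauchy--Schwarz inequality $\left(\sum_{s=1}^m a_s\right)^2\le m\sum_{s=1}^m a_s^2$ produces the per-block bound.

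The main obstacle is justifying the use of the Cut-and-Paste lemma in the multi-party setting: the lemma is stated for an abstract randomized protocol with two input parts, so one must argue that partitioning the $\alpha$ players into the coalitions $A$ and $[\alpha]\setminus A$ (with the irrelevant coordinates fixed) indeed yields such a protocol -- concretely, that after fixing the public randomness the transcript still induces a partition of the pair of coalition-inputs into combinatorial rectangles, so the cut-and-paste identity applies. Once this is granted, the remainder is routine: the triangle inequality upgrades the cut-and-paste identity to subadditivity, iteration extends it from two blocks to an entire interval, Cauchy--Schwarz introduces the $|I_j|$ factor, and the elementary geometry of the intervals ($|I_j|=\alpha/(2k)\le 2k$, pairwise disjoint inside $[\alpha]$) finishes the proof. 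The only bookkeeping subtlety is that the statement is nonvacuous precisely when the $\lfloor\alpha/k\rfloor$ length-$\alpha/(2k)$ intervals fit inside $[\alpha]$, i.e.\ when $\alpha\lesssim k^2$, and this is the same regime that makes $|I_j|\le 2k$, so the constants are consistent.
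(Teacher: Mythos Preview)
Your proposal is correct and follows essentially the same approach as the paper: both arguments rest on the cut-and-paste identity (applied by grouping the $\alpha$ players into two coalitions), the triangle inequality for Hellinger distance, and Cauchy--Schwarz. The only organizational difference is that the paper halves the interval recursively along a binary tree (padding $\alpha$ to a power of two), whereas you peel off one coordinate at a time and apply Cauchy--Schwarz once at the end; your route is slightly more direct and avoids the power-of-two bookkeeping, but the two are really the same argument. Your observation that neither $\ell$ nor the permutation $\pi$ plays any role is also consistent with the paper's proof, which mentions $\pi$ only to label leaves and never uses $\ell$.
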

\begin{proof}
Similar to the proof of Lemma 7.2 in \cite{Bar-YossefJKS04}, we prove the claim by using an induction argument on a tree. 
Let $\alpha'$ be the smallest power of two such that $\alpha\le\alpha'$ and let $\phi$ be any mapping that extends a permutation $\pi:[\alpha]\to[\alpha]$ in the natural way to $\alpha'$ coordinates, i.e., $\phi(i)=\pi(i)$ for $i\in[\alpha]$ and $\phi(i)=i$ for $i\in(\alpha,\alpha']$. 
Let $T$ be a complete binary tree of height $\log(\alpha')$ with leaves labeled from $\phi(1)$ to $\phi(\alpha')$ and internal nodes labeled with the leaves in their corresponding rooted subtrees. 

For $a,b\in[\alpha]$ with $b-a+1=k$, let $c=\left\lfloor\frac{a+b}{2}\right\rfloor$. 
Let $u=\mathbf{e}_{\pi([a,b])}$, $v=\mathbf{e}_{\pi([a,c])}$, and $w=\mathbf{e}_{\pi([c+1,b])}$. 
By an analogue of \lemref{lem:hellnger:cut:paste} for $\alpha'$-player communication games, we have that
\[h(\Pi_{0^{\alpha'}},\Pi_{u})=h(\Pi_{v},\Pi_{w}).\]
On the other hand, since the last $\alpha'-\alpha$ input coordinates are known by all players, we have
\[h(\Pi_{0^\alpha},\Pi_{u})=h(\Pi_{v},\Pi_{w}).\]
By the triangle inequality,
\[h(\Pi_{0^\alpha},\Pi_{v})+h(\Pi_{0^\alpha},\Pi_{w})\ge h(\Pi_{v},\Pi_{w}).\]
Thus by the Cauchy-Schwarz inequality,
\[h^2(\Pi_{v},\Pi_{w})\le 2(h^2(\Pi_{0^\alpha},\Pi_{v})+h^2(\Pi_{0^\alpha},\Pi_{w})).\]
Then by induction,
\[2(b-a+1)\sum_{i=a}^b h^2(\Pi_{0^\alpha},\Pi_{\mathbf{e}_{i}})\ge h^2(\Pi_{0^\alpha},\Pi_{\mathbf{e}_{[a,b]}}).\]
For $k=b-a+1$, we can split the interval $[0,\alpha]$ into at least $\flr{\frac{\alpha}{k}}$ disjoint intervals of length $k$. 
Therefore,
\[\sum_{i=1}^\alpha h^2(\Pi_{0^\alpha},\Pi_{\mathbf{e}_{i}})\ge\frac{1}{2k}\sum_{j=1}^{\flr{\alpha/k}} h^2(\Pi_{0^\alpha},\Pi_{I_j}),\]
where $I_j=\left[\frac{\alpha(j-1)}{2k}+1,\frac{\alpha j}{2k}\right]$. 
\end{proof}

We now show that any protocol for $\GapSet_{n,k}$ requires $\Omega(n)$ communication for constant $k>0$. 
\begin{restatable}{lemma}{lemicostsetdisj}
\lemlab{lem:icost:setdisj}
The conditional information cost of any algorithm for $\SetDisj_{n,k}$ that succeeds with probability $\frac{1}{2}+\Omega(1)$ is $\Omega(n)$. 
\end{restatable}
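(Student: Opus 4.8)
The plan is to run the information‑complexity program of Bar‑Yossef, Jayram, Kumar and Sivakumar, but tailored to our pairwise‑collision variant of disjointness, so that the per‑coordinate lower bound does not degrade with the number of players $k$. First I would view an instance of $\SetDisj_{n,k}$ column by column: writing the $k$ players' sets as the columns $z^{(1)},\dots,z^{(n)}\in\{0,1\}^k$ of a $k\times n$ matrix, the task is to decide whether some column has at least two ones. Thus $\SetDisj_{n,k}$ is $g$‑decomposable with $g=\mathsf{OR}$ and primitive $h\colon\{0,1\}^k\to\{0,1\}$, $h(z)=\mathbf 1[z\text{ has}\ge 2\text{ ones}]$, which is exactly the decision a single block of multiplayer pairwise disjointness inside $\GapSet$ must make. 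For $\zeta$ I would take the mixture of product distributions on $\{0,1\}^k\times\calD$ whose conditioning variable $D$ is a uniformly random player $i\in[k]$, with the $i$‑th coordinate a fair coin and all other coordinates $0$ conditioned on $D=i$. Since the marginal of $\zeta^n$ on inputs is supported only on matrices whose columns are $0^k$ or some $\mathbf e_i$ (in particular pairwise disjoint), every such matrix is a collapsing input for $\SetDisj_{n,k}$: planting an arbitrary column $z$ into one coordinate of an otherwise pairwise‑disjoint matrix makes $\SetDisj_{n,k}$ equal $h(z)$. This is precisely the hypothesis of the direct sum theorem \thmref{thm:direct:sum}, which gives $\CIC_{n,\delta}(\SetDisj_{n,k})\ge n\cdot\CIC_{\zeta,\delta}(h)$; the single‑collision structure of the true hard distribution is absorbed by the collapsing property, and it remains to show $\CIC_{\zeta,\delta}(h)=\Omega(1)$ with a constant independent of $k$.

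To bound $\CIC_{\zeta,\delta}(h)$, conditioning on $D$ and using independence of the coordinates given $D$, any correct primitive protocol $\Pi$ satisfies $\CIC_{\zeta,\delta}(h)=\frac1k\sum_{i=1}^k I(z_i;\Pi\mid D=i)$, and by \lemref{lem:inf:hellinger} (applied with the uniform bit $z_i$ and the two conditional transcript laws $\Pi_{0^k},\Pi_{\mathbf e_i}$) this is at least $\frac1k\sum_{i=1}^k h^2(\Pi_{0^k},\Pi_{\mathbf e_i})$. I would then apply \lemref{lem:hellinger:zero:one} with block length $2$ and a permutation $\pi$ chosen so that the $\flr{k/2}$ resulting blocks $I_j$ are cross‑half pairs $\{j,\ j+\ceil{k/2}\}$, obtaining $\sum_{i=1}^k h^2(\Pi_{0^k},\Pi_{\mathbf e_i})\ge\frac14\sum_{j=1}^{\flr{k/2}}h^2(\Pi_{0^k},\Pi_{\mathbf e_{I_j}})$. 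Each $\mathbf e_{I_j}$ has exactly two ones and, planted into an otherwise pairwise‑disjoint matrix, is a legal promise instance on which $\SetDisj_{n,k}=1$, while $\SetDisj_{n,k}(0^k)=0$; hence \lemref{lem:hellinger:proto:lb} gives $h^2(\Pi_{0^k},\Pi_{\mathbf e_{I_j}})\ge 1-2\sqrt\delta$. (\lemref{lem:hellnger:cut:paste} and \lemref{lem:hellinger:pyth} are the ingredients already inside \lemref{lem:hellinger:zero:one} and need no separate invocation; if the success probability is only $\tfrac12+\Omega(1)$, I would first amplify by a constant number of independent repetitions and a majority vote to push $\delta$ below $\tfrac14$, at a constant‑factor cost.) Chaining the three inequalities gives $\sum_i h^2(\Pi_{0^k},\Pi_{\mathbf e_i})=\Omega(k)$, so $\CIC_{\zeta,\delta}(h)=\Omega(1)$ with an absolute constant, and therefore $\CIC_{n,\delta}(\SetDisj_{n,k})=\Omega(n)$.

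The main obstacle is obtaining a per‑coordinate bound that is a genuine constant rather than $\Omega(1/\poly(k))$. Running the ordinary promise multiparty disjointness argument --- comparing $\Pi_{0^k}$ with $\Pi_{1^k}$ and cutting‑and‑pasting down to the $\mathbf e_i$'s --- loses $\poly(k)$, because $\Pi_{1^k}$ is only \emph{collectively} far from the $\mathbf e_i$'s. The pairwise‑collision variant is exactly what fixes this: $\Pi_{0^k}$ is \emph{individually} maximally far from each of the $\Omega(k)$ \emph{disjoint} two‑player inputs $\mathbf e_{I_j}$, and \lemref{lem:hellinger:zero:one} (the multiparty analogue of cut‑and‑paste / Pythagoras, used with constant block length) converts this into $\sum_i h^2(\Pi_{0^k},\Pi_{\mathbf e_i})=\Omega(k)$, cancelling the $\tfrac1k$ coming from the conditioning variable. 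The remaining points are bookkeeping: the colliding inputs fed to \lemref{lem:hellinger:proto:lb} are legal under the promise (one planted cross‑half pair is allowed) even though they carry zero mass under $\zeta^n$; the uniformity‑within‑each‑half of the inner distribution --- needed later for the high‑entropy embedding into $\GapSet$ --- is compatible with choosing $\pi$ so the blocks $I_j$ straddle the two halves; and passing from the true hard distribution (which plants one shared ``special'' coordinate) to the collapsing distribution perturbs the conditional information cost by at most an additive $O(\log k)$, which is dominated by the $\Omega(n)$ main term.
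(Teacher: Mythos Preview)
Your proposal is correct and follows essentially the same route as the paper: reduce via the direct sum theorem (\thmref{thm:direct:sum}) to a single-coordinate primitive under the mixture $\zeta$ with conditioning variable $D=i\in[k]$ uniform and the $i$-th bit a fair coin, lower bound $I(\Pi;\mathbf u\mid D)$ by $\tfrac1k\sum_i h^2(\Pi_{0^k},\Pi_{\mathbf e_i})$ via \lemref{lem:inf:hellinger}, aggregate singletons into two-element blocks using \lemref{lem:hellinger:zero:one}, and finish with \lemref{lem:hellinger:proto:lb} on the $\Omega(k)$ disjoint two-one inputs. Your extra care about choosing the permutation so the blocks are cross-half pairs (to align with the $\GapSet$ embedding) and about amplifying $\tfrac12+\Omega(1)$ to $\delta<\tfrac14$ before invoking \lemref{lem:hellinger:proto:lb} are useful details the paper elides; your closing remark about an additive $O(\log k)$ perturbation between the hard distribution and the collapsing distribution is unnecessary, since the direct sum framework only measures information under $\zeta^n=\mu_0$ while requiring correctness on all of $\mu$, so no perturbation argument is needed.
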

\begin{proof}
We first use the direct sum paradigm by defining the NO distribution $\zeta$ for the single coordinate $\SetWt_{n,k}$ problem, which implicitly forms the NO distribution $\mu_0$ for $\SetDisj_{n,k}$. 
We use a random variable $Q\in[\alpha]$ chosen uniformly at random, so that conditioned on the value of $Q=i\in[\alpha]$, we have that $\mathbf{u}$ is chosen from $\{0^\alpha, \mathbf{e}_i\}$ uniformly at random, implicitly defining the distribution $\zeta$.  
Then $\zeta^n$ is the collapsing distribution that matches the distribution of $\mu_0$ and so by \thmref{thm:direct:sum}, it suffices to show that the conditional information cost of $\Pi$ with respect to $\zeta$ is $\Omega(1)$. 

To that end, we have
\[I(\Pi(\mathbf{u}); \mathbf{u}|Q)=\frac{1}{\alpha}\sum_{i=1}^\alpha I(\Pi(\mathbf{u}); \mathbf{u}|Q=i).\]
By \lemref{lem:inf:hellinger} and \lemref{lem:hellinger:zero:one}
\begin{align*}
I(\Pi(\mathbf{u}); \mathbf{u}|Q)&\ge\frac{1}{\alpha}\sum_{i=1}^\alpha h^2(\Pi_{0^\alpha},\Pi_{\mathbf{e}_i})\\
&\ge\frac{1}{2\alpha} \sum_{j=1}^{{\flr{\alpha/k}}} h^2(\Pi_{0^\alpha},\Pi_{\mathbf{e}_{2j-1}+\mathbf{e}_{2j}}),
\end{align*}
By the correctness of the protocol on $\mu$ and \lemref{lem:hellinger:proto:lb}, we have that $I(\Pi(\mathbf{u}); \mathbf{u}|Q)=\Omega(1)$. 
Hence, the conditional information cost of any algorithm for $\SetDisj_{n,k}$ that succeeds with probability $\frac{1}{2}+\Omega(1)$ under the mixture distribution $\mu=\frac{1}{2}\mu_0+\frac{1}{2}\mu_1$ is $\Omega(n)$. 
\end{proof}

We now define the composition communication problem $\GapSet$. 
\begin{definition}[$\GapSet$]
In $\GapSet_{t,\alpha,n,k}$, the distribution $\phi$ for the GapSet problem over $t$ blocks each with $n$ coordinates and $k$ collisions for $\alpha$ players is defined as follows. 
The $\alpha$ players receive input vectors $\bu^{(1)},\ldots,\bu^{(\alpha)}\in\{0,1\}^{nt}$ such that the $i$-th player receives vector $\bu^{(i)}$, for all $i\in[\alpha]$. 
Let $\beta=\flr{\frac{\alpha}{2}}$ and $\ell=\flr{\frac{k}{2}}$. 
The input is generated by first drawing $\bx,\by\in\{0,1\}^t$ from $\GapAnd_t$ and creating input vectors $\{\bu^{(i,j)}\}_{i\in[\alpha], j\in[t]}$. 
\begin{itemize}
\item
For each $j\in[t]$, a special coordinate $Z_j\in[n]$ is selected uniformly at random. 
\begin{itemize}
\item
If $x_j=1$, then the coordinate $Z_j$ is given to $\ell$ random players, i.e., the protocol selects $\ell$ indices that are at most $\beta$, i.e., $i_{j_1},\ldots,i_{j_\ell}\le\beta$, and sets $u^{(i_{j_1},j)}_{Z_j}=\ldots=u^{(i_{j_\ell},j)}_{Z_j}=1$. 
\item
If $y_j=1$, then the coordinate $Z_j$ is given to $k-\ell$ random players, i.e., the protocol selects $k-\ell$ indices that are at least $\beta+1$, i.e., $i_{j_{\ell+1}},\ldots,i_{j_k}\ge\beta+1$, and sets $u^{(i_{j_{\ell+1}},j)}_{Z_j}=\ldots=u^{(i_{j_k},j)}_{Z_j}=1$. 
\end{itemize}
\item
For all coordinates $w\in[n]$ with $w\neq Z_j$, with probability $\frac{1}{2}$, the protocol assigns $w$ to a random player $w_j\in[\alpha]$, i.e., it sets $u^{(w_j,j)}_w=1$. 
\end{itemize}
Each vector $\bu^{(j)}=\bu^{(1,j)}\circ\bu^{(2,j)}\circ\ldots\circ\bu^{(\alpha,j)}$ and the $\alpha$ players' goal to determine whether the input falls into the cases:
\begin{itemize}
\item
In the YES case, the input is generated from vectors $\bx,\by\in\{0,1\}^t$ that are in the YES case for $\GapAnd_t$. 
\item
In the NO case, the input is generated from vectors $\bx,\by\in\{0,1\}^t$ that are in the NO case for $\GapAnd_t$. 
\item
Otherwise if the input is generated from vectors $\bx,\by\in\{0,1\}^t$ that are neither in the YES or NO cases for $\GapAnd_t$, then the $\alpha$ players' output may be arbitrary. 
\end{itemize}
\end{definition}
See \figref{fig:gap:set} for examples of possible inputs to $\GapSet$.

\begin{figure}[!htb]
\centering
\begin{subfigure}[b]{0.49\textwidth}
\centering
\begin{tikzpicture}[scale=1]
\draw (0,1) rectangle+(0.4,0.8);
\draw (0.4*1,1) rectangle+(0.4,0.8);
\draw (0.4*2,1) rectangle+(0.4,0.8);
\draw[dotted,line width=0.8mm] (0,-0.2) rectangle+(1.2,2.2);
\draw (0.4*3,1) rectangle+(0.4,0.8);
\draw (0.4*4,1) rectangle+(0.4,0.8);
\draw (0.4*5,1) rectangle+(0.4,0.8);

\draw (0.4*6,1) rectangle+(0.4,0.8);
\draw (0.4*7,1) rectangle+(0.4,0.8);
\draw (0.4*8,1) rectangle+(0.4,0.8);
\draw[dotted,line width=0.8mm] (0.4*6,-0.2) rectangle+(1.2,2.2);
\draw (0.4*9,1) rectangle+(0.4,0.8);
\draw (0.4*10,1) rectangle+(0.4,0.8);
\draw (0.4*11,1) rectangle+(0.4,0.8);
\draw[dotted,line width=0.8mm] (0.4*9,-0.2) rectangle+(1.2,2.2);
\draw (0.4*12,1) rectangle+(0.4,0.8);
\draw (0.4*13,1) rectangle+(0.4,0.8);
\draw (0.4*14,1) rectangle+(0.4,0.8);
\draw[dotted,line width=0.8mm] (0.4*12,-0.2) rectangle+(1.2,2.2);
\draw[line width=0.8mm,color=purple] (0.4*3,-0.2) rectangle+(1.2,2.2);

\node at (0.2,1+0.4){1};
\node at (0.2+0.4*1,1+0.4){1};
\node at (0.2+0.4*2,1+0.4){0};
\node at (0.2+0.4*3,1+0.4){0};
\node at (0.2+0.4*4,1+0.4){\textcolor{purple}{\textbf{1}}};
\node at (0.2+0.4*5,1+0.4){0};
\node at (0.2+0.4*6,1+0.4){0};
\node at (0.2+0.4*7,1+0.4){1};
\node at (0.2+0.4*8,1+0.4){0};
\node at (0.2+0.4*9,1+0.4){0};
\node at (0.2+0.4*10,1+0.4){1};
\node at (0.2+0.4*11,1+0.4){0};
\node at (0.2+0.4*12,1+0.4){0};
\node at (0.2+0.4*13,1+0.4){1};
\node at (0.2+0.4*14,1+0.4){0};

\draw (0,0) rectangle+(0.4,0.8);
\draw (0.4*1,0) rectangle+(0.4,0.8);
\draw (0.4*2,0) rectangle+(0.4,0.8);
\draw (0.4*3,0) rectangle+(0.4,0.8);
\draw (0.4*4,0) rectangle+(0.4,0.8);
\draw (0.4*5,0) rectangle+(0.4,0.8);
\draw (0.4*6,0) rectangle+(0.4,0.8);
\draw (0.4*7,0) rectangle+(0.4,0.8);
\draw (0.4*8,0) rectangle+(0.4,0.8);
\draw (0.4*9,0) rectangle+(0.4,0.8);
\draw (0.4*10,0) rectangle+(0.4,0.8);
\draw (0.4*11,0) rectangle+(0.4,0.8);
\draw (0.4*12,0) rectangle+(0.4,0.8);
\draw (0.4*13,0) rectangle+(0.4,0.8);
\draw (0.4*14,0) rectangle+(0.4,0.8);

\node at (0.2,0.4){0};
\node at (0.2+0.4*1,0.4){0};
\node at (0.2+0.4*2,0.4){1};
\node at (0.2+0.4*3,0.4){0};
\node at (0.2+0.4*4,0.4){\textcolor{purple}{\textbf{1}}};
\node at (0.2+0.4*5,0.4){0};
\node at (0.2+0.4*6,0.4){1};
\node at (0.2+0.4*7,0.4){0};
\node at (0.2+0.4*8,0.4){1};
\node at (0.2+0.4*9,0.4){0};
\node at (0.2+0.4*10,0.4){0};
\node at (0.2+0.4*11,0.4){0};
\node at (0.2+0.4*12,0.4){1};
\node at (0.2+0.4*13,0.4){0};
\node at (0.2+0.4*14,0.4){1};
\end{tikzpicture}
\caption{Small number of collisions instance}
\figlab{fig:gap:set:yes}
\end{subfigure}
\begin{subfigure}[b]{0.49\textwidth}
\centering
\begin{tikzpicture}[scale=1]
\draw (0,1) rectangle+(0.4,0.8);
\draw (0.4*1,1) rectangle+(0.4,0.8);
\draw (0.4*2,1) rectangle+(0.4,0.8);
\draw (0.4*3,1) rectangle+(0.4,0.8);
\draw (0.4*4,1) rectangle+(0.4,0.8);
\draw (0.4*5,1) rectangle+(0.4,0.8);
\draw (0.4*6,1) rectangle+(0.4,0.8);
\draw (0.4*7,1) rectangle+(0.4,0.8);
\draw (0.4*8,1) rectangle+(0.4,0.8);
\draw[dotted,line width=0.8mm] (0.4*6,-0.2) rectangle+(1.2,2.2);
\draw (0.4*9,1) rectangle+(0.4,0.8);
\draw (0.4*10,1) rectangle+(0.4,0.8);
\draw (0.4*11,1) rectangle+(0.4,0.8);
\draw (0.4*12,1) rectangle+(0.4,0.8);
\draw (0.4*13,1) rectangle+(0.4,0.8);
\draw (0.4*14,1) rectangle+(0.4,0.8);
\draw[dotted,line width=0.8mm] (0.4*12,-0.2) rectangle+(1.2,2.2);
\draw[line width=0.8mm,color=purple] (0,-0.2) rectangle+(1.2,2.2);
\draw[line width=0.8mm,color=purple] (0.4*3,-0.2) rectangle+(1.2,2.2);
\draw[line width=0.8mm,color=purple] (0.4*9,-0.2) rectangle+(1.2,2.2);

\node at (0.2,1+0.4){1};
\node at (0.2+0.4*1,1+0.4){\textcolor{purple}{\textbf{1}}};
\node at (0.2+0.4*2,1+0.4){0};
\node at (0.2+0.4*3,1+0.4){0};
\node at (0.2+0.4*4,1+0.4){\textcolor{purple}{\textbf{1}}};
\node at (0.2+0.4*5,1+0.4){0};
\node at (0.2+0.4*6,1+0.4){1};
\node at (0.2+0.4*7,1+0.4){1};
\node at (0.2+0.4*8,1+0.4){1};
\node at (0.2+0.4*9,1+0.4){1};
\node at (0.2+0.4*10,1+0.4){\textcolor{purple}{\textbf{1}}};
\node at (0.2+0.4*11,1+0.4){0};
\node at (0.2+0.4*12,1+0.4){0};
\node at (0.2+0.4*13,1+0.4){1};
\node at (0.2+0.4*14,1+0.4){0};

\draw (0,0) rectangle+(0.4,0.8);
\draw (0.4*1,0) rectangle+(0.4,0.8);
\draw (0.4*2,0) rectangle+(0.4,0.8);
\draw (0.4*3,0) rectangle+(0.4,0.8);
\draw (0.4*4,0) rectangle+(0.4,0.8);
\draw (0.4*5,0) rectangle+(0.4,0.8);
\draw (0.4*6,0) rectangle+(0.4,0.8);
\draw (0.4*7,0) rectangle+(0.4,0.8);
\draw (0.4*8,0) rectangle+(0.4,0.8);
\draw (0.4*9,0) rectangle+(0.4,0.8);
\draw (0.4*10,0) rectangle+(0.4,0.8);
\draw (0.4*11,0) rectangle+(0.4,0.8);
\draw (0.4*12,0) rectangle+(0.4,0.8);
\draw (0.4*13,0) rectangle+(0.4,0.8);
\draw (0.4*14,0) rectangle+(0.4,0.8);

\node at (0.2,0.4){0};
\node at (0.2+0.4*1,0.4){\textcolor{purple}{\textbf{1}}};
\node at (0.2+0.4*2,0.4){1};
\node at (0.2+0.4*3,0.4){0};
\node at (0.2+0.4*4,0.4){\textcolor{purple}{\textbf{1}}};
\node at (0.2+0.4*5,0.4){0};
\node at (0.2+0.4*6,0.4){0};
\node at (0.2+0.4*7,0.4){0};
\node at (0.2+0.4*8,0.4){0};
\node at (0.2+0.4*9,0.4){0};
\node at (0.2+0.4*10,0.4){\textcolor{purple}{\textbf{1}}};
\node at (0.2+0.4*11,0.4){0};
\node at (0.2+0.4*12,0.4){1};
\node at (0.2+0.4*13,0.4){0};
\node at (0.2+0.4*14,0.4){1}; 
\end{tikzpicture}
\caption{Large number of collisions instance}
\figlab{fig:gap:set:no}
\end{subfigure}
\caption{Examples of input instances for $\GapSet$ problem for $\alpha=2$ players, $t=5$ blocks, and $n=3$ coordinates on each block.}
\figlab{fig:gap:set}
\end{figure}
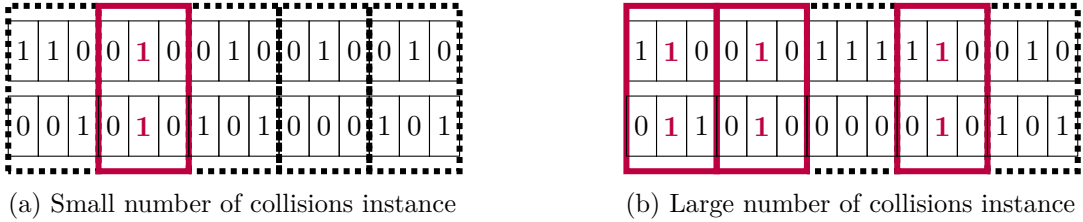

We first show the mutual information between a successful protocol for $\GapSet$ and a set of auxiliary variables. 
\begin{restatable}{lemma}{leminfcoords}
\lemlab{lem:inf:coords}
Let $\Pi$ be a protocol that solves $\GapSet_{t,\alpha,n,k}$ with probability at least $0.99$. 
Let $\{\bu^{(i)}\}_{i\in[\alpha]}$ be an input for $\GapSet_{t,\alpha,n,k}$, generated from vectors $\bx,\by$ drawn from $\GapAnd_t$. 
Let $M$ be the transcript of $\Pi$ on input $\{\bu^{(i)}\}_{i\in[\alpha]}$ and $R$ be a fixing of auxiliary random bits. 
For each $j\in[t]$, let $D_j=|x_j\cap y_j|$. 
Then
\[I(M; D_1,\ldots,D_t|R)=\Omega(t).\]
\end{restatable}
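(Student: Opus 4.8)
The plan is to prove \lemref{lem:inf:coords} by a reduction from the two-player problem $\GapAnd_t$ to $\GapSet_{t,\alpha,n,k}$, followed by an application of the $\GapAnd_t$ information lower bound (the ``$D_j$-version'' of \lemref{lem:inf:gap:and}). Fix a protocol $\Pi$ solving $\GapSet_{t,\alpha,n,k}$ with probability at least $0.99$, and fix $\beta=\flr{\alpha/2}$, $\ell=\flr{k/2}$. From $\Pi$ I would construct a two-player protocol for $\GapAnd_t$ as follows. Alice is given $\bx$ and Bob is given $\by$, and the players use public randomness $R$ to sample all ingredients of the $\GapSet$ input that do not depend on $(\bx,\by)$: the special coordinates $Z_j$ for $j\in[t]$, the random $\ell$-subset of $[\beta]$ and $(k-\ell)$-subset of $\{\beta+1,\ldots,\alpha\}$ used to plant each $Z_j$, and the placement of the non-special coordinates $w\neq Z_j$ (the internal coins of $\Pi$ are supplied by the players' private randomness). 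The point is that the first-half inputs $\bu^{(1)},\ldots,\bu^{(\beta)}$ depend on $(\bx,\by)$ only through $\bx$ (namely through which $x_j$ equal $1$) and on $R$, so Alice can form them locally, and symmetrically $\bu^{(\beta+1)},\ldots,\bu^{(\alpha)}$ depend only on $\by$ and $R$, so Bob can form them locally. Alice additionally plays the coordinator. The two players simulate $\Pi$: every message between the coordinator and a player in $\{1,\ldots,\beta\}$ is internal to Alice, while every message between the coordinator and a player in $\{\beta+1,\ldots,\alpha\}$ is sent over the Alice--Bob channel; at the end Alice outputs $\Pi$'s output.

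For correctness, the jointly generated vectors $\{\bu^{(i)}\}_{i\in[\alpha]}$ are distributed exactly as a $\GapSet_{t,\alpha,n,k}$ input drawn from $(\bx,\by)\sim\GapAnd_t$ (the first-half and second-half plantings touch disjoint sets of players, so splitting the generation across the two players changes nothing), and by construction the YES (resp.\ NO) case of $\GapSet$ is exactly the YES (resp.\ NO) case of $\GapAnd_t$ for $(\bx,\by)$, since both threshold $\#\{j:x_j=y_j=1\}=\sum_j D_j$ at $\tfrac t4\pm\tfrac1{16}\sqrt t$. Hence whenever $\Pi$ is correct the two-player protocol outputs the correct $\GapAnd_t$ answer, so it solves $\GapAnd_t$ with probability at least $0.99=\tfrac12+\Omega(1)$. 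Now let $M$ be the transcript of the simulated run of $\Pi$. The transcript of the two-player protocol is precisely the subsequence of $M$ consisting of the coordinator's messages to and from the players $\{\beta+1,\ldots,\alpha\}$, so it is a deterministic function of $M$; by the data-processing inequality, $I(\text{2-player transcript};D_1,\ldots,D_t\mid R)\le I(M;D_1,\ldots,D_t\mid R)$. Applying the information lower bound for $\GapAnd_t$ to the left-hand side gives $I(M;D_1,\ldots,D_t\mid R)=\Omega(t)$, and by averaging this holds for a specific fixing of $R$, as in the statement.

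The step requiring care — and the one I expect to be the main obstacle — is the last application of the $\GapAnd$ lower bound, because \lemref{lem:inf:gap:and} as literally stated bounds $I(\Pi(\bx,\by);\bx,\by\mid R)$, whereas \lemref{lem:inf:coords} asks about the coarser quantity $\bD=(D_1,\ldots,D_t)$ with $D_j=|x_j\cap y_j|$. Since $\bD$ is a function of $(\bx,\by)$, the literal statement does not formally imply what we need: on the $\approx\tfrac34 t$ coordinates with $D_j=0$ the simulated input, and hence $M$, can reveal which of $x_j,y_j$ is the nonzero one, and that is information about $(\bx,\by)$ but not about $\bD$. The resolution is to use the version of the $\GapAnd_t$ information lower bound phrased directly in terms of $\bD$, which is what the proofs in \cite{ChakrabartiKW12,PaghSW14,BravermanGPW16} establish: the hard distribution for $\GapAnd_t$ is a product over the $t$ coordinates of a single-coordinate $\mathsf{AND}$ primitive, and the $\Omega(t)$ information is accumulated coordinate-by-coordinate on $\mathsf{AND}(x_j,y_j)=D_j$ via a Hellinger-distance argument analogous to the proof of \lemref{lem:icost:setdisj}. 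With \lemref{lem:inf:coords} established, the downstream argument only uses it through the chain-rule decomposition $I(M;\bD\mid R)=\sum_{j=1}^{t}I(M;D_j\mid D_1,\ldots,D_{j-1},R)$, which forces $\Omega(t)$ \emph{informative} coordinates $j$ with $I(M;D_j\mid D_1,\ldots,D_{j-1},R)=\Omega(1)$.
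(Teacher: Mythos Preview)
Your proposal is correct and follows essentially the same approach as the paper: reduce $\GapSet$ to the two-player $\GapAnd_t$ problem by having Alice simulate the first $\beta$ players and Bob the remaining ones, then invoke \lemref{lem:inf:gap:and}. Your treatment is in fact more careful than the paper's own proof, which simply writes $I(M; D_1,\ldots,D_t\mid R)\ge I(\Pi(\mathbf{A},\mathbf{B});\mathbf{A},\mathbf{B}\mid R)=\Omega(t)$ without justifying the inequality direction; you correctly flag that since $\bD$ is a function of $(\bx,\by)$ the data-processing inequality goes the other way, and that what is actually needed (and what the cited references establish) is the $\bD$-version of the $\GapAnd$ information lower bound.
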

\begin{proof}
Since the vectors $\bx,\by$ are an instance drawn from $\GapAnd_t$, then we have $x_j=y_j=1$ if and only if $D_j=1$. 
Thus $M$ any message produced by a protocol $\Pi$ that solves the distributional problem $\GapSet_{t,\alpha,n,k}$ with probability at least $0.99$ also solves $\GapAnd_t$ with probability at least $0.99$, 

By \lemref{lem:inf:gap:and}, 
\[I(M; D_1,\ldots,D_t|R)\ge I(\Pi(\mathbf{A},\mathbf{B}); \mathbf{A},\mathbf{B}|R) = \Omega(t).\]
\end{proof}

%%%%%%%%%%%%%%%%%%%%%%%%%%%%%%%%%%%%%

We define the following ``guess'' variant of the set disjointness problem, along with the input distributions $\calD_1$ and $\calD_2$. 
In the GUESS problem, there exists a fixed coordinate $Z\in[n]$. 
We define the distribution $D=(D_1,\ldots,D_n)$ as follows. 
For each $i\in[n]$, $D_i$ is a random integer in $[\alpha]$. 
All sites not equal to $D_i$ have their $i$-th coordinate set to zero. 
With probability $\frac{1}{2}$, the site $D_i$ has its $i$-th coordinate set to one; otherwise it is set to zero. 
We define $\calD_1$ to be this distribution. 

We then achieve the distribution $\calD_2$ by making the following modifications to $\calD_1$. 
For coordinate $Z$, with probability $\frac{1}{2}$, we set the $Z$-th coordinate of $\ell=\left\lfloor\frac{k}{2}\right\rfloor$ random servers of index at most $\beta=\left\lfloor\frac{\alpha}{2}\right\rfloor$ to one and the remaining to zero. 
In this case, we say $X=1$. 
Otherwise, we set all of those coordinates to be zero and we say $X=0$. 
Similarly for coordinate $Z$, with probability $\frac{1}{2}$, we set the $Z$-th coordinate of $k-\ell$ random servers of index larger $\beta=\left\lfloor\frac{\alpha}{2}\right\rfloor$ to one and the remaining to zero. 
In this case, we say $Y=1$. 
Otherwise, we set all of those coordinates to be zero and we say $Y=0$. 
In the GUESS variant, there is an additional party that observes the transcript of the communication protocol between Alice and Bob and must guess the values of $X$ and $Y$. 
%%%%%%%%%%%%%%%%%%%%%%%%%%%%%%%%%%%%%
By a similar argument to Theorem 5 in \cite{WoodruffZ12}, we have:
\begin{theorem}
\thmlab{thm:inf:cost:kguess}
Let $\Pi$ be the transcript of any randomized protocol for GUESS on input $V\sim\calD_2$ with success probability $\frac{1}{4}+\Omega(1)$. 
Then for $k=\Theta(1)$, we have $I(V;\Pi\,\mid\,D, Z)=\Omega(n)$, where information is measured with respect to $\calD_2$. 
\end{theorem}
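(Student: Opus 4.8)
The plan is to run the information-complexity argument of \cite{Bar-YossefJKS04}, in the form used for Theorem~5 of \cite{WoodruffZ12} and paralleling the proof of \lemref{lem:icost:setdisj}. Write $V=(V_1,\ldots,V_n)$ with $V_i\in\{0,1\}^\alpha$ the $i$-th coordinate across all $\alpha$ servers, and let $S$ be the auxiliary randomness selecting which $\ell=\flr{k/2}$ servers of index at most $\beta$ and which $k-\ell$ servers of index above $\beta$ are designated at the special coordinate $Z$. Conditioned on $(D,Z,S)$, the columns $V_1,\ldots,V_n$ are mutually independent: each $V_i$ with $i\ne Z$ is a draw from the ``filler'' single-coordinate primitive $\zeta$ (coordinate owned by the random server $D_i$, set to $1$ with probability $\tfrac12$), while $V_Z$ is the all-zero column when $X=Y=0$ and otherwise has the $\ell$, the $k-\ell$, or all $k$ designated servers set to one according to $(X,Y)$. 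Since $S$ carries $O(\log\alpha)$ bits, conditioning on it changes the conditional information cost by at most an additive $O(\log\alpha)$, so it suffices to lower bound $I(V;\Pi\mid D,Z,S)$, and since $k=\Theta(1)$ this loss is absorbed.

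\emph{Step 1: from guessing advantage to a Hellinger separation.} Let $\Pi_{ab}$ denote the transcript distribution of $\Pi$ conditioned on $(X,Y)=(a,b)$, with the rest drawn from $\calD_2$. A maximum-likelihood computation shows that the optimal probability of guessing $(X,Y)$ from $\Pi$ exceeds $\tfrac14$ by at most $O(1)$ times $\sum_{(a,b)\ne(a',b')}d_{\mathrm{TV}}(\Pi_{ab},\Pi_{a'b'})$; hence success $\tfrac14+\Omega(1)$ forces some pair to be $\Omega(1)$-separated in total variation, hence $\Omega(1)$-separated in squared Hellinger distance. Routing this separation through the $(0,0)$ configuration via the triangle inequality for the Hellinger metric, we may assume $h^2(\Pi_{00},\Pi_{ab})=\Omega(1)$ for some $(a,b)\ne(0,0)$: the transcript distinguishes an all-zero column at coordinate $Z$ from a column in which a block of at most $k$ designated servers is set to one.

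\emph{Step 2: direct sum and the Hellinger toolkit.} Because $\Pi$ is oblivious to the location $Z$, the primitive $\zeta$ is a collapsing distribution for the underlying decision problem, exactly as in \thmref{thm:direct:sum} and in the proof of \lemref{lem:icost:setdisj}; thus $I(V;\Pi\mid D,Z,S)$ is at least $n$ times the single-coordinate conditional information cost of $\Pi$ with respect to $\zeta$. That single-coordinate cost is at least $\tfrac1\alpha\sum_{i\in[\alpha]}h^2(\Pi_{0^\alpha},\Pi_{\mathbf{e}_i})$ by \lemref{lem:inf:hellinger}, and by \lemref{lem:hellinger:zero:one} (choosing the permutation and block length $\Theta(k)$ so that the blocks realize the left-side/right-side server configurations appearing in Step~1) this is at least $\tfrac{1}{O(k\alpha)}\sum_j h^2(\Pi_{0^\alpha},\Pi_{I_j})$ over the $\Theta(\alpha/k)$ blocks. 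Combining with the $\Omega(1)$ Hellinger separation from Step~1 --- which, using the uniform choice of the designated server sets $S$, holds for a constant fraction of the block configurations --- and with $k=\Theta(1)$, the single-coordinate cost is $\Omega(1)$. Hence $I(V;\Pi\mid D,Z)=\Omega(n)$.

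\emph{Main obstacle.} The delicate point is Step~1 combined with feeding its output into \lemref{lem:hellinger:zero:one}. Unlike the plain $\SetDisj$ bound, where correctness via \lemref{lem:hellinger:proto:lb} directly gives $h^2(\Pi_{0^\alpha},\Pi_I)=\Omega(1)$ for \emph{every} valid YES configuration, here the guessing advantage only yields that \emph{some} one of the four $(X,Y)$ configurations is separated from another; we must (i) reroute it through the all-zero configuration by the triangle inequality without losing more than a constant factor, and (ii) promote separation of the two mixtures over the random designated sets $S$ into separation for enough individual block positions to drive the averaging in \lemref{lem:hellinger:zero:one}, which requires a convexity/averaging argument over $S$ (distinguishing mixtures does not immediately distinguish their components). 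The remaining ingredients --- conditional independence of the columns given $(D,Z,S)$, the $O(\log\alpha)$ cost of conditioning on $S$, and verifying the collapsing property needed for \thmref{thm:direct:sum} --- are routine.
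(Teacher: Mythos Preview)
Your approach is essentially the one the paper intends: the paper does not give a self-contained proof but defers to Theorem~5 of \cite{WoodruffZ12}, noting only that the sole difference is that at most $k=\Theta(1)$ servers carry a one at the special coordinate, so the $1/(2k)$ loss in \lemref{lem:hellinger:zero:one} is $\Omega(1)$. Your two-step outline (guessing advantage $\Rightarrow$ Hellinger separation at the special coordinate, then direct sum via the collapsing filler $\zeta$ and \lemref{lem:inf:hellinger}/\lemref{lem:hellinger:zero:one}) is exactly that template.

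One point to tighten: as you note in your ``Main obstacle'' (ii), applying \lemref{lem:hellinger:zero:one} to a \emph{single} block $I_S$ only yields $\sum_{i\in I_S} h^2(\Pi_{0^\alpha},\Pi_{\mathbf e_i})=\Omega(1)$, hence $\tfrac1\alpha\sum_{i} h^2(\Pi_{0^\alpha},\Pi_{\mathbf e_i})=\Omega(1/\alpha)$, which is too weak. The fix is the averaging you allude to: since the success guarantee is over random $S$, a constant fraction of designated-server sets $S$ yield $h^2(\Pi_{0^\alpha},\Pi_{I_S})=\Omega(1)$, and since each $i\in[\alpha]$ lies in a random $I_S$ with probability $\Theta(k/\alpha)$, taking expectations over $S$ gives $\tfrac{k}{\alpha}\sum_i h^2(\Pi_{0^\alpha},\Pi_{\mathbf e_i})=\Omega(1)$, hence $\tfrac1\alpha\sum_i h^2(\Pi_{0^\alpha},\Pi_{\mathbf e_i})=\Omega(1/k)=\Omega(1)$. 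With this made explicit your argument is complete and matches the paper's (implicit) proof.
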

The only difference is that the input distribution $\calD_2$ is slightly different than the input distribution for Theorem 5 in \cite{WoodruffZ12}, where all $\alpha$ servers have ones in the case the set disjointness input is a NO instance. 
By comparison, we only have $k$ servers for $k=\O{1}$, so that the mutual information is $\Omega(1)$ times the Hellinger distance between the all zeros vector and an elementary vector. 

We now lower bound the mutual information for any protocol that solves $\GapSet$. 
The proof follows exactly the same structure as Theorem 7 of \cite{WoodruffZ12}. 
\begin{restatable}{lemma}{leminfcost}
\lemlab{lem:inf:cost}
Let $R$ be a source of fixed randomness and $\mathbf{U}=\{\bu^{(i)}\}_{i\in[\alpha]}$ be an instance of $\GapSet_{t,\alpha,n,k}$. 
For each $j\in[t]$, let $D^{(j)}$ and $Z^{(j)}$ denote the outcomes of $D$ and $Z$ for the $j$-th block, respectively and let $\vec{D}$ and $\vec{Z}$ denote the outcomes across all $j\in[t]$.  
Then for any protocol $\Pi$ that produces transcript $M(\mathbf{U})$ on input $\mathbf{U}$ that solves $\GapSet_{t,\alpha,n,k}$ with probability at least $\frac{3}{4}+\Omega(1)$ satisfies $I(M(\mathbf{U});\mathbf{U}|\vec{D},\vec{Z})\ge\Omega(nt)$.
\end{restatable}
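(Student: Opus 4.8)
The plan is to follow the ``reverse direct sum'' argument of Theorem~7 in~\cite{WoodruffZ12}. A standard direct sum fails here because the outer problem $\GapAnd_t$ is insensitive to single-coordinate changes; instead I start from the $\GapAnd_t$ hardness and use it to locate $\Omega(t)$ blocks on each of which the transcript $M=M(\mathbf U)$ must reveal $\Omega(n)$ information about the embedded $\SetDisj_{n,k}$/GUESS instance living in that block. To avoid a notational clash I write $E_j:=|x_j\cap y_j|\in\{0,1\}$ for the collision indicator of block $j$ (the variable denoted $D_j$ in \lemref{lem:inf:coords}), reserving $D^{(j)}$ and $Z^{(j)}$ for the GUESS-style assignment vector and special coordinate of block $j$. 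I also record that the $t$ blocks $\mathbf U_1,\dots,\mathbf U_t$ of $\mathbf U$, with $\mathbf U_j=(\bu^{(1,j)},\dots,\bu^{(\alpha,j)})$, are mutually independent, since $\GapSet$ generates block $j$ using only $(x_j,y_j)$ and the coordinates of $\bx,\by$ are i.i.d.\ uniform.

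\textbf{Step 1 (informative blocks).} Since $x_j=y_j=1$ iff $E_j=1$, any $\Pi$ solving $\GapSet_{t,\alpha,n,k}$ solves $\GapAnd_t$ with the same probability, so \lemref{lem:inf:coords} gives $I(M;E_1,\dots,E_t\mid R)=\Omega(t)$. By the chain rule (\factref{chain}), $\sum_{j=1}^t I(M;E_j\mid E_{<j},R)=\Omega(t)$, and since every summand is at most $H(E_j)\le 1$, there is a set $\mathcal J\subseteq[t]$ of size $\Omega(t)$ of \emph{informative} blocks for which $I(M;E_j\mid E_{<j},R)=\Omega(1)$.

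\textbf{Step 2 (each informative block forces $\Omega(n)$, then recombine).} Fix $j\in\mathcal J$ and recall $k=\Theta(1)$. I build a GUESS protocol on $n$ coordinates: given $V\sim\calD_2$, which supplies block $j$ together with $D^{(j)},Z^{(j)}$ and whose hidden labels $(X_j,Y_j)$ satisfy $E_j=X_jY_j$, the $\alpha$ players use shared public randomness to draw $(x_{j'},y_{j'})$ for $j'\neq j$ and further public/private randomness to generate the other blocks from the $\GapSet$ distribution, then run $\Pi$ and output $M$. Because the coordinates of $\bx,\by$ are independent, this reproduces exactly the distribution $\phi$ of $\GapSet_{t,\alpha,n,k}$, and $E_{<j}$ is a function of this shared randomness. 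Feeding $I(M;E_j\mid E_{<j},R)=\Omega(1)$ through Pinsker's inequality and the maximum-likelihood principle (\factref{err}) produces a guess of $E_j$ from $M$ beating its prior by $\Omega(1)$, hence a guess of $(X_j,Y_j)$ correct with probability $\frac{1}{4}+\Omega(1)$ (a perfect $E_j$-guesser already attains $\frac{1}{4}\cdot 1+\frac{3}{4}\cdot\frac{1}{3}=\frac{1}{2}$). By \thmref{thm:inf:cost:kguess} this forces $I(V;M\mid D^{(j)},Z^{(j)})=\Omega(n)$, which unwinds to $I(M;\mathbf U_j\mid \mathbf U_{<j},\vec{D},\vec{Z},R)=\Omega(n)$: the extra conditioning is on other blocks and on $D^{(j')},Z^{(j')}$ for $j'\neq j$, all independent of block $j$ and hence legitimately part of the GUESS protocol's fixed randomness, while the $O(\log n)$ cost of having revealed $Z^{(j)}$ is absorbed in \thmref{thm:inf:cost:kguess}. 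Finally, by the chain rule over the independent blocks, $I(M;\mathbf U\mid\vec{D},\vec{Z},R)=\sum_{j=1}^t I(M;\mathbf U_j\mid \mathbf U_{<j},\vec{D},\vec{Z},R)\ge\sum_{j\in\mathcal J}\Omega(n)=\Omega(nt)$, discarding the $j\notin\mathcal J$ terms as nonnegative, which is the claim.

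The technical heart, and main obstacle, is Step~2: designing the embedding so that the simulated input has exactly distribution $\phi$ while the only quantity hidden from the simulating players is the block-$j$ GUESS instance, and tracking precisely which variables ($E_{<j}$, the other blocks, the $D^{(j)}$'s, the $Z^{(j)}$'s) are conditioned on at each step so that the information accounting neither double-counts nor conditions on something that destroys the hardness, together with checking that the information-to-advantage conversion is quantitatively strong enough to clear the $\frac{1}{4}+\Omega(1)$ threshold demanded by \thmref{thm:inf:cost:kguess}. This all parallels Theorem~7 of~\cite{WoodruffZ12}; the one substantive change is that our $\SetDisj_{n,k}$ plants only $k=\Theta(1)$ ones rather than $\alpha$, so the Hellinger gap fed into \thmref{thm:inf:cost:kguess} is $\Omega(1)$ and not inflated by a factor of $\Omega(\alpha)$.
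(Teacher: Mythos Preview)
Your proposal is correct and follows essentially the same route as the paper: both instantiate the ``reverse direct sum'' argument of Theorem~7 in~\cite{WoodruffZ12}, locating $\Omega(t)$ informative blocks via \lemref{lem:inf:coords} and the chain rule, embedding a GUESS instance at each informative block, invoking \thmref{thm:inf:cost:kguess} for $\Omega(n)$ per block, and summing. The differences are cosmetic: you condition on the coarse collision indicators $E_{<j}$ in the chain-rule step whereas the paper conditions on the finer tuple $(D^{(<j)},X^{(<j)},Y^{(<j)},Z^{(<j)})$ (your observation that refining the conditioning can only help the guesser, since block $j$ is independent of the rest, is exactly what makes the coarser version work); and you leave the averaging over ``good'' fixings implicit while the paper spells it out.
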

\begin{proof}
Let $\{\bu^{(i)}\}_{i\in[\alpha]}$ be generated from vectors $\bx,\by$ drawn from $\GapAnd_t$. 
For each $j\in[t]$, let $X^{(j)}$ and $Y^{(j)}$ denote the outcomes of $X$ and $Y$ for the $j$-th block, respectively, and define $\vec{X}$ and $\vec{Y}$ similarly. 
By \lemref{lem:inf:coords} and the chain rule, i.e., Fact~\ref{fact:chain}, there exist $\Omega(t)$ coordinates $j\in[t]$ such that $I(X^{(j)},Y^{(j)};M\,\mid\,D^{(<j)},X^{(<j)},Y^{(<j)},Z^{(<j)})=\Omega(1)$. 
We define such an index $j\in[t]$ to be \emph{informative}. 

Now, we can write $I(\mathbf{U}^{(<j)};M\,\mid\,D,X^{(<j)},Y^{(<j)},Z)$ as
\begin{align*}
\sum&\PPPr{(d,x,y,z)}{(D^{(<j)},X^{(<j)},Y^{(<j)},Z^{(<j)})=(d,x,y,z)}\\
&\times I(\mathbf{U}^{(<j)};M\,\mid\,D^j,Z^j,(D^{(<j)},X^{(<j)},Y^{(<j)},Z^{(<j)})=(d,x,y,z)).
\end{align*}
By definition of an informative index $i$, we have with constant fraction of the summands $(d,x,y,z)$ that 
\[I(\mathbf{U}^{(<j)};M\,\mid\,D^j,Z^j,(D^{(<j)},X^{(<j)},Y^{(<j)},Z^{(<j)})=(d,x,y,z))=\Omega(1).\]
We say such ordered tuplets are \emph{good} for an informative index $j$. 

Observe that by independence of the distribution across the coordinates $j\in[n]$, we have
\[H(\mathbf{U}^{(<j)}\,\mid\,D^j,Z^j,(D^{(<j)},X^{(<j)},Y^{(<j)},Z^{(<j)})=(d,x,y,z))=2.\]
Now for an informative index $j$ and $(d,x,y,z)$ that is good for $j$,
\begin{align*}
H&(\mathbf{U}^{(<j)}\,\mid\,M,D^j,Z^j,(D^{(<j)},X^{(<j)},Y^{(<j)},Z^{(<j)})=(d,x,y,z))\\
&=H(\mathbf{U}^{(<j)}\,\mid\,D^j,Z^j,(D^{(<j)},X^{(<j)},Y^{(<j)},Z^{(<j)})=(d,x,y,z))\\
&-I(\mathbf{U}^{(<j)};M\,\mid\,D^j,Z^j,(D^{(<j)},X^{(<j)},Y^{(<j)},Z^{(<j)})=(d,x,y,z))\\
&=2-\Omega(1).
\end{align*}

For a good $(d,x,y,z)$ for an informative $j$, we build a protocol $\Pi_{j,d,x,y,z}$ that computes the GUESS problem on input $\calD_2$ with probability $\frac{1}{4}+\Omega(1)$. 
Let $A_1,\ldots,A_{\alpha}$ be the inputs to the $\alpha$ sites and let $\{Q,R\}$ be the input for the predictor, where $Q$ is a realization of $D$ and $R$ is a realization of $Z$. 
Then $\Pi_{j,d,x,y,z}$ has $(j,d,x,y,z)$ hard-coded so that the $k$ sites construct an input $B$ for $\GapSet$ using the distribution $\phi$. 
In particular, they set $B^{(j)}=\{A_1,\ldots,A_\alpha\}$ and then use private randomness to independently generate blocks $j'\neq j$, given the values of $(d,x,y,z)$. 
Then by setting $D^{(j)}=Q$ and $Z^{(j)}=R$ for the predictor, the sites can run $\Pi$ on input $B$. 

Hence by \thmref{thm:inf:cost:kguess}, we have
\begin{align*}
I(M(\mathbf{U});\mathbf{U}|\vec{D},\vec{Z})&\ge\sum_{\text{informative }j} I(M(\mathbf{U});\mathbf{U}^{(j)}|\vec{D},\vec{Z},\mathbf{U}^{(<j)})\\
&\ge\sum_{\text{informative }j}\sum_{\text{good }(d,x,y,z)} \PPPr{(d,x,y,z)}{(D^{(<j)},X^{(<j)},Y^{(<j)},Z^{(<j)})=(d,x,y,z)}\\
&\times I(\mathbf{U}^{(<j)};M\,\mid\,D^j,Z^j,(D^{(<j)},X^{(<j)},Y^{(<j)},Z^{(<j)})=(d,x,y,z))\\
&=\Omega(t)\cdot\Omega(n)=\Omega(nt). 
\end{align*}
\end{proof}

Finally, we show the communication complexity of $\GapSet$. 
\begin{restatable}{theorem}{thmccgapset}
\thmlab{thm:cc:gapset}
Any protocol $\Pi$ that solves $\GapSet_{t,\alpha,n,k}$ with probability at least $\frac{3}{4}+\Omega(1)$ uses $\Omega(nt)$ bits of communication. 
\end{restatable}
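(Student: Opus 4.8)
The plan is to derive the communication bound directly from the conditional information-cost bound of \lemref{lem:inf:cost}, using the standard principle that communication cost dominates (conditional) information cost (cf.\ \factref{fact:ic:cc} and \lemref{lem:cic:ic}). All of the hard work has already been front-loaded into \lemref{lem:inf:cost}, so what remains is a short information-theoretic wrap-up; I do not anticipate a genuine obstacle, only some bookkeeping.

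Concretely, I would fix any protocol $\Pi$ that solves $\GapSet_{t,\alpha,n,k}$ with probability at least $\frac34+\Omega(1)$, run it on a random input $\mathbf{U}$ drawn from the distribution $\phi$, and let $M=M(\mathbf{U})$ denote the transcript and $\vec{D},\vec{Z}$ denote the auxiliary planted-coordinate variables exactly as in \lemref{lem:inf:cost}. Since the protocol is self-delimiting, for every fixing of the public randomness the possible transcripts form a prefix-free set, so the worst-case number of communicated bits is at least the expected transcript length, which is at least $H(M)$ by the standard relation between expected prefix-free codeword length and entropy. Then, using that conditioning never increases entropy and that conditional mutual information is at most the conditional entropy of the first argument, I would chain
\[
\text{(communication cost of }\Pi)\ \ge\ H(M)\ \ge\ H(M\mid \vec{D},\vec{Z})\ \ge\ I\!\left(M;\mathbf{U}\,\middle|\,\vec{D},\vec{Z}\right),
\]
and invoke \lemref{lem:inf:cost} to conclude that the right-hand side is $\Omega(nt)$, which gives the theorem.

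The one point to handle carefully is the precise role of the different sources of randomness: \lemref{lem:inf:cost} is stated with the public/auxiliary randomness $R$ fixed and conditions on $\vec{D},\vec{Z}$, so I would either carry $R$ through the conditioning in every line of the display above and then average the $\Omega(nt)$ bound over $R$, or absorb $R$ into the conditioning throughout — either way the constant is unaffected. This is also where the ``self-delimiting'' assumption of the model is genuinely used, to justify passing from worst-case communication to expected transcript length to $H(M)$. Apart from this bookkeeping, the argument is immediate.
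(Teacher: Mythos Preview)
Your proposal is correct and is essentially the paper's own proof: the paper derives the theorem immediately from \lemref{lem:inf:cost} together with \factref{fact:ic:cc} and \lemref{lem:cic:ic}, and your entropy chain $H(M)\ge H(M\mid\vec D,\vec Z)\ge I(M;\mathbf U\mid\vec D,\vec Z)$ (plus averaging over the public randomness $R$) is exactly an inline unpacking of those two cited facts.
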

\begin{proof}
The proof follows immediately from \lemref{lem:inf:cost}, \factref{fact:ic:cc}, and \lemref{lem:cic:ic}. 
\end{proof}

Using \lemref{lem:inf:cost}, we ultimately show \thmref{thm:main} by designing a reduction with the appropriate values of $n$ and $t$; we defer the proof to \secref{sec:duplicates}.

\section{Applications to Duplication Detection}
\seclab{sec:duplicates}
\applab{sec:duplicates}
In this section, we prove \thmref{thm:main} as well as describe additional applications to the duplication detection problem. 
Recall that in the duplication detect problem, the goal for the $\alpha$ servers is to approximate the total number of duplicates, where a duplicate is defined to be a coordinate $j\in[n]$ that appears on at least two distinct servers. 
We first show that our communication game lower bound in \secref{sec:new:game:lb} gives a hardness of approximation to estimating the number of duplicates. 
We then give an algorithm that shows that our lower bound is in fact, tight. 

\subsection{Motivation and Related Work}

\paragraph{Database cleaning.}
The detection and elimination of repeated data is an important task for data cleaning and data quality in database management. 
Due to data entry errors, multiple accounts, varying conventions, or a number of other reasons, the same concept or real-world entity may correspond to multiple entries in a database, which can often be the source of significant challenges for users of the database. 
For example, duplicated entries in a database can lead to increased direct mailing costs because duplicated consumers may be sent multiple copies of the same catalog. 
\cite{GantiCM05} observes that duplicates can also induce incorrect outputs for analytic queries, such as the size of the overall consumer base, and thus lead to erroneous data mining models downstream. 
Therefore, significant effort and costs are spent on identifying and removing repeated items in a database. 

The duplication detection problem is also known as the merge/purge, deduping, and record linkage problem~\cite{fellegi1969theory,kilss1986record,BittonD83,MongeE97,hernandez1998real,SarawagiB02,BilenkoM03} and in fact, a more general version of the duplication detection problem, known as fuzzy duplication detection~\cite{AnanthakrishnaCG02,GantiCM05}, is often studied. 
In the fuzzy duplication problem, the goal is not only to identify the duplicated entries in a database, but also to identify sufficiently close entries in a database, which among other things can help account for human errors such as typos or missing information. 

Techniques for the fuzzy duplication problem can largely be separated into supervised and unsupervised approaches. 
Supervised approaches generally use training data containing known duplicates to find characteristics that are then used to identify future duplicates~\cite{CohenR02,BilenkoM03}. 
However, such approaches not only lack theoretical guarantees but also often encounter difficulties in finding effective training data that display the same distribution of duplicates observed in practice~\cite{GantiCM05}. 
\cite{SarawagiB02,TejadaKM02} addressed this shortcoming by using active learning to interactively update the training data, but the drawback is that their approaches require manual guidance to perform the updates. 
Unsupervised approaches generally use global thresholds for distance functions, such as the edit distance or the cosine distance, to detect duplicates, which induces poor recall-precision tradeoffs~\cite{hernandez1998real,MongeE97,GantiCM05} since individual fuzzy duplications may not follow the general global trends. 
For the standard duplication detection problem, one simple approach~\cite{MetwallyAA05} is to use Bloom filters~\cite{Bloom70} to avoid communicating the full information of each entry. 
However, this generally requires communication linear in the size of the entire dataset. 

\paragraph{Advertising commissions.}
As noted by \cite{MetwallyAA05}, a related application of duplicate detection is to the task of advertising commissions by the middle party between advertisers and publishers. 
In a typical scenario, an advertiser and a publisher generally arrange an agreement for a specified commission for each user interaction with an advertisement, such as clicking or watching an advertisement, completing a form, bidding on an item, or making a transaction. 
The publisher will then display advertisements, forms, or text links on its products, while using tracking information, e.g., referral codes, to monitor the traffic it directs to the advertiser's materials. 
Similarly, the advertiser will also use tracking information to monitor the traffic directed from each publisher to its site. 
Inconsistencies between the statistics claimed by the advertiser and the publisher are then resolved by an arbiter, a third-party tracking entity called the advertising commissioner, who serves as a middle party who receives the traffic from the publisher and forwards it to the advertiser. 

Because the publisher receives revenue proportional to the traffic that they direct to the advertising site, they can be incentivized to falsely increase the number of hits produced by their referral link, a process called click inflation~\cite{AnupamMNPR99,MetwallyAA05}. 
On the other hand, the advertising commissioner must report the true traffic directed to the advertising site and thus detect any click inflation that has occurred. 
By simplifying the traffic information to an identification number such as the IP address, the problem for the advertising commissioner reduces to duplication detection. 

\paragraph{Duplication detection in the streaming model.}
Duplication detection has also been extensively studied in other big data models, in particular the streaming model~\cite{MetwallyAA05,GopalanR09,JowhariST11}. 
In fact, \cite{Muthukrishnan05} explicitly ask the question of whether duplication detection could be performed in $\polylog(m)$ space, where $m$ is the size of the data, e.g., the length of the data stream.
%to illustrate the differences between random and sequential access. 
The problem can be solved in $\O{\log m}$ space in the random access model. 
For deterministic algorithms, \cite{Tarui07} showed that even if the streaming algorithm is allowed $k$ passes over the data, $\Omega\left(m^{1/k}\right)$ memory is necessary. 
\cite{GopalanR09} answered the question of \cite{Muthukrishnan05} in the affirmative by giving a randomized algorithm that uses $\O{\log^3(m)}$ bits of space for the duplication detection problem when the stream length is larger than the universe size. Without such a condition there is a strong $\Omega(m)$ memory lower bound.
\cite{JowhariST11} finally gave tight upper and lower bounds, showing that the space complexity of the duplication detection problem is $\Theta(\log^2(m))$ bits.

\subsection{Hardness of Duplication Detection Problem}
In this section, we give our lower bound for the communication complexity of the duplication detection problem, which we recall is defined as follows: 
let $C$ be an input parameter for the number of collisions and $k$ be an input parameter for the number of players with mutual collisions. 
Let $\eps\in(0,1)$ be an accuracy parameter. 
Suppose there exist $\alpha$ players, each with $s$ samples. 
The goal is to identify whether there are fewer than $C$ coordinates or more than $(1+\eps)\cdot C$ coordinates shared among exactly $k$ players. 

Our approach is to reduce the problem of duplication detection from the $\GapSet$ problem. 
We split the analysis into casework depending on whether $C=\O{\frac{1}{\eps^2}}$ or $C=\Omega\left(\frac{1}{\eps^2}\right)$. 
However, we first require the following structural properties to argue distribution matching. 

\begin{fact}[Stirling's approximation]
\factlab{fact:stirling}
For any integer $k>0$, we have
\[\sqrt{2\pi}k^{k+\frac{1}{2}}e^{-k}\le k!\le ek^{k+\frac{1}{2}}e^{-k}.\]
\end{fact}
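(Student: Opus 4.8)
These are the classical Stirling bounds, and the plan is to give the standard self-contained argument via the normalized sequence $a_k = k!/(k^{k+1/2}e^{-k})$. Observe first that $a_1 = 1!/(1\cdot e^{-1}) = e$. The strategy is to show that $(a_k)_{k\ge 1}$ is strictly decreasing and converges to $\sqrt{2\pi}$; since a monotonically decreasing sequence always lies between its first term and its limit, this yields $\sqrt{2\pi}\le a_k\le e$ for every $k$, which is exactly the claimed pair of inequalities after clearing denominators.

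First I would establish monotonicity. A direct computation gives
\[\ln a_k - \ln a_{k+1} = \left(k+\frac12\right)\ln\frac{k+1}{k} - 1.\]
Writing $\frac{k+1}{k} = \frac{1+u}{1-u}$ with $u = \frac{1}{2k+1}$ and using $\ln\frac{1+u}{1-u} = 2\sum_{m\ge 0}\frac{u^{2m+1}}{2m+1}$, the right-hand side simplifies to $\sum_{m\ge 1}\frac{1}{(2m+1)(2k+1)^{2m}} > 0$, so $(a_k)$ is strictly decreasing; in particular $a_k\le a_1 = e$, which is the upper bound. Comparing the same series against a geometric series (using $\frac{1}{2m+1}\le\frac13$ for $m\ge 1$) gives $\ln a_k - \ln a_{k+1} < \frac{1}{3((2k+1)^2-1)} = \frac{1}{12k(k+1)} = \frac{1}{12}\left(\frac1k - \frac{1}{k+1}\right)$, so $\ln a_k - \frac{1}{12k}$ is increasing; since $\ln a_k$ is itself decreasing, this shows $(\ln a_k)$ is bounded below, hence $(a_k)$ converges to some limit $L>0$ with $a_k\ge L$ for all $k$.

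It then remains to identify $L = \sqrt{2\pi}$, which I would do with the Wallis product. Plugging the asymptotic $m! = (L+o(1))\,m^{m+1/2}e^{-m}$ into the central binomial coefficient gives $\binom{2n}{n} = \frac{(2n)!}{(n!)^2}\sim \frac{\sqrt2\,4^n}{L\sqrt n}$, while Wallis's formula gives $\binom{2n}{n}\sim \frac{4^n}{\sqrt{\pi n}}$; equating the two asymptotics forces $L = \sqrt{2\pi}$, and therefore $k!\ge \sqrt{2\pi}\,k^{k+1/2}e^{-k}$. I expect this last step to be the only genuinely nontrivial ingredient: the monotonicity argument already pins down the functional form $k^{k+1/2}e^{-k}$ and confines the multiplicative constant to the interval $[\sqrt{2\pi},\,e]$, but recovering the exact value $\sqrt{2\pi}$ (rather than just some constant in that interval) requires an external identity such as the Wallis product, equivalently the value of the Gaussian integral. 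Everything else reduces to the routine power-series manipulations sketched above.
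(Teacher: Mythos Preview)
Your argument is correct and is the standard textbook derivation of the Stirling bounds. However, the paper does not actually prove this statement: it is recorded as a \emph{Fact} (a well-known classical inequality) and invoked without justification, used only as a tool inside the proof of \lemref{lem:anticonc:binom}. So there is no ``paper's own proof'' to compare against; you have simply supplied a full self-contained proof where the paper is content to cite the result. Nothing is wrong with what you wrote, but from the paper's standpoint no argument was expected here.
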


\begin{restatable}{lemma}{lemanticoncbinom}
\lemlab{lem:anticonc:binom}
Let $c>0$ be a constant. 
Let $X$ be a random variable drawn from the binomial distribution on $t$ trials with probability $\frac{1}{4}$. 
Then we have 
\[\PPr{\left\lvert X-\frac{t}{4}\right\rvert<c\cdot\sqrt{t}}\le c.\]
\end{restatable}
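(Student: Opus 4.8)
The plan is to prove this by writing the probability as a sum of point masses and controlling it by the number of integers in the target interval times a uniform bound on the binomial probability mass function coming from Stirling's approximation (\factref{fact:stirling}). Note that the statement is vacuous when $c\ge 1$, so I would assume $c<1$ throughout.

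First I would observe that $\left|X-\frac{t}{4}\right|<c\sqrt{t}$ is exactly the event that $X$ lands on one of the at most $2c\sqrt{t}+1$ integers contained in the open interval $\left(\frac{t}{4}-c\sqrt{t},\ \frac{t}{4}+c\sqrt{t}\right)$, so that $\PPr{\left|X-\frac{t}{4}\right|<c\sqrt{t}}$ equals $\sum_{k}\PPr{X=k}$ over those $k$. Since $\mathrm{Bin}(t,\tfrac14)$ is unimodal with mode $m=\flr{(t+1)/4}$, each such point mass is at most $\PPr{X=m}$, and so it suffices to show $\PPr{X=m}\le \frac{c_0}{\sqrt{t}}$ for a fixed absolute constant $c_0$ and then multiply by $2c\sqrt{t}+1$.

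To bound the peak I would write $\PPr{X=k}=\binom{t}{k}\left(\frac14\right)^{k}\left(\frac34\right)^{t-k}$ and apply \factref{fact:stirling}: the upper bound to $t!$ and the lower bounds to $k!$ and $(t-k)!$. Writing $\lambda=k/t$, this yields
\[\PPr{X=k}\ \le\ \frac{e}{2\pi}\cdot\frac{1}{\sqrt{t\,\lambda(1-\lambda)}}\cdot\left(\left(\frac{1}{4\lambda}\right)^{\lambda}\left(\frac{3}{4(1-\lambda)}\right)^{1-\lambda}\right)^{t}.\]
The base of the final factor equals $\exp(-D)$, where $D\ge 0$ is the binary relative entropy between $\lambda$ and $\tfrac14$ (equivalently, it is at most $1$ by weighted AM--GM, with equality iff $\lambda=\tfrac14$), so this factor is at most $1$. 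Since $m/t=\tfrac14+O(1/t)$ we have $\lambda(1-\lambda)=\tfrac{3}{16}+O(1/t)$, giving $\PPr{X=m}\le \frac{c_0}{\sqrt{t}}$ with $c_0\to\frac{2e}{\pi\sqrt 3}$ as $t\to\infty$, in particular $c_0$ an absolute constant. Multiplying, $\PPr{\left|X-\frac{t}{4}\right|<c\sqrt{t}}\le(2c\sqrt{t}+1)\cdot\frac{c_0}{\sqrt{t}}=2c_0 c+\frac{c_0}{\sqrt{t}}$, and a routine accounting of the constants (using $c<1$, and that $t$ is large in every application of this lemma) completes the proof.

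I expect the only real work to be the bookkeeping in the Stirling step: controlling the multiplicative slack from \factref{fact:stirling}, and handling the case that $t/4$ is not an integer by pinning down the mode exactly. The relative-entropy reading of the exponential factor is the device that makes the uniform pmf bound fall out cleanly, after which the argument reduces to estimating $\lambda(1-\lambda)$ near $\lambda=\tfrac14$ and counting lattice points in an interval of length $2c\sqrt{t}$.
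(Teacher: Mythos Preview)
Your approach is essentially the same as the paper's: bound the maximal point mass of $\mathrm{Bin}(t,\tfrac14)$ via Stirling's approximation and multiply by the number of integers in the window. The paper carries this out directly at $k=t/4$ (assuming $4\mid t$) rather than through your relative-entropy reading, but the argument and the resulting $\Theta(1/\sqrt t)$ peak bound are identical; your caveat about the final constant is apt, since the clean ``$\le c$'' in the paper's last line relies on a slightly optimistic $|I|\le 2c\sqrt t$ and a dropped $4/\sqrt3$ factor, and in both proofs one really gets $O(c)$, which is all the applications need.
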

%\lemanticoncbinom*
\begin{proof}
We assume that $t$ is divisible by $4$ and define the interval
\[I=\left\{i\in\mathbb{N}\,\mid\,\frac{t}{4}-c\cdot\sqrt{t}<i<\frac{t}{4}+c\cdot\sqrt{t}\right\}.\]
Note that if $t$ is not divisible by $4$, we can perform a similar analysis on $4\ceil{\frac{t}{4}}$. 

Let $X$ be a random variable drawn from the binomial distribution on $t$ trials with probability $\frac{1}{4}$. 
We bound the probability $\PPr{X\in I}=\sum_{i\in I}\PPr{X=i}$. 
Observe that we have
\[\PPr{X=i}=\left(\frac{1}{4}\right)^i\left(\frac{3}{4}\right)^{t-i}\binom{t}{i}.\]
Hence,
\[\frac{\PPr{X=i}}{\PPr{X=i-1}}=\frac{1}{3}\frac{(i-1)!(t-i+1)!}{i!(t-i)!}=\frac{1}{3}\frac{t-i+1}{i},\]
so that $\frac{\PPr{X=i}}{\PPr{X=i-1}}<1$ for $i>\frac{t}{4}$ and $\frac{\PPr{X=i}}{\PPr{X=i-1}}>1$ for $i\le\frac{t}{4}$. 
Therefore, $\PPr{X=i}$ is maximized at $i=\frac{t}{4}$. 

We now upper bound $\binom{t}{t/4}$ using Stirling's approximation in \factref{fact:stirling} to handle the binomial coefficients. 
We have
\begin{align*}
\binom{t}{t/4}&=\frac{t!}{(3t/4)!(t/4)!}\\
&\le\frac{et^{t+\frac{1}{2}}e^{-t}}{\left(\sqrt{2\pi}(3t/4)^{3t/4+1/2}e^{-3t/4}\right)\left(\sqrt{2\pi}(t/4)^{t/4+1/2}e^{-t/4}\right)}\\
&=\frac{e}{2\pi\sqrt{t}\left(3/4\right)^{3t/4}\left(1/4\right)^{t/4}}.
\end{align*}
Hence, we have
\begin{align*}
\PPr{X=\frac{t}{4}}&\le\left(\frac{1}{4}\right)^{t/4}\left(\frac{3}{4}\right)^{3t/4}\frac{e}{2\pi\sqrt{t}\left(3/4\right)^{3t/4}\left(1/4\right)^{t/4}}\\
&=\frac{e}{2\pi\sqrt{t}}<\frac{1}{2\sqrt{t}}.
\end{align*}
Therefore,
\[\PPr{X\in I}\le\sum_{i\in I}\PPr{X=i}<\sum_{i\in I}\frac{1}{2\sqrt{t}}=\frac{|I|}{2\sqrt{t}}\le\frac{2c\sqrt{t}}{2\sqrt{t}}=c.\]
\end{proof}
We first consider the case where $C<\frac{4}{\eps^2}$. 

\begin{lemma}
\lemlab{lem:duplice:cc:one}
Let $C$ be an input parameter for the number of collisions and $k$ be an input parameter for the number of players with mutual collisions. 
Let $\eps\in(0,1)$ be an accuracy parameter such that $C<\frac{4}{\eps^2}$. 
Suppose there exist $\alpha$ players, each with $s$ samples from some universe of size $N=\Omega(s)$. 
Then any protocol $\Pi$ that with probability at least $0.99$, identifies whether there are fewer than $(1-\eps)\cdot C$ coordinates or more than $(1+\eps)\cdot C$ coordinates shared among exactly $k$ players requires $\Omega(\alpha s)$ communication.  
\end{lemma}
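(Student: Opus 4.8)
The plan is to reduce from the $\GapSet_{t,\alpha,n,k}$ problem, for which \thmref{thm:cc:gapset} gives an $\Omega(nt)$ communication lower bound. I would take $t=\Theta(C)$, $n=\Theta\!\left(\frac{\alpha s}{C}\right)$, and $k$ equal to the given (constant) collision parameter, so that $\Omega(nt)=\Omega(\alpha s)$, and then argue that any protocol solving the duplication promise problem also solves the underlying $\GapSet$ instance with good probability. Since the protocol and model are the same (coordinator / message-passing, up to constants), the communication lower bound transfers directly.

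The reduction map is essentially the identity on inputs: an instance of $\GapSet_{t,\alpha,n,k}$ already hands player $i$ a vector $\bu^{(i)}\in\{0,1\}^{nt}$, which I interpret as the set of items held by player $i$ on a universe $[nt]$ (padded with dummy elements up to size $N=\Omega(s)$ if needed). First I would check the sample-size constraint: by the construction of $\phi$, each coordinate $w\neq Z_j$ of block $j$ lands in a given player's input with probability at most $\frac{1}{2\alpha}$, and the special coordinate $Z_j$ lands on at most $k$ players, so the expected support size of $\bu^{(i)}$ is $\O{nt/\alpha}$; choosing the constant in $n$ small enough and conditioning on the event (which holds with probability $1-o(1)$ and perturbs the hard distribution negligibly) that every player holds at most $s$ items, we obtain a valid $\alpha$-player, $s$-sample duplication instance. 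The key structural observation is that the number of coordinates of $[nt]$ that lie on \emph{exactly} $k$ players equals $\sum_{j=1}^{t}D_j=\bigl|\{j\in[t]:x_j=y_j=1\}\bigr|$: within block $j$, every non-special coordinate is assigned to at most one player and is never shared, while the special coordinate $Z_j$ is shared among exactly $k$ players precisely when $x_j=y_j=1$ (when only one of $x_j,y_j$ is $1$ it goes to $\ell<k$ or $k-\ell<k$ players, and when both are $0$ it goes to none).

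It remains to match the gap. In the YES case of $\GapSet$ we have $\sum_j D_j\ge \frac{t}{4}+\frac{1}{16}\sqrt{t}$ and in the NO case $\sum_j D_j\le\frac{t}{4}-\frac{1}{16}\sqrt{t}$, and by \lemref{lem:anticonc:binom} each of these events has constant probability. Appending $p$ fixed coordinates declared shared among $k$ designated players shifts the count deterministically by $p$; choosing $t=\Theta(C)$ and $p$ so that $p+\frac{t}{4}=C$ (possible since $t\le 4C$), the duplicate count is $\ge C+\frac{1}{16}\sqrt{t}$ in the YES case and $\le C-\frac{1}{16}\sqrt{t}$ in the NO case. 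Because $C<\frac{4}{\eps^2}$ — equivalently $\eps=\O{1/\sqrt{C}}$ — and $t=\Theta(C)$, the additive slack $\frac{1}{16}\sqrt{t}$ is at least $\eps\cdot C$ for a suitable choice of the hidden constants (tuning, if necessary, the anti-concentration constant in the definition of $\GapAnd_t$), so the YES count exceeds $(1+\eps)C$ and the NO count falls below $(1-\eps)C$. Hence a protocol solving the duplication promise problem with probability $0.99$ distinguishes the YES and NO cases of $\GapSet_{t,\alpha,n,k}$ with probability $0.99-o(1)>\frac{3}{4}+\Omega(1)$, and \thmref{thm:cc:gapset} forces $\Omega(nt)=\Omega(\alpha s)$ communication.

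The main obstacle is this last gap-matching step: $\GapSet$ only offers an additive window of width $\Theta(\sqrt{t})$ around a count of $\frac{t}{4}$, so the induced multiplicative window around $C$ has relative width $\Theta(1/\sqrt{t})=\Theta(1/\sqrt{C})$, which is $\ge \eps$ exactly in the regime $C=\O{1/\eps^2}$ assumed here; making all constants line up (and in particular keeping the $\GapAnd$ anti-concentration constant small enough that the YES and NO cases retain constant probability) is where the care lies. Secondary points needing attention are that the counting must be for coordinates shared among \emph{exactly} $k$ players, matching the problem definition, and that truncating to at most $s$ items per player must be done without distorting the hard distribution behind \thmref{thm:cc:gapset}.
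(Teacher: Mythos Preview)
Your proposal is correct and follows essentially the same approach as the paper: reduce from $\GapSet_{t,\alpha,n,k}$ with $t=\Theta(C)$ and $n=\Theta(\alpha s/C)$, identify the number of coordinates shared among exactly $k$ players with $\sum_j D_j$, and use the assumption $C<4/\eps^2$ to convert the additive $\Theta(\sqrt{t})$ gap of $\GapAnd_t$ into the multiplicative $(1\pm\eps)$ gap around $C$. The paper simply takes $t=4C$ so that $t/4=C$ without your padding step, but otherwise the arguments are the same; you are in fact more careful than the paper about the per-player sample-size bound and about the constants in the gap-matching.
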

\begin{proof}
We first consider the setting where $C<\frac{4}{\eps^2}$. 
Let $t=4C$ and $n=\Omega\left(\frac{\alpha s}{C}\right)$. 
Let $\{\bu^{(i)}\}_{i\in[\alpha]}$ be an instance of $\GapSet_{t,\alpha,n,k}$. 
Recall that $\{\bu^{(i)}\}_{i\in[\alpha]}$ is generated from $\bx,\by\in\{0,1\}^t$ drawn from $\GapAnd_t$. 
For each $j\in[t]$, let $D_j$ denote the indicator random variable for whether $x_j=y_j=1$, so that $D_j=1$ if $x_j=y_j=1$ and $D_j=0$ otherwise. 
Note that $D_j$ is a Bernoulli random variable with parameter $\frac{1}{4}$. 
Let $D=\sum_{j\in[t]} D_j$, so that $D$ is a binomial random variable with $t$ trials and parameter $\frac{1}{4}$. 

Observe that $\frac{1}{16}\sqrt{t}\ge\eps\frac{t}{4}$, so that a $(1+\eps)$-approximation algorithm to the number of $k$-wise collisions will also determine the number of coordinates $j\in[t]$ such that $D_j=1$. 
We have that $\Pr{\left\lvert D-\frac{t}{4}\right\rvert\ge\frac{1}{16}\sqrt{t}}\ge0.2$. 
Thus we have that with probability at least $\frac{3}{4}$, $\Pi$ will be able to solve $\GapSet_{t,\alpha,n,k}$. 
Hence by \thmref{thm:cc:gapset}, $\Pi$ must use $\Omega(nt)=\Omega(\alpha s)$ communication. 
\end{proof}
We next consider the case where $C\ge\frac{4}{\eps^2}$. 
The proof follows similarly to \lemref{lem:duplice:cc:one} but also uses a padding argument to account for the number of collisions. 
\begin{restatable}{lemma}{lemduplicecctwo}
\lemlab{lem:duplice:cc:two}
Let $C$ be an input parameter for the number of collisions and $k$ be an input parameter for the number of players with mutual collisions. 
Let $\eps\in(0,1)$ be an accuracy parameter such that $C\ge\frac{4}{\eps^2}$. 
Suppose there exist $\alpha$ players, each with $s$ samples from some universe of size $N=\Omega(s)$. 
Then any protocol $\Pi$ that with probability at least $0.99$, identifies whether there are fewer than $(1-\eps)\cdot C$ coordinates or more than $(1+\eps)\cdot C$ coordinates shared among exactly $k$ players requires $\Omega\left(\frac{\alpha s}{C\eps^2}\right)$ communication.  
\end{restatable}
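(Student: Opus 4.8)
The plan is to mimic the reduction from $\GapSet$ used in the proof of \lemref{lem:duplice:cc:one}, adding one ingredient to cope with the fact that a single $\GapSet$ instance produces only $\Theta(1/\eps^2)$ coordinates shared among $k$ players, whereas we now need $\Theta(C)$ of them. First I would fix $t=\Theta(1/\eps^2)$, with the hidden constant chosen so that $t<\frac{1}{16\eps^2}$, set $r=\lfloor 4C/t\rfloor=\Theta(C\eps^2)\ge 1$ (it is at least a large constant since $C\ge 4/\eps^2$), and set $n=\Theta\!\left(\frac{\alpha s}{rt}\right)=\Theta\!\left(\frac{\alpha s}{C}\right)$. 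The hard instance is then obtained from one $\GapSet_{t,\alpha,n,k}$ instance $\{\bu^{(i)}\}_{i\in[\alpha]}$, drawn from $\bx,\by\sim\GapAnd_t$, by replicating it verbatim $r$ times over the disjoint universe $[r]\times[tn]$: player $i$ holds $\{(c,w):c\in[r],\,u^{(i)}_w=1\}$. This replication step uses no communication.

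In each of the $r$ copies, a coordinate is shared among exactly $k$ players precisely when $x_j=y_j=1$ for the corresponding block $j$ (the special coordinate $Z_j$ then goes to $\ell=\lfloor k/2\rfloor$ players in the first half and $k-\ell$ in the second half, which are disjoint and sum to $k$), and no other coordinate lies on two or more servers as long as $\ell<k$. Hence the number of coordinates shared among exactly $k$ players equals $r\cdot\langle\bx,\by\rangle$, where $\langle\bx,\by\rangle$ is binomially distributed with $t$ trials and success probability $\tfrac14$. Writing $C:=rt/4$ (up to the rounding of $r$, which can be corrected by adding a handful of fixed, disjointly placed $k$-way collisions), in the YES case $\langle\bx,\by\rangle\ge t/4+\sqrt t/16$ so the count is at least $r(t/4+\sqrt t/16)>(1+\eps)C$, and in the NO case it is at most $r(t/4-\sqrt t/16)<(1-\eps)C$; both inequalities reduce to $\eps t/4<\sqrt t/16$, i.e. $t<\frac{1}{16\eps^2}$. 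By \lemref{lem:anticonc:binom} with constant $\tfrac1{16}$, $\langle\bx,\by\rangle$ avoids the ``neither'' window with probability at least $\tfrac{15}{16}$, so any protocol solving the duplication problem with probability $0.99$ yields, via the replication map, a protocol that solves $\GapSet_{t,\alpha,n,k}$ with probability $\tfrac34+\Omega(1)$. By \thmref{thm:cc:gapset}, this protocol — hence the original duplication protocol — uses $\Omega(nt)=\Omega\!\left(\frac{\alpha s}{rt}\cdot t\right)=\Omega\!\left(\frac{\alpha s}{r}\right)=\Omega\!\left(\frac{\alpha s}{C\eps^2}\right)$ bits.

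The routine work is in the parameter bookkeeping: since $\|\bu^{(i)}\|_1$ concentrates around $\frac{tn}{2\alpha}$, one picks the constant in $n$ small enough (and conditions on, or truncates to, the probability-$1-o(1)$ event that every player has $\le s$ items) so the replicated instance really has at most $s$ items per player; one picks the constant in $t=\Theta(1/\eps^2)$ so the ``variable'' part $r\sqrt t/16=\Theta(C\eps)$ of the YES/NO separation genuinely dominates $\eps C$, which also absorbs the $O(t)=O(1/\eps^2)$ slack from rounding $r$; and the universe has size $N=rtn=\Theta(\alpha s)=\Omega(s)$, paddable with unused coordinates if a larger $N$ is desired. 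The one genuinely conceptual point — and the reason one cannot simply pad a single $\GapSet$ instance with \emph{deterministic} extra duplicates — is that the copies must be replicated \emph{in lockstep}, so that the YES/NO gap in the duplicate count scales up to $\Theta(\eps C)$ in proportion with the threshold $C$, which is precisely the resolution a $(1\pm\eps)$-estimator must achieve. I expect this is where the care is needed, but given \thmref{thm:cc:gapset} the argument is essentially the same reduction as in \lemref{lem:duplice:cc:one}.
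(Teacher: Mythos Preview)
Your proposal is correct and follows essentially the same approach as the paper: set $t=\Theta(1/\eps^2)$, $n=\Theta(\alpha s/C)$, take one $\GapSet_{t,\alpha,n,k}$ instance and replicate it $\Theta(C/t)=\Theta(C\eps^2)$ times in lockstep (the paper writes this as $\bu^{(i)}=\bv^{(i)}\circ\cdots\circ\bv^{(i)}$), then invoke \lemref{lem:anticonc:binom} and \thmref{thm:cc:gapset}. Your version is in fact more careful with constants (e.g., noting that one needs $t<1/(16\eps^2)$ for the gap $\sqrt{t}/16$ to exceed $\eps t/4$) and makes explicit the per-player sample budget and universe-size bookkeeping that the paper leaves implicit.
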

%\lemduplicecctwo*
\begin{proof}
Let $t=\frac{1}{\eps^2}$ and $n=\Omega\left(\frac{\alpha s}{C}\right)$. 
Let $\{\bv^{(i)}\}_{i\in[\alpha]}$ be an instance of $\GapSet_{t,\alpha,n,k}$. 
We then set $\bu^{(i)}$ to be $\frac{C}{t}$ copies of $\bv^{(i)}$ for each $i\in[\alpha]$, so that $\bu^{(i)}=\bv^{(i)}\circ\bv^{(i)}\circ\ldots\circ\bv^{(i)}$. 
For each $j\in[t]$, we define $D_j$ to be the indicator random variable for whether $x_j=y_j=1$, so that $D_j=1$ if $x_j=y_j=1$ and $D_j=0$ otherwise. 
Since $\{\bv^{(i)}\}_{i\in[\alpha]}$ is generated from $\bx,\by\in\{0,1\}^t$ drawn from $\GapAnd_t$, then $D_j$ is a Bernoulli random variable with parameter $\frac{1}{4}$. 
Hence for $D=\sum_{j\in[t]} D_j$, we have that $D$ is a binomial random variable with $t$ trials and parameter $\frac{1}{4}$. 

Observe that $\frac{1}{16}\sqrt{t}\ge\eps\frac{t}{4}$, so that a $(1+\eps)$-approximation algorithm to the number of $k$-wise collisions will also determine the number of coordinates $j\in[t]$ such that $D_j=1$. 
By \lemref{lem:anticonc:binom}, we have that $\PPr{\left\lvert D-\frac{t}{4}\right\rvert\ge\frac{1}{16}\sqrt{t}}\ge0.2$. 
Thus we have that with probability at least $\frac{3}{4}$, $\Pi$ will be able to solve $\GapSet_{t,\alpha,n,k}$. 
Hence by \thmref{thm:cc:gapset}, $\Pi$ must use $\Omega(nt)=\Omega\left(\frac{\alpha s}{C\eps^2}\right)$ communication. 
\end{proof}

Putting \lemref{lem:duplice:cc:one} and \lemref{lem:duplice:cc:two} together, we have:
\begin{theorem}
Let $C$ be an input parameter for the number of collisions, $k$ be an input parameter for the number of players with mutual collisions, and $\eps\in(0,1)$ be an accuracy parameter.  
Suppose there exist $\alpha$ players, each with $s$ samples from some universe of size $N=\Omega(s)$. 
Then any protocol $\Pi$ that with probability at least $0.99$, identifies whether there are fewer than $(1-\eps)\cdot C$ coordinates or more than $(1+\eps)\cdot C$ coordinates shared among exactly $k$ players requires $\Omega(\alpha s)$ communication for $C<\frac{4}{\eps^2}$ and $\Omega\left(\frac{\alpha s}{C\eps^2}\right)$ communication for $C\ge\frac{4}{\eps^2}$. 
\end{theorem}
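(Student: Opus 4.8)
The plan is to observe that this statement is simply the union of the two regimes already isolated as \lemref{lem:duplice:cc:one} (for $C<\frac{4}{\eps^2}$) and \lemref{lem:duplice:cc:two} (for $C\ge\frac{4}{\eps^2}$), so all that remains is to assemble those two bounds and check that their hypotheses line up with the statement here. Both lemmas come from a single reduction template out of the $\GapSet$ problem, whose $\Omega(nt)$ communication lower bound is \thmref{thm:cc:gapset}. Given an instance $\{\bu^{(i)}\}_{i\in[\alpha]}$ of $\GapSet_{t,\alpha,n,k}$ generated from $\bx,\by\sim\GapAnd_t$, recall that in block $j$ the special coordinate $Z_j$ is shared by exactly $k$ players iff $x_j=y_j=1$, by fewer than $k$ players if exactly one of $x_j,y_j$ equals $1$, and by a single player otherwise; hence the number of coordinates shared among exactly $k$ players equals $D=\#\{j\in[t]:x_j=y_j=1\}$, which is distributed as $\mathrm{Bin}(t,\tfrac14)$.

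First I would isolate the numerical fact common to both cases: the $\GapAnd_t$ gap window, of width $\Theta(\sqrt t)$ centred at $\frac{t}{4}$, exceeds the additive uncertainty $\Theta(\eps t)$ incurred by a multiplicative $(1\pm\eps)$-approximation of the collision count, provided the constant implied in the choice of $t$ is small enough; hence such an approximation decides whether the underlying $\GapAnd_t$ instance is in its YES case ($D\ge\frac{t}{4}+\frac{1}{16}\sqrt{t}$) or its NO case ($D\le\frac{t}{4}-\frac{1}{16}\sqrt{t}$). By \lemref{lem:anticonc:binom} the binomial $D$ lands strictly inside one of these decidable regions with constant probability, so a $0.99$-correct protocol for the duplicate problem yields a protocol solving $\GapSet_{t,\alpha,n,k}$ with probability at least $\frac{3}{4}+\Omega(1)$, and \thmref{thm:cc:gapset} then forces $\Omega(nt)$ bits of communication. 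It remains to choose $t$ and $n$ so that every player holds $\O{s}$ items from a universe of size $N=\Omega(s)$ and the construction has $\Theta(C)$ coordinates shared among exactly $k$ players. For $C<\frac{4}{\eps^2}$ I would take $t=\Theta(C)$ and $n=\Omega\!\left(\frac{\alpha s}{C}\right)$, which gives $nt=\Omega(\alpha s)$ directly. For $C\ge\frac{4}{\eps^2}$ a single instance would need more than $\frac{1}{\eps^2}$ blocks and the uncertainty bound above would fail, so I would instead take $t=\Theta\!\left(\frac{1}{\eps^2}\right)$, $n=\Omega\!\left(\frac{\alpha s}{C}\right)$, and set each player's input to $\Theta\!\left(\frac{C}{t}\right)=\Theta(C\eps^2)$ concatenated copies of its $\GapSet_{t,\alpha,n,k}$ block; this scales the total collision count up to $\Theta(C)$ and the per-player item count up to $\Theta(s)$, while a solver for the duplicate problem on the concatenation still solves the embedded $\GapSet_{t,\alpha,n,k}$ on any single copy, so \thmref{thm:cc:gapset} yields $\Omega(nt)=\Omega\!\left(\frac{\alpha s}{C\eps^2}\right)$.

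The main obstacle is not this assembly but the input it rests on, namely \thmref{thm:cc:gapset} (equivalently \lemref{lem:inf:cost}): its proof requires the full information-complexity argument, which identifies $\Omega(t)$ \emph{informative} blocks of $\GapAnd_t$ via the chain rule and, on each such block, hard-wires the remaining coordinates to extract a predictor for the GUESS variant of multiplayer pairwise disjointness, costing $\Omega(n)$ per informative block by \thmref{thm:inf:cost:kguess}. Granting that, the only delicate bookkeeping left for the present theorem is verifying, in the case $C\ge\frac{4}{\eps^2}$, that the padded instance has collision count $\Theta(C)$ (so the $(1\pm\eps)C$ promise is nonvacuous) and that no player receives more than $\O{s}$ items, both of which follow routinely once $n=\Theta\!\left(\frac{\alpha s}{C}\right)$ is fixed and the constant in $t$ is chosen small enough.
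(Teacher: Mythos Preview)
Your proposal is correct and matches the paper's approach exactly: the paper's proof of this theorem is literally the one-line assembly ``Putting \lemref{lem:duplice:cc:one} and \lemref{lem:duplice:cc:two} together,'' and your sketch of those two lemmas (the parameter choices $t=\Theta(C)$, $n=\Omega(\alpha s/C)$ in the first regime, and $t=\Theta(1/\eps^2)$, $n=\Omega(\alpha s/C)$ with $\Theta(C\eps^2)$-fold padding in the second, both reducing from \thmref{thm:cc:gapset}) reproduces the paper's arguments faithfully. One tiny inaccuracy: in the case $x_j=y_j=0$ the special coordinate $Z_j$ is given to zero players, not one, but this does not affect the key equivalence you use.
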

\thmref{thm:main} then follows from setting $k=2$.

\subsection{Upper Bounds for Duplication Detection}
In this section, we provide a short sketch of a distributed protocol for the duplication detection problem. 
The protocol uses standard techniques and is summarized in \figref{fig:dupe:est}. 
Given $C>0$ and an accuracy parameter $\eps\in(0,1)$, the $\alpha$ parties must determine whether there exist at least $C\cdot(1+\eps)$ duplicates or at most $C\cdot(1-\eps)$ duplicates. 

\begin{figure}
\begin{mdframed}
\begin{enumerate}
\item
For each player $i\in[\alpha]$, let $S_i$ be the set of items given to player $i$. 
\item 
For each $j\in[N]$, sample $j$ into a set $U$ with probability $p=\Theta\left(\frac{1}{C\eps^2}\right)$. 
This is done by all players using public randomness. 
\item
For each $i\in[\alpha]$, let $T_i=S_i\cap U$. 
\item 
Let $\bv$ be a bit vector of size $\frac{\xi\alpha s}{C\eps^2}$, for a sufficiently large constant $\xi>0$. 
\item 
While there are multiple items in $\cup T_i$ that hash to the same position of $\bv$:
\begin{enumerate}
\item
All players hash $T_i$ into $\bv$ and communicate the non-zero entries of their hash. 
\item
For any $x$ that does not map to a position of $\bv$ communicated by multiple players, remove $x$ from the remaining items, i.e., $T_i=T_i\setminus\{x\}$ for all $i\in[\alpha]$. 
\end{enumerate}
\item 
Let $D'$ be the number of positions that are communicated by multiple players.
\item 
Output $\frac{D'}{p}$ for the estimated number of collisions.
\end{enumerate}
\end{mdframed}
\caption{Distributed protocol for duplication estimation}
\figlab{fig:dupe:est}
\end{figure}

The protocol proceeds as follows. 
The parties first sample each item from the universe at a rate $p=\Theta\left(\frac{1}{C\eps^2}\right)$. 
That is, instead of considering the universe $[N]$, they consider a universe $U$ where for each $i\in[N]$, we have $i\in U$ with probability $p$. 
Then each player only considers the subset of their items that are contained in $U$. 

By standard calculations on the expectation and variance, it can be shown that if $D'$ is the number of duplicates across the subsampled universe $U$, then $\frac{1}{D'}$ is an additive $\O{\eps}\cdot C$ approximation to the actual number $D$ of duplicates. 
That is, with probability $0.99$, we have $\left\lvert\frac{D'}{p}-D\right\rvert\le\O{\eps}\cdot C$. 
It thus remains to compute the number of duplicates in the universe $U$. 

By Markov's inequality, with probability at least $0.99$, the total number of items in the universe $U$ is at most $\frac{\gamma\alpha s}{C\eps^2}$ for some sufficiently large constant $\gamma$. 
Let $\calE$ be the event that the total number of items in the universe $U$ is at most $\frac{\gamma\alpha s}{C\eps^2}$, so that $\PPr{\calE}\ge0.99$. 
The players first hash their items into a bit vector $\bv$ of size $\frac{\xi\alpha s}{C\eps^2}$, for a sufficiently large constant $\xi>0$.  
We call an item $i\in[N]$ isolated if it is hashed into a coordinate of $\bv$ that no other item is hashed to. 
Note that conditioned on $\calE$, the probability that each item is isolated is at least $0.999$, for sufficiently large $\xi$. 

The players then succinctly communicate the hashes of all their items as follows. 
For each $i\in[\alpha]$, let $N_i$ be the number of samples that player $i$ has that is contained in the universe $U$. 
To communicate their items, it suffices to use total communication 
\[\sum_{i=1}^\alpha\log\binom{\gamma\alpha s/C\eps^2}{N_i}=\log\prod_{i=1}^\alpha\binom{\gamma\alpha s/C\eps^2}{N_i}\le\log\prod_{i=1}^\alpha\binom{\gamma\alpha s/C\eps^2}{\gamma s/C\eps^2},\]
where the last inequality holds due to the constraint that conditioned on $\calE$, we have
\[N_1+\ldots+N_\alpha\le\frac{\gamma\alpha s}{C\eps^2}.\]
Thus the total communication is at most 
\[\sum_{i=1}^\alpha\log\binom{\gamma\alpha s/C\eps^2}{\gamma s/C\eps^2}=\alpha\log\alpha\cdot\O{\frac{\gamma s}{C\eps^2}}=\O{\frac{\gamma\alpha s\log\alpha}{C\eps^2}}.\]

Note that the central server will immediately observe that any isolated item cannot be duplicated. 
On the other hand, there could be multiple items that are not duplicated, yet are sent to the same coordinate in the bit vector $\bv$. 
By Markov's inequality, we have that with probability at least $0.99$, half of the items that are not duplicates are isolated. 
It then suffices to recurse. 

That is, in each iteration $\ell$, suppose we have $\frac{\gamma_\ell\alpha s}{C\eps^2}$ remaining items that are not duplicated. 
Then running the above protocol, the total communication in round $\ell$ is $\O{\frac{\gamma_\ell\alpha s\log\alpha}{C\eps^2}}$. 
We have that $\Ex{\gamma_\ell}\le\frac{1}{2}\Ex{\gamma_{\ell-1}}$. 
Thus, in expectation, the total communication is a geometric series that sums to $\O{\frac{\gamma\alpha s\log\alpha}{C\eps^2}}$. 
Then again by Markov's inequality and the fact that $\gamma$ is a constant, we have that the total communication is $\O{\frac{\alpha s\log\alpha}{C\eps^2}}$. 

\begin{theorem}
Given $C>0$ and an accuracy parameter $\eps\in(0,1)$, there exists a distributed protocol for $\alpha$ parties that determine whether there exist at least $C\cdot(1+\eps)$ duplicates or at most $C\cdot(1-\eps)$ duplicates with probability at least $\frac{2}{3}$ and uses $\O{\frac{\alpha s\log\alpha}{C\eps^2}}$ communication.
\end{theorem}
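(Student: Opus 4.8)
The plan is to follow the two-stage protocol summarized in \figref{fig:dupe:est}: first reduce from counting all duplicates to counting duplicates on a randomly subsampled universe, and then count the latter by an iterated-hashing routine whose communication is controlled by a geometric series.

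\textbf{Step 1 (subsampling).} Let $D$ be the true number of duplicate coordinates. Using public randomness, every element of $[N]$ is placed in a set $U$ independently with probability $p=\Theta\!\left(\frac{1}{C\eps^2}\right)$, and each player restricts to $T_i=S_i\cap U$ at no communication cost. Since a coordinate $j$ is a duplicate of $T=\bigcup_i T_i$ iff it is a duplicate of $S$ and $j\in U$, the count $D'$ of duplicates in $U$ is a sum of $D$ independent $\mathrm{Bernoulli}(p)$ variables, so $\Ex{D'}=pD$ and $\Var{D'}\le pD$. A Chebyshev bound then gives that with probability $0.99$ the estimator $D'/p$ is within an additive $\O{\sqrt{D/p}}=\O{\eps C}$ of $D$ whenever $D=\O{C}$ (and when $D$ is larger the estimator still lies well above $(1+\eps)C$). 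Hence outputting $D'/p$ distinguishes $D\le(1-\eps)C$ from $D\ge(1+\eps)C$, and it remains only to compute $D'$ within total communication $\O{\frac{\alpha s\log\alpha}{C\eps^2}}$.

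\textbf{Step 2 (counting $D'$ by iterated hashing).} By Markov's inequality there is an event $\calE$ with $\PPr{\calE}\ge0.99$ on which $\sum_i|T_i|=\O{\frac{\alpha s}{C\eps^2}}$; condition on $\calE$ henceforth. In each round the players use fresh public randomness to hash their surviving items into a shared table $\bv$ of length $\frac{\xi\alpha s}{C\eps^2}$ for a large constant $\xi$, and each player communicates the set of buckets it occupies, which costs $\O{\frac{\alpha s\log\alpha}{C\eps^2}}$ bits by a standard counting bound (splitting the item budget evenly among the players). Every item whose bucket is reported by only one player is \emph{isolated} and is certainly not a duplicate, so it is discarded; conversely, all copies of a genuine duplicate coordinate use the same hash and land in a common bucket, so duplicates are never discarded. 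Because $\bv$ is a large constant factor longer than the number of surviving items, each non-duplicate item is isolated with probability $\ge0.999$, so by Markov at least half of the non-duplicate items are removed in each round with probability $\ge0.99$. After $\O{\log\!\left(\frac{\alpha s}{C\eps^2}\right)}$ rounds the non-duplicates are exhausted and the remaining items hash to distinct buckets, at which point $D'$ is the number of multiply-reported buckets.

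\textbf{Step 3 (communication and union bound).} Writing $\frac{\gamma_\ell\alpha s}{C\eps^2}$ for the number of non-duplicate items entering round $\ell$, round $\ell$ costs $\O{\frac{\gamma_\ell\alpha s\log\alpha}{C\eps^2}}$ bits and $\Ex{\gamma_\ell}\le\frac12\Ex{\gamma_{\ell-1}}$, so in expectation the total communication is a geometric series summing to $\O{\frac{\alpha s\log\alpha}{C\eps^2}}$; a final Markov bound makes this hold with constant probability. Taking a union bound over the $\O{1}$ failure events (the Chebyshev step, the event $\calE$, the per-round halving, and the communication bound) and amplifying by a constant number of independent repetitions yields success probability at least $\frac23$. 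The step I expect to be the main obstacle is Step 2: one must make sure the recursion never drops a true duplicate --- which I would handle by noting that every copy of a duplicate coordinate is hashed by the same function into the same bucket --- and, more importantly, verify that communicating \emph{bucket-occupancy sets} rather than item identities is exactly what removes the $\log N$ factor and keeps the per-round cost $\O{\frac{\gamma_\ell\alpha s\log\alpha}{C\eps^2}}$, so that the geometric-sum accounting remains valid even though the number of rounds depends on the random subsampled instance size. Conditioning on $\calE$ throughout and re-drawing the public hash each round to keep the isolation events (nearly) independent is what makes the final union bound routine.
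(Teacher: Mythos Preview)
Your proposal is correct and follows essentially the same approach as the paper's proof: subsample the universe at rate $p=\Theta(1/(C\eps^2))$, use Chebyshev to show $D'/p$ approximates $D$ to additive $\O{\eps C}$ (with the same case split on whether $D$ is comparable to $C$), then count $D'$ exactly via iterated hashing into a table of size $\Theta(\alpha s/(C\eps^2))$, with the $\log\alpha$ factor arising from the same binomial-coefficient encoding of bucket sets and the total cost bounded by the same geometric series in $\gamma_\ell$. Your observation that duplicates are never discarded because all copies share a hash bucket, and your per-round halving-in-expectation of non-duplicates, match the paper's argument line for line.
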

\begin{proof}
Let $D$ be the number of duplicates across the $\alpha$ parties and $D'$ be the number of duplicates in the subsampled universe. 
Let $\calD$ be the set of duplicates among the $\alpha$ parties, so that $|\calD|=D$. 
Let $p$ be the probability that each coordinate in the universe is sampled. 
Then we have
\[\Ex{D'}=\sum_{i\in\calD} p=Dp,\]
so that $\Ex{\frac{D'}{p}}=D$. 
Moreover, we have
\[\Var{\frac{D'}{p}}\le\frac{1}{p^2}\sum_{i\in\calD} p=\frac{D}{p}=\O{C\eps^2 D},\]
so that by Chebyshev's inequality, we have
\[\PPr{\left\lvert\frac{D'}{p}-D\right\rvert\le\eps\cdot\O{\sqrt{CD}}}\ge0.99.\]
Now we note that if $D<\frac{C}{1000}$ or $D>1000C$, then a simple $100$-approximation to $D$ suffices to distinguish whether $D>C\cdot (1+\eps)$ or $D<C\cdot(1-\eps)$. 
Thus we assume $\frac{C}{1000}\le D\le 1000C$, so that 
\[\PPr{\left\lvert\frac{D'}{p}-D\right\rvert\le\O{\eps}\cdot C}\ge0.99.\]
It thus remains to compute the number of duplicates in the universe $U$.
To that end, the players first hash their items into a bit vector $\bv$ of size $\frac{\xi\alpha s}{C\eps^2}$, for a sufficiently large constant $\xi>0$.  

We call an item $i\in[N]$ isolated if it is hashed into a coordinate of $\bv$ that no other item is hashed to. 
Note that conditioned on $\calE$, the probability that each item is isolated is at least $0.999$, for sufficiently large $\xi$. 
By Markov's inequality, with probability at least $0.99$, the total number of items in the universe $U$ is at most $\frac{\gamma\alpha s}{C\eps^2}$ for some sufficiently large constant $\gamma$. 
Let $\calE$ be the event that the total number of items in the universe $U$ is at most $\frac{\gamma\alpha s}{C\eps^2}$, so that $\PPr{\calE}\ge0.99$. 

The players then succinctly communicate the hashes of all their items as follows. 
For each $i\in[\alpha]$, let $N_i$ be the number of samples that player $i$ has that is contained in the universe $U$. 
To communicate their items, it suffices to use total communication 
\[\sum_{i=1}^\alpha\log\binom{\gamma\alpha s/C\eps^2}{N_i}=\log\prod_{i=1}^\alpha\binom{\gamma\alpha s/C\eps^2}{N_i}\le\log\prod_{i=1}^\alpha\binom{\gamma\alpha s/C\eps^2}{\gamma s/C\eps^2},\]
where the last inequality holds due to the constraint that conditioned on $\calE$, we have
\[N_1+\ldots+N_\alpha\le\frac{\gamma\alpha s}{C\eps^2}.\]
Thus the total communication is at most 
\[\sum_{i=1}^\alpha\log\binom{\gamma\alpha s/C\eps^2}{\gamma s/C\eps^2}=\alpha\log\alpha\cdot\O{\frac{\gamma s}{C\eps^2}}=\O{\frac{\gamma\alpha s\log\alpha}{C\eps^2}}.\]

Note that the central server will immediately observe that any isolated item cannot be duplicated. 
On the other hand, there could be multiple items that are not duplicated, yet are sent to the same coordinate in the bit vector $\bv$. 
By Markov's inequality, we have that with probability at least $0.99$, half of the items that are not duplicates are isolated. 

The protocol is then performed recursively. 
Specifically, in each iteration $\ell$, suppose there remain $\frac{\gamma_\ell\alpha s}{C\eps^2}$ items that are not duplicated. 
Then running the above protocol, the total communication in iteration $\ell$ is at most $\frac{\tau\gamma_\ell\alpha s\log\alpha}{C\eps^2}$ for some constant $\tau>0$.  
Call an iteration \emph{successful} if $\gamma_{\ell}\le\frac{1}{2}\gamma_{\ell-1}$, so that the above argument implies that each iteration is successful with probability at least $0.99$. 
Thus we have 
\[\Ex{\gamma_{\ell}}\le\frac{99}{100}\frac{1}{2}\gamma_{\ell-1}+\frac{1}{100}\gamma_{\ell-1}\le\frac{2}{3}\gamma_{\ell-1}.\]
Then the expected communication $\Lambda_\ell$ of iteration $\ell+1$ is at most
\[\Ex{\Lambda_\ell}\le\frac{\tau\gamma_{\ell}\alpha s\log\alpha}{C\eps^2}\le\left(\frac{2}{3}\right)^\ell\frac{\tau\gamma\alpha s\log\alpha}{C\eps^2}.\]
Thus, in expectation, the total communication $\Lambda$ is a geometric series that sums to $\O{\frac{\gamma\alpha s\log\alpha}{C\eps^2}}$:
\[\Ex{\Lambda}=\sum_{\ell}\Ex{\Lambda_\ell}\le\O{\frac{\gamma\alpha s\log\alpha}{C\eps^2}}.\]
Then again by Markov's inequality and the fact that $\gamma$ is a constant, we have that the total communication is $\O{\frac{\alpha s\log\alpha}{C\eps^2}}$. 
\end{proof}

\section{Parameterized Streaming for Distinct Elements}
\applab{app}
In this section, we consider distinct elements estimation in the streaming model. 
Namely, there exists an underlying vector $x\in\mathbb{R}^n$ and each update in a stream of length $m=\poly(n)$ can increase or decrease a coordinate of $x$. 
The goal is to estimate $\|x\|_0$ within a multiplicative factor of $(1+\eps)$ at the end of the stream using space polylogarithmic in $n$. 
Moreover, it is known that $\omega\left(\frac{1}{\eps^2}\right)$ bits of space is generally needed to solve this problem. 
We now show that if the number of coordinates with frequency more than one is small, this lower bound need not hold. 

We first require the following guarantees of the well-known $\CountSketch$ data structure from streaming. 
\begin{theorem}
\cite{charikar2002finding}
Let $x\in\mathbb{R}^n$ and let $y$ with the vector $x$ with the $b$ coordinates largest in magnitude set to zero. 
Then with high probability, for each $i\in[n]$, $\CountSketch$ outputs an estimate $\widehat{x_i}$ such that
\[|\widehat{x_i}-x_i|\le\frac{1}{\sqrt{b}}\|y\|_2.\]
Moreover, $\CountSketch$ uses $\O{b\log^2 n}$ bits of space.
\end{theorem}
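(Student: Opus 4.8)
The plan is to recall the standard $\CountSketch$ construction and verify its guarantees. The data structure maintains $d=\Theta(\log n)$ independent hash tables, each with $w=\Theta(b)$ counters. For each table $r\in[d]$ we draw a pairwise-independent hash function $h_r\colon[n]\to[w]$ and a pairwise-independent sign function $\sigma_r\colon[n]\to\{-1,+1\}$, and the $j$-th counter of table $r$ maintains $C_r[j]=\sum_{i\colon h_r(i)=j}\sigma_r(i)\,x_i$, which is updated in $O(1)$ time per stream update. The estimate for coordinate $i$ is $\widehat{x_i}=\mathrm{median}_{r\in[d]}\,\sigma_r(i)\cdot C_r[h_r(i)]$.

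First I would analyze a single table $r$ and a single coordinate $i$. Writing $\sigma_r(i)\,C_r[h_r(i)]=x_i+E_r$ with $E_r=\sum_{i'\neq i}\sigma_r(i)\sigma_r(i')\,\mathbf{1}[h_r(i')=h_r(i)]\,x_{i'}$, let $S$ be the set of the $b$ coordinates of $x$ largest in magnitude, so that $y$ is $x$ restricted to $[n]\setminus S$. I would split $E_r$ according to whether $i'\in S$. By a union bound, the probability that some $i'\in S\setminus\{i\}$ has $h_r(i')=h_r(i)$ is at most $b/w\le\frac16$ for $w$ a large enough constant multiple of $b$; call the complementary event $\calA_r$, on which only coordinates outside $S$ contribute to $E_r$. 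Separately, pairwise independence of $\sigma_r$ kills the cross terms, so $\Ex{\left(\sum_{i'\notin S}\sigma_r(i)\sigma_r(i')\,\mathbf{1}[h_r(i')=h_r(i)]\,x_{i'}\right)^2}\le\frac1w\sum_{i'\notin S}x_{i'}^2=\frac1w\|y\|_2^2$, so by Markov this quantity is at most $\frac6w\|y\|_2^2$ with probability at least $\frac56$. Intersecting the two events and taking $w$ large enough that $6/w\le1/b$, I get that with probability at least $\frac23$ over table $r$, $|\sigma_r(i)C_r[h_r(i)]-x_i|\le\frac1{\sqrt b}\|y\|_2$.

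Next I would amplify over the $d=\Theta(\log n)$ independent tables: a Chernoff bound shows that with probability at least $1-n^{-2}$ strictly more than half of the tables satisfy $|\sigma_r(i)C_r[h_r(i)]-x_i|\le\frac1{\sqrt b}\|y\|_2$, and on that event the median $\widehat{x_i}$ lies in the same interval, since a list whose majority lies in $[x_i-\delta,x_i+\delta]$ has its median there too. A union bound over $i\in[n]$ gives the claimed bound with high probability simultaneously for all coordinates. For the space bound, there are $\Theta(\log n)$ tables with $\Theta(b)$ counters each, and since the stream has length $m=\poly(n)$ every counter fits in $O(\log n)$ bits, for $O(b\log^2 n)$ bits total; the pairwise-independent seeds cost only $O(\log n)$ bits per table, which is subsumed. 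The only place needing care is the tail-norm bound: one must not end up requiring a \emph{conditional} second-moment estimate for the light coordinates given $\calA_r$. The fix above is to bound the light-coordinate error term unconditionally and only afterwards intersect with the heavy-no-collision event $\calA_r$, on which $E_r$ literally equals that light-coordinate sum.
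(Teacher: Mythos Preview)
Your proof is correct and follows the standard analysis of \textsc{CountSketch} from \cite{charikar2002finding}. Note, however, that the paper does not actually prove this statement: it is quoted as a black-box result with a citation, so there is no ``paper's own proof'' to compare against. Your argument---splitting the per-table error into a heavy-collision event and a light-coordinate second-moment tail, then amplifying by median over $\Theta(\log n)$ tables and union-bounding over coordinates---is exactly the textbook derivation, and your remark about bounding the light-coordinate sum unconditionally before intersecting with the no-heavy-collision event is the clean way to avoid conditioning issues.
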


\subsection{Robust Statistics}
In this section, we present a streaming algorithm for distinct element estimation based on robust statistics. 
We first recall the following statement from robust mean estimation. 
\begin{theorem}
\thmlab{thm:robust:mean:est}
\cite{PrasadBR19}
Let $P$ be any $2k$-moment bounded distribution over $\mathbb{R}$ with mean $\mu$ and variance bounded by $\sigma^2$. 
Let $Q$ be an arbitrary distribution and the mixture $P_\eps=(1-\eps)P+\eps Q$. 
Given $n$ samples from $P_\eps$, there exists an algorithm $\RobustMeanEst$ that returns an estimate $\widehat{\mu}$ such that with probability at least $1-\delta$,
\[|\widehat{\mu}-\mu|\le\O{\sigma}\left(\max\left(\eps,\frac{\log(1/\delta)}{n}\right)^{1-\frac{1}{2k}}+\left(\frac{\log n}{n}\right)^{1-\frac{1}{2k}}+\sqrt{\frac{\log 1/\delta}{n}}\right).\]
\end{theorem}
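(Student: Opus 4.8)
The plan is to realize $\RobustMeanEst$ as a \emph{trimmed (clipped) empirical mean} and to control its error by combining a uniform estimate for the empirical quantiles of the contaminated sample with the tail bound that the $2k$-moment-boundedness of $P$ supplies. Fix a trimming level $\rho = c_0\left(\eps + \frac{\log(1/\delta)}{n} + \frac{\log n}{n}\right)$ for a suitable constant $c_0$, assuming $\rho\le 1/4$ since the complementary regime is trivial. Given samples $X_1,\dots,X_n\sim P_\eps$, compute the empirical $\rho$- and $(1-\rho)$-quantiles $\widehat a\le\widehat b$ and output $\widehat\mu=\frac1n\sum_{i=1}^n\mathrm{clip}_{[\widehat a,\widehat b]}(X_i)$, where $\mathrm{clip}_{[a,b]}(x)=\min(b,\max(a,x))$. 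A median-of-means estimator with $\Theta(\log(1/\delta))$ buckets would recover the $\sigma\sqrt{\log(1/\delta)/n}$ term but not the sharp contamination rate $\sigma\eps^{1-1/(2k)}$, so the clipping step is the essential ingredient.

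First I would argue that the clipping window sits near $\mu$. Since $P_\eps$ and $P$ differ by at most $\eps\le\rho/2$ in total variation, and since a DKW/VC-type deviation inequality controls all empirical quantiles of the $n$ samples simultaneously up to an additive slack of $O\!\big(\sqrt{\rho\log n/n}\big)=O(\rho)$ with probability at least $1-\delta$, the thresholds $\widehat a,\widehat b$ agree with quantiles of $P$ at levels lying in $[\Theta(\rho),\,1-\Theta(\rho)]$. Markov's inequality applied to the $2k$-th central moment (which the $2k$-moment-boundedness bounds by $O(\sigma^{2k})$) gives $\Pr_P[|X-\mu|>t]\le O(\sigma^{2k}/t^{2k})$, so setting the right-hand side equal to $\Theta(\rho)$ yields $[\widehat a,\widehat b]\subseteq[\mu-t',\,\mu+t']$ with $t'=O\!\big(\sigma\rho^{-1/(2k)}\big)$.

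Next I would bound the three error contributions on the high-probability event just described. (i) \emph{Clipping bias on the clean part:} $\big|\mathbb{E}_P[\mathrm{clip}_{[\widehat a,\widehat b]}(X)]-\mu\big|\le\mathbb{E}_P[(|X-\mu|-t')_+]\le\int_{t'}^{\infty}\Pr_P[|X-\mu|>s]\,ds=O\!\big(\sigma^{2k}/(t')^{2k-1}\big)=O\!\big(\sigma\rho^{1-1/(2k)}\big)$. (ii) \emph{Contamination:} the $\eps n$ samples drawn from $Q$ take clipped values inside an interval of length at most $2t'$, so they perturb the empirical average by at most $2\eps t'=O\!\big(\sigma\,\eps\,\rho^{-1/(2k)}\big)=O\!\big(\sigma\rho^{1-1/(2k)}\big)$ because $\eps\le\rho$. (iii) \emph{Sampling fluctuation} of $\frac1n\sum_{i\text{ clean}}\mathrm{clip}_{[\widehat a,\widehat b]}(X_i)$ around its mean: these summands have variance at most $\sigma^2$ (clipping is $1$-Lipschitz, hence variance-reducing) and range at most $2t'$, so Bernstein's inequality gives deviation $O\!\big(\sigma\sqrt{\tfrac{\log(1/\delta)}{n}}+t'\tfrac{\log(1/\delta)}{n}\big)$, whose second term is $O\!\big(\sigma(\log(1/\delta)/n)^{1-1/(2k)}\big)$ since $\rho\ge\log(1/\delta)/n$. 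Summing (i)--(iii) and substituting $\rho=\Theta\!\big(\eps+\frac{\log(1/\delta)}{n}+\frac{\log n}{n}\big)$ into $\rho^{1-1/(2k)}$ reproduces the term $\max(\eps,\log(1/\delta)/n)^{1-1/(2k)}+(\log n/n)^{1-1/(2k)}$ together with the stated $\sqrt{\log(1/\delta)/n}$ term.

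The main obstacle I anticipate is making steps (i) and (ii) rigorous with a \emph{data-dependent} clipping window: because $\widehat a,\widehat b$ are themselves functions of the corrupted sample, $\mathrm{clip}_{[\widehat a,\widehat b]}$ cannot be treated as a fixed function when invoking concentration for the clean points. The remedy is to establish the empirical-process bound \emph{uniformly} over all intervals $[a,b]$ whose endpoints lie in a grid of $O(n)$ candidate thresholds — this is exactly where a $\log n$ factor replaces $\log(1/\delta)$, which is why the bound carries the separate summand $(\log n/n)^{1-1/(2k)}$ — and then to condition on the high-probability event that the realized window $[\widehat a,\widehat b]$ lies in $[\mu-t',\mu+t']$, on which (i)--(iii) hold at once. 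The remaining work — fixing $c_0$, the case split according to whether $\eps$ or $\log(1/\delta)/n$ dominates, dispensing with the trivial regime $\rho>1/4$, and the Bernstein bookkeeping — is routine.
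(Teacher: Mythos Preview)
The paper does not prove this theorem at all: it is stated with the citation \cite{PrasadBR19} and used purely as a black box inside \algref{alg:lzero:stream}. There is therefore no ``paper's own proof'' to compare your attempt against.

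That said, your plan is essentially the argument of the cited source: realize the estimator as a trimmed/clipped empirical mean, use the $2k$-th moment bound via Markov to locate the trimming thresholds at distance $O(\sigma\rho^{-1/(2k)})$ from $\mu$, and then decompose the error into clipping bias, contamination shift, and Bernstein-type fluctuation. Your handling of the data-dependent window via a uniform bound over a grid of $O(n)$ candidate thresholds is exactly the mechanism that produces the separate $(\log n/n)^{1-1/(2k)}$ summand in the stated bound, and your bookkeeping for each of the three terms checks out. One small point worth tightening: in step (ii) you implicitly treat the number of contaminated samples as exactly $\eps n$, whereas under the mixture model it is $\mathrm{Bin}(n,\eps)$; absorbing the resulting Chernoff slack into the choice of $c_0$ is routine but should be said. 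Otherwise the sketch is sound and matches the intended proof from \cite{PrasadBR19}.
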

We next present the algorithm in \algref{alg:lzero:stream}. 
\begin{algorithm}[!htb]
\caption{Parameterized streaming algoritihm for distinct element estimation using robust statistics}
\alglab{alg:lzero:stream}
\begin{algorithmic}[1]
\Require{Accuracy parameter $\eps\in(0,1)$, number $C$ of coordinates that are greater than $1$}
\Ensure{$(1+\eps)$-approximation to the number of distinct elements}
\State{Let $X\in\left[\frac{F_0}{100}, F_0\right]$}
\State{Let $S\subset[n]$ be formed by sampling each item of $[n]$ with probability $p=\min\left(1,\frac{1}{100\eps^2 X}\right)$}
\State{$B\gets\O{\frac{C}{\eps}}$}
\For{$b\in[B]$}
\State{Let $S_b$ sample each item of $S$ with probability $\frac{1}{B}$}
\State{Let $f_b=F_1(S_b)$ be the total number of updates to items in $S_b$}
\EndFor
\State{$Z\gets B\cdot\RobustMeanEst(f_1,\ldots,f_b)$}
\State{\textbf{Return} $\frac{1}{p}\cdot Z$}
\end{algorithmic}
\end{algorithm}

We show that sampling with probability $p$ so that there are $\Theta\left(\frac{1}{\eps^2}\right)$ items in $S$ implies that $\frac{1}{p}\cdot|S|$ is roughly a $(1+\O{\eps})$-approximation to the total number of distinct elements. 
The statement is well-known; we include the proof for completeness. 
\begin{lemma}
\lemlab{lem:lzero:sample}
With probability at least $0.99$, we have
\[\left(1-\frac{\eps}{4}\right)\cdot F_0\le\frac{1}{p}\cdot|S|\le\left(1+\frac{\eps}{4}\right)\cdot F_0.\]
\end{lemma}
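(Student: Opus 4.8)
The plan is the standard subsampling second-moment argument. First I would let $T=\{i\in[n]:x_i\neq 0\}$ be the support of the stream, so $|T|=F_0$, and observe that coordinates outside $T$ never appear in the stream and hence are irrelevant; accordingly I interpret the ``$|S|$'' in the statement as $|S\cap T|$, the number of \emph{distinct stream elements} retained by the subsampling (elements of $S\setminus T$ contribute nothing to $F_0$, to any $f_b$, or to the output). Since each element of $T$ is kept in $S$ independently with probability $p$, the quantity $Y:=|S\cap T|=\sum_{i\in T}\mathbf{1}[i\in S]$ is a sum of $F_0$ independent $\mathrm{Bernoulli}(p)$ indicators, so $\Ex{Y}=pF_0$ and $\Var{Y}=p(1-p)F_0\le pF_0=\Ex{Y}$.

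If $p=1$, the subsampling is the identity map, $Y=F_0$ deterministically, and the claimed inequalities are trivial; so assume $p=\frac{1}{100\eps^2 X}<1$. Using the promised accuracy $X\in[F_0/100,\,F_0]$, I would record that $\Ex{Y}=pF_0=\frac{F_0}{100\eps^2 X}=\Theta\!\left(\frac{1}{\eps^2}\right)$, where the hidden constants are governed by the constant appearing in the definition of $p$ together with the factor-$100$ slack in $X$. The only property used below is that $\eps^2\,\Ex{Y}$ is bounded from below by an absolute constant that can be made as large as we wish by enlarging the constant in $p$.

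Then I would apply Chebyshev's inequality to the deviation of $Y$ from its mean:
\[
\PPr{\left|Y-\Ex{Y}\right|\ge \tfrac{\eps}{4}\,\Ex{Y}}\le \frac{\Var{Y}}{\left(\eps\,\Ex{Y}/4\right)^2}\le\frac{16}{\eps^2\,\Ex{Y}},
\]
which is at most $0.01$ as soon as $\eps^2\,\Ex{Y}\ge 1600$, and this holds once the constant in $p$ is taken sufficiently large. On the complementary $0.99$-probability event we have $(1-\tfrac{\eps}{4})\,\Ex{Y}\le Y\le(1+\tfrac{\eps}{4})\,\Ex{Y}$; dividing by $p$ and using $\Ex{Y}/p=F_0$ gives $(1-\tfrac{\eps}{4})F_0\le\frac{1}{p}|S|\le(1+\tfrac{\eps}{4})F_0$, which is exactly the claim.

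There is essentially no obstacle: the statement is a textbook fact, so the only ``hard'' point is cosmetic bookkeeping of the absolute constants, i.e. checking that the constant in $p=\min(1,\frac{1}{100\eps^2 X})$, combined with $\frac{F_0}{100}\le X\le F_0$, forces $\eps^2\,\Ex{Y}$ to exceed the threshold Chebyshev needs for failure probability $\le 0.01$; if the constant as written were too small one would simply replace it by a larger absolute constant, which leaves $|S|=\Theta(1/\eps^2)$ and does not affect any downstream space bound.
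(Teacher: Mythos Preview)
Your proposal is correct and follows essentially the same approach as the paper: interpret $|S|$ as the number of distinct stream elements surviving the subsample, compute the mean and variance of this binomial sum, and apply Chebyshev. Your remark that the explicit constant in $p=\min\bigl(1,\tfrac{1}{100\eps^2 X}\bigr)$ is not literally large enough to drive the Chebyshev failure probability below $0.01$ (and that one simply enlarges it) is accurate and mirrors the paper's own loose handling of constants in this lemma.
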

\begin{proof}
Let $N$ be the set of distinct elements in the stream. 
Then we have 
\[\Ex{|S|}=\frac{1}{p}\sum_{i\in N} p=|N|,\]
and 
\[\Ex{|S|^2}\le\frac{1}{p^2}\sum_{i\in N} p\le\frac{|N|}{p}\le\frac{\eps^2}{10^6}\cdot|N|^2,\]
for $p\ge\frac{10^6|N|}{\eps^2}$. 
Therefore, by Chebyshev's inequality, we have with probability at least $0.99$,
\[\left\lvert\frac{1}{p}\cdot|S|-|N|\right\rvert\le\frac{\eps}{4}\cdot F_0.\]
\end{proof}
We now justify the correctness of \algref{alg:lzero:stream}. 
\begin{lemma}
If $F_0(S)\ge\frac{1}{\eps^2}$ and $C\le\frac{1}{4\eps}$, then \algref{alg:lzero:stream} provides a $(1+\eps)$-approximation to the number of distinct elements in the stream with probability at least $0.98$. 
\end{lemma}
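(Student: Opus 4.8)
The plan is to run \algref{alg:lzero:stream} and, conditioning on a constant number of good events, to show that its output $\frac1p Z$ lies within $(1\pm\eps)F_0$. Three ingredients combine: the subsampling bound of \lemref{lem:lzero:sample}; the observation that items of frequency larger than $1$ that survive the subsample are too few to affect the estimate; and the robust mean estimation guarantee \thmref{thm:robust:mean:est}, applied with the (few) buckets that happen to receive such an item playing the role of the $\eps_0$-fraction of corrupted samples.

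First I would condition on the event of \lemref{lem:lzero:sample} (probability $\ge 0.99$), so that $\frac1p|S|=(1\pm\tfrac\eps4)F_0$. Since $F_0\ge\frac1{\eps^2}$, the rate $p=\frac1{100\eps^2X}$ satisfies $p\le1$ and $\Ex{|S|}=pF_0=\Theta(\frac1{\eps^2})$, so by a further Chebyshev step $|S|=\Theta(\frac1{\eps^2})$. Split $S=S_{=1}\cup S_{>1}$ into the sampled universe elements of stream-frequency exactly $1$ and of frequency $>1$, and set $M=|S_{=1}|$. Then $|S_{>1}|\le C\le\frac1{4\eps}$ deterministically, and $|S_{>1}|$ is dominated by a binomial of mean $\le Cp$; using $C\le\frac\eps4 F_0$ (which follows from $C\le\frac1{4\eps}$ and $F_0\ge\frac1{\eps^2}$) together with a Chebyshev bound on that binomial, with probability $\ge0.99$ we get $\frac1p|S_{>1}|=\O{\eps F_0}$. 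Hence $\frac1p M$ lies within $\O{\eps F_0}$ of $F_0$, and it remains only to show that $\frac1p\cdot B\cdot\RobustMeanEst(f_1,\dots,f_B)$ estimates $\frac1p M$ to within additive $\O{\eps F_0}$.

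Next I would set up the robust mean instance. As each $S_b$ is an \emph{independent} rate-$\frac1B$ subsample of $S$, the counts $g_b:=|S_b\cap S_{=1}|$ are i.i.d.\ $\mathrm{Binomial}(M,\frac1B)$, with $f_b=g_b$ for every bucket receiving no element of $S_{>1}$ and $f_b\ge g_b$ (an arbitrary nonnegative perturbation) otherwise --- matching the $\eps_0$-corrupted model of \thmref{thm:robust:mean:est}. The number of corrupted buckets is at most the number of $(\text{element},\text{bucket})$ incidences between $S_{>1}$ and the $B$ buckets, which has mean $|S_{>1}|$ and, by a Chernoff bound, is $\O{Cp}$ with probability $\ge0.99$; since $B=\Theta(C/\eps)$ this yields corruption fraction $\eps_0=\O{\eps p}\le\O{\eps}$. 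The clean distribution $\mathrm{Binomial}(M,\frac1B)$ has mean $\frac MB=\Theta(\frac1{\eps C})=\Omega(1)$ --- here the hypothesis $C\le\frac1{4\eps}$ is used a second time --- and such a binomial is $2k$-moment bounded for every $k$ with a constant depending only on $k$, with variance $\sigma^2\le\frac MB$; so \thmref{thm:robust:mean:est} applies with $n=B$ samples, corruption $\eps_0$, a small constant failure probability, and moment order $k$ chosen below, giving $|\widehat\mu-\tfrac MB|=\O{\sigma}\cdot\bigl(\eps_0^{1-1/2k}+(\log B/B)^{1-1/2k}+\sqrt{1/B}\bigr)$.

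The main obstacle is the last step: bounding $\frac1p|B\widehat\mu-M|=\frac Bp|\widehat\mu-\tfrac MB|\le\frac{B\sigma}{p}\cdot(\text{error factor})$ by $\O{\eps F_0}$. The $\sqrt{1/B}$ term is harmless, since $\frac{B\sigma}{p}\sqrt{1/B}=\frac{\sqrt M}{p}=\O{\sqrt{F_0/p}}=\O{\eps F_0}$ using $\frac1p=100\eps^2X\le100\eps^2 F_0$, and the $(\log B/B)^{1-1/2k}$ term is no larger up to a $\polylog$ factor. The binding term is $\frac{B\sigma}{p}\,\eps_0^{1-1/2k}=\O{F_0\sqrt{\eps C}}\cdot\eps_0^{1-1/2k}$; here I would substitute $\eps_0=\O{\eps p}$ with $\eps p\le\frac1{4C}$ (since $Cp\le\frac1{4\eps}$) and $\sqrt{\eps C}\le\frac12$, and verify that for $k$ a sufficiently large constant (or, to cover all $\eps\in(0,1)$ rather than $\eps$ below a fixed constant, $k=\Theta(\log\tfrac1\eps)$, at the cost of a $\polylog\tfrac1\eps$ factor absorbed by an initial rescaling of $\eps$) the product is $\O{\eps F_0}$. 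A union bound over the $\O{1}$ bad events --- \lemref{lem:lzero:sample}, the two binomial tails, and the failure of $\RobustMeanEst$ --- leaves probability $\ge0.98$, and after rescaling $\eps$ by the absolute constant lost above, the output is a $(1\pm\eps)$-approximation to $F_0$.
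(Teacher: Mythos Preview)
Your proposal is correct and follows the same approach as the paper: subsample so that $|S|=\Theta(1/\eps^2)$, split $S$ into frequency-$1$ items and frequency-$>1$ items, regard the buckets receiving a frequency-$>1$ item as the corrupted samples in \thmref{thm:robust:mean:est}, and bound the error using the binomial variance of the clean part.

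The main difference is that you work harder than necessary on the corruption rate. The paper simply uses the deterministic bound $|S_{>1}|\le C$, so each bucket is corrupted with probability at most $C/B=\O{\eps}$, and then the binding term is $\frac{B\sigma}{p}\cdot\O{\eps}=\O{F_0\sqrt{\eps C}}\cdot\O{\eps}\le\O{\eps F_0}$ since $\sqrt{\eps C}\le\tfrac12$. Your refinement $\eps_0=\O{\eps p}$ via a Chernoff bound on $|S_{>1}|$ is not needed (and in fact does not hold as stated when $Cp\ll 1$), and the subsequent detour through $\eps p\le 1/(4C)$ and the choice $k=\Theta(\log\tfrac1\eps)$ can be avoided entirely. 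Your worry about the exponent $1-1/2k$ is legitimate, but the paper simply applies \thmref{thm:robust:mean:est} as if $k=\infty$, so you are being more careful than the paper here; the cleanest fix is to note that the binomial is sub-Gaussian, for which the $\O{\sigma\eps_0}$ rate is achievable, rather than pushing $k$ to grow with $1/\eps$.
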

\begin{proof}
Let $f$ be the frequency vector defined over the stream and let $Z$ be defined as in \algref{alg:lzero:stream}. 
Let $N_1$ be the set of items with frequency one in $S$ and $N_{>1}$ be the set of items with frequency larger than $1$ in $S$. 
For any fixed $b\in[B]$, let $S_b(N_1)$ denote the subset of $N_1$ sampled into $S_b$ and similarly, let $S_b(N_{>1})$ denote the subset of $N_{>1}$ sampled into $S_b$. 
The probability that $|S_b(N_{>1})|=0$ is $\left(1-\frac{1}{B}\right)^C\le\frac{\eps}{4}$ for sufficiently large $B=\O{\frac{C}{\eps}}$. 
Moreover, the distribution of $|S_b(N_1)|$ is a binomial random variable with $N_1$ trials and $\frac{1}{B}$ success rate.  
Hence, $\Ex{|S_b(N_1)|}=\frac{1}{B}\cdot(|N_1|)$, $\Var{|S_b(N_1)|}\le\frac{1}{B}\cdot|N_1|$, and all moments of $|S_b(N_1)|$ are finite. 
Therefore, by the guarantees of $\RobustMeanEst$ in \thmref{thm:robust:mean:est}, we have that with high probability,
\[\left\lvert Z-|N_1|\right\rvert\le B\sqrt{\frac{|N_1|}{B}}\cdot\frac{\eps}{4}\le\sqrt{\frac{C}{\eps}\cdot F_0(S)}\cdot\frac{\eps}{4}\le\frac{\eps}{4}\cdot F_0(S).\]
Since $F_0(S)\ge\frac{1}{\eps^2}$ and $C\le\frac{1}{4\eps}$, then $|N_1|$ is a $\left(1+\frac{\eps}{4}\right)$-approximation to $F_0(S)$. 
Thus,
\[\left\lvert Z-F_0(S)\right\rvert\le\frac{\eps}{2}\cdot F_0(S).\]
Finally by \lemref{lem:lzero:sample}, we have with probability at least $0.99$,
\[\left(1-\frac{\eps}{4}\right)\cdot F_0\le\frac{1}{p}\cdot F_0(S)\le\left(1+\frac{\eps}{4}\right)\cdot F_0,\]
so that with probability at least $0.98$,
\[\left\lvert\frac{1}{p}\cdot Z-\|f\|_0\right\rvert\le \eps\cdot\|f\|_0.\]
\end{proof}
Next, we analyze the space complexity of \algref{alg:lzero:stream}. 
\begin{lemma}
For a stream with length polynomially bounded in $n$, \algref{alg:lzero:stream} uses $\O{\frac{C}{\eps}\log n}$ bits of space. 
\end{lemma}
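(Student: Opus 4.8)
The plan is to bound the space of \algref{alg:lzero:stream} component by component; the dominant contribution will come from the $B = \O{C/\eps}$ counters maintained in the main loop, and everything else will be $\O{\log n}$. First I would observe that the preprocessing quantity $X$, a constant-factor approximation to $F_0$, can be maintained in $\O{\log n}$ bits by running in parallel the one-pass distinct-elements sketch of~\cite{KaneNW10,Blasiok20} with its accuracy parameter set to a constant, i.e.\ the $\eps = \Theta(1)$ instantiation of their $\O{1/\eps^2 + \log n}$-bit algorithm. Next, the universe subsampling that defines $S$ and the partition $S_1,\ldots,S_B$ is implemented with hash functions rather than by storing the sampled sets explicitly: one hash decides, for each $i \in [n]$, whether $i \in S$ (say $h(i) \le p$), and a second hash decides which block $S_b$ a surviving element lands in. By the standard derandomization used throughout the distinct-elements literature (bounded-independence families, or alternatively Nisan's pseudorandom generator applied to the whole streaming algorithm), the correctness analysis of the preceding lemmas goes through with hash seeds of length only $\O{\log n}$, so this contributes $\O{\log n}$ to the space.

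The main term comes from the loop: for each $b \in [B]$ the algorithm maintains $f_b = F_1(S_b)$, the number of stream updates landing on items routed to block $b$. Since the stream has length $m = \poly(n)$, each $f_b$ is a nonnegative integer bounded by $m$, hence fits in $\O{\log m} = \O{\log n}$ bits, and processing one update $(i,\Delta)$ only requires evaluating the two hash functions on $i$ to determine whether $i \in S$ and, if so, which counter to update, with no additional persistent state. Thus the $B = \O{C/\eps}$ counters together occupy $\O{\frac{C}{\eps}\log n}$ bits. Finally, the post-processing step $Z \gets B \cdot \RobustMeanEst(f_1,\ldots,f_B)$ is executed at the end of the stream on the $B$ already-stored integers; since $\RobustMeanEst$ is an elementary statistical estimator over its $B$ inputs (e.g.\ trimming or median-of-means), it can be evaluated in $\O{B\log n} = \O{\frac{C}{\eps}\log n}$ additional bits, and the returned value $\frac1p Z$ is a single number of the same magnitude. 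Summing these contributions, namely $\O{\log n}$ for $X$, $\O{\log n}$ for the hash seeds, and $\O{\frac{C}{\eps}\log n}$ for the counters and for $\RobustMeanEst$, yields the claimed bound.

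The part that needs care is not the arithmetic but the randomness: a naive implementation that samples each of the $n$ universe elements independently into $S$ and into its block would pay $\Omega(n)$ bits for the random bits alone. The key point to spell out is therefore that the correctness lemmas for \algref{alg:lzero:stream} only invoke low-order moments of the block sizes, namely pairwise independence suffices for the variance bounds on $|S_b(N_1)|$ and the boundedness of these random variables supplies the finite higher moments required by \thmref{thm:robust:mean:est}, so a limited-independence hash family, or a black-box application of Nisan's generator to the streaming algorithm, keeps the total seed length at $\O{\log n}$ without weakening any guarantee proved above. I expect this to be the only genuinely nontrivial point; the rest is the component-wise accounting described above.
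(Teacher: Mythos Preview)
Your proposal is correct and follows essentially the same approach as the paper: the dominant cost is the $B=\O{C/\eps}$ counters, each storing an integer bounded by the stream length $m=\poly(n)$ and hence fitting in $\O{\log n}$ bits. The paper's own proof is in fact a single sentence stating exactly this; your additional discussion of the constant-factor $F_0$ sketch, the hash-based sampling, and the derandomization of the random bits is more careful than the paper, which silently sweeps those auxiliary costs under the rug.
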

\begin{proof}
Note that \algref{alg:lzero:stream} maintains $B=\O{\frac{C}{\eps}}$ buckets, each represented by a counter using $\O{\log n}$ bits of space for a stream with length polynomially bounded in $n$. 
\end{proof}

Putting together the correctness of approximation and the space bounds, we have the following:

\begin{theorem}
Given an accuracy parameter $\eps\in(0,1)$, a parameter $C\le\frac{1}{4\eps}$ for the number of coordinates with frequency more than $1$, and a number of distinct elements that is at least $\Omega\left(\frac{1}{\eps^2}\right)$, there exists a one-pass streaming algorithm that uses $\O{\frac{C}{\eps}\log n}$ bits of space and provides a $(1+\eps)$-approximation to the number of distinct elements in the stream with probability at least $0.98$. 
\end{theorem}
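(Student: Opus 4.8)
The plan is to instantiate \algref{alg:lzero:stream} and assemble the three lemmas that immediately precede the statement. First I would invoke \lemref{lem:lzero:sample}: after computing a cheap constant-factor estimate $X\in[F_0/100,F_0]$ (obtainable in $\O{\log n}$ bits), subsampling each universe element into $S$ with probability $p=\min(1,1/(100\eps^2X))$ forces $F_0(S)=\Theta(1/\eps^2)$ while $\frac{1}{p}|S|$ remains a $(1\pm\eps/4)$-approximation to $F_0$ with probability at least $0.99$. Hence it suffices to estimate $F_0(S)$ to within a $(1\pm\eps/2)$ factor and then rescale by $1/p$.

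Next I would decompose $F_0(S)=|N_1|+|N_{>1}|$ into surviving universe elements of frequency exactly one versus frequency more than one, and observe that $|N_{>1}|\le C\le 1/(4\eps)\le\tfrac{\eps}{4}F_0(S)$ (using $F_0(S)\ge1/\eps^2$), so $|N_1|$ is already a $(1\pm\eps/4)$-approximation to $F_0(S)$; it remains to estimate $|N_1|$. Hashing the surviving elements into $B=\O{C/\eps}$ buckets and letting $f_b$ be the total update count landing in bucket $b$, call a bucket \emph{corrupted} if it receives an element of $N_{>1}$; since there are at most $C$ such elements among $B=\Omega(C/\eps)$ buckets, only an $\O{\eps}$ fraction of buckets are corrupted. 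On a clean bucket $f_b$ equals $|S_b(N_1)|$, a $\mathrm{Binomial}(|N_1|,1/B)$ random variable with mean $|N_1|/B$, variance at most $|N_1|/B$, and all moments finite, so the clean marginal is moment-bounded with $\sigma^2=|N_1|/B$.

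I would then feed $(f_1,\dots,f_B)$ to $\RobustMeanEst$ and apply \thmref{thm:robust:mean:est} with corruption level $\Theta(\eps)$ and $n=B$ samples, obtaining (up to the choice of moment order and logarithmic factors) $|\widehat\mu-|N_1|/B|\le\O{\sigma\eps}$, so that $Z=B\widehat\mu$ satisfies $|Z-|N_1||\le\O{\eps\sqrt{B|N_1|}}=\O{\eps\sqrt{(C/\eps)\,F_0(S)}}$. Since $C/\eps\le 1/(4\eps^2)\le F_0(S)/4$, this is $\O{\eps}\cdot F_0(S)$, so $Z$ is a $(1\pm\O{\eps})$-approximation to $|N_1|$ and hence to $F_0(S)$; chaining with \lemref{lem:lzero:sample} and a union bound over the $\O{1}$ failure events gives a $(1\pm\eps)$-approximation to $F_0$ with probability at least $0.98$ after adjusting constants. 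For the space bound, the algorithm stores $B=\O{C/\eps}$ counters plus the constant-factor estimate, each using $\O{\log n}$ bits since the stream has length $\poly(n)$, for $\O{\frac{C}{\eps}\log n}$ bits total. The one delicate point — and the only place the two hypotheses $C\le 1/(4\eps)$ and $F_0=\Omega(1/\eps^2)$ both get used — is calibrating $B=\Theta(C/\eps)$ (not $\Theta(C)$, which is exactly why the single-pass variant pays an extra $1/\eps$) so that the corrupted fraction is $\O{\eps}$ and simultaneously the robust-estimation error $\O{\eps\sqrt{B|N_1|}}$ is absorbed into $\eps\,F_0(S)$.
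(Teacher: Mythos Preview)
Your proposal is correct and follows essentially the same approach as the paper: the paper proves the theorem by combining two immediately preceding lemmas (correctness via \RobustMeanEst{} on the $B=\O{C/\eps}$ bucket counts, and the $\O{(C/\eps)\log n}$ space bound), and your argument reproduces exactly these two pieces with the same decomposition $F_0(S)=|N_1|+|N_{>1}|$, the same $\O{\eps}$ corrupted-bucket fraction, and the same chain $\sqrt{(C/\eps)F_0(S)}\cdot\eps=\O{\eps}\cdot F_0(S)$. If anything, you are slightly more explicit than the paper in spelling out where each hypothesis is consumed.
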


\subsection{Subsampling}
In this section, we present a two-pass streaming algorithm based on subsampling. 
We give an algorithm for when the number of coordinates with frequency greater than one is relatively ``large'' in \algref{alg:lzero:two:passes} and for the case where the number of coordinates with frequency greater than one is ``small'' in \algref{alg:lzero:two:passes:small}.
\begin{algorithm}[!htb]
\caption{Parameterized distinct element estimation over two-pass streams}
\alglab{alg:lzero:two:passes}
\begin{algorithmic}[1]
\Require{Accuracy parameter $\eps\in(0,1)$, number $C$ of coordinates that are greater than $1$}
\Ensure{$(1+\eps)$-approximation to the number of distinct elements, given two passes over the data}
\State{$L\gets\O{\log\frac{1}{\eps}}$, $B\gets C\cdot\polylog\left(\frac{n}{\eps}\right)$, $T\gets\frac{100}{\eps^2}\log^2\frac{n}{\eps}$}
\For{$\ell\in[L]$}
\State{Form $S_\ell$ by sampling each item of $[n]$ with probability $\frac{1}{2^{2\ell-2}}$}
\State{Run $\O{\log n}$ instances of $\CountSketch$ on $S_\ell$ with $B$ buckets}
\Comment{First pass}
\EndFor
\For{each heavy-hitter $i\in[n]$ reported by $\CountSketch$ on any $S_\ell$}
\State{Track $f_i$ exactly}
\Comment{Second pass}
\EndFor
\For{$\ell\in[L]$}
\State{$\widehat{M_\ell}=0$}
\For{$j\in S_\ell$}
\If{$f_j\ge T$}
\State{$\widehat{M_1}\gets\widehat{M_1}+(f_j-1)$}
\ElsIf{$f_j\in\left[\frac{T}{2^{\ell}},\frac{T}{2^{\ell-1}}\right)$}
\State{$\beta\gets\max\left(0,\ell-\log\left(10\left(\frac{\sqrt{C}}{\eps}\right)\log\frac{n}{\eps}\right)\right)$}
\State{$\widehat{M_\ell}\gets\widehat{M_\ell}+2^{2\beta}\cdot\left(f_j-1\right)$}
\EndIf
\EndFor
\EndFor
\State{\textbf{Return} $F_1(S_1)-\sum_{\ell\in[L]}\widehat{M_\ell}$}
\end{algorithmic}
\end{algorithm}
%We first justify the correctness of \algref{alg:lzero:two:passes}. 
\begin{lemma}
\lemlab{lem:twopass:stream:correct}
Let $Z$ be the output of \algref{alg:lzero:two:passes}. 
Then with probability at least $\frac{2}{3}$, we have that
\[|Z-F_0(S)|\le\eps F_0(S).\]
\end{lemma}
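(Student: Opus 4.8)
The plan is to reduce the claim to estimating the \emph{excess mass} $D := F_1(S)-F_0(S)=\sum_{j:\,f_j\ge1}(f_j-1)$, the total over-count contributed by universe items with frequency larger than $1$. Since the stream is insertion-only and $S_1$ samples $[n]$ at rate $\tfrac{1}{2^{2\cdot1-2}}=1$, i.e.\ $S_1=[n]$, the counter $F_1(S_1)$ maintained by \algref{alg:lzero:two:passes} equals the true stream length $F_1(S)$, so $Z-F_0(S)=D-\widehat D$ where $\widehat D:=\sum_{\ell\in[L]}\widehat M_\ell$. Before anything else I would reduce to the case $F_0(S)=\Theta(1/\eps^2)$: a standard preliminary universe subsampling brings $F_0$ into that range and, by \lemref{lem:lzero:sample}, preserves it up to $(1\pm\tfrac{\eps}{4})$, while leaving $\O{C}$ items of frequency $>1$ (frequencies are untouched by subsampling universe items). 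After this it suffices to show $|\widehat D-D|\le\O{\eps}\cdot F_0(S)$ with probability $\ge 1-\tfrac15$.

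Next I would partition the outlier items ($f_j>1$) by frequency into \emph{heavy} items $f_j\ge T$, the $L$ geometric \emph{bands} $f_j\in[T/2^{\ell},T/2^{\ell-1})$ for $\ell\in[L]$, and \emph{light} items $1<f_j<T/2^{L-1}$, so that $D=D_{\ge T}+\sum_{\ell\in[L]}D_{[\ell]}+D_{\mathrm{light}}$; these ranges are pairwise disjoint and each item is classified by its \emph{exact} pass-two frequency, so nothing is double counted. Heavy items: there are $\O{C}$ of them, so with $B=C\cdot\polylog(n/\eps)$ buckets the residual-$\ell_2$-tail bound for \CountSketch\ (the stated guarantee of \cite{charikar2002finding}) gives noise floor below $T/2$ on $S_1$; hence all heavy items are reported in pass one, pass two recovers their frequencies exactly, and $\widehat M_1=D_{\ge T}$ exactly (an item reported but with true frequency $<T$ is simply re-classified). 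Light items: there are $\O{C}$ of them, each with excess $<T/2^{L-1}$, so $D_{\mathrm{light}}=\O{C\cdot T/2^{L-1}}$, which for $L=\Theta(\log\tfrac1\eps)$ and $T=\Theta(\eps^{-2}\log^2(n/\eps))$ is $\O{\eps F_0(S)/\polylog(n/\eps)}$ and is simply dropped.

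The crux is the intermediate bands. A band-$\ell$ item has frequency $\Theta(T/2^{\ell})$, and the algorithm estimates its contribution from a universe-subsampled copy $S_\ell$ whose effective sampling probability $p_\ell$ satisfies $1/p_\ell=2^{2\beta}$ for $\beta=\max\!\big(0,\ell-\log_2(10(\sqrt C/\eps)\log(n/\eps))\big)$, using $\widehat M_\ell=\tfrac{1}{p_\ell}\sum_{j\in S_\ell,\ \mathrm{band}\ \ell}(f_j-1)$. I would show: (i) $F_0(S_\ell)$ concentrates around $p_\ell F_0(S)$ (Chernoff or Chebyshev on the subsample size), so the \CountSketch\ noise floor $\sqrt{F_0(S_\ell)/B}$ drops below the band-$\ell$ frequency $T/2^{\ell}$, whence \emph{every} surviving band-$\ell$ item is reported in pass one and its frequency is measured exactly in pass two; (ii) therefore $\Ex{\widehat M_\ell}=D_{[\ell]}$; and (iii) $\Var{\widehat M_\ell}\le\tfrac{1}{p_\ell}\sum_{\mathrm{band}\ \ell}(f_j-1)^2\le\tfrac{1}{p_\ell}\,N_\ell\,(T/2^{\ell-1})^2$ where $N_\ell=\O{C}$ is the number of band-$\ell$ items, and plugging in $1/p_\ell=2^{2\beta}$ together with the values of $T$ and $B$ makes this $\O{\eps^{-2}/(L\cdot\polylog(n/\eps))}$, so Chebyshev gives $|\widehat M_\ell-D_{[\ell]}|\le\O{\eps F_0(S)/L}$ with probability $\ge 1-\tfrac{1}{100L}$. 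A union bound over the $L$ bands, the $\O{\log n}$ \CountSketch\ tables (each correct with probability $1-1/\poly(n)$), the $\O{L}$ subsample-size concentration events, and the preliminary subsampling then yields
\[
|\widehat D-D|\ \le\ |D_{\mathrm{light}}|\ +\ \sum_{\ell\in[L]}|\widehat M_\ell-D_{[\ell]}|\ \le\ \O{\eps}\cdot F_0(S),
\]
and, combining with the $(1\pm\tfrac{\eps}{4})$ loss from \lemref{lem:lzero:sample} and adjusting the hidden constants so the total is at most $\eps F_0(S)$, we conclude $|Z-F_0(S)|\le\eps F_0(S)$ with probability at least $\tfrac23$.

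The main obstacle is the simultaneous calibration of $B$, $T$, $L$ and the band-dependent rates $p_\ell$: subsampling band $\ell$ more aggressively makes its items easier to expose as heavy hitters but multiplies $\Var{\widehat M_\ell}$ by $1/p_\ell$, so these goals pull in opposite directions, and the $\polylog(n/\eps)$ slack in $B$ and $T$ has to be exactly what is needed to make both the ``findability'' inequality (noise floor $\le$ band frequency at every level) and the ``variance budget'' $\sum_{\ell}\Var{\widehat M_\ell}\ll(\eps F_0(S))^2$ hold at once. The remaining care is routine: confirming that \CountSketch\ false positives are harmless (pass two gives exact frequencies, so each reported item is re-bucketed by its true count), and checking the concentration of every $F_0(S_\ell)$ and the success of all $\O{\log n}$ hash tables so the per-band union bound goes through.
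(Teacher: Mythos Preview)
Your proposal follows essentially the same approach as the paper: reduce to estimating the excess mass $M=F_1-F_0$, partition the outliers into geometric frequency bands $\Gamma_\ell$, use \CountSketch\ on the subsampled substream $S_\ell$ to expose band-$\ell$ items (with exact recovery in the second pass), and control each $|\widehat M_\ell - M_\ell|$ by an unbiasedness/variance/Chebyshev argument followed by a union bound over the $L$ bands. The additions you make---the explicit preliminary reduction to $F_0=\Theta(1/\eps^2)$, the separate ``light items'' tail, and the remark that pass-two exact counts neutralize \CountSketch\ false positives---are clarifications the paper either treats implicitly or glosses over, not a different route.
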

\begin{proof}
Let $S$ be the data stream. 
For each $i\in[n]$, let $m_i=\max(0,f_i-1)$ so that $t_i$ is the excess mass of $f_i$. 
Let $M=\sum_{i\in[n]} m_i$ so that $F_0(S)=F_1(S)-M$. 
Thus to achieve a $(1+\eps)$-approximation to $F_0(S)$, it suffices to obtain an additive $\eps\cdot F_0$ approximation to $M$. 

Let level set $\Gamma_1=\left\{i\in[n]:f_i\ge\frac{T}{2}\right\}$ consist of the coordinates $i\in[n]$ with frequency at least $\frac{T}{2}$. 
Similarly, for $\ell>1$, let level set $\Gamma_\ell=\left\{i\in[n]:\left[\frac{T}{2^{\ell}},\frac{T}{2^{\ell-1}}\right)\right\}$ consist of the coordinates $i\in[n]$ with frequency in the interval $\left[\frac{T}{2^{\ell}},\frac{T}{2^{\ell-1}}\right)$. 
Let $M_\ell=\sum_{i\in\Gamma_\ell} f_i$ be the sum of the contributions of the items in level set $\Gamma_\ell$. 
Finally, let $\Gamma$ be the set of all coordinates with value greater than $1$. 

For each $i\in M_\ell$, we have that $i$ is sampled with probability $p_\ell=\frac{1}{2^{\beta_\ell}}$, where 
\[\beta_\ell=\max\left(0,\ell-\log\left(10\left(\frac{\sqrt{C}}{\eps}\right)\log\frac{n}{\eps}\right)\right).\] 
Hence, we consider casework on whether $\ell\le\log\left(10\left(\frac{\sqrt{C}}{\eps}\right)\log\frac{n}{\eps}\right)$ or whether $\ell>\log\left(10\left(\frac{\sqrt{C}}{\eps}\right)\log\frac{n}{\eps}\right)$. 

Suppose $\ell\le\log\left(10\left(\frac{\sqrt{C}}{\eps}\right)\log\frac{n}{\eps}\right)$, so that $p_\ell=1$. 
Then $i\in S_\ell$ for any $i\in\Gamma_\ell$. 
Let $\calG_r$ denote the probability that $i$ is not hashed by the $r$-th instance of $\CountSketch$ to a bucket containing any of the other items in $\Gamma$, so that we have $\PPr{\calE_r}\ge\frac{2}{3}$ since $|\Gamma|\le C$ and we use $B=C\cdot\polylog\left(\frac{n}{\eps}\right)$ buckets in each instance of $\CountSketch$. 
Then let $\calE$ denote the probability that for all items $i\in S_\ell\cap\Gamma_\ell$, there exist $\O{\log n}$ instances of $\CountSketch$ such that $i$ is not hashed to a bucket containing any of the other items in $\Gamma_\ell\cap S_\ell$, so that we have $\PPr{\calE}=\PPr{\calG_1\vee\calG_2\vee\ldots}\ge1-\frac{1}{\poly(n)}$. 
Conditioning on $\calG_r$, the variance for the estimation of $f_i$ by the $r$-th instance $\CountSketch$ is at most $\frac{1}{B}\cdot\frac{100}{\eps^2}$. 
Since $f_i\ge\frac{T}{2^{\ell-1}}\ge\frac{1}{C\eps^2}\log\frac{n}{\eps}$ then $\CountSketch$ reports $f_i$ as a heavy-hitter with probability $\frac{2}{3}$. 
Therefore, we have that with high probability, $i$ is reported as a heavy-hitter by some instance of $\CountSketch$. 
Hence by a union bound over all $i\in S_\ell$, we have that with high probability, $\widehat{M_\ell}=M_\ell$. 

Next, we suppose $\ell>\log\left(10\left(\frac{\sqrt{C}}{\eps}\right)\log\frac{n}{\eps}\right)$, so that $p_\ell=\frac{1}{2^{\beta_\ell}}$, where 
\[\beta_\ell=\max\left(0,\ell-\log\left(10\left(\frac{\sqrt{C}}{\eps}\right)\log\frac{n}{\eps}\right)\right).\] 
Then for any $i\in\Gamma_\ell$, we have $i\in S_\ell$ with probability $p_\ell$. 
Let $\calE_1$ denote the event that $F_0(S_\ell)\le(10\log n)\cdot p_\ell\cdot F_0(S)$ so that $\PPr{\calE_1}\ge 1-\frac{1}{10n^2}$. 
Again, let $\calG_r$ denote the event that $i\in S_\ell$ is not hashed by the $r$-th instance of $\CountSketch$ to a bucket containing any of the other items in $\Gamma$, so that we have $\PPr{\calE_r}\ge\frac{2}{3}$ since $|\Gamma|\le C$ and we use $B=C\cdot\polylog\left(\frac{n}{\eps}\right)$ buckets in each instance of $\CountSketch$. 
Moreover, let $\calE_2$ denote the probability that for all items $i\in S_\ell\cap\Gamma_\ell$, there exist $\O{\log n}$ instances of $\CountSketch$ such that $i$ is not hashed to a bucket containing any of the other items in $\Gamma_\ell\cap S_\ell$. 
Then we have $\PPr{\calE_2}=\PPr{\calG_1\vee\calG_2\vee\ldots}\ge1-\frac{1}{\poly(n)}$. 
Conditioning on $\calE_1$ and $\calE_r$, the variance for the estimation of $f_i$ by the $r$-th instance $\CountSketch$ is at most $\frac{1}{B}\cdot\frac{100}{\eps^2}\cdot(10\log n)\cdot p_\ell$. 
Since $f_i\ge\frac{T}{2^{\ell-1}}\ge\frac{10}{\eps}\log\frac{n}{\eps}$ then $\CountSketch$ reports $f_i$ as a heavy-hitter with probability at least $\frac{2}{3}$. 
Therefore, we have
\[\Ex{\widehat{M_\ell}}=\frac{1}{p_\ell}\sum_{j\in\Gamma_\ell}p_\ell\cdot f_j=M_\ell.\]
Furthermore,
\[\Ex{(\widehat{M_\ell})^2}\le\frac{1}{p_\ell^2}\sum_{j\in\Gamma_\ell}p_\ell\cdot f_j^2\le C\cdot\frac{T^2}{2^{2\ell}}\cdot\frac{\eps^2\cdot2^{2\ell}}{\gamma CT^2\log^2\left(\frac{n}{\eps}\right)}\le\frac{\eps^2}{\gamma\log^2\left(\frac{n}{\eps}\right)}\cdot F_0^2(S),\]
for some large constant $\gamma>1$. 
Thus by Chebyshev's inequality, we have that with probability at least $1-\frac{1}{100\log\frac{n}{\eps}}$, 
\[|M_\ell-\widehat{M_{\ell}}\le\frac{\eps}{L}\cdot F_0.\]
The result then follows from union bounding over all $L$ level sets $\Gamma_1,\ldots,\Gamma_L$. 
\end{proof}
To complete the guarantees of \algref{alg:lzero:two:passes}, it remains to analyze the space complexity. 
\begin{theorem}
Given a stream $S$, an accuracy parameter $\eps\in(0,1)$, a parameter $C\ge\frac{1}{\eps}\cdot F_0(S)$ for the number of coordinates with frequency more than $1$, and a number of distinct elements that is at least $F_0(S)=\Omega\left(\frac{1}{\eps^2}\right)$, there exists a two-pass streaming algorithm that uses $C\cdot\polylog\left(\frac{n}{\eps}\right)$ bits of space and provides a $(1+\eps)$-approximation to the number of distinct elements in the stream with probability at least $0.98$. 
\end{theorem}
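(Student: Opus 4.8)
The plan is to combine the correctness guarantee already established in \lemref{lem:twopass:stream:correct} with a careful accounting of the space used across the two passes; all the genuinely delicate probabilistic work has been done, so what remains is essentially a bookkeeping exercise. By \lemref{lem:twopass:stream:correct}, the returned value $Z$ satisfies $|Z-F_0(S)|\le\eps F_0(S)$ with probability at least $\frac{2}{3}$, and one reaches the claimed $0.98$ either by running $O(1)$ independent copies of the whole procedure in parallel (still within two passes) and outputting the median, or by tightening the constants in the individual failure events tracked inside the proof of \lemref{lem:twopass:stream:correct} — all of which are already high-probability or $1-o(1)$ events apart from the final Chebyshev step — and taking a union bound. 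If $F_0(S)$ is much larger than $\frac{1}{\eps^2}$ one first subsamples the universe at $O(\log n)$ geometric rates (as in \lemref{lem:lzero:sample}) to bring $F_0$ into the range $\Theta\!\left(\frac{1}{\eps^2}\right)$ while preserving it up to $(1\pm\eps)$; this only adds $O(\log n)$ to the number of scales and hence only more $\polylog\!\left(\frac{n}{\eps}\right)$ factors.

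For the space bound I would proceed scale by scale. The algorithm keeps $L=O\!\left(\log\frac{1}{\eps}\right)$ sampling scales, and at scale $\ell$ it runs $O(\log n)$ independent copies of $\CountSketch$ on the substream $S_\ell$, each with $B=C\cdot\polylog\!\left(\frac{n}{\eps}\right)$ buckets. Since the stream has length $\poly(n)$, each bucket is an integer counter of magnitude $\poly(n)$ and fits in $O(\log n)$ bits, so a single $\CountSketch$ copy uses $O(B\log n)=C\cdot\polylog\!\left(\frac{n}{\eps}\right)$ bits (consistent with the bound quoted from \cite{charikar2002finding}). Multiplying by the $O(\log n)$ copies per scale and the $L$ scales still gives $C\cdot\polylog\!\left(\frac{n}{\eps}\right)$ bits for the first pass, since $L$ and the extra logarithmic factors are absorbed into the $\polylog\!\left(\frac{n}{\eps}\right)$. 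The randomness defining the $L$ subsampling maps and the $\CountSketch$ hash functions can be supplied with limited independence or via a pseudorandom generator using only $\polylog\!\left(\frac{n}{\eps}\right)$ bits, which does not change the bound.

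For the second pass the only additional state is the exact frequency counter $f_i$ for each coordinate $i$ that some $\CountSketch$ copy, on some scale, flagged as a heavy hitter. Each $\CountSketch$ table with $B$ buckets reports at most $O(B)$ heavy-hitter candidates, so across all $O(\log n)$ copies and $L$ scales the number of tracked coordinates is $O(BL\log n)=C\cdot\polylog\!\left(\frac{n}{\eps}\right)$; storing the identity of each such coordinate costs $O(\log n)$ bits and its exact count costs $O(\log n)$ bits, so the second pass also uses $C\cdot\polylog\!\left(\frac{n}{\eps}\right)$ bits. Summing the two passes yields the claimed total of $C\cdot\polylog\!\left(\frac{n}{\eps}\right)$ bits, completing the proof.

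I do not expect a real obstacle, since the hard analysis lives in \lemref{lem:twopass:stream:correct}; the one place that requires care is the second-pass bound — one must be sure the set of heavy-hitter candidates output by $\CountSketch$ over all scales and repetitions has size only $C\cdot\polylog\!\left(\frac{n}{\eps}\right)$ rather than something like $\Theta(F_0)$, which is immediate from the fact that each table has only $B=C\cdot\polylog\!\left(\frac{n}{\eps}\right)$ buckets and hence at most that many reported coordinates. A minor additional point is simply to confirm that the $\polylog$ factors arising from $L$, the $\CountSketch$ repetitions, the $O(\log n)$-bit counter widths, and the $B=C\cdot\polylog\!\left(\frac{n}{\eps}\right)$ bucket count all compose into a single $\polylog\!\left(\frac{n}{\eps}\right)$ factor, which they do.
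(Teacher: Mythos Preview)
Your proposal is correct and follows essentially the same approach as the paper: correctness is inherited from \lemref{lem:twopass:stream:correct}, and the space bound comes from counting the $B=C\cdot\polylog\!\left(\frac{n}{\eps}\right)$ buckets across the $\O{\log n}$ $\CountSketch$ instances and $L$ scales. Your write-up is in fact more careful than the paper's two-sentence proof, since you explicitly address the probability amplification from $\tfrac{2}{3}$ to $0.98$ and bound the number of heavy-hitter identities stored in the second pass, both of which the paper leaves implicit.
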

\begin{proof}
The proof of correctness follows from \lemref{lem:twopass:stream:correct}. 
The space complexity follows from the fact that we maintain $B$ buckets in each of the $\O{\log n}$ instances of $\CountSketch$, for $B=C\cdot\polylog\left(\frac{1}{\eps}\right)$. 
\end{proof}

We now show that for $C>\frac{1}{\eps}$, any algorithm for $(1+\eps)$-approximation to distinct elements requires $\Omega(C)$ bits of space. 
Recall that in Gap-Hamming problem, Alice is given binary vector $X\in\{0,1\}^n$ and Bob is given binary vector $Y\in\{0,1\}^n$ and the goal is to determine whether the Hamming distance between $X$ and $Y$ is either at least $\frac{n}{2}+\sqrt{n}$ or at most $\frac{n}{2}-\sqrt{n}$. 
\begin{theorem}
\thmlab{thm:gap:ham}
\cite{ChakrabartiR12}
Any communication protocol that solves the Gap-Hamming problem with probability at least $\frac{2}{3}$ requires $\Omega(n)$ bits of communication. 
\end{theorem}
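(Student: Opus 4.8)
Since this final statement is the Chakrabarti--Regev bound quoted as a black box, I will sketch the self-contained proof one would give. The plan is to work in the equivalent $\pm1$ formulation: writing $u=(-1)^x,\,v=(-1)^y\in\{-1,1\}^n$, one has $\Delta(x,y)=(n-\langle u,v\rangle)/2$, so Gap-Hamming is exactly the task of distinguishing $\langle u,v\rangle\ge 2\sqrt n$ from $\langle u,v\rangle\le-2\sqrt n$. By Yao's minimax principle it suffices to produce a distribution $\mu$ on legal inputs under which every deterministic protocol of cost $c$ errs with probability $\Omega(1)$ unless $c=\Omega(n)$. Moreover, by amplification (run $O(1)$ independent copies and take the majority), one may assume the protocol's error under $\mu$ is below any prescribed constant $\eps_0$ at the cost of only a constant factor in $c$. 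I would take $\mu$ to be the uniform distribution on $\{-1,1\}^n\times\{-1,1\}^n$ conditioned on the constant-probability event $|\langle u,v\rangle|\ge 2\sqrt n$; conditioned on this, the YES and NO cases each have probability bounded away from $0$ and $1$.

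The core is a rectangle/corruption argument. A deterministic $c$-bit protocol partitions the input space into at most $2^c$ combinatorial rectangles $A\times B$, each carrying a single output label. Call a rectangle \emph{large} if $|A|,|B|\ge 2^{-\delta n}2^n$ for a small constant $\delta>0$, and \emph{small} otherwise. Since $\mu(A\times B)\le |A||B|/|S|$ where $|S|=\Theta(2^{2n})$ is the size of the conditioning event, each small rectangle has $\mu$-mass $O(2^{-\delta n})$, so the small rectangles together carry $\mu$-mass $O(2^{c}\cdot 2^{-\delta n})=o(1)$ provided $c\le\tfrac{\delta}{2}n$; thus if $c=o(n)$ almost all of $\mu$ lives on large rectangles.

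The step I expect to be the real obstacle is an anti-concentration lemma: there are constants $\delta,\rho>0$ such that for every large rectangle $A\times B$, drawing $(u,v)$ from $\mu$ conditioned on $(u,v)\in A\times B$ gives $\Pr[\langle u,v\rangle\ge 2\sqrt n]\ge\rho$ and $\Pr[\langle u,v\rangle\le-2\sqrt n]\ge\rho$ simultaneously. Granting this, each large rectangle (having one label) misclassifies at least a $\rho$-fraction of its own mass, so the total error is at least $\rho\cdot(1-o(1))=\Omega(1)$; choosing the amplified error $\eps_0<\rho/2$ contradicts this and forces $c=\Omega(n)$. To prove the lemma I would argue that for a dense set $A$ (density $\ge 2^{-\delta n}$) and most $v$, the variable $\langle u,v\rangle$ with $u$ uniform on $A$ behaves at scale $\sqrt n$ like a mean-zero Gaussian of standard deviation $\Theta(\sqrt n)$: the variance bound $\mathrm{Var}_{u\in A}(\langle u,v\rangle)=\Theta(n)$ for typical $v$ comes from a second-moment computation showing the off-diagonal covariances $\mathrm{Cov}_{u\in A}(u_i,u_j)$ are small on average when $A$ is dense, while the Gaussian-like \emph{shape}---hence constant mass in each tail past $2\sqrt n$---requires an isoperimetric/hypercontractive input bounding how concentrated a dense subset of the cube can be on Hamming slices or halfspace-like regions. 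Symmetrizing $A$ and $B$ and restricting to the event $|\langle u,v\rangle|\ge 2\sqrt n$ then yields the two-sided bound $\rho$.

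In short, the easy parts are the reduction to a distributional bound, the amplification, and the counting of small rectangles; the delicate part is the anti-concentration/isoperimetric lemma, which is exactly the technical heart of \cite{ChakrabartiR12}. I would note that essentially the same anti-concentration fact is also what drives the alternative self-contained proofs---Sherstov's corruption-bound argument via dual polynomials and anti-concentration of Krawtchouk polynomials, and Vidick's simplified derivation---so either of those could substitute for the last step if a more elementary route to the lemma is preferred.
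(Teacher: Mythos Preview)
The paper does not prove this theorem; it is quoted verbatim from \cite{ChakrabartiR12} as a black box and used only via reduction (to get the $\Omega(C)$ streaming lower bound). So there is no in-paper proof to compare your proposal against.

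Your sketch does correctly reproduce the high-level architecture of the Chakrabarti--Regev argument: Yao reduction to a distributional bound, partition into $\le 2^c$ rectangles, mass accounting to discard small rectangles when $c=o(n)$, and a corruption lemma stating that every large rectangle $A\times B$ contains a constant fraction of both YES and NO inputs. You are also right that this last lemma is the entire difficulty, and your pointers to the Sherstov and Vidick alternatives are appropriate.

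One caution on the heuristic you offer for the lemma: the second-moment reasoning you sketch (variance $\Theta(n)$ because off-diagonal covariances of a dense $A$ are small on average) does not by itself yield the \emph{two-sided} tail bound you need. A random variable can have variance $\Theta(n)$ and still be supported entirely on one sign of $\langle u,v\rangle$; indeed, halfspace-like sets $A$ are exactly the adversarial case one must rule out. The actual arguments do not proceed via a Gaussian-shape claim for $\langle u,v\rangle$ under $u\sim A$: Chakrabarti--Regev and Vidick instead analyze, for fixed large $A$, the random variable $v\mapsto \max_{u\in A}\langle u,v\rangle$ (or a related width quantity) and use Talagrand-style concentration to show it cannot be pinned near $0$; Sherstov goes through smooth orthogonalizing distributions and discrepancy. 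So your deferral to those references is the right move, but the variance-plus-hypercontractivity outline as written would not close the gap on its own.
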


\begin{theorem}
For any frequency vector that has the number $C=\Omega\left(\frac{1}{\eps}\right)$ of coordinates with frequency more than $1$, any one-pass streaming algorithm for $(1+\eps)$-approximation of the number of distinct elements must use $\Omega\left(\min\left(\frac{1}{\eps^2},C\right)\right)$ bits of space.
\end{theorem}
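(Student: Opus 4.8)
The plan is to reduce from the Gap-Hamming problem of \thmref{thm:gap:ham}. Set $t := c_1\cdot\min\left(\frac{1}{\eps^2},C\right)$ for a sufficiently small absolute constant $c_1\in(0,\tfrac14)$; since $C=\Omega(1/\eps)$ we have $t\to\infty$ as $\eps\to 0$, and we take the universe to be $[n]$ with $n\ge t$. Alice receives $X\in\{0,1\}^t$ and Bob receives $Y\in\{0,1\}^t$ from a hard distribution for Gap-Hamming on $t$ coordinates. They produce a single insertion-only stream over $[t]$: Alice inserts every coordinate $i$ with $X_i=1$ once, then hands the memory of the streaming algorithm to Bob, who inserts every coordinate $i$ with $Y_i=1$ once. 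Thus a one-pass streaming algorithm using $s$ bits of space yields a one-way communication protocol for Gap-Hamming whose cost is $s+\O{\log t}$, where the $\O{\log t}$ extra bits are used by Alice to also send $|X|$ to Bob.

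For correctness, the frequency vector $f$ of this stream satisfies $f_i=X_i+Y_i$, so $F_0(f)=|X\cup Y|=\frac{|X|+|Y|+\Delta(X,Y)}{2}$, where $\Delta$ denotes Hamming distance. Hence $\Delta(X,Y)=2F_0(f)-|X|-|Y|$, and from Alice's message $|X|$ (together with $|Y|$, which Bob knows) a $(1+\eps)$-approximation $\widehat{F_0}$ to $F_0(f)$ yields the estimate $2\widehat{F_0}-|X|-|Y|$ of $\Delta(X,Y)$ with additive error at most $2\eps\, F_0(f)\le 2\eps t$. Since $t\le c_1/\eps^2$ with $c_1<\tfrac14$, this error is strictly less than $\sqrt t$, i.e.\ below half the $2\sqrt t$ separation between the YES and NO cases of Gap-Hamming, so Bob decides the instance correctly whenever the streaming algorithm succeeds (with constant probability). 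Therefore $s+\O{\log t}=\Omega(t)$ by \thmref{thm:gap:ham}, and since $\O{\log t}=o(t)$ this gives $s=\Omega(t)$.

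It remains to verify the structural parameter. In the constructed stream the coordinates of frequency strictly larger than $1$ are precisely those of $X\cap Y$, each with frequency exactly $2$, and all other coordinates have frequency at most $1$. Since $|X\cap Y|\le|X|\le t\le c_1 C<C$, this stream has at most $C$ coordinates of frequency more than $1$; and taking the Gap-Hamming hard distribution with balanced weights $|X|=|Y|=\Theta(t)$ (as in the standard construction, or enforced by a light symmetrization step) makes $|X\cap Y|=\Theta(t)=\Theta\!\left(\min\!\left(\frac{1}{\eps^2},C\right)\right)$ with high probability, so the number of such coordinates is in fact $\Theta(C)$ whenever $C=\O{1/\eps^2}$. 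Combining with the previous paragraph, any one-pass algorithm that $(1+\eps)$-approximates $F_0$ with constant probability on such streams must use $s=\Omega(t)=\Omega\!\left(\min\!\left(\frac{1}{\eps^2},C\right)\right)$ bits of space.

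The reduction is standard, so there is no deep obstacle; the one point worth flagging is why the multiplicative-to-Gap-Hamming step goes through cleanly here. The stream is built so that $F_0(f)=\Theta(t)$ is \emph{small}, which is exactly what lets a $(1+\eps)$-approximation act as an $o(\sqrt t)$ additive estimate of $\Delta(X,Y)$; had the construction needed a large $F_0$ (for instance to force many duplicates), the multiplicative guarantee would have been too weak. The requirements --- additive error well below $\sqrt t$, and $|X\cap Y|\le C$ --- both push toward the same choice $t=\Theta(\min(1/\eps^2,C))$ with a small constant, so they are simultaneously satisfiable; pinning down $c_1$ and arguing that the Gap-Hamming hard instances can be taken balanced (needed only for the matching $\Theta(C)$ count, not for the lower bound itself) is the entirety of the bookkeeping.
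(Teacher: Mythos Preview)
Your proposal is correct and takes essentially the same approach as the paper: both reduce from Gap-Hamming on $\Theta(\min(1/\eps^2,C))$ coordinates by having Alice and Bob insert their respective sets so that a $(1+\eps)$-approximation to $F_0$ yields an additive $o(\sqrt{t})$ estimate of the Hamming distance. Your write-up is more careful than the paper's (explicitly sending $|X|$, pinning down the constant so that $2\eps t<\sqrt{t}$, and verifying that $|X\cap Y|\le C$), but the underlying reduction is identical.
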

\begin{proof}
Consider an instance of Gap-Hamming that has $n=\Theta(C)$ coordinates, where $C=\Omega\left(\frac{1}{\eps}\right)$. 
Note that for the purposes of the proof, it suffices to assume that $C=\O{\frac{1}{\eps^2}}$. 
Namely, let $X$ be the input vector to Alice and let $Y$ be the input vector to Bob. 
Then $Z:=X+Y$ has $\O{C}$ coordinates with frequency more than $1$. 
Moreover, any $(1+\eps)$-approximation to $F_0(Z)$ will distinguish whether the Hamming distance between $X$ and $Y$ is at least $\frac{n}{2}+\sqrt{n}$ or less than $\frac{n}{2}-\sqrt{n}$, since $n=\O{\frac{1}{\eps^2}}$. 
Therefore, such an algorithm can be used to solve Gap-Hamming on $\Theta(C)$ coordinates and by \thmref{thm:gap:ham} must use space $\Omega(C)$. 
\end{proof}

\begin{algorithm}[!htb]
\caption{Parameterized distinct element estimation over two-pass streams}
\alglab{alg:lzero:two:passes:small}
\begin{algorithmic}[1]
\Require{Accuracy parameter $\eps\in(0,1)$, stream with small number $C$ of coordinates with frequency larger than $1$, i.e., $C<\frac{1}{\eps}$}
\Ensure{$(1+\eps)$-approximation to the number of distinct elements, given two passes over the data}
\State{$L\gets\O{\log\frac{1}{\eps}}$, $B\gets\frac{1}{\eps}\cdot\polylog\left(\frac{n}{\eps}\right)$, $T\gets\frac{100}{\eps^2}\log^2\frac{n}{\eps}$}
\For{$\ell\in[L]$}
\State{Form $S_\ell$ by sampling each item of $[n]$ with probability $\frac{1}{2^{\ell-1}}$}
\State{Run $\O{\log n}$ instances of $\CountSketch$ on $S_\ell$ with $B$ buckets}
\Comment{First pass}
\EndFor
\For{each heavy-hitter $i\in[n]$ reported by $\CountSketch$ on any $S_\ell$}
\State{Track $f_i$ exactly}
\Comment{Second pass}
\EndFor
\For{$\ell\in[L]$}
\State{$\widehat{M_\ell}=0$}
\For{$j\in S_\ell$}
\If{$f_j\ge T$}
\State{$\widehat{M_1}\gets\widehat{M_1}+(f_j-1)$}
\ElsIf{$f_j\in\left[\frac{T}{2^{\ell}},\frac{T}{2^{\ell-1}}\right)$}
\State{$\beta\gets\max\left(0,\ell-\log\left(\frac{10}{\eps}\log\frac{n}{\eps}\right)\right)$}
\State{$\widehat{M_\ell}\gets\widehat{M_\ell}+2^{\beta}\cdot\left(f_j-1\right)$}
\EndIf
\EndFor
\EndFor
\State{\textbf{Return} $F_1(S_1)-\sum_{\ell\in[L]}\widehat{M_\ell}$}
\end{algorithmic}
\end{algorithm}
We now justify the correctness of \algref{alg:lzero:two:passes:small}.
\begin{lemma}
\lemlab{lem:twopass:stream:small:correct}
Let $Z$ be the output of \algref{alg:lzero:two:passes:small} and suppose the number $C$ of pairwise collisions is at most $\frac{1}{\eps}$.  
Then with probability at least $\frac{2}{3}$, we have that
\[|Z-F_0(S)|\le\eps F_0(S).\]
\end{lemma}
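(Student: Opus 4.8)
The plan is to follow the same template as the proof of \lemref{lem:twopass:stream:correct}, adjusting every parameter to the regime $C<1/\eps$. First I would use the decomposition $F_0(S)=F_1(S)-M$, where $M=\sum_{i\in[n]}\max(0,f_i-1)$ is the total excess mass of the stream. Since $S_1=[n]$ is the full stream, the algorithm's first returned term $F_1(S_1)$ equals $F_1(S)$ and is tracked exactly, so it suffices to show that $\sum_{\ell\in[L]}\widehat{M_\ell}$ is an additive $\eps\cdot F_0(S)$ approximation to $M$. Let $\Gamma$ be the set of at most $C<1/\eps$ coordinates with frequency greater than $1$, and partition $\Gamma$ by frequency into $\Gamma_1=\{i:f_i\ge T\}$ and $\Gamma_\ell=\{i:f_i\in[T/2^\ell,T/2^{\ell-1})\}$ for $1<\ell\le L$, writing $M_\ell:=\sum_{i\in\Gamma_\ell}(f_i-1)$; a short calculation using $|\Gamma|\le C<1/\eps$, $L=\O{\log\frac{1}{\eps}}$ and the standing normalization $F_0(S)=\Omega(1/\eps^2)$ (inherited from the universe-subsampling preprocessing) shows that coordinates of frequency too small to fall in any $\Gamma_\ell$ contribute at most $\frac{\eps}{10}\,F_0(S)$ to $M$ in aggregate, so it suffices to estimate each $M_\ell$ to additive error $\frac{\eps}{2L}\,F_0(S)$.

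Next, for each level set I would analyze $\widehat{M_\ell}$, splitting on whether $\beta_\ell:=\max\big(0,\ell-\log(\frac{10}{\eps}\log\frac{n}{\eps})\big)$ is zero. When $\beta_\ell=0$ there is no subsampling, so every $i\in\Gamma_\ell$ lies in $S_\ell$; since $f_i\ge T/2^{\ell-1}\ge\frac{10}{\eps}\log\frac{n}{\eps}$ dominates the \CountSketch{} estimation error $\frac{1}{\sqrt{B}}\|y\|_2$ when $B=\frac{1}{\eps}\polylog(\frac{n}{\eps})$ and $|\Gamma|\le C$, a union bound over the $\O{\log n}$ independent \CountSketch{} instances shows that with high probability each such $i$ is isolated in some instance and flagged as a heavy hitter, after which the second pass recovers $f_i$ exactly and $\widehat{M_\ell}=M_\ell$ on that event. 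When $\beta_\ell>0$, each $i\in\Gamma_\ell$ survives into $S_\ell$ independently with probability $p_\ell=2^{-\beta_\ell}$; conditioning on the high-probability event $F_0(S_\ell)\le(10\log n)\,p_\ell\,F_0(S)$, the per-bucket \CountSketch{} variance shrinks proportionally, the surviving heavy items are again recovered exactly in the second pass, and the rescaling by $2^{\beta_\ell}=1/p_\ell$ gives $\Ex{\widehat{M_\ell}}=M_\ell$. I would then bound the second moment by $\Ex{\widehat{M_\ell}^2}\le\frac{1}{p_\ell}\sum_{i\in\Gamma_\ell}(f_i-1)^2$ and use $f_i<T/2^{\ell-1}$, $|\Gamma_\ell|\le C<1/\eps$, $T=\frac{100}{\eps^2}\log^2\frac{n}{\eps}$, and the definition of $\beta_\ell$ to conclude $\Ex{\widehat{M_\ell}^2}\le\frac{\eps^2}{\gamma\log^2(n/\eps)}F_0(S)^2$ for a large constant $\gamma>1$; Chebyshev's inequality then yields $|\widehat{M_\ell}-M_\ell|\le\frac{\eps}{2L}F_0(S)$ with probability at least $1-\frac{1}{100\log(n/\eps)}$.

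Finally I would union bound over the $L=\O{\log\frac{1}{\eps}}$ level sets and the auxiliary high-probability \CountSketch{} events to get $\big|\sum_{\ell}\widehat{M_\ell}-M\big|\le\eps F_0(S)$ with probability at least $\frac{2}{3}$, whence $|Z-F_0(S)|=\big|\big(F_1(S)-\sum_\ell\widehat{M_\ell}\big)-\big(F_1(S)-M\big)\big|\le\eps F_0(S)$. The step I expect to be the main obstacle is the second-moment estimate for the subsampled level sets: one must check that the subsampling schedule $\beta_\ell$ and bucket count $B=\frac{1}{\eps}\polylog(\frac{n}{\eps})$ are calibrated so that simultaneously (i) every heavy coordinate survives subsampling often enough to be isolated and exactly recovered despite \CountSketch{} noise, and (ii) the variance introduced by rescaling telescopes across levels to $\O{\eps^2 F_0(S)^2/\log^2(n/\eps)}$. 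The hypothesis $C<1/\eps$ is precisely what makes the \emph{linear} subsampling rate $2^{-(\ell-1)}$ of \algref{alg:lzero:two:passes:small} (and the reduced bucket count) sufficient here, in contrast to the quadratic schedule forced in \algref{alg:lzero:two:passes} when $C$ is large.
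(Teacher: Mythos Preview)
Your proposal is correct and follows essentially the same approach as the paper: the decomposition $F_0(S)=F_1(S)-M$, the partition of the excess-mass coordinates into level sets $\Gamma_\ell$, the case split on whether $\beta_\ell=0$ (no subsampling, \CountSketch{} recovers all heavy hitters exactly) versus $\beta_\ell>0$ (unbiased rescaled estimator with the second-moment bound $\frac{1}{p_\ell}\sum_{i\in\Gamma_\ell}f_i^2\le|\Gamma_\ell|\cdot\frac{T^2}{2^{2\ell}}\cdot\frac{1}{p_\ell}$ and Chebyshev), and the final union bound over the $L$ levels all match the paper's argument. Your additional remark that the residual mass from coordinates too small to land in any $\Gamma_\ell$ is $\O{\eps\,F_0(S)}$ under the normalization $F_0(S)=\Omega(1/\eps^2)$ is a welcome clarification that the paper leaves implicit.
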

\begin{proof}
Let $S$ be the data stream and for each $i\in[n]$, let $m_i=\max(0,f_i-1)$ so that $t_i$ is the excess mass of $f_i$. 
Let $M=\sum_{i\in[n]} m_i$ so that $F_0(S)=F_1(S)-M$. 
To achieve a $(1+\eps)$-approximation to $F_0(S)$, it suffices to obtain an additive $\eps\cdot F_0$ approximation to $M$. 

Let level set $\Gamma_1=\left\{i\in[n]:f_i\ge\frac{T}{2}\right\}$ comprise the coordinates $i\in[n]$ with frequency at least $\frac{T}{2}$. 
Now for integer $\ell\in(1,L)$, we define level set $\Gamma_\ell=\left\{i\in[n]:\left[\frac{T}{2^{\ell}},\frac{T}{2^{\ell-1}}\right)\right\}$ to consist of the coordinates $i\in[n]$ with frequency in the interval $\left[\frac{T}{2^{\ell}},\frac{T}{2^{\ell-1}}\right)$. 
Let $M_\ell=\sum_{i\in\Gamma_\ell} f_i$ be the sum of the contributions of the items in level set $\Gamma_\ell$. 
Let $\Gamma$ be the set of all coordinates with value greater than $1$. 

For each coordinate $i\in M_\ell$, $i$ is sampled with probability $p_\ell=\frac{1}{2^{\beta_\ell}}$, where 
\[\beta_\ell=\max\left(0,\ell-\log\left(10\left(\frac{\sqrt{C}}{\eps}\right)\log\frac{n}{\eps}\right)\right).\] 
Therefore, we consider casework on whether $\ell\le\log\left(\frac{10}{\eps}\log\frac{n}{\eps}\right)$ or whether $\ell>\log\left(\frac{10}{\eps}\log\frac{n}{\eps}\right)$. 

We first consider the first case, where $\ell\le\log\left(\frac{10}{\eps}\log\frac{n}{\eps}\right)$, so that $p_\ell=1$. 
We have $i\in S_\ell$ for any $i\in\Gamma_\ell$. 
Let $\calG_r$ denote the probability that $i$ is not hashed by the $r$-th instance of $\CountSketch$ to a bucket containing any of the other items in $\Gamma$, so that we have $\PPr{\calE_r}\ge\frac{2}{3}$ since $|\Gamma|\le C\le\frac{1}{\eps}$ and we use $B=\frac{1}{\eps}\cdot\polylog\left(\frac{n}{\eps}\right)$ buckets in each instance of $\CountSketch$. 
Then let $\calE$ denote the probability that for all items $i\in S_\ell\cap\Gamma_\ell$, there exist $\O{\log n}$ instances of $\CountSketch$ such that $i$ is not hashed to a bucket containing any of the other items in $\Gamma_\ell\cap S_\ell$, so that we have $\PPr{\calE}=\PPr{\calG_1\vee\calG_2\vee\ldots}\ge1-\frac{1}{\poly(n)}$. 
Conditioning on $\calG_r$, the variance for the estimation of $f_i$ by the $r$-th instance $\CountSketch$ is at most $\frac{1}{B}\cdot\frac{100}{\eps^2}$. 
Since $f_i\ge\frac{T}{2^{\ell-1}}\ge\frac{10}{\eps}\log\frac{n}{\eps}$ then $\CountSketch$ reports $f_i$ as a heavy-hitter with probability $\frac{2}{3}$. 
Thus, $i$ is reported as a heavy-hitter by some instance of $\CountSketch$ with high probability. 
It follows by a union bound over all $i\in S_\ell$ that $\widehat{M_\ell}=M_\ell$ high probability. 

In the other case, we have $\ell>\log\left(\frac{10}{\eps}\log\frac{n}{\eps}\right)$, so that $p_\ell=\frac{1}{2^{\beta_\ell}}$, where 
\[\beta_\ell=\max\left(0,\ell-\log\left(\frac{10}{\eps}\log\frac{n}{\eps}\right)\right).\] 
For any $i\in\Gamma_\ell$, we have $i\in S_\ell$ with probability $p_\ell$. 
Define $\calE_1$ to denote the event that $F_0(S_\ell)\le(10\log n)\cdot p_\ell\cdot F_0(S)$ so that $\PPr{\calE_1}\ge 1-\frac{1}{10n^2}$. 
Let $\calG_r$ denote the event that $i\in S_\ell$ is not hashed by the $r$-th instance of $\CountSketch$ to a bucket containing any of the other items in $\Gamma$, so that we have $\PPr{\calE_r}\ge\frac{2}{3}$ since $|\Gamma|\le C\le\frac{1}{\eps}$ and we use $B=\frac{1}{\eps}\cdot\polylog\left(\frac{n}{\eps}\right)$ buckets in each instance of $\CountSketch$. 
Moreover, let $\calE_2$ denote the probability that for all items $i\in S_\ell\cap\Gamma_\ell$, there exist $\O{\log n}$ instances of $\CountSketch$ such that $i$ is not hashed to a bucket containing any of the other items in $\Gamma_\ell\cap S_\ell$. 
Then we have $\PPr{\calE_2}=\PPr{\calG_1\vee\calG_2\vee\ldots}\ge1-\frac{1}{\poly(n)}$. 
Conditioning on $\calE_1$ and $\calE_r$, the variance for the estimation of $f_i$ by the $r$-th instance $\CountSketch$ is at most $\frac{1}{B}\cdot\frac{100}{\eps^2}\cdot(10\log n)\cdot p_\ell$. 
Since $f_i\ge\frac{T}{2^{\ell-1}}\ge\frac{10}{\eps}\log\frac{n}{\eps}$ then $\CountSketch$ reports $f_i$ as a heavy-hitter with probability at least $\frac{2}{3}$. 
Hence,
\[\Ex{\widehat{M_\ell}}=\frac{1}{p_\ell}\sum_{j\in\Gamma_\ell}p_\ell\cdot f_j=M_\ell.\]
Moreover,
\[\Ex{(\widehat{M_\ell})^2}\le\frac{1}{p_\ell^2}\sum_{j\in\Gamma_\ell}p_\ell\cdot f_j^2\le|\Gamma_\ell|\cdot\frac{T^2}{2^{2\ell}}\cdot2^\ell.\]
Now, we have 
\[|\Gamma_\ell|\le|\Gamma|\le C\le\frac{1}{\eps},\]
so that for some large constant $\gamma>1$,
\[\Ex{(\widehat{M_\ell})^2}\le\frac{1}{\gamma\eps^2\log^2\left(\frac{n}{\eps}\right)}\le\frac{\eps^2}{\gamma\log^2\left(\frac{n}{\eps}\right)}\cdot F_0^2(S).\]
Thus by Chebyshev's inequality, we have that with probability at least $1-\frac{1}{100\log\frac{n}{\eps}}$, 
\[|M_\ell-\widehat{M_{\ell}}\le\frac{\eps}{L}\cdot F_0.\]
The result then follows from union bounding over all $L$ level sets $\Gamma_1,\ldots,\Gamma_L$. 
\end{proof}
We now give the full guarantees of \algref{alg:lzero:two:passes:small}. 
\begin{theorem}
Given a stream $S$, an accuracy parameter $\eps\in(0,1)$, a parameter $C\le\frac{1}{\eps}\cdot F_0(S)$ for the number of coordinates with frequency more than $1$, and a number of distinct elements that is at least $F_0(S)=\Omega\left(\frac{1}{\eps^2}\right)$, there exists a two-pass streaming algorithm that uses $\frac{1}{\eps}\cdot\polylog\left(\frac{n}{\eps}\right)$ bits of space and provides a $(1+\eps)$-approximation to the number of distinct elements in the stream with probability at least $0.98$. 
\end{theorem}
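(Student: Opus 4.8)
The plan is to reduce the theorem to two ingredients that are essentially already in hand: the correctness guarantee of \algref{alg:lzero:two:passes:small}, which is the content of \lemref{lem:twopass:stream:small:correct}, and a bookkeeping argument for its space usage. For correctness, under the standing hypotheses—$F_0(S)=\Omega(1/\eps^2)$ and the outlier count being small—\lemref{lem:twopass:stream:small:correct} already certifies that the output $Z$ of \algref{alg:lzero:two:passes:small} satisfies $|Z-F_0(S)|\le\eps F_0(S)$. Two small points must be addressed. First, when $F_0(S)\gg 1/\eps^2$ one first prepends the standard universe-subsampling step, which preserves $F_0(S)$ up to a $(1\pm\eps)$ factor by the Chebyshev argument of \lemref{lem:lzero:sample} and rescales the outlier parameter accordingly, putting us in the regime the lemma handles. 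Second, although \lemref{lem:twopass:stream:small:correct} is phrased with success probability $2/3$, its proof only loses $\frac1{\poly(n)}$ from the heavy-hitter identification events and at most $L\cdot\frac1{100\log(n/\eps)}\le\frac1{100}$ from the Chebyshev union bound over the $L=\O{\log(1/\eps)}$ level sets, so rereading the same union bound yields success probability at least $0.98$ with no amplification; if one prefers not to rely on that slack, running $\O{1}$ independent copies and returning the median boosts the probability to $0.98$ at the cost of only a constant factor in space.

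For the space bound I would account for each structure maintained by the algorithm. In the first pass, for each of the $L=\O{\log(1/\eps)}$ sampling scales the algorithm keeps $\O{\log n}$ independent instances of $\CountSketch$, each with $B=\frac1\eps\cdot\polylog(n/\eps)$ buckets; by the cited $\CountSketch$ guarantee each such instance uses $\O{B\log^2 n}$ bits, so all of them together use $L\cdot\O{\log n}\cdot\O{B\log^2 n}=\frac1\eps\cdot\polylog(n/\eps)$ bits, while the maps defining the sampled sets $S_\ell$ and the hash functions inside each $\CountSketch$ need only $\O{\log n}$ seed bits apiece. The single accumulator for $F_1(S_1)$ and the $L$ accumulators for the $\widehat{M_\ell}$'s are $\O{\log n}$ bits each. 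Between the two passes, the algorithm must remember the set of items flagged as heavy hitters by some $\CountSketch$ instance; since each instance reports $\O{B}$ candidates and there are $\O{L\log n}$ instances, this set has size $\frac1\eps\cdot\polylog(n/\eps)$, and in the second pass each such item is tracked by an exact $\O{\log n}$-bit counter, since the stream has length $\poly(n)$. Summing everything, the total space is $\frac1\eps\cdot\polylog(n/\eps)$ bits, as claimed.

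The only genuinely non-routine step is the between-passes bookkeeping: one must verify that the list of items the algorithm must carry from the first pass into the second is of size $\frac1\eps\cdot\polylog(n/\eps)$, which relies on a $\CountSketch$ with $B$ buckets producing $\O{B}$ heavy-hitter candidates, on running only $\O{L\log n}$ copies, and on the fact that the second pass can only start once all first-pass sketches are finalized and this bounded candidate list has been extracted. Everything deeper—the correctness of the level-set decomposition, the negligibility of the ignored low-frequency coordinates, the variance bounds for the intermediate-frequency level sets, and the exact-tracking trick that avoids over-counting—is exactly the content of \lemref{lem:twopass:stream:small:correct} and is used as a black box, so I do not anticipate any obstacle beyond this bookkeeping.
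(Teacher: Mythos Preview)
Your proposal is correct and follows essentially the same approach as the paper: invoke \lemref{lem:twopass:stream:small:correct} for correctness and then account for the space of the $\CountSketch$ instances. The paper's own proof is a two-sentence version of exactly this, so your more careful bookkeeping of the between-pass candidate list, the probability gap between $2/3$ and $0.98$, and the preliminary subsampling step only makes explicit what the paper leaves implicit.
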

\begin{proof}
The proof of correctness follows from \lemref{lem:twopass:stream:small:correct}. 
The space complexity follows from the fact that we maintain $B$ buckets in each of the $\O{\log n}$ instances of $\CountSketch$, for $B=\frac{1}{\eps}\cdot\polylog\left(\frac{1}{\eps}\right)$. 
\end{proof}

We now show that any algorithm for $(1+\eps)$-approximation to distinct elements requires $\Omega\left(\frac{1}{\eps}\right)$ bits of space, even when $C<\frac{1}{\eps}$. 
\begin{theorem}
For any frequency vector that has the number $C=\O{\frac{1}{\eps}}$ of coordinates with frequency more than $1$, any one-pass streaming algorithm for $(1+\eps)$-approximation of the number of distinct elements must use $\Omega\left(\frac{1}{\eps}\right)$ bits of space.
\end{theorem}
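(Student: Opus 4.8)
The plan is to mirror the earlier reduction from the Gap-Hamming problem used for the $C=\Omega(1/\eps)$ lower bound, but to choose the instance size so that the number of duplicated coordinates remains $O(1/\eps)$. Assume $\eps$ is below an absolute constant (otherwise $\Omega(1/\eps)=O(1)$ and the claim is vacuous), set $n=\lfloor c/\eps\rfloor$ for a sufficiently small constant $c>0$, and let $(X,Y)\in\{0,1\}^n\times\{0,1\}^n$ be a Gap-Hamming instance, so that in the YES case $\Delta(X,Y)\ge \frac{n}{2}+\sqrt{n}$ and in the NO case $\Delta(X,Y)\le\frac{n}{2}-\sqrt{n}$. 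Alice and Bob simulate an insertion-only stream over the universe $[n]$: Alice inserts each $i$ with $X_i=1$ once, then forwards the memory state of the streaming algorithm to Bob together with the value $|X|$ (which costs $O(\log(1/\eps))$ extra bits); Bob then inserts each $i$ with $Y_i=1$ once and reads off the output. The resulting frequency vector $Z=X+Y$ takes values in $\{0,1,2\}$, and its only coordinates of frequency larger than $1$ are those in $X\cap Y$, so the number $C$ of such coordinates is at most $|X\cap Y|\le n=O(1/\eps)$ — in fact $C< 1/\eps$ for $c<1$ — as required by the promise.

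Next I would verify that a $(1+\eps)$-approximation to $F_0(Z)$ solves Gap-Hamming. Since $F_0(Z)=|X\cup Y|=|X|+|Y|-|X\cap Y|$ and $\Delta(X,Y)=|X|+|Y|-2|X\cap Y|$, we have the identity $\Delta(X,Y)=2F_0(Z)-|X|-|Y|$. As Bob knows $|X|$ and $|Y|$ exactly, a multiplicative $(1+\eps)$-approximation $\widehat{F_0}$ produces $\widehat{\Delta}=2\widehat{F_0}-|X|-|Y|$ with additive error at most $2\eps F_0(Z)\le 2\eps n$; because $n=\Theta(1/\eps)$ is far below $1/(4\eps^2)$, for $\eps$ small we get $2\eps n<\sqrt n$, so $\widehat{\Delta}$ lies within $\sqrt n$ of $\Delta(X,Y)$ and thresholding $\widehat{\Delta}$ at $n/2$ distinguishes YES from NO. (Alternatively one can reduce from a balanced variant of Gap-Hamming with $|X|=|Y|=n/2$, which still requires $\Omega(n)$ communication, and drop the transmission of $|X|$.)

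Putting this together, any one-pass streaming algorithm that outputs a $(1+\eps)$-approximation to $F_0$ with probability at least $\frac{2}{3}$ using $S$ bits of space yields a one-way communication protocol for Gap-Hamming on $n$ coordinates using $S+O(\log(1/\eps))$ bits and succeeding with probability at least $\frac{2}{3}$; by \thmref{thm:gap:ham} this is $\Omega(n)=\Omega(1/\eps)$, and hence $S=\Omega(1/\eps)$. The one point that needs care — the analogue of a subtlety suppressed in the earlier proof — is that $F_0(X+Y)$ depends on the set sizes and not on the Hamming distance alone; this is exactly what the $O(\log(1/\eps))$-bit exchange of $|X|$ (or the balanced promise) repairs, and it does not affect the $\Omega(n)$ bound. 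Since every hard instance here has at most $n=O(1/\eps)$ duplicated coordinates, the lower bound holds throughout the claimed parameter regime; in fact one can push $C$ down to a constant by instead reducing from the one-way $\mathsf{INDEX}$ problem (Bob inserts a single extra item and the algorithm must distinguish $F_0\in\{|X|,|X|+1\}$, which again forces $|X|=O(1/\eps)$ and hence $\Omega(1/\eps)$ space). All of the technical weight therefore rests on the cited Gap-Hamming (or $\mathsf{INDEX}$) lower bound rather than on the reduction itself, which is the routine direction compared with the distributed lower bounds proved earlier.
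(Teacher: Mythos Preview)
Your reduction from Gap-Hamming is correct, but the paper actually takes a different and slightly cleaner route: it reduces from (promise) two-party set disjointness on a universe of size $n=\Theta(1/\eps)$ rather than from Gap-Hamming. In that reduction, $Z=X+Y$ has at most \emph{one} coordinate with frequency exceeding $1$, and since $F_1(Z)$ is just the stream length (free to compute in an insertion-only stream), a $(1+\eps)$-approximation to $F_0(Z)$ with $F_0(Z)=\Theta(1/\eps)$ distinguishes $F_0=F_1$ from $F_0=F_1-1$, solving disjointness and yielding the $\Omega(1/\eps)$ bound.

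Compared to your argument, the paper's choice buys two things: there is no need to transmit $|X|$ or to invoke a balanced-input variant, and the hard instances have $C\le 1$ rather than $C=\Theta(1/\eps)$, so the lower bound is shown to hold throughout the entire regime $C=O(1/\eps)$ rather than just at its boundary. Your Gap-Hamming reduction has the virtue of directly mirroring the $C=\Omega(1/\eps)$ proof, and the $\mathsf{INDEX}$ alternative you mention at the end is essentially the same idea as the paper's set-disjointness reduction (and would likewise give $C\le 1$).
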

\begin{proof}
Consider an instance of $\SetDisj$ that has $n=\Theta\left(\frac{1}{\eps}\right)$ coordinates. 
Note that for the purposes of the proof, it suffices to assume that $C=\O{\frac{1}{\eps^2}}$. 
Namely, let $X$ be the input vector to Alice and let $Y$ be the input vector to Bob. 
Then $Z:=X+Y$ has at most single coordinate with frequency more than $1$. 
Moreover, any $(1+\eps)$-approximation to $F_0(Z)$ will distinguish whether the $X$ and $Y$ are disjoint, since we can also compute $F_1(Z)$. 
Therefore, such an algorithm can be used to solve $\SetDisj$ on $\Theta\left(\frac{1}{\eps}\right)$ coordinates and must use space $\Omega\left(\frac{1}{\eps}\right)$. 
\end{proof}

\section{Empirical Evaluations}
\seclab{sec:exp}
\begin{figure*}[!htb]
    \centering    
    \begin{subfigure}{0.47\textwidth}
        \includegraphics[width=\linewidth]{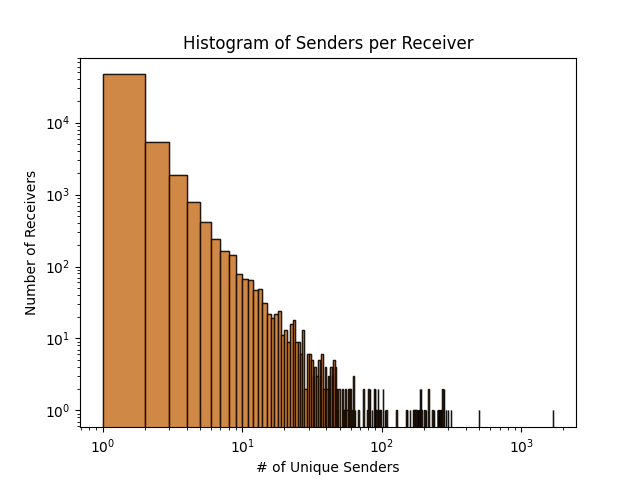}
        \caption{Histogram of Receivers}
        \figlab{fig:hist:receivers}
    \end{subfigure}
    \begin{subfigure}{0.47\textwidth}
    \includegraphics[width=\linewidth]{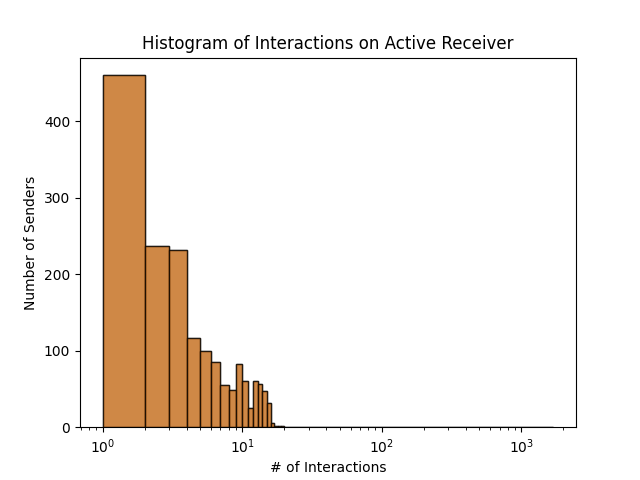}
        \caption{Histogram of Activity}
        \figlab{fig:hist:activity}
    \end{subfigure}
\caption{Histogram of unique senders per receiver in \figref{fig:hist:receivers}. 
Histogram of activity per sender in most active receiver \figref{fig:hist:activity}.}
\figlab{fig:hist}
\end{figure*}

\begin{figure*}[!htb]
    \centering    
    \begin{subfigure}{0.47\textwidth}
        \includegraphics[width=\linewidth]{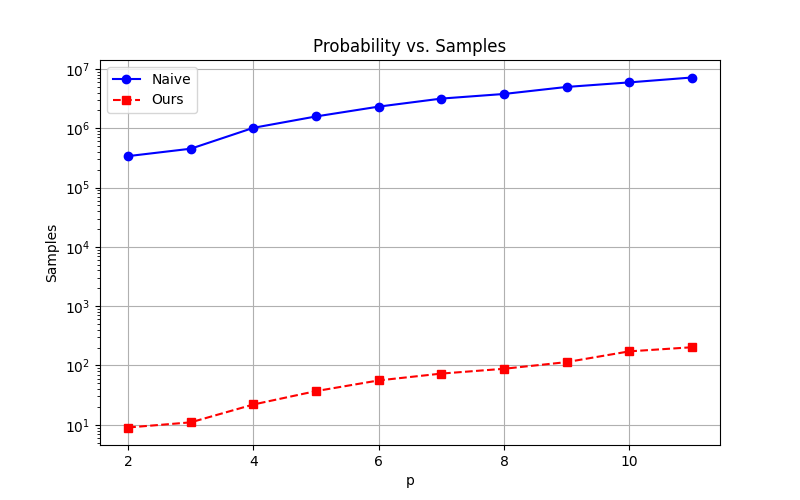}
        \caption{Communication vs. Sampling Probability}
        \figlab{fig:p:samples}
    \end{subfigure}
    \begin{subfigure}{0.47\textwidth}
    \includegraphics[width=\linewidth]{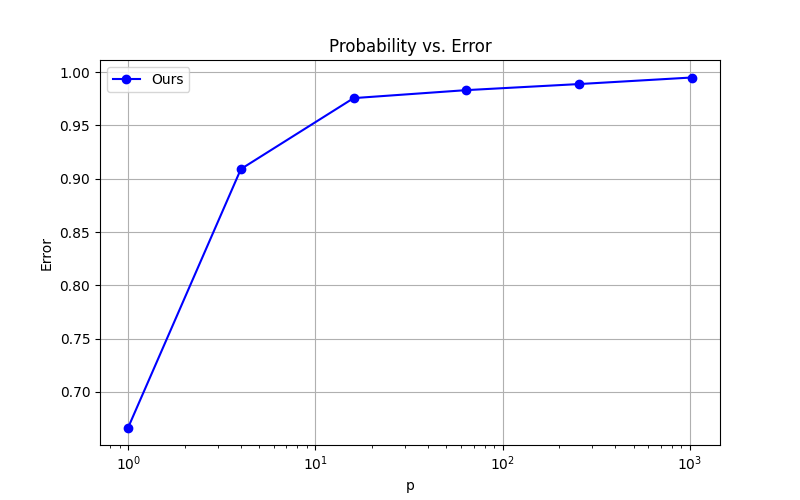}
        \caption{Sampling Probability vs. Accuracy}
        \figlab{fig:samples:err}
    \end{subfigure}
%\caption{}
%\figlab{fig:samples}
\end{figure*}
In this section, we describe our empirical evaluations for evaluating our distributed protocol for distinct element estimation. 
%
%\paragraph{CAIDA traffic monitoring dataset.} 
We used the CAIDA dataset \cite{caida2016dataset}, which consists of anonymized passive traffic traces collected from the high-speed monitor at the ``equinix-nyc'' data center. 
This dataset is widely used for statistical analyses for traffic network monitoring, in particular empirical analyses of algorithms for distinct element estimation, norm and frequency moments, and heavy-hitters~\cite{HsuIKV19,ChenIW22,LinLW22,IvkinLLKB22}. 
From 12 minutes of internet flow data totaling approximately 40 million total events, we extracted the first 1 million events, each representing an interaction between a sender IP address and a receiver IP address. 

%\textbf{Experimental setup.}
\paragraph{Experimental setup.} 
We estimate the total number of distinct sender IP addresses. 
As the events are partitioned across different receiver IP addresses, each receiver holds a different set of users from the total collection of active sender IP addresses. 
To show our setting is valid for our theoretical assumptions, we considered two different distributions. 
First, we computed the number of unique senders per receiver and plotted a logarithmic scale of the resulting distribution in \figref{fig:hist:receivers}. 
We then isolated the receiver with the most activity and computed the number of interactions per sender to that IP address, plotting the resulting distribution in \figref{fig:hist:activity}. 

We then evaluate our distributed protocol in \algref{alg:lzero:eps}. 
In particular, we consider the total communication of our algorithm compared to the total communication given by the analysis of the standard protocol, which sends a sketch of size $\O{\frac{1}{\eps^2}}$ for each of the $\alpha$ servers. 
Correspondingly, we set our algorithm to also have accuracy $\O{\eps}$ and compare the communication, across various values of $\eps=\frac{1}{2^p}$, with $p\in\{2,3,4,5,6,7,8,9,10,11\}$. 
These results appear in \figref{fig:p:samples}. 
Finally, we studied the accuracy of our distributed protocol. 
We evaluated the output of our algorithm for $\eps=\frac{1}{2^p}$, across $p\in\{0,1,2,3,4,5\}$ and computed the error with respect to the true number of unique sender IP addresses, which totaled 42200. 
We give these results in \figref{fig:samples:err}. 

Our empirical evaluations were performed with Python 3.11.5 on a 64-bit operating system on an Intel(R) Core(TM) i7-3770 CPU, with 16GB RAM and 4 cores with base clock 3.4GHz. 
The code is publicly available at \url{https://github.com/samsonzhou/DKLPWZ25}. 

%\textbf{Results and discussion.}
\paragraph{Results and discussion.}
As virtual traffic is generally known to be dominated by a few heavy-hitters, it was not altogether surprising that the distributions of activity for both receiver IP addresses and sender IP addresses were skewed. 
However, it was a bit surprising that when fit to a Zipfian power law so that the frequency of the $i$-th most common interaction is roughly $\frac{C}{i^s}$, the receiver IP address distribution returned roughly $s\approx0.743$ and $C\approx1404.68$, which indicated a highly skewed distribution. 
By comparison, the activity distribution returned a more modest $s\approx0.344$ and $C\approx43.93$. 
Indeed, we emphasize that although both graphs in \figref{fig:hist} appear linear, the scale for the receiver distribution is actually logarithmic. 

Because the distribution is so skewed, the number of pairwise collisions is quite small, as most of the receiver IP addresses only receive a small amount of activity. 
Therefore, our protocol vastly outperforms the theoretical bounds for the standard benchmark by several orders of magnitude, as evident in \figref{fig:p:samples}. 
Moreover, our algorithm quickly converges to the optimal solution as $\eps$ decreases, achieving $70\%$ error for $\eps=1$, quickly up to more than $95\%$ error for $\eps=\frac{1}{16}$ in \figref{fig:samples:err}. 
This matches our theoretical guarantees, thus serving as a simple proof-of-concept demonstrating the accuracy-vs-communication tradeoffs. 

\section{Conclusion}
In conclusion, this paper addresses the distributed distinct element estimation problem, where previous results indicate that $\Theta\left(\alpha\log n + \frac{\alpha}{\varepsilon^2}\right)$ bits of communication are both necessary and sufficient in the worst case. 
However, the assumption of large input sizes across many servers can be unrealistic in practical scenarios. 
To address this, we introduce a new parameterization based on the number $C$ of pairwise collisions distributed across the $\alpha$ players. 
Our algorithm, which uses $\O{\alpha\log n \log\log n + \frac{\sqrt{C}}{\varepsilon} \log n}$ bits of communication, demonstrates that small values of $C$ can break existing lower bounds. We also establish matching lower bounds for all regimes of $C$, showing that it provides a tight characterization of the communication complexity for this problem. 
Ultimately, our work offers new insights into why standard statistical problems, despite known impossibility results, can be efficiently tackled in real-world scenarios.

\section*{Acknowledgments}
Ilias Diakonikolas is supported by NSF Medium Award CCF-2107079 and an H.I. Romnes Faculty Fellowship. 
The work of Jasper C.H.~Lee was done in part while he was at UW Madison, supported by NSF Medium Award CCF-2107079. 
Thanasis Pittas is supported by NSF Medium Award CCF-2107079.
Daniel Kane is supported by by NSF Medium Award CCF-2107547 and NSF CAREER Award CCF-1553288.
David P. Woodruff is supported in part by Office of Naval Research award number N000142112647 and a Simons Investigator Award.
The work was conducted in part while David P. Woodruff and Samson Zhou were visiting the Simons Institute for the Theory of Computing as part of the Sublinear Algorithms program. 
Samson Zhou is supported in part by NSF CCF-2335411.

\bibliography{references}

\newcommand{\etalchar}[1]{$^{#1}$}
\begin{thebibliography}{WCW{\etalchar{+}}17}

\bibitem[ABC09]{ArackaparambilBC09}
Chrisil Arackaparambil, Joshua Brody, and Amit Chakrabarti.
\newblock Functional monitoring without monotonicity.
\newblock In {\em Automata, Languages and Programming, 36th International
  Colloquium, {ICALP}, Proceedings, Part {I}}, pages 95--106, 2009.

\bibitem[ABJ{\etalchar{+}}22]{AjtaiBJSSWZ22}
Mikl{\'{o}}s Ajtai, Vladimir Braverman, T.~S. Jayram, Sandeep Silwal, Alec Sun,
  David~P. Woodruff, and Samson Zhou.
\newblock The white-box adversarial data stream model.
\newblock In {\em {PODS} '22: International Conference on Management of Data},
  pages 15--27, 2022.

\bibitem[ABRS03]{akella2003detecting}
Aditya Akella, Ashwin Bharambe, Mike Reiter, and Srinivasan Seshan.
\newblock Detecting ddos attacks on isp networks.
\newblock In {\em Proceedings of the Workshop on Management and Processing of
  Data Streams}, pages 1--2, 2003.

\bibitem[ACG02]{AnanthakrishnaCG02}
Rohit Ananthakrishna, Surajit Chaudhuri, and Venkatesh Ganti.
\newblock Eliminating fuzzy duplicates in data warehouses.
\newblock In {\em Proceedings of 28th International Conference on Very Large
  Data Bases, {VLDB}}, pages 586--597, 2002.

\bibitem[ACT19]{AcharyaCT19}
Jayadev Acharya, Cl{\'{e}}ment~L. Canonne, and Himanshu Tyagi.
\newblock Communication-constrained inference and the role of shared
  randomness.
\newblock In {\em Proceedings of the 36th International Conference on Machine
  Learning, {ICML}, {USA}}, pages 30--39, 2019.

\bibitem[Agg16]{Aggarwal16}
Charu~C. Aggarwal.
\newblock {\em Recommender Systems - The Textbook}.
\newblock Springer, 2016.

\bibitem[AGPR99]{AcharyaGPR99}
Swarup Acharya, Phillip~B. Gibbons, Viswanath Poosala, and Sridhar Ramaswamy.
\newblock The aqua approximate query answering system.
\newblock In {\em {SIGMOD} 1999, Proceedings {ACM} {SIGMOD} International
  Conference on Management of Data}, pages 574--576, 1999.

\bibitem[AH02]{adamic2002zipf}
Lada~A Adamic and Bernardo~A Huberman.
\newblock Zipf's law and the internet.
\newblock {\em Glottometrics}, 3(1):143--150, 2002.

\bibitem[AMN{\etalchar{+}}99]{AnupamMNPR99}
Vinod Anupam, Alain~J. Mayer, Kobbi Nissim, Benny Pinkas, and Michael~K.
  Reiter.
\newblock On the security of pay-per-click and other web advertising schemes.
\newblock {\em Comput. Networks}, 31(11-16):1091--1100, 1999.

\bibitem[AMS99]{AlonMS99}
Noga Alon, Yossi Matias, and Mario Szegedy.
\newblock The space complexity of approximating the frequency moments.
\newblock {\em J. Comput. Syst. Sci.}, 58(1):137--147, 1999.

\bibitem[Bal20]{Balcan20}
Maria{-}Florina Balcan.
\newblock Data-driven algorithm design.
\newblock In Tim Roughgarden, editor, {\em Beyond the Worst-Case Analysis of
  Algorithms}, pages 626--645. Cambridge University Press, 2020.

\bibitem[BD83]{BittonD83}
Dina Bitton and David~J. DeWitt.
\newblock Duplicate record elimination in large data files.
\newblock {\em {ACM} Trans. Database Syst.}, 8(2):255--265, 1983.

\bibitem[BEO{\etalchar{+}}13]{BravermanEOPV13}
Mark Braverman, Faith Ellen, Rotem Oshman, Toniann Pitassi, and Vinod
  Vaikuntanathan.
\newblock A tight bound for set disjointness in the message-passing model.
\newblock In {\em 54th Annual {IEEE} Symposium on Foundations of Computer
  Science, {FOCS}}, pages 668--677. {IEEE} Computer Society, 2013.

\bibitem[BGL{\etalchar{+}}18]{BravermanGLWZ18}
Vladimir Braverman, Elena Grigorescu, Harry Lang, David~P. Woodruff, and Samson
  Zhou.
\newblock Nearly optimal distinct elements and heavy hitters on sliding
  windows.
\newblock In {\em Approximation, Randomization, and Combinatorial Optimization.
  Algorithms and Techniques, {APPROX/RANDOM}}, pages 7:1--7:22, 2018.

\bibitem[BGMZ23]{BlockiGMZ23}
Jeremiah Blocki, Elena Grigorescu, Tamalika Mukherjee, and Samson Zhou.
\newblock How to make your approximation algorithm private: {A} black-box
  differentially-private transformation for tunable approximation algorithms of
  functions with low sensitivity.
\newblock In {\em Approximation, Randomization, and Combinatorial Optimization.
  Algorithms and Techniques, {APPROX/RANDOM}}, pages 59:1--59:24, 2023.

\bibitem[BGPW16]{BravermanGPW16}
Mark Braverman, Ankit Garg, Denis Pankratov, and Omri Weinstein.
\newblock Information lower bounds via self-reducibility.
\newblock {\em Theory Comput. Syst.}, 59(2):377--396, 2016.

\bibitem[BHZ18]{BlockiHZ18}
Jeremiah Blocki, Benjamin Harsha, and Samson Zhou.
\newblock On the economics of offline password cracking.
\newblock In {\em 2018 {IEEE} Symposium on Security and Privacy, {SP},
  Proceedings}, pages 853--871, 2018.

\bibitem[BJK{\etalchar{+}}02]{Bar-YossefJKST02}
Ziv Bar{-}Yossef, T.~S. Jayram, Ravi Kumar, D.~Sivakumar, and Luca Trevisan.
\newblock Counting distinct elements in a data stream.
\newblock In {\em Randomization and Approximation Techniques, 6th International
  Workshop, {RANDOM}, Proceedings}, pages 1--10, 2002.

\bibitem[BJKS04]{Bar-YossefJKS04}
Ziv Bar{-}Yossef, T.~S. Jayram, Ravi Kumar, and D.~Sivakumar.
\newblock An information statistics approach to data stream and communication
  complexity.
\newblock {\em J. Comput. Syst. Sci.}, 68(4):702--732, 2004.

\bibitem[Bla20]{Blasiok20}
Jaroslaw Blasiok.
\newblock Optimal streaming and tracking distinct elements with high
  probability.
\newblock {\em {ACM} Trans. Algorithms}, 16(1):3:1--3:28, 2020.

\bibitem[Blo70]{Bloom70}
Burton~H. Bloom.
\newblock Space/time trade-offs in hash coding with allowable errors.
\newblock {\em Commun. {ACM}}, 13(7):422--426, 1970.

\bibitem[BM03]{BilenkoM03}
Mikhail Bilenko and Raymond~J. Mooney.
\newblock Adaptive duplicate detection using learnable string similarity
  measures.
\newblock In {\em Proceedings of the Ninth {ACM} {SIGKDD} International
  Conference on Knowledge Discovery and Data Mining}, pages 39--48, 2003.

\bibitem[CAI16]{caida2016dataset}
CAIDA.
\newblock The caida ucsd anonymized internet traces.
\newblock \url{https://www.caida.org/catalog/datasets/passive_dataset}, 2016.

\bibitem[CBK09]{ChandolaBK09}
Varun Chandola, Arindam Banerjee, and Vipin Kumar.
\newblock Anomaly detection: {A} survey.
\newblock {\em {ACM} Comput. Surv.}, 41(3):15:1--15:58, 2009.

\bibitem[CCFC02]{charikar2002finding}
Moses Charikar, Kevin Chen, and Martin Farach-Colton.
\newblock Finding frequent items in data streams.
\newblock In {\em International Colloquium on Automata, Languages, and
  Programming}, pages 693--703. Springer, 2002.

\bibitem[CGM05]{GantiCM05}
Surajit Chaudhuri, Venkatesh Ganti, and Rajeev Motwani.
\newblock Robust identification of fuzzy duplicates.
\newblock In {\em Proceedings of the 21st International Conference on Data
  Engineering, {ICDE}}, pages 865--876, 2005.

\bibitem[CIW22]{ChenIW22}
Justin~Y. Chen, Piotr Indyk, and Tal Wagner.
\newblock Streaming algorithms for support-aware histograms.
\newblock In {\em International Conference on Machine Learning, {ICML}}, pages
  3184--3203, 2022.

\bibitem[CKS03]{ChakrabartiKS03}
Amit Chakrabarti, Subhash Khot, and Xiaodong Sun.
\newblock Near-optimal lower bounds on the multi-party communication complexity
  of set disjointness.
\newblock In {\em 18th Annual {IEEE} Conference on Computational Complexity
  (Complexity)}, pages 107--117, 2003.

\bibitem[CKW12]{ChakrabartiKW12}
Amit Chakrabarti, Ranganath Kondapally, and Zhenghui Wang.
\newblock Information complexity versus corruption and applications to
  orthogonality and gap-hamming.
\newblock In {\em Approximation, Randomization, and Combinatorial Optimization.
  Algorithms and Techniques - 15th International Workshop, {APPROX}, and 16th
  International Workshop, {RANDOM}. Proceedings}, pages 483--494, 2012.

\bibitem[CMY11]{CormodeMY11}
Graham Cormode, S.~Muthukrishnan, and Ke~Yi.
\newblock Algorithms for distributed functional monitoring.
\newblock {\em {ACM} Trans. Algorithms}, 7(2):21:1--21:20, 2011.

\bibitem[Coh97]{Cohen97}
Edith Cohen.
\newblock Size-estimation framework with applications to transitive closure and
  reachability.
\newblock {\em J. Comput. Syst. Sci.}, 55(3):441--453, 1997.

\bibitem[CR02]{CohenR02}
William~W. Cohen and Jacob Richman.
\newblock Learning to match and cluster large high-dimensional data sets for
  data integration.
\newblock In {\em Proceedings of the Eighth {ACM} {SIGKDD} International
  Conference on Knowledge Discovery and Data Mining}, pages 475--480, 2002.

\bibitem[CR12]{ChakrabartiR12}
Amit Chakrabarti and Oded Regev.
\newblock An optimal lower bound on the communication complexity of
  gap-hamming-distance.
\newblock {\em {SIAM} J. Comput.}, 41(5):1299--1317, 2012.

\bibitem[DF92]{DolevF92}
Danny Dolev and Tom{\'{a}}s Feder.
\newblock Determinism vs. nondeterminism in multiparty communication
  complexity.
\newblock {\em {SIAM} J. Comput.}, 21(5):889--895, 1992.

\bibitem[DF03]{DurandF03}
Marianne Durand and Philippe Flajolet.
\newblock Loglog counting of large cardinalities (extended abstract).
\newblock In {\em Algorithms - {ESA} 2003, 11th Annual European Symposium,
  Proceedings}, pages 605--617, 2003.

\bibitem[DGPP19]{DiakonikolasGPP19}
Ilias Diakonikolas, Themis Gouleakis, John Peebles, and Eric Price.
\newblock Collision-based testers are optimal for uniformity and closeness.
\newblock {\em Chic. J. Theor. Comput. Sci.}, 2019.

\bibitem[DKS18]{DiakonikolasKS18}
Ilias Diakonikolas, Daniel~M. Kane, and Alistair Stewart.
\newblock Sharp bounds for generalized uniformity testing.
\newblock In {\em Advances in Neural Information Processing Systems 31: Annual
  Conference on Neural Information Processing Systems 2018, NeurIPS}, pages
  6204--6213, 2018.

\bibitem[EVF03]{estan2003bitmap}
Cristian Estan, George Varghese, and Mike Fisk.
\newblock Bitmap algorithms for counting active flows on high speed links.
\newblock In {\em Proceedings of the 3rd ACM SIGCOMM conference on Internet
  measurement}, pages 153--166, 2003.

\bibitem[FM85]{FlajoletM85}
Philippe Flajolet and G.~Nigel Martin.
\newblock Probabilistic counting algorithms for data base applications.
\newblock {\em J. Comput. Syst. Sci.}, 31(2):182--209, 1985.

\bibitem[FMO18]{FischerMO18}
Orr Fischer, Uri Meir, and Rotem Oshman.
\newblock Distributed uniformity testing.
\newblock In {\em Proceedings of the 2018 {ACM} Symposium on Principles of
  Distributed Computing, {PODC}}, pages 455--464, 2018.

\bibitem[FS69]{fellegi1969theory}
Ivan~P Fellegi and Alan~B Sunter.
\newblock A theory for record linkage.
\newblock {\em Journal of the American Statistical Association},
  64(328):1183--1210, 1969.

\bibitem[FST88]{FinkelsteinST88}
Sheldon~J. Finkelstein, Mario Schkolnick, and Paolo Tiberio.
\newblock Physical database design for relational databases.
\newblock {\em {ACM} Trans. Database Syst.}, 13(1):91--128, 1988.

\bibitem[Gab99]{gabaix1999zipf}
Xavier Gabaix.
\newblock Zipf's law for cities: an explanation.
\newblock {\em The Quarterly journal of economics}, 114(3):739--767, 1999.

\bibitem[Gib01]{Gibbons01}
Phillip~B. Gibbons.
\newblock Distinct sampling for highly-accurate answers to distinct values
  queries and event reports.
\newblock In {\em {VLDB}, Proceedings of 27th International Conference on Very
  Large Data Bases}, pages 541--550, 2001.

\bibitem[GLW{\etalchar{+}}24]{GribelyukLWYZ24}
Elena Gribelyuk, Honghao Lin, David~P. Woodruff, Huacheng Yu, and Samson Zhou.
\newblock A strong separation for adversarially robust $\ell_0$ estimation for
  linear sketches.
\newblock In {\em 65th {IEEE} Annual Symposium on Foundations of Computer
  Science, {FOCS}}, pages 2318--2343, 2024.

\bibitem[GR09]{GopalanR09}
Parikshit Gopalan and Jaikumar Radhakrishnan.
\newblock Finding duplicates in a data stream.
\newblock In Claire Mathieu, editor, {\em Proceedings of the Twentieth Annual
  {ACM-SIAM} Symposium on Discrete Algorithms, {SODA}}, pages 402--411, 2009.

\bibitem[GXD{\etalchar{+}}14]{GonzalezXDCFS14}
Joseph~E. Gonzalez, Reynold~S. Xin, Ankur Dave, Daniel Crankshaw, Michael~J.
  Franklin, and Ion Stoica.
\newblock Graphx: Graph processing in a distributed dataflow framework.
\newblock In {\em 11th {USENIX} Symposium on Operating Systems Design and
  Implementation, {OSDI}}, pages 599--613. {USENIX} Association, 2014.

\bibitem[HIKV19]{HsuIKV19}
Chen{-}Yu Hsu, Piotr Indyk, Dina Katabi, and Ali Vakilian.
\newblock Learning-based frequency estimation algorithms.
\newblock In {\em 7th International Conference on Learning Representations,
  {ICLR}}. OpenReview.net, 2019.

\bibitem[HS98]{hernandez1998real}
Mauricio~A Hern{\'a}ndez and Salvatore~J Stolfo.
\newblock Real-world data is dirty: Data cleansing and the merge/purge problem.
\newblock {\em Data mining and knowledge discovery}, 2:9--37, 1998.

\bibitem[HW07]{HastadW07}
Johan H{\aa}stad and Avi Wigderson.
\newblock The randomized communication complexity of set disjointness.
\newblock {\em Theory Comput.}, 3(1):211--219, 2007.

\bibitem[HW23]{HouW23}
Zhenduo Hou and Ding Wang.
\newblock New observations on zipf's law in passwords.
\newblock {\em {IEEE} Trans. Inf. Forensics Secur.}, 18:517--532, 2023.

\bibitem[ILL{\etalchar{+}}22]{IvkinLLKB22}
Nikita Ivkin, Edo Liberty, Kevin~J. Lang, Zohar~S. Karnin, and Vladimir
  Braverman.
\newblock Streaming quantiles algorithms with small space and update time.
\newblock {\em Sensors}, 22(24):9612, 2022.

\bibitem[IW05]{IndykW05}
Piotr Indyk and David~P. Woodruff.
\newblock Optimal approximations of the frequency moments of data streams.
\newblock In {\em Proceedings of the 37th Annual {ACM} Symposium on Theory of
  Computing}, pages 202--208, 2005.

\bibitem[JKR{\etalchar{+}}23]{JainKRSS23}
Palak Jain, Iden Kalemaj, Sofya Raskhodnikova, Satchit Sivakumar, and Adam~D.
  Smith.
\newblock Counting distinct elements in the turnstile model with differential
  privacy under continual observation.
\newblock In {\em Advances in Neural Information Processing Systems 36: Annual
  Conference on Neural Information Processing Systems 2023, NeurIPS}, 2023.

\bibitem[JST11]{JowhariST11}
Hossein Jowhari, Mert Saglam, and G{\'{a}}bor Tardos.
\newblock Tight bounds for lp samplers, finding duplicates in streams, and
  related problems.
\newblock In {\em Proceedings of the 30th {ACM} {SIGMOD-SIGACT-SIGART}
  Symposium on Principles of Database Systems, {PODS}}, pages 49--58, 2011.

\bibitem[JW13]{JayramW13}
T.~S. Jayram and David~P. Woodruff.
\newblock Optimal bounds for johnson-lindenstrauss transforms and streaming
  problems with subconstant error.
\newblock {\em {ACM} Trans. Algorithms}, 9(3):26:1--26:17, 2013.

\bibitem[KA86]{kilss1986record}
Beth Kilss and Wendy Alvey.
\newblock {\em Record Linkage Techniques, 1985: Proceedings of the Workshop on
  Exact Matching Methodologies}, volume 1299.
\newblock Department of the Treasury, Internal Revenue Service, Statistics of
  Income~…, 1986.

\bibitem[KBV09]{KorenBV09}
Yehuda Koren, Robert~M. Bell, and Chris Volinsky.
\newblock Matrix factorization techniques for recommender systems.
\newblock {\em Computer}, 42(8):30--37, 2009.

\bibitem[KKR{\etalchar{+}}99]{KleinbergKRRT99}
Jon~M. Kleinberg, Ravi Kumar, Prabhakar Raghavan, Sridhar Rajagopalan, and
  Andrew Tomkins.
\newblock The web as a graph: Measurements, models, and methods.
\newblock In {\em Computing and Combinatorics, 5th Annual International
  Conference, {COCOON}, Proceedings}, pages 1--17, 1999.

\bibitem[KNW10]{KaneNW10}
Daniel~M. Kane, Jelani Nelson, and David~P. Woodruff.
\newblock An optimal algorithm for the distinct elements problem.
\newblock In {\em Proceedings of the Twenty-Ninth {ACM} {SIGMOD-SIGACT-SIGART}
  Symposium on Principles of Database Systems, {PODS}}, pages 41--52, 2010.

\bibitem[LLW22]{LinLW22}
Honghao Lin, Tian Luo, and David~P. Woodruff.
\newblock Learning augmented binary search trees.
\newblock In {\em International Conference on Machine Learning, {ICML}}, pages
  13431--13440, 2022.

\bibitem[LZR{\etalchar{+}}20]{LiuZRBR20}
Zaoxing Liu, Samson Zhou, Ori Rottenstreich, Vladimir Braverman, and Jennifer
  Rexford.
\newblock Memory-efficient performance monitoring on programmable switches with
  lean algorithms.
\newblock In {\em 1st Symposium on Algorithmic Principles of Computer Systems,
  {APOCS}}, pages 31--44, 2020.

\bibitem[MAA05]{MetwallyAA05}
Ahmed Metwally, Divyakant Agrawal, and Amr~El Abbadi.
\newblock Duplicate detection in click streams.
\newblock In {\em Proceedings of the 14th international conference on World
  Wide Web, {WWW}}, pages 12--21, 2005.

\bibitem[MAB{\etalchar{+}}10]{MalewiczABDHLC10}
Grzegorz Malewicz, Matthew~H. Austern, Aart J.~C. Bik, James~C. Dehnert, Ilan
  Horn, Naty Leiser, and Grzegorz Czajkowski.
\newblock Pregel: a system for large-scale graph processing.
\newblock In {\em Proceedings of the {ACM} {SIGMOD} International Conference on
  Management of Data, {SIGMOD}}, pages 135--146, 2010.

\bibitem[ME97]{MongeE97}
Alvaro~E. Monge and Charles Elkan.
\newblock An efficient domain-independent algorithm for detecting approximately
  duplicate database records.
\newblock In {\em Workshop on Research Issues on Data Mining and Knowledge
  Discovery, {DMKD} 1997 in cooperation with {ACM} SIGMOD}, 1997.

\bibitem[Mit03]{Mitzenmacher03}
Michael Mitzenmacher.
\newblock A brief history of generative models for power law and lognormal
  distributions.
\newblock {\em Internet Math.}, 1(2):226--251, 2003.

\bibitem[Mit18]{Mitzenmacher18}
Michael Mitzenmacher.
\newblock A model for learned bloom filters and optimizing by sandwiching.
\newblock In {\em Advances in Neural Information Processing Systems 31: Annual
  Conference on Neural Information Processing Systems, NeurIPS}, pages
  462--471, 2018.

\bibitem[MM02]{MankuM02}
Gurmeet~Singh Manku and Rajeev Motwani.
\newblock Approximate frequency counts over data streams.
\newblock In {\em Proceedings of 28th International Conference on Very Large
  Data Bases, {VLDB}}, pages 346--357, 2002.

\bibitem[MMO19]{MeirMO19}
Uri Meir, Dor Minzer, and Rotem Oshman.
\newblock Can distributed uniformity testing be local?
\newblock In {\em Proceedings of the 2019 {ACM} Symposium on Principles of
  Distributed Computing, {PODC}}, pages 228--237, 2019.

\bibitem[MMR{\etalchar{+}}17]{McMahanMRHA17}
Brendan McMahan, Eider Moore, Daniel Ramage, Seth Hampson, and
  Blaise~Ag{\"{u}}era y~Arcas.
\newblock Communication-efficient learning of deep networks from decentralized
  data.
\newblock In {\em Proceedings of the 20th International Conference on
  Artificial Intelligence and Statistics, {AISTATS}}, pages 1273--1282, 2017.

\bibitem[Mut05]{Muthukrishnan05}
S.~Muthukrishnan.
\newblock Data streams: Algorithms and applications.
\newblock {\em Found. Trends Theor. Comput. Sci.}, 1(2), 2005.

\bibitem[MV20]{MitzenmacherV20}
Michael Mitzenmacher and Sergei Vassilvitskii.
\newblock Algorithms with predictions.
\newblock In Tim Roughgarden, editor, {\em Beyond the Worst-Case Analysis of
  Algorithms}, pages 646--662. Cambridge University Press, 2020.

\bibitem[PBM{\etalchar{+}}03]{PadmanabhanBMCH03}
Sriram Padmanabhan, Bishwaranjan Bhattacharjee, Timothy Malkemus, Leslie
  Cranston, and Matthew Huras.
\newblock Multi-dimensional clustering: {A} new data layout scheme in {DB2}.
\newblock In {\em Proceedings of the 2003 {ACM} {SIGMOD} International
  Conference on Management of Data}, pages 637--641. {ACM}, 2003.

\bibitem[PBR19]{PrasadBR19}
Adarsh Prasad, Sivaraman Balakrishnan, and Pradeep Ravikumar.
\newblock A unified approach to robust mean estimation.
\newblock {\em CoRR}, abs/1907.00927, 2019.

\bibitem[PSF{\etalchar{+}}01]{palmer2001connectivity}
Christopher~R Palmer, Georgos Siganos, Michalis Faloutsos, Christos Faloutsos,
  and Phillip~B Gibbons.
\newblock The connectivity and fault-tolerance of the internet topology, 2001.

\bibitem[PSW14]{PaghSW14}
Rasmus Pagh, Morten St{\"{o}}ckel, and David~P. Woodruff.
\newblock Is min-wise hashing optimal for summarizing set intersection?
\newblock In {\em Proceedings of the 33rd {ACM} {SIGMOD-SIGACT-SIGART}
  Symposium on Principles of Database Systems, PODS}, pages 109--120, 2014.

\bibitem[Rho23]{rhodes23talk}
Lee Rhodes.
\newblock Insights from engineering sketches for production and using sketches
  at scale, 2023.

\bibitem[RIS{\etalchar{+}}94]{ResnickISBR94}
Paul Resnick, Neophytos Iacovou, Mitesh Suchak, Peter Bergstrom, and John
  Riedl.
\newblock Grouplens: An open architecture for collaborative filtering of
  netnews.
\newblock In {\em {CSCW} '94, Proceedings of the Conference on Computer
  Supported Cooperative Work}, pages 175--186. {ACM}, 1994.

\bibitem[RRSS09]{RaskhodnikovaRSS09}
Sofya Raskhodnikova, Dana Ron, Amir Shpilka, and Adam~D. Smith.
\newblock Strong lower bounds for approximating distribution support size and
  the distinct elements problem.
\newblock {\em {SIAM} J. Comput.}, 39(3):813--842, 2009.

\bibitem[SB02]{SarawagiB02}
Sunita Sarawagi and Anuradha Bhamidipaty.
\newblock Interactive deduplication using active learning.
\newblock In {\em Proceedings of the Eighth {ACM} {SIGKDD} International
  Conference on Knowledge Discovery and Data Mining}, pages 269--278, 2002.

\bibitem[SDNR96]{ShuklaDNR96}
Amit Shukla, Prasad Deshpande, Jeffrey~F. Naughton, and Karthikeyan Ramasamy.
\newblock Storage estimation for multidimensional aggregates in the presence of
  hierarchies.
\newblock In {\em VLDB'96, Proceedings of 22th International Conference on Very
  Large Data Bases}, pages 522--531, 1996.

\bibitem[Tar07]{Tarui07}
Jun Tarui.
\newblock Finding a duplicate and a missing item in a stream.
\newblock In {\em Theory and Applications of Models of Computation, 4th
  International Conference, {TAMC}, Proceedings}, pages 128--135, 2007.

\bibitem[TKM02]{TejadaKM02}
Sheila Tejada, Craig~A. Knoblock, and Steven Minton.
\newblock Learning domain-independent string transformation weights for high
  accuracy object identification.
\newblock In {\em Proceedings of the Eighth {ACM} {SIGKDD} International
  Conference on Knowledge Discovery and Data Mining}, pages 350--359, 2002.

\bibitem[WCW{\etalchar{+}}17]{WangCWHJ17}
Ding Wang, Haibo Cheng, Ping Wang, Xinyi Huang, and Gaopeng Jian.
\newblock Zipf's law in passwords.
\newblock {\em {IEEE} Trans. Inf. Forensics Secur.}, 12(11):2776--2791, 2017.

\bibitem[WW16]{WangW16}
Ding Wang and Ping Wang.
\newblock On the implications of zipf's law in passwords.
\newblock In {\em Computer Security - {ESORICS} 2016 - 21st European Symposium
  on Research in Computer Security, Proceedings, Part {I}}, volume 9878, pages
  111--131. Springer, 2016.

\bibitem[WZ12]{WoodruffZ12}
David~P. Woodruff and Qin Zhang.
\newblock Tight bounds for distributed functional monitoring.
\newblock In {\em Proceedings of the 44th Symposium on Theory of Computing
  Conference, {STOC}}, pages 941--960, 2012.

\bibitem[WZ14]{WoodruffZ14}
David~P. Woodruff and Qin Zhang.
\newblock An optimal lower bound for distinct elements in the message passing
  model.
\newblock In {\em Proceedings of the Twenty-Fifth Annual {ACM-SIAM} Symposium
  on Discrete Algorithms, {SODA}}, pages 718--733, 2014.

\bibitem[WZ21]{WoodruffZ21}
David~P. Woodruff and Samson Zhou.
\newblock Tight bounds for adversarially robust streams and sliding windows via
  difference estimators.
\newblock In {\em 62nd {IEEE} Annual Symposium on Foundations of Computer
  Science, {FOCS}}, pages 1183--1196, 2021.

\bibitem[Zip13]{zipf2013psycho}
George~Kingsley Zipf.
\newblock {\em The psycho-biology of language: An introduction to dynamic
  philology}.
\newblock Routledge, 2013.

\end{thebibliography}
\bibliographystyle{alpha}

\end{document}